\theoremstyle{plain}
\newtheorem{theorem}{Theorem}[section]
\theoremstyle{plain}
\newtheorem{corollary}[theorem]{Corollary}%
\theoremstyle{plain}
\newtheorem{proposition}[theorem]{Proposition}%
\theoremstyle{plain}
\newtheorem{lemma}[theorem]{Lemma}%
\theoremstyle{definition}
\newtheorem{definition}[theorem]{Definition}%
\theoremstyle{definition}
\newtheorem{example}[theorem]{Example}%
\theoremstyle{definition}
\theoremstyle{definition}
\newtheorem{remark}[theorem]{Remark}%
\numberwithin{equation}{section}
\numberwithin{figure}{section}
\numberwithin{table}{section}
\newcommand{\R}{\mathbb{R}}
\newcommand{\N}{\mathbb{N}}
\newcommand{\C}{\mathbb{C}}                           %
\newcommand{\Z}{\mathbb{Z}}
\newcommand{\T}{\mathbb{T}}
\newcommand{\s}[1]{\CMcal{#1}}
\newcommand{\bb}[1]{\mathscr{#1}}
\newcommand{\rr}[1]{\mathfrak{#1}}
\newcommand{\n}[1]{\mathbb{#1}}
\newcommand{\nnorm}[1]{{\left\vert\kern-0.25ex\left\vert\kern-0.25ex\left\vert #1
    \right\vert\kern-0.25ex\right\vert\kern-0.25ex\right\vert}}
\newcommand{\bra}[1]{\langle#1|}
\newcommand{\ket}[1]{|#1\rangle}
\newcommand{\expo}[1]{\,\mathrm{e}^{#1}\,}                 %
\newcommand{\dd}{\,\mathrm{d}}
\newcommand{\ii}{\,\mathrm{i}\,}
\newcommand{\virg}[1]{\lq\lq#1\rq\rq}                %
\newcommand{\ie}{\textsl{i.\@e.\,}}
\newcommand{\eg}{\textsl{e.\@g.\,}}
\newcommand{\cf}{\textsl{cf}.\,}
\title[Quantization of Edge Currents along
Magnetic Interfaces]{Quantization of Edge Currents along
Magnetic Interfaces: A $K$-theory approach}
\author[G. De~Nittis]{Giuseppe De Nittis}
\address[G. De~Nittis]{Facultad de Matem\'aticas \& Instituto de F\'{\i}sica,
  Pontificia Universidad Cat\'olica de Chile,
  Santiago, Chile.}
\email{gidenittis@mat.uc.cl}
\author[E. Guti\'errez]{Esteban Guti\'errez}
\address[E. Guti\'errez]{Facultad de Matem\'aticas,
  Pontificia Universidad Cat\'olica de Chile,
  Santiago, Chile.}
\email{egutierreo@mat.uc.cl}
\thanks{\textbf{MSC2010}
Primary: 	81R60;
Secondary: 	58B34, 	81R15, 	46L80.}
\thanks{\textbf{Keywords.}
Magnetic interfaces, Iwatsuka Hamiltonian, edge currents, K-theory.}
\begin{document}

\begin{abstract}
The purpose of this paper is to investigate the propagation of \emph{topological currents} along  \emph{magnetic interfaces} (also known as magnetic walls) of a two-dimensional material. We consider tight-binding magnetic models associated to generic \emph{magnetic multi-interfaces} and describe the $K$-theoretical setting in which a \emph{bulk-interface duality} can be derived. Then, the (trivial) case of a \emph{localized} magnetic field and the 
(non trivial) case of the \emph{Iwatsuka} magnetic field are considered in full detail. This is a pedagogical preparatory work that aims to anticipate the study of more complicated multi-interface magnetic systems. 
 
\end{abstract}

\maketitle

\vspace{-5mm}
\tableofcontents

\section{Introduction}\label{sect:intro}

The purpose of this paper is to investigate the propagation of \emph{topological currents} along  \emph{magnetic interfaces} (also known as magnetic walls) of a two-dimensional material. A magnetic interface is a  thin region of  space (ideally a one-dimensional curve) which separates the material in two parts subjected to perpendicular uniform magnetic fields of distinct intensity.
The typical example of  magnetic interface is provided by the \emph{Iwatsuka magnetic field} \cite{iwatsuka-85}, which is a magnetic field oriented orthogonally to the plane  with a (monotone) strength function
$B_{\rm I}:\R^2\to\R$ that does not depend on the $y$-direction and such that $\lim_{x\to\pm\infty}B_{\rm I}(x)=b_\pm$ where $b_-$ and $b_+$ are distinct values. 
The response of the system to an Iwatsuka magnetic field is the production of a current flowing in the $y$ direction along the interface  (fixed around $x=0$). Such  a current  is carried by  extended states localized near the interface which are the quantum analog of the classical \virg{snake orbits} \cite[Fig. 6.1]{cycon-froese-kirsch-simon-94}. Such \virg{snake orbit  states} have received increasing attention
  in
the recent physics literature \cite{reijniers-peeters-00,reijniers-matulis-chang-peeters-vasilopoulos-02,ghosh-demartino-hausler-dellanna-08,oroszlany-rakyta-kormanyos-lambert-08,park-sim-08,liu-tiwari-brada-bruder-15}.

\medskip

A  very interesting and  important property of the current carried by the \virg{snake orbit  states}  along the interface is its topological quantization. For the  Iwatsuka magnetic field, this fact has been rigorously proved in \cite{dombrowski-germinet-raikov-11}  for continuous models and in \cite{kotani-schulzbaldes-villegasblas-14} for discrete (or tight-binding) models.
In both cases the crucial result can be stated  by considering the \virg{full} Hamiltonian
$\widehat{\rr{h}}$, which contains the full information about the Iwatsuka magnetic field,
and two \virg{asymptotic} Hamiltonians $\{\rr{h}_-,\rr{h}_+\}$ depending only on the constant magnetic fields $b_-$ and $b_+$ respectively. Accordingly
, the pair $\{\rr{h}_-,\rr{h}_+\}$ carries information only about the  asymptotic behavior  of the Iwatsuka magnetic field. 
Let $\Delta$ be an energy domain that sits inside a non-trivial gap of $\sigma(\rr{h}_-)\cup\sigma(\rr{h}_+)$ (asymptotic open gap hypothesis). Let $J_{\s{I}}(\Delta)$ be the interface current carried by the extended \virg{snake orbit  states} of $\widehat{\rr{h}}$ of energy $\Delta$
and $\sigma_{\s{I}}(\Delta):=eJ_{\s{I}}(\Delta)$ the associated \emph{interface conductance} ($e>0$ is the magnitude of the electron charge). Then, it holds true that
\begin{equation}\label{eq:int_01}
\sigma_{\s{I}}(\Delta)\;=\;\frac{e^2}{h}(N_+-N_-)\;,\qquad N_\pm\in\Z,
\end{equation}
where  $h$ is the  Planck's constant. The integers $N_\pm$ are related to the topology of the
asymptotic system. Let $\mu\in\Delta$ be a given \emph{Fermi energy} and $\rr{p}_{\mu,\pm}:=\chi_{(-\infty,\mu]}(\rr{h}_\pm)$ the  Fermi (spectral) projections in the gap of the asymptotic Hamiltonians. Then,  $N_\pm:={\rm Ch}(\rr{p}_{\mu,\pm})$ are the Chern numbers of such projectors. Since it is well known that the Chern numbers of the Fermi projections provide de values of the  Hall conductance (in unit of $e^2/h$) in the bulk of a quantum Hall system \cite{thouless-kohmoto-nightingale-nijs-82,kunz-87,bellissard-elst-schulz-baldes-94,avron-seiler-simon-94}, one can rewrite \eqref{eq:int_01} in the form
\begin{equation}\label{eq:int_02}
\sigma_{\s{I}}(\Delta)\;=\;\sigma_{b_+}(\mu)\;-\;\sigma_{b_-}(\mu)\;.
\end{equation}
where $\sigma_{b_\pm}(\mu)$ are the asymptotic \emph{Hall conductances} at energy $\mu$ (\cf equation \eqref{cond_cost_eq}). If one interprets the collection $\sigma_{\rm bulk}(\mu):=\{\sigma_{b_\pm}(\mu)\}$ as a description of the conductance of the 
\virg{bulk system}
described by the  Hamiltonians  $\{\rr{h}_-,\rr{h}_+\}$ (\cf Definition \ref{def:cond_bulk}), then equation \eqref{eq:int_02} expresses the \emph{bulk-interface duality} which  is a generalization, or rather a manifestation, of 
the celebrated bulk-boundary correspondence \cite{hatsugai-93,elbau-graf-02,kellendonk-richter-schulz-baldes-02}. The latter is a  phenomenon  ubiquitous  in the physics of \emph{topological insulators} \cite{hasan-kane-10,moore-10,kane-moore-11}. 

\medskip

The derivation of \eqref{eq:int_01} provided in  \cite{dombrowski-germinet-raikov-11} only considered
Landau operators and a restricted class of perturbations,
and relies  on techniques from functional analysis and spectral theory. The proof of \eqref{eq:int_01} contained in  \cite{kotani-schulzbaldes-villegasblas-14} works for a large family of discrete models with random perturbations and is based on an
index theorem of Noether-Gohberg-Krein type. In this work we will present  an alternative proof of \eqref{eq:int_01} which makes use of $K$-theory for $C^*$-algebras. Indeed, it is well established in the literature that the bulk-boundary correspondence in condensed matter can be successfully explained and studied inside a  $K$-theoretical framework \cite{schulz-baldes-kellendonk-richter-00,kellendonk-schulz-baldes-04,bourne-kellendonk-rennie-17,bourne-rennie-18,prodan-schulz-baldes-book}. For that reason it seems surprising that problems concerning magnetic interfaces have never been studied before with $K$-theoretical technique. One of the primary aims of this work is to try to fill this gap.

\medskip

We focus our study on magnetic tight-binding models without considering random perturbations. Accordingly, the setting in which we derive our results is quite close to that of \cite{kotani-schulzbaldes-villegasblas-14}. The fact that we are not considering random potentials does not constitute any loss of information as long as $K$-theory is used. In fact, random potentials are usually introduced by families of covariant operators parametrized by a configuration space $\Omega_{\rm random}$ which is generally assumed to be contractible. 
Therefore, in view of its contractibility, $\Omega_{\rm random}$ does not add any information to the underlying $K$-theory and can be replaced with a singleton. The latter fact is equivalent to consider only non-random systems. The decision to focus on tight-binding models is motivated by the desire to avoid the  technicalities  necessary to deal with continuous models in favor of 
bigger  clarity in the description of the algebraic and topological aspects. However, it is worth mentioning that the bulk-boundary correspondence can be   treated with $K$-theoretical methods even for continuous models \cite{kellendonk-schulz-baldes-04,bourne-rennie-18,ludewig-thiang-20}.

\medskip

It is important to  point out that the aim of this works goes beyond the goal of proving the equation \eqref{eq:int_01} by using a different  technique. Indeed, the main purpose  is to set a precise mathematical background for the analysis of general \emph{multi-interface} magnetic systems. In a nutshell
a magnetic system with a multi-interface of order $N$ is a two-dimensional material divided in $N+1$ (asymptotic) regions where the magnetic field has constant strength. The  Iwatsuka magnetic field provides an example of magnetic interface of order $N=1$, but more general examples can easily be imagined. One of the major contribution of this work is to show that
systems
  with magnetic multi-interfaces can be naturally studied by using  $K$-theory. A primary role in our analysis is played by
the \emph{magnetic flux} $f_B:=\expo{\ii B}$  associated with the magnetic field $B$. The magnetic flux, together with all its $\Z^2$-translations, defines a commutative $C^*$-algebra which, in view of the \emph{Gelfand isomorphism}, can be represented as the set of continuous functions $\s{C}(\Omega_B)$ over the compact Hausdorff space $\Omega_B$. The latter is known as  the \emph{magnetic hull}. The magnetic hull, endowed with the action of $\Z^2$, provides a topological dynamical system which encodes the structure of the   
  magnetic multi-interface  through its stationary points at infinity. It is worth pointing out that topology of  $\Omega_B$ is usually not trivial and influences substantially the $K$-theory of the magnetic systems under consideration (in contrast to the role of random perturbations). The idea of the use of  the   magnetic hull for the  study of  the topology of magnetic multi-interface systems seems to be  (to the best of our knowledge) a new contribution in the literature concerning topological insulators. 
On the other hand, similar ideas have already been used successfully in a totally different context  \cite{georgescu-iftimovici-02}.
  
  \medskip
  
This work is written with a pedagogical spirit. Along with original results and ideas, it contains an important amount of review material (\eg the rich  appendix section). The purpose of this \virg{stylistic} choice relies in the hope that this work can be  preparatory for further investigations. In fact, if on the one hand the description of the $K$-theoretical picture for general 
  multi-interface magnetic systems has been developed in a comprehensive form in this work, on the other hand the proof of the bulk-interface duality has been obtained only for the special case of the Iwatsuka magnetic field (other than that the trivial case of a \emph{localized} magnetic field). The study of more general interfaces will require a deeper and more specific analysis and probably will benefit from the \virg{pioneering} results described in \cite{thiang-20}.

\medskip
\noindent
{\bf Structure of the paper and overview of the results.}
{\bf Section \ref{sect:tb-magn}} is devoted to the construction of the \emph{magnetic $C^*$-algebra} associated to a generic magnetic field $B:\Z^2\to\R$ and to the study of the related  integro-differential  structure. In more details, in 
{\bf Section \ref{sect:MF&VP}} we introduce the notion of \emph{vector potential} $A_B$ associated to $B$, in {\bf Section \ref{sect:gen_MT}} we describe the \emph{magnetic translations} and the \emph{flux operator} associated to $A_B$ and in {\bf Section \ref{sect:magnetic_C_Al}} we define the magnetic $C^*$-algebra $\s{A}_{A_B}$ generated by the magnetic translations.
{\bf Section \ref{sect:four_theo}} describes the \emph{Fourier theory} for the algebra $\s{A}_{A_B}$ which is the natural
adaptation of the classical Fourier theory for tori.
{\bf Section \ref{sect:spa_deriv}} deals with the differential structure of $\s{A}_{A_B}$ and the description of its \emph{smooth} pre-$C^*$-algebra $\s{A}_{A_B}^\infty$. The new concept of the \emph{magnetic hull} $\Omega_B$ and the associated dynamical system are investigated in  
{\bf Section \ref{sect:integr}} along with the construction
of  possible \emph{integrals} (or traces) for $\s{A}_{A_B}$.
{\bf Section \ref{sect:NC-sobolev}} briefly introduces the notion of \emph{noncommutative Sobolev spaces} for $\s{A}_{A_B}$, although this concept is not relevant in this work.
{\bf Section \ref{sec:k-theo_gen_B_over}} contains the 
basic definitions and the main results for the  
 background in which the $K$-theory of the 
magnetic $C^*$-algebra $\s{A}_{A_B}$ can be studied.
The notions of the \emph{evaluation homomorphism}  and  the \emph{interface algebra} $\s{I}\subset \s{A}_{A_B}$ are introduced in {\bf Section \ref{sect:asym_ses}}. In {\bf Section \ref{sect:toep_inter}} it is shown that the {evaluation homomorphism} and the {interface algebra} fit into a short exact sequence (\cf Theorem \ref{theo:SES}) which is reminiscent of the classical 
\emph{Toeplitz extension} for the torus algebra.
In {\bf Section \ref{sect:dim_ev}} the  interconnection between
the structure at infinity of the dynamical system generated by the magnetic hull $\Omega_B$ and the existence of suitable 
evaluation homomorphisms is investigated (\cf Proposition \ref{prop:cond_ev_hom}). This analysis paves the way for the rigorous definition of \emph{magnetic multi-interface} and the associated \emph{bulk algebra} $\s{A}_{\rm bulk}$ (\cf Definition \ref{def:bulk_alg}). {\bf Section \ref{sect:K-theory}} contains the description of the \emph{six-term sequence} in $K$-theory for a 
magnetic multi-interface system described by the magnetic 
$C^*$-algebra $\s{A}_{A_B}$, the interface algebra $\s{I}$ and the 
{bulk algebra} $\s{A}_{\rm bulk}$. 
Finally, {\bf 
Section \ref{sec:Bulk_interface_currents}} provides the notion of 
\emph{bulk and interface currents} and 
the crucial formula 
for the \emph{bulk-interface correspondence}  (equation \eqref{ewq:int_condIIII}).
{\bf Section \ref{sect:iwatsuka}} contains the detailed study of the magnetic interface generated by the \emph{Iwatsuka magnetic field} and the proof of formula \eqref{eq:int_01}.
In more detail,  {\bf Section \ref{sect:toep_iwatsuka_MT}}
contains the presentation of the Toeplitz extension associated to the Iwatsuka magnetic field along with a useful characterization of the
{evaluation homomorphism} (\cf Remark \ref{rk:gen_lim}). 
{\bf Section \ref{sect:interf_iwatsuka_MT}}
 contains a precise description of  the {interface algebra} in terms of a relevant projection  (\cf Proposition \ref{pro_intAlg_iwa}) and {\bf Section \ref{sect:split_Toeplitz}}
provides a short discussion about the (non) splitting property of the  Toeplitz extension for the Iwatsuka magnetic field. The $K$-theory of the Iwatsuka $C^*$-algebra $\s{A}_{\rm I}$ is described in full detail in {\bf Section \ref{sect:K-Iwatsuka}} (\cf Theorem \ref{prop:PV_2}) and the six-term exact sequence 
associated to the Toeplitz extension of  $\s{A}_{\rm I}$ is completely solved in  {\bf Section \ref{sect:6t-K-Iwatsuka}}. This fact provides a  finer understanding of the $K$-theory of $\s{A}_{\rm I}$ (\cf Theorem \ref{prop:PV_2_II}). Finally, the 
{bulk-interface correspondence for the Iwatsuka $C^*$-algebra}
and the proof of the formula \eqref{eq:int_01} are contained in 
{\bf Section \ref{sect:bulk-int-Iwatsuka}} (\cf Theorem \ref{theo_main_Iw}).
Let us emphasize once again that this work  has been written with the aim of being pedagogical and self-consistent (as far as possible). For this reason a significant amount of supporting material has been included in the {\bf Appendices \ref{app:dis_Swar}, \ref{app:bloch-floquet}, \ref{sect:crossedproduct}, \ref{app:K-theoIMH}, \ref{app:PV}, \ref{app:Ktheo NCT}}. The experienced reader can certainly skip this part. Also, the \virg{study case} of a localized magnetic field $B_\Lambda$ has been discussed throughout the work in various examples as an aid to better fix the proposed concepts. As a pay-off we got the generalization of certain results described in \cite{denittis-schulz-baldes-16}.

\medskip
\noindent
{\bf Acknowledgements.} 
This research is supported
 by
the  grant \emph{Fondecyt Regular} -  1190204.
GD wants to thank Fran\c{c}ois Germinet for the suggestion to look at this problem received several years ago.

\section{The (tight-binding) magnetic \texorpdfstring{$C^*$-}-algebra}\label{sect:tb-magn}
In this section we will describe the algebra of operators on $\ell^2(\Z^2)$ that describe the kinematics of charged particles subjected to a generic orthogonal  magnetic field in the tight-binding approximation. This algebra, called \emph{magnetic $C^*$-algebra}, is a suitable generalization of the \emph{noncommutative torus} \cite{connes-94,davidson-96,gracia-varilly-figueroa-01}, and has the structure of a \emph{crossed product $C^*$-algebra} \cite{pedersen-79,connes-94} (\cf Appendix \ref{sect:crossedproduct}).

\subsection{Magnetic fields and vector potentials}\label{sect:MF&VP}
In the tight-binding approximation the (two-dimensional) position space is $\Z^2$.
An  {orthogonal} magnetic field is described by a function $B:\Z^2\to\R$. However, to work with magnetic fields   
we need to introduce  the notion of  vector potential in the discrete setting.
For that, let us fix first some notation. We will denote by $n:=(n_1,n_2)$
the generic point of $\Z^2$ and with $e_1:=(1,0)$ and $e_2:=(0,1)$ 
  the  canonical generators of $\Z^2$. Moreover, we will  view $\Z^2$ as the vertices of an oriented graph, with oriented edges given by the oriented line segments $(n-e_j, n)$ between nearest vertices.
\begin{definition}[Tight-binding vector potential]
Let $B:\Z^2\to\R$ be a magnetic field. A magnetic potential for $B$ is a real-valued function 
$$
A_B\;:\;\Z^2\times\Z^2\;\longrightarrow\;\R
$$ 
satisfying: 
\begin{itemize}
\item[(i)] $A_B(n,m)=0$ for $n,m\in\Z^2$ such that $|n-m| \neq1$;
\vspace{1mm}
\item[(ii)] $A_B(m,n)= -A_B(n,m)$ for all $n,m\in\Z^2$;
\vspace{1mm}
\item[(iii)] $B(n)=\rr{C}[A_B](n)$ for all $n\in\Z^2$ where
$$
\begin{aligned}
\rr{C}[A_B](n)\;:=\;&A_B(n,n-e_1)\;+\;A_B(n-e_1,n-e_1-e_2)\\
&+\;A_B(n-e_1-e_2,n-e_2)\;+\;A_B(n-e_2,n)
\end{aligned}
$$
is the (counterclockwise) \emph{circulation} of $A_B$ along the boundary of the unit cell of $\Z^2$ with right upper corner centered in $n$. 
\end{itemize}
 \end{definition}

\medskip

From \cite[Proposition 1]{denittis-schulz-baldes-16} we know that every 
 magnetic field $B$ admits infinite (gauge) equivalent vector potentials and every two potentials $A_B$ and $A'_B$ associated with the same magnetic field $B$ are related by a 
 \emph{gauge function} $G:\Z^2\to\R$ according to 
 \begin{equation}\label{eq:gaug_tr}
 A'_B(n,m)\;=\;A_B(n,m)\;+\;G(n)\;-\;G(m)\;,\qquad|n-m|=1\;.
\end{equation}
In this case a simple computation shows that $\rr{C}[A_B]=\rr{C}[A_B']$.

\medskip

The following examples will be 
will be considered throughout the work.

\begin{example}[Constant magnetic field]\label{Ex1:const_B}
A \emph{constant} magnetic field of strength $b$ is described by the function
\begin{equation}\label{eq:const_B}
B_{b}(n)\;:=\;b\;\qquad\quad \forall\;n\in\Z^2\;.
\end{equation}
Among the infinite vector potentials associated to the constant magnetic field $B_{b}$, there are two of special utility in practical applications. The first one is the so-called \emph{Landau potential} defined by
\begin{equation}\label{eq:const_A_0}
A^{\rm L}_b(n,n-e_j)\;:=\;\delta_{j,1}\;n_2\; b\;,\qquad \forall\;n\in\Z^2\;.
\end{equation}
The second is the 
 \emph{symmetric  potential}  defined by 
\begin{equation}\label{eq:watsuka_A_s-1}
A^{\rm s}_{b}(n,n-e_j)\;:=\;\left({\delta_{j,1}}\;n_2\;-{\delta_{j,2}}\;n_1\right)  \frac{b}{2}
\;,\qquad \forall\;n\in\Z^2\;.
\end{equation}
A straightforward computation shows that 
$$
\rr{C}[A^{\rm L}_b]\;=\;\rr{C}[A^{\rm s}_b]\;=\;B_b\;.
$$
Moreover, one can check that
$$
A^{\rm s}_b(n,n-e_j)\;=\;A^{\rm L}_b(n,n-e_j)\;+\;G_b(n)\;-\;G_b(n-e_j)
$$
where the gauge function is given by $G_b(n):=-n_1n_2\frac{b}{2}$.\hfill $\blacktriangleleft$
\end{example}
\begin{example}[Iwatsuka magnetic field]\label{Ex2:Iwatsuka}
The \emph{Iwatsuka magnetic field} \cite{iwatsuka-85} models 
 systems with a one-dimensional interface between two constant magnetic fields. In the tight-binding approximation it is defined by
\begin{equation}\label{eq:watsuka_B}
B_{\rm I}(n)\;:=\;b_-\;\delta_-(n)\;+\;b_0\;\delta_0(n)\;+\;
b_+\;\delta_+(n)\;,\qquad \forall\;n\in\Z^2
\end{equation}
where $b_-,b_0,b_+\in\R$ are real constants
and the
 three functions  $\delta_\pm,\delta_0$ are defined by
$$
\delta_\pm(n_1,n_2)\;:=\;
\left\{
\begin{aligned}
&1&\quad&\text{if}\; \pm n_1>0\\
&0&\quad&\text{otherwise}\\
\end{aligned}
\right.\;,
\quad\delta_0(n_1,n_2)\;:=\;
\left\{
\begin{aligned}
&1&\quad&\text{if}\; n_1=0\\
&0&\quad&\text{otherwise}\\
\end{aligned}
\right.\;.
$$
The magnetic field $B_{\rm I}$ describes a one-dimensional interface along the vertical line defined by $n_1=0$. Sometimes, it could be convenient to impose the \emph{monotonic} condition
$$
\min\{b_-,b_+\}\;\leqslant\;b_0\;\leqslant\;\max\{b_-,b_+\}\;.
$$
The simplest choice for a vector potential producing the magnetic field $B_{\rm I}$ is the so-called \emph{Landau-Iwatsuka potential} defined by
\begin{equation}\label{eq:watsuka_A_h}
A_{\rm I}(n,n-e_j)\;:=\;\delta_{j,1}\;n_2\;  B_{\rm I}(n)\;,\qquad \forall\; n\in\Z^2\;.
\end{equation}
A direct computation shows that 
$$
\rr{C}[A_{\rm I}](n)\;=\;n_2 B_{\rm I}(n)\;-\;(n_2-1)B_{\rm I}(n-e_2)\;=\;B_{\rm I}(n)
$$
in view of the fact that $B_{\rm I}(n-e_2)=B_{\rm I}(n)$ for all $n\in\Z^2$. Another interesting choice  is the 
\emph{symmetric Iwatsuka potential}
defined by
\begin{equation}\label{eq:watsuka_A_s}
A^{\rm s}_{\rm I}(n,n-e_j)\;=\;A^{{\rm s},0}_{\rm I}(n,n-e_j)\;+\;\Delta^{{\rm s}}_{\rm I}(n,n-e_j)
\end{equation}
where
$$
A^{{\rm s},0}_{\rm I}(n,n-e_j)\;:=\;\left(\frac{\delta_{j,1}}{2}\;n_2\;-\frac{\delta_{j,2}}{2}\;n_1\right)  B_{\rm I}(n)\;,\qquad \forall\;n\in\Z^2
$$
and 
$$
\Delta^{{\rm s}}_{\rm I}(n,n-e_j)\;:=\;\delta_{j,1}\frac{b_0-b_-}{2}\; n_2\delta^{(1)}_0(n) \;,\qquad \forall\;n\in\Z^2\;.
$$
Since
$\rr{C}[A^{{\rm s},0}_{\rm I}](n)=B_{\rm I}(n)$ only if $n_1\neq0$ and
$
\rr{C}[A^{{\rm s},0}_{\rm I}](0,n_2)=(b_0+b_-)\frac{1}{2}
$,
 the term $\Delta^{{\rm s}}_{\rm I}$ is needed to correct the mismatch. Notice that $\Delta^{{\rm s}}_{\rm I}=0$ when $b_0=b_-$. The symmetric Iwatsuka potential is related to  the vector potential \eqref{eq:watsuka_A_h} by the relation
$$
A^{{\rm s}}_{\rm I}(n,n-e_j)\;=\;A_{\rm I}(n,n-e_j)\;+\;G_{\rm I}(n)\;-\;G_{\rm I}(n-e_j)
$$ with 
gauge transformation 
 given by 
$
G_{\rm I}(n):=-\frac{n_1n_2}{2}B_{\rm I}(n) 
$.
Even though $A_{\rm I}$ and 
$A^{{\rm s}}_{\rm I}$ are equivalent potentials in view of the 
gauge transformation $G_{\rm I}$, the use of the 
Landau-Iwatsuka potential  \eqref{eq:watsuka_B}
will be preferred in this work.
Notice that in the case $b_-=b_0=b_+$ the Iwatsuka magnetic field reduces to the constant magnetic field described in Example \ref{Ex1:const_B}. \hfill $\blacktriangleleft$
\end{example}
\begin{example}[Localized magnetic field]\label{Ex3:loc}
Let $\s{P}_0(\Z^2)$ be the collection of bounded subsets of $\Z^2$. For every $\Lambda\in \s{P}_0(\Z^2)$ let $\delta_\Lambda$ be the characteristic function defined by 
$$
\delta_\Lambda(n)\;:=\;
\left\{
\begin{aligned}
&1&\quad&\text{if}\; n\in\Lambda\\
&0&\quad&\text{otherwise}\\
\end{aligned}
\right.\;.
$$
A \emph{localized} magnetic field of strength $b\in\R$ is described by the function
\begin{equation}\label{eq:loc_B}
B_{\Lambda}(n)\;:=\;b\;\delta_\Lambda(n)\;,\qquad \forall\;n\in\Z^2\;.
\end{equation}
Observe that 
$$
B_{\Lambda}\;=\;\sum_{\lambda\in\Lambda}B_{\{\lambda\}}
$$
where $B_{\{\lambda\}}:=b\delta_{\{\lambda\}}$ is the magnetic field localized on the singleton $\{\lambda\}\in \s{P}_0(\Z^2)$.
A simple vector potential for $B_{\{\lambda\}}$ is provided by the so-called \emph{half-line} potential
$$
A_{\{\lambda\}}(n,n-e_j)\;:=\;b\;\delta_{j,1}\; \sum_{t=0}^{+\infty}\delta_{\{\lambda+ t e_2\}}(n)\;.
$$
By linearity of the circulation one gets that 
$$
A_{\Lambda}\;:=\;\sum_{\lambda\in\Lambda}A_{\{\lambda\}}
$$
provides a  vector potential for the localized magnetic field 
$B_{\Lambda}$. Observe that $A_{\Lambda}$ is well defined in view of the  finiteness of the sum in $\lambda$. \hfill $\blacktriangleleft$
\end{example}

\subsection{The magnetic translations}\label{sect:gen_MT}
Let $\rr{s}_1$ and $\rr{s}_2$  be  the canonical \emph{shift operators} defined on the Hilbert space $\ell^2(\Z^2)$  by
$$
(\rr{s}_j\psi)(n)\;:=\;\psi(n-e_j)\;,\qquad j=1,2\;,\quad \psi\in \ell^2(\Z^2)\;.
$$
Let $B$ a magnetic field with associated vector potential $A_B$. The \emph{magnetic phases} in  the gauge  $A_B$ are the unitary operators $\rr{y}_{A_B,1}$ and $\rr{y}_{A_B,2}$ defined by
$$
(\rr{y}_{A_B,j}\psi)(n)\;:=\;\expo{\ii {A_B}(n,n-e_j)}\psi(n)\;,\quad \psi\in \ell^2(\Z^2)\;.
$$
The \emph{magnetic translations} (in the gauge  ${A_B}$) are the \virg{twisted} shift operators defined by $\rr{s}_{A_B,j}:=\rr{y}_{A_B,j}\;\rr{s}_j $, or more explicitly as
$$
(\rr{s}_{A_B,j}\psi)(n)\;:=\;\expo{\ii {A_B}(n,n-e_j)}\psi(n-e_j)\;, \quad \psi\in \ell^2(\Z^2)\;.
$$
The magnetic translations are composition of unitary operators, hence one has that   $\rr{s}_{A_B,j}^*=\rr{s}_{A_B,j}^{-1}$ with adjoint given by
$$
\big(\rr{s}_{A_B,j}^{\ast}\psi\big)(n)\;:=\;\expo{\ii {A_B}(n,n+e_j)}\psi(n+e_j)\;,\quad \psi\in \ell^2(\Z^2)\;.
$$
A direct computation shows that
\begin{equation}\label{eqgen_comm_rel_X}
\rr{s}_{A_B,1}\;\rr{s}_{A_B,2}\;\rr{s}_{A_B,1}^{\ast}\;\rr{s}_{A_B,2}^{\ast}\;=\;\rr{f}_B
\end{equation}
where the \emph{flux operator} $\rr{f}_B$ acts as
$$
(\rr{f}_B\psi)(n)\;:=\;\expo{\ii \rr{C}[A_B](n)}\psi(n)\;=\;\expo{\ii B(n)}\psi(n)\;,\qquad \; \psi\in \ell^2(\Z^2)\;.
$$
In the case $A_B=0$ the magnetic translations reduce to the {shift operators} and the {flux operator} becomes the identity ${\bf 1}$.

\medskip

It is worth noting that the flux operator $\rr{f}_B$ only depends on the magnetic field $B=\rr{C}[A_B]$ and not on the specific choice of the potential $A_B$. Let $A'_B$ be a second vector potential for  $B$ related to $A_B$ by the gauge transform $G$ according to \eqref{eq:gaug_tr}. Then, it is straightforward to check that the magnetic translations associated to $A_B$ and $A'_B$ are related by the unitary equivalence 
$$
\rr{s}_{A_B',j}\; =\;\expo{-\ii G}\; \rr{s}_{A_B,j}\;\expo{\ii G}\;,\qquad j=1,2\;.
$$ Therefore, the change of the gauge translates to a unitary equivalence at levels of the magnetic translations.

\begin{example}[Magnetic translations for a constant magnetic field]\label{Ex1:MTconst_B}
In the case of a constant magnetic field of strength $b$ the associated magnetic translations (in the Landau gauge) are given by
\begin{equation}\label{eq:MT_cost}
\left\{
\begin{aligned}
(\rr{s}_{b,1}\psi)(n)\;&:=\;\expo{\ii n_2b}\psi(n-e_1)\\
(\rr{s}_{b,2}\psi)(n)\;&:=\;\psi(n-e_2)\;
\end{aligned}
\right.\;,
\qquad \psi\in \ell^2(\Z^2)\;.
\end{equation}
Evidently,  $\rr{s}_{b,2}=\rr{s}_{2}$ coincides with the standard shift operator in the direction $e_2$. In this specific case equation \eqref{eqgen_comm_rel_X} can be rewritten as 
\begin{equation}\label{eq:cost_comm_rel}
\rr{s}_{b,1}\;\rr{s}_{b,2}\;=\;\expo{\ii b}\;\rr{s}_{b,2}\;\rr{s}_{b,1}\;
\end{equation}
and the flux operator coincides with the multiplication by the constant phase $\expo{\ii b}$.
Equation \eqref{eq:cost_comm_rel} provides  the canonical commutation rule for the \emph{noncommutative torus} (\cf \cite[Chapter 12]{gracia-varilly-figueroa-01}).
\hfill $\blacktriangleleft$
\end{example}
\begin{example}[Iwatsuka magnetic translations]\label{Ex2:MT_Iwatsuka}
The magnetic translations associated to a Iwatsuka magnetic field \eqref{eq:watsuka_B} can be easily described in the 
{Landau-Iwatsuka potential} \eqref{eq:watsuka_A_h}. With this choice one obtains the two twisted translations
\begin{equation}\label{eq:MT_iwa}
\left\{
\begin{aligned}
(\rr{s}_{{\rm I},1}\psi)(n)\;&:=\;\expo{\ii n_2B_{\rm I}(n) }\psi(n-e_1)\\
(\rr{s}_{{\rm I},2}\psi)(n)\;&:=\;\psi(n-e_2)
\end{aligned}
\right.\,,
\qquad \psi\in \ell^2(\Z^2)\;.
\end{equation}
Exactly as in the case of the constant magnetic field, one has that $\rr{s}_{{\rm I},2}=\rr{s}_{2}$ coincides with the standard shift operator in the direction $e_2$. Therefore, the difference between \eqref{eq:MT_cost} and \eqref{eq:MT_iwa} lies  totally in the translation along $e_1$. The 
flux operator
\begin{equation}\label{eq:flux_op}
(\rr{f}_{\rm I}\psi)(n_1,n_2)\;:=\;
\left\{
\begin{aligned}
&\expo{\ii b_-}\psi(n_1,n_2)&\quad&\text{if}\quad n_1<0\\
&\expo{\ii b_0}\psi(n_1,n_2)&\quad&\text{if}\quad n_1=0\\
&\expo{\ii b_+}\psi(n_1,n_2)&\quad&\text{if}\quad n_1>0\;
\end{aligned}
\right.
\end{equation}
implements the commutation relation 
\eqref{eqgen_comm_rel_X} for the 
Iwatsuka magnetic  translations.
\hfill $\blacktriangleleft$
\end{example}
\begin{example}[Magnetic translations for a localized field]\label{Ex3:MT_loc}
The magnetic translations associated to a localized magnetic field $B_{\Lambda}$ of the type \eqref{eq:loc_B} can be defined exactly as in the previous examples by using the vector potential 
$A_\Lambda$. Also in this case one obtains that the magnetic translation in the direction $e_2$ coincides with the the standard shift operator.
An explicit computation shows that in this case the flux operator is given by 
$$
\rr{f}_{\Lambda}\;:=\;(\expo{\ii b}-1)\rr{p}_{\Lambda}\;+\;{\bf 1},
$$
where the projection $\rr{p}_{\Lambda}$ is 
the multiplication operator by the function 
$\delta_{\Lambda}$  defined in Example \ref{Ex3:loc}.
\hfill $\blacktriangleleft$
\end{example}

\medskip

Let $\ell^\infty(\Z^2)$ be the von Neumann algebra of  {bounded}  sequences\footnote{\label{note:1}Due to the discreteness of $\Z^2$ every function on $\Z^2$ is automatically uniformly continuous. This provides the identification $\s{C}_{\rm b}(\Z^2)\equiv\ell^\infty(\Z^2)$ where the symbol $\s{C}_{\rm b}(X)$ denotes the algebra of continuous bounded functions on $X$.}  on $\Z^2$. 
The group $\Z^2$ acts on 
$\ell^\infty(\Z^2)$ by translations as follows: Let $\gamma=(\gamma_1,\gamma_2)\in\Z^2$ and $g\in\ell^\infty(\Z^2)$.
Then $\tau_\gamma(g)\in\ell^\infty(\Z^2)$ is defined by 
\begin{equation}\label{eq:tau_act}
\tau_\gamma(g)(n):=g(n-\gamma_1e_1-\gamma_2e_2)\;,\qquad \quad \forall\;n\in\Z^2\;.
\end{equation}
The map $\gamma\mapsto \tau_\gamma$ defines a representation of $\Z^2$ by automorphisms of $\ell^\infty(\Z^2)$.

\medskip

Let $\s{B}(\ell^2(\Z^2))$ be the algebra of bounded operators on $\ell^2(\Z^2)$. 
The algebra $\ell^\infty(\Z^2)$ is canonically identified with the von Neumann sub-algebra $\s{M}$ of bounded multiplication operators on $\ell^2(\Z^2)$. More precisely one has that
$$
\s{M}\;:=\;\left\{\rr{m}_g\; |\;g\in \ell^\infty(\Z^2)\right\}
$$
where the {multiplication operator} $\rr{m}_g\in\s{B}(\ell^2(\Z^2))$ is defined by 
$$
(\rr{m}_g\psi)(n)\;:=\;g(n)\psi(n)\;,\qquad \; \psi\in \ell^2(\Z^2)\;.
$$  
Observe that 
the  {magnetic phases}  $\rr{y}_{A_B,j}$ and the {flux operator} $\rr{f}_B$  are elements of $\s{M}$. In particular one has that  $\rr{f}_B=\rr{m}_{\expo{\ii B}}$.
 The following result follows from a direct computation:
\begin{lemma}\label{lemma:comput_01}
For all $\gamma\in\Z^2$ and $g\in\ell^\infty(\Z^2)$ it holds true that
$$
\begin{aligned}
\rr{m}_{\tau_\gamma(g)}\;&=\;(\rr{s}_{A_B,1})^{\gamma_1}(\rr{s}_{A_B,2})^{\gamma_2}\;\rr{m}_g\;
(\rr{s}_{A_B,2})^{-\gamma_2}(\rr{s}_{A_B,1})^{-\gamma_1}\\
&=\;(\rr{s}_{A_B,2})^{\gamma_2}(\rr{s}_{A_B,1})^{\gamma_1}\;\rr{m}_g\;(\rr{s}_{A_B,1})^{-\gamma_1}
(\rr{s}_{A_B,2})^{-\gamma_2}
\end{aligned}
$$
independently of the vector potential $A_B$.
\end{lemma}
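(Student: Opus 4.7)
The strategy is to reduce conjugation by a magnetic translation to conjugation by the bare shift, exploiting the abelian nature of the multiplication operators. This makes the gauge independence manifest.

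First I would decompose $\rr{s}_{A_B,j}=\rr{y}_{A_B,j}\rr{s}_j$, where $\rr{y}_{A_B,j}$ is a unitary element of the commutative algebra $\s{M}$. Since any two elements of $\s{M}$ commute, and $\rr{m}_g\in\s{M}$, we have the elementary identity
$$
\rr{s}_{A_B,j}\,\rr{m}_g\,\rr{s}_{A_B,j}^{\ast}\;=\;\rr{y}_{A_B,j}\,\bigl(\rr{s}_j\,\rr{m}_g\,\rr{s}_j^{\ast}\bigr)\,\rr{y}_{A_B,j}^{\ast}\;.
$$
Next I would verify by a one-line computation on $\psi\in\ell^2(\Z^2)$ that $\rr{s}_j\,\rr{m}_g\,\rr{s}_j^{\ast}=\rr{m}_{\tau_{e_j}(g)}$. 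This is again an element of $\s{M}$, hence commutes with $\rr{y}_{A_B,j}$, and the phase factors cancel. Therefore
$$
\rr{s}_{A_B,j}\,\rr{m}_g\,\rr{s}_{A_B,j}^{\ast}\;=\;\rr{m}_{\tau_{e_j}(g)}\;,
$$
a conclusion which no longer depends on $A_B$.

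The second step is a straightforward induction. Applying the identity above $\gamma_j$ times (using negative powers of $\rr{s}_{A_B,j}$ when $\gamma_j<0$, which works identically since $\rr{s}_{A_B,j}^{\ast}=\rr{s}_{A_B,j}^{-1}$), one obtains
$$
(\rr{s}_{A_B,j})^{\gamma_j}\,\rr{m}_g\,(\rr{s}_{A_B,j})^{-\gamma_j}\;=\;\rr{m}_{\tau_{\gamma_j e_j}(g)}\;.
$$
Now chain the two directions: conjugation first by $(\rr{s}_{A_B,2})^{\gamma_2}$ sends $g$ to $\tau_{\gamma_2 e_2}(g)$, and then by $(\rr{s}_{A_B,1})^{\gamma_1}$ to $\tau_{\gamma_1 e_1}\bigl(\tau_{\gamma_2 e_2}(g)\bigr)$. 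The composed translation equals $\tau_\gamma(g)$ by the commutativity of translations in $\ell^\infty(\Z^2)$, which also shows that the two orderings in the statement produce the same multiplication operator.

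There is really no serious obstacle: the only subtle point is the observation that, although the magnetic translations themselves do not commute (their commutator produces the flux $\rr{f}_B$ by \eqref{eqgen_comm_rel_X}), the \emph{conjugation action} they induce on the commutative subalgebra $\s{M}$ is by honest translations, because the multiplicative phases $\rr{y}_{A_B,j}$ disappear in the adjoint action. This is the structural reason why the identity is gauge-invariant and why the order of the one-dimensional shifts can be swapped.
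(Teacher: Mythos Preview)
Your proof is correct and is precisely the direct computation the paper alludes to (the paper does not spell out a proof beyond the phrase ``follows from a direct computation''). The decomposition $\rr{s}_{A_B,j}=\rr{y}_{A_B,j}\rr{s}_j$ together with the commutativity of $\s{M}$ is exactly the right mechanism, and your remark that the gauge phases drop out under the adjoint action is the key structural observation.
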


\medskip

Lemma \ref{lemma:comput_01} states that, independently of the magnetic field,  the magnetic translations can be used to implement the \emph{induced} action (still denoted with the same symbol) 
$$
\Z^2\;\ni\;\gamma\;\longmapsto\;\tau_\gamma\;\in\;{\rm Aut}(\s{M})
$$ 
defined by
$$
\tau_\gamma(\rr{m}_g)\;:=\;\rr{m}_{\tau_{\gamma}(g)}
$$
for every $g\in \ell^\infty(\Z^2)$.

\subsection{Construction of the  magnetic \texorpdfstring{$C^*$-}-algebra}
\label{sect:magnetic_C_Al}
Let  $\rr{s}_{A_B,1}$ and $\rr{s}_{A_B,2}$ be the magnetic translations associated
to the magnetic field  $B$  through the vector potential $A_B$. 
\begin{definition}[The tight-binding magnetic $C^*$-algebra]\label{def:magnetic-algebra}
The \emph{magnetic $C^*$-algebra} $\s{A}_{A_B}$ of the magnetic field $B$
(in the gauge $A_B$) is the unital $C^*$-subalgebra of $\s{B}(\ell^2(\Z^2))$ generated by $\rr{s}_{A_B,1}$ and $\rr{s}_{A_B,2}$, \ie
$$
\s{A}_{A_B}\;:=\;C^*\left(\rr{s}_{A_B,1}\;,\;\rr{s}_{A_B,2}\right)\;.
$$
\end{definition}

\medskip

In more detail  $\s{A}_{A_B}$ is  constructed by closing  the collection of the Laurent polynomials 
in $\rr{s}_{A_B,1}$ and $\rr{s}_{A_B,2}$ with respect to the operator norm. 
Since $\s{A}_{A_B}$ is unital we will use the convention $(\rr{s}_{A_B,})^0={\bf 1}$, $j=1,2$.

\medskip

\begin{example}[Noncommutative torus, Iwatsuka algebra and flux tubes]\label{ex_B=0}
In the case of a constant magnetic field of strength $b$ the associated magnetic $C^*$-algebra will be denoted simply with
$$
\s{A}_{b}\;:=\;C^*\left(\rr{s}_{b,1},\rr{s}_{b,2}\right)
$$
where the magnetic translations $\rr{s}_{b,1}$ and $\rr{s}_{b,2}$ are described in Example \ref{Ex1:MTconst_B}. In the case of a vanishing magnetic field ($b=0$) the vector potential can be chosen to be constantly zero and the magnetic translations reduce to the standard shift operators $\rr{s}_j$ on $\ell^2(\Z^2)$.
The  Fourier transform provides a unitary isomorphism between $\ell^2(\Z^2)$ and
$L^2(\T^2)$ where $\n{T}^2\simeq[0,2\pi)^2$ is the two-dimensional torus. This unitary transform  
maps the shift operators into the multiplication by the phases $\expo{\ii k_j}$, where
$k:=(k_1,k_2)$ are the coordinates of $\n{T}^2$. As a consequence one obtains the isomorphism $\s{A}_{b=0}\simeq \s{C}(\T^2)$ where on the right-hand side one has the $C^*$-algebra of the continuous functions on $\n{T}^2$. In this sense, the 
magnetic $C^*$-algebras $\s{A}_{b}$ are noncommutative deformations of the algebra 
$\s{C}(\T^2)$.
Indeed $\s{A}_{b}$ turns out to be   a faithful representation
of the noncommutative torus $\n{A}_{\theta_b}$ with deformation parameter 
$\theta_b:=b(2\pi)^{-1}$ on the Hilbert space $\ell^2(\Z^2)$ (\cf \cite[Chapter 12]{gracia-varilly-figueroa-01}). The magnetic $C^*$-algebra associated to the Iwatsuka magnetic field $B_{\rm I}$ \eqref {eq:watsuka_B} is defined by
$$
\s{A}_{\rm I}\;:=\;C^*\left(\rr{s}_{{\rm I},1},\rr{s}_{{\rm I},2}\right)
$$
where the magnetic translations $\rr{s}_{{\rm I},1}$ and $\rr{s}_{{\rm I},2}$ are defined in Example \ref{Ex2:MT_Iwatsuka}. We will refer to $\s{A}_{\rm I}$ as the \emph{Iwatsuka $C^*$-algebra}. The magnetic $C^*$-algebra associated to a localized magnetic field is
$$
\s{A}_{\Lambda}\;:=\;C^*\left(\rr{s}_{\Lambda,1},\rr{s}_{\Lambda,2}\right)
$$
where the magnetic translations $\rr{s}_{\Lambda,1}$ and $\rr{s}_{\Lambda,2}$ are defined through the vector potential $A_\Lambda$ described in Example \ref{Ex3:loc}. The special case $\Lambda=\{\lambda\}$ has been considered in 
\cite{denittis-schulz-baldes-16} and,  by analogy, one could refer to $\s{A}_{\Lambda}$ as the (extended) \emph{flux-tube $C^*$-algebra}.
\end{example}

\medskip

The $C^*$-algebra $\s{A}_{A_B}$ contains a relevant commutative $C^*$-subalgebra $\s{F}_B$. From 
\eqref{eqgen_comm_rel_X} we know that the 
flux operator $\rr{f}_B=\rr{m}_{\expo{\ii B}}$
is an element of $\s{A}_{A_B}$. In view of 
Lemma \ref{lemma:comput_01}, also every   translated element $\tau_\gamma(\rr{f}_B):=\rr{m}_{\tau_{-\gamma}(\expo{\ii B})}$, with $\gamma\in\Z^2$, belongs to $\s{A}_{A_B}$. As a result one can define the $C^*$-algebra generated by these elements, \ie
\begin{equation}\label{eq:def_F_B}
\s{F}_B\;:=\;C^*\left(\tau_\gamma(\rr{f}_B)\;,\gamma\in\Z^2\right)\;\subset\; \s{A}_{A_B}\;.
\end{equation}
It turns out that $\s{F}_B$ is a commutative unital $C^*$-algebra. Therefore, in view of the Gelfand isomorphism, one has  $\s{F}_B\simeq\s{C}(\Omega_B)$, where 
$\Omega_B$ is a compact Hausdorff space. We will refer to $\Omega_B$ as the \emph{hull} of the magnetic field $B$ (\cf Section \ref{sect:integr} for a more accurate description of this space). It is worth pointing out that  the  commutative  $C^*$-algebra
$\s{F}_B$ only depends on the magnetic field $B$
and not on the particular vector potential $A_B$. More precisely, if $A_B$ and $A_B'$ are two distinct vector potentials for the same magnetic field $B$, then the $C^*$-algebras 
$\s{A}_{A_B}$ and $\s{A}_{A_B'}$ are in general different, even though they are gauge equivalent  $\s{A}_{A_B'}=\expo{-\ii G}(\s{A}_{A_B})\expo{\ii G}$. Nevertheless $\s{F}_B\subset\s{A}_{A_B}\cap\s{A}_{A_B'}$ since the elements in $\s{F}_B$ are gauge invariant. In this sense the commutative $C^*$-subalgebra $\s{F}_B$ encodes all the information about the magnetic field $B$.

\medskip

A finite monomial in $\s{A}_{A_B}$ is an element of the type 
\begin{equation}\label{eq:mon_01}
\rr{u}\;:=\;(\rr{s}_{A_B,1})^{r_1}\;(\rr{s}_{A_B,2})^{s_1}\;\ldots\;(\rr{s}_{A_B,1})^{r_d}\;(\rr{s}_{A_B,2})^{s_d}
\end{equation}
with $d\in\N$ and $r_1,s_1,\ldots,r_d,s_d\in\Z$. In view of \eqref{eqgen_comm_rel_X} and Lemma \ref{lemma:comput_01} the monomial 
$\rr{u}$ can always be rearranged in  the form
\begin{equation}\label{eq:mon_02}
\rr{u}\;:=\;\rr{f}_{r,s}\;(\rr{s}_{A_B,1})^{r}\;(\rr{s}_{A_B,2})^{s}
\end{equation}
where $r=r_1+\ldots+r_d$, $s=s_1+\ldots +s_d$ and
$\rr{f}_{r,s}$ is a suitable (unitary) element of $\s{F}_B$. From its very definition it follows that $\s{A}_{A_B}$ is linearly generated by the  family of  monomials  \eqref{eq:mon_01}, or equivalently \eqref{eq:mon_02}. This observation allows to define several relevant 
dense $\ast$-subalgebras of $\s{A}_{A_B}$.

\medskip

The first dense $\ast$-subalgebra is denoted with  $\s{A}_{A_B}^0$ and is defined as the  collection of the \emph{finite}  linear combination of   monomials of the type \eqref{eq:mon_01}, or equivalently of type \eqref{eq:mon_02}. We will refer to $\s{A}_{A_B}^0$
 as the subalgebra of the \emph{noncommutative polynomials}.

\medskip

For the second dense $\ast$-subalgebra we need to introduce the operator-valued Schwartz space
$\s{S}(\Z^2,\s{F}_B)$ made by  the \emph{rapidly descending} sequences 
$$
\{\rr{g}_{r,s}\}\;:=\;\left\{\rr{g}_{r,s}\;\big|\;(r,s)\in\Z^2\right\}\;\subset\; \s{F}_B
$$
such that 
\begin{equation}\label{eq:fre_top}
r_k\big(\{\rr{g}_{r,s}\}\big)^2\;:=\;\sup_{(r,s)\in\Z^2}\left(1+r^2+s^2\right)^k\|\rr{g}_{r,s}\|^2\;<\;\infty\;,
\end{equation}
for all $k\in\N_0$. It turns out that the system of seminorms \eqref{eq:fre_top} endows 
$\s{S}(\Z^2,\s{F}_B)$ with the structure of  a Fr\'{e}chet space (\cf Appendix \ref{app:dis_Swar}).
In the (standard) case $\s{F}_B=\C$ we will use the short notation $\s{S}(\Z^2)$ instead of $\s{S}(\Z^2,\C)$.
The \emph{smooth} $\ast$-subalgebra
$\s{A}_{A_B}^\infty$ is then defined as
\begin{equation}\label{eq:schw_spac}
\s{A}_{A_B}^\infty\;:=\; \left\{\rr{a}_{\rr{g}}:=\sum_{(r,s)\in\Z^2}\rr{g}_{r,s}\;(\rr{s}_{A_B,1})^{r}\;(\rr{s}_{A_B,2})^{s}\;\Big|\;\{\rr{g}_{r,s}\}\in\s{S}(\Z^2,\s{F}_B)\right\}\;.
\end{equation}
One can easily check the chain of inclusions $
\s{A}_{A_B}^0\subset\s{A}_{A_B}^\infty\subset\s{A}_{A_B}$. 
\begin{proposition}\label{prop_fre_top}
$\s{A}_{A_B}^\infty$ is a Fr\'{e}chet space with respect to the topology induced by the system of norms
$$
|||\rr{a}_{\rr{g}}|||_k\;:=\;r_k\big(\{\rr{g}_{r,s}\}\big)\;,\qquad k\in\N_0\;.
$$
\end{proposition}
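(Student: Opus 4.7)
The plan is to identify $\s{A}_{A_B}^\infty$, as a vector space, with the Schwartz space $\s{S}(\Z^2,\s{F}_B)$ via the coefficient map $\Phi(\{\rr{g}_{r,s}\}):=\rr{a}_{\rr{g}}$ from \eqref{eq:schw_spac}, and to transport the Fr\'{e}chet structure of $\s{S}(\Z^2,\s{F}_B)$ (established in Appendix \ref{app:dis_Swar}) through this identification. By the very definition of $|||\cdot|||_k$ one has $|||\Phi(\{\rr{g}_{r,s}\})|||_k=r_k(\{\rr{g}_{r,s}\})$, so once $\Phi$ is shown to be a linear bijection the two systems of seminorms induce the same locally convex topology, and Hausdorffness, metrizability and completeness transfer automatically.

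First I would check that $\Phi$ is well-defined as a map into $\s{A}_{A_B}$: each monomial $(\rr{s}_{A_B,1})^{r}(\rr{s}_{A_B,2})^{s}$ is a product of unitaries and therefore has operator norm one, so the general term of the series \eqref{eq:schw_spac} is bounded in norm by $\|\rr{g}_{r,s}\|$. Applying the seminorm bound \eqref{eq:fre_top} with $k$ taken large enough to make $(1+r^2+s^2)^{-k/2}$ summable over $\Z^2$ yields absolute operator-norm convergence of the defining series, so $\rr{a}_{\rr{g}}\in\s{A}_{A_B}$ and $\Phi$ is a well-defined linear map into the magnetic $C^*$-algebra whose image is, by construction, $\s{A}_{A_B}^\infty$.

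The main obstacle is the injectivity of $\Phi$: if $\rr{a}_{\rr{g}}=0$ as an operator, one must deduce $\rr{g}_{r,s}=0$ in $\s{F}_B$ for every $(r,s)\in\Z^2$. The natural tool is the noncommutative Fourier theory for $\s{A}_{A_B}$ built in Section \ref{sect:four_theo}: there one expects a family of continuous $\s{F}_B$-valued Fourier coefficient maps $\widehat{(\cdot)}_{r,s}\colon \s{A}_{A_B}\to \s{F}_B$ obtained by averaging $\rr{a}\mapsto \rr{a}\,(\rr{s}_{A_B,1})^{-r}(\rr{s}_{A_B,2})^{-s}$ against the magnetic analogue of the dual $\T^2$-action, and satisfying $\widehat{(\rr{a}_{\rr{g}})}_{r,s}=\rr{g}_{r,s}$ on elements of $\s{A}_{A_B}^\infty$. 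Continuity of these projections in the operator norm, combined with the absolute convergence established above, forces $\rr{g}_{r,s}=0$ for all $(r,s)$, yielding injectivity of $\Phi$ and, as a by-product, the uniqueness of the coefficient sequence representing any $\rr{a}\in\s{A}_{A_B}^\infty$.

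With $\Phi$ a linear bijection onto $\s{A}_{A_B}^\infty$ intertwining $|||\cdot|||_k$ and $r_k$ for every $k\in\N_0$, it is an isomorphism of locally convex spaces; completeness, metrizability and the Hausdorff property (already visible from the fact that $r_0=\sup_{(r,s)}\|\rr{g}_{r,s}\|$ is a norm, not merely a seminorm) are pulled back from $\s{S}(\Z^2,\s{F}_B)$. Hence $\s{A}_{A_B}^\infty$ endowed with $\{|||\cdot|||_k\}_{k\in\N_0}$ is a Fr\'{e}chet space, as claimed.
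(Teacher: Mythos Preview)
Your proposal is correct and follows essentially the same approach as the paper: both establish that the coefficient map $\{\rr{g}_{r,s}\}\mapsto\rr{a}_{\rr{g}}$ is a linear bijection between $\s{S}(\Z^2,\s{F}_B)$ and $\s{A}_{A_B}^\infty$, and then transport the Fr\'{e}chet structure. For injectivity the paper invokes Theorem~\ref{theo:four-ces} (equivalently Corollary~\ref{cor_uni_exp}), which is precisely the Fourier-coefficient argument you outline.
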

\proof
The map $\rr{g}_{r,s}\mapsto \rr{a}_{\rr{g}}$
which associates to a rapidly descending sequence in $\s{S}(\Z^2,\s{F}_B)$ an element of $\s{A}_{A_B}^\infty$ is a bijection. Surjectivity follows from the definition while injectivity is a consequence of Theorem \ref{theo:four-ces}.
Therefore, the map $\rr{g}_{r,s}\mapsto \rr{a}_{\rr{g}}$ defines an isomorphism of  Fr\'{e}chet spaces once 
$\s{A}_{A_B}^\infty$ is endowed with the topology induced by the norms $|||\cdot |||_k$.
\qed

\subsection{Fourier theory}\label{sect:four_theo}
In the  case $B=0$  the magnetic algebra is isomorphic to $\s{C}(\T^2)$  (\cf Example \ref{ex_B=0}) and for this algebra a very rich Fourier theory is available \cite{katznelson-04,grafakos-14}. In this section we will show that some of the results of the classical 
 Fourier theory extend to the  algebra $\s{A}_{A_B}$. Similar results are also described in \cite[Section VIII.2]{davidson-96}.

\medskip

Let $\rr{n}_1$ and $\rr{n}_2$ be the \emph{position} operators acting on $\ell^2(\Z^2)$ 
by
$$
(\rr{n}_j\psi)(n_1,n_2)\;:=\;n_j\;\psi(n_1,n_2)\;,\qquad j=1,2\;,\quad \psi\in \ell^2(\Z^2)\;.
$$
Let $\theta:=(\theta_1,\theta_2)$ be the coordinates of $\n{T}^2\simeq[0,2\pi)^2$ and consider the unitary operator $\rr{w}_\theta:=\expo{-\ii \theta\cdot \rr{n}}=\expo{-\ii (\theta_1 \rr{n}_1+\theta_2\rr{n}_2)}$  defined by
$$
(\rr{w}_\theta\psi)(n_1,n_2)\;:=\;\expo{-\ii (\theta_1 {n}_1+\theta_2{n}_2)}\;\psi(n_1,n_2)\;,\quad \psi\in \ell^2(\Z^2)\;.
$$
The unitary operators $\rr{w}_\theta$ allow to define a family of \emph{inner} automorphisms $\alpha_\theta:\s{B}(\ell^2(\Z^2))\to \s{B}(\ell^2(\Z^2))$ given by
\begin{equation}\label{eq:gru_act_01}
\alpha_\theta(\rr{a})\;:=\;\rr{w}_\theta\;\rr{a}\;\rr{w}_\theta^*\;,\quad \rr{a}\in \s{B}(\ell^2(\Z^2))\;.
\end{equation}
A  straightforward check shows that the map $\n{T}^2\ni\theta\mapsto\alpha_\theta\in{\rm Aut}(\s{B}(\ell^2(\Z^2)))$ defines an action
of the group $\n{T}^2$ on $\s{B}(\ell^2(\Z^2))$. However, this action is \emph{not} continuous on the whole algebra $\s{B}(\ell^2(\Z^2))$. For that consider the parity operator $\wp$ defined by
$(\wp\psi)(n)=\psi(-n)$. Since $\alpha_\theta(\wp)=\rr{w}_{2\theta}\wp$, it follows that 
$\|\alpha_\theta(\wp)-\wp\|=\|\rr{w}_{2\theta}-{\bf 1}\|=2$ whenever one of $\theta_1$ and   
$\theta_2$ are irrational. Things go differently if the action of $\n{T}^2$ is restricted to $\s{A}_{A_B}$.
\begin{proposition}\label{prop:group_act}
The prescription \eqref{eq:gru_act_01} defines a continuous group action 
$\n{T}^2\ni\theta\mapsto\alpha_\theta\in{\rm Aut}(\s{A}_{A_B})$.
\end{proposition}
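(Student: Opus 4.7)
The plan is to verify three things: invariance of $\s{A}_{A_B}$ under $\alpha_\theta$, the group law, and strong (pointwise norm) continuity in $\theta$. The first two are essentially algebraic, while the continuity requires a density-plus-isometry argument.

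First I would compute $\alpha_\theta$ on the generators. Since $\rr{w}_\theta$ is a multiplication operator by $\expo{-\ii \theta\cdot n}$, a direct calculation gives
$$
(\rr{w}_\theta\;\rr{s}_{A_B,j}\;\rr{w}_\theta^*\psi)(n)\;=\;\expo{-\ii\theta\cdot n}\expo{\ii A_B(n,n-e_j)}\expo{\ii\theta\cdot(n-e_j)}\psi(n-e_j)\;=\;\expo{-\ii\theta_j}(\rr{s}_{A_B,j}\psi)(n),
$$
so that $\alpha_\theta(\rr{s}_{A_B,j})=\expo{-\ii\theta_j}\rr{s}_{A_B,j}$ for $j=1,2$. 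Because $\alpha_\theta$ is a $\ast$-automorphism of $\s{B}(\ell^2(\Z^2))$ sending each generator of $\s{A}_{A_B}$ to a scalar multiple of itself, the magnetic $C^*$-algebra is stable under $\alpha_\theta$. The group-action property $\alpha_\theta\circ\alpha_{\theta'}=\alpha_{\theta+\theta'}$ (and $\alpha_0=\Id$) is immediate from $\rr{w}_\theta\rr{w}_{\theta'}=\rr{w}_{\theta+\theta'}$, and each $\alpha_\theta$ is in particular an isometry on $\s{A}_{A_B}$.

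For continuity, I would first establish it on the dense $\ast$-subalgebra $\s{A}_{A_B}^0$ of noncommutative polynomials. Every monomial of the form \eqref{eq:mon_02}, namely $\rr{u}=\rr{f}_{r,s}(\rr{s}_{A_B,1})^r(\rr{s}_{A_B,2})^s$ with $\rr{f}_{r,s}\in\s{F}_B$, satisfies $\alpha_\theta(\rr{f}_{r,s})=\rr{f}_{r,s}$ (two commuting multiplication operators) and, by iterating the computation above,
$$
\alpha_\theta(\rr{u})\;=\;\expo{-\ii(r\theta_1+s\theta_2)}\;\rr{u}.
$$
Hence for a finite linear combination $\rr{p}=\sum_{(r,s)\in F}c_{r,s}\rr{u}_{r,s}\in\s{A}_{A_B}^0$,
$$
\|\alpha_\theta(\rr{p})-\rr{p}\|\;\leqslant\;\sum_{(r,s)\in F}|c_{r,s}|\,\bigl|\expo{-\ii(r\theta_1+s\theta_2)}-1\bigr|\,\|\rr{u}_{r,s}\|\;\xrightarrow[\theta\to 0]{}\;0,
$$
so $\theta\mapsto\alpha_\theta(\rr{p})$ is continuous at the origin, and hence everywhere by the group law.

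Finally, to promote continuity from $\s{A}_{A_B}^0$ to the whole $C^*$-algebra, I would run the standard $\epsilon/3$ argument: for $\rr{a}\in\s{A}_{A_B}$ and $\epsilon>0$ pick $\rr{p}\in\s{A}_{A_B}^0$ with $\|\rr{a}-\rr{p}\|<\epsilon/3$, then
$$
\|\alpha_\theta(\rr{a})-\rr{a}\|\;\leqslant\;\|\alpha_\theta(\rr{a}-\rr{p})\|\;+\;\|\alpha_\theta(\rr{p})-\rr{p}\|\;+\;\|\rr{p}-\rr{a}\|\;<\;\tfrac{2\epsilon}{3}\;+\;\|\alpha_\theta(\rr{p})-\rr{p}\|,
$$
using that $\alpha_\theta$ is an isometry. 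The polynomial estimate makes the last term $<\epsilon/3$ for $\theta$ near $0$. The only subtlety to be alert to is the contrast with the unrestricted action on $\s{B}(\ell^2(\Z^2))$ highlighted by the parity example: there, the approximation by polynomials in the magnetic translations is unavailable, which is precisely why continuity fails. Within $\s{A}_{A_B}$ no such obstruction exists, so no additional work beyond the density/isometry argument should be needed.
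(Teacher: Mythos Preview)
Your proof is correct and follows essentially the same approach as the paper: compute the action on generators, deduce invariance of the algebra, and pass from a dense subalgebra to all of $\s{A}_{A_B}$ via the isometry of $\alpha_\theta$. The only minor difference is that you work with the finite polynomials $\s{A}_{A_B}^0$ rather than the smooth algebra $\s{A}_{A_B}^\infty$ used in the paper; this is slightly more economical since finite sums avoid the dominated-convergence step needed for the Schwartz-type series.
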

\proof
A direct computation shows that 
\begin{equation}\label{eq:gru_act_02}
\alpha_\theta(\rr{g})\;=\;\rr{g}\quad\forall\; \rr{g}\in\s{F}_B\;,
\end{equation}
independently of $\theta=(\theta_1,\theta_2)\in\T^2$, and
\begin{equation}\label{eq:gru_act_02.1}
\alpha_\theta\big((\rr{s}_{A_B,1})^{r}\;(\rr{s}_{A_B,2})^{s}\big)\;=\;\expo{-\ii (r\theta_1 +s\theta_2)}\;(\rr{s}_{A_B,1})^{r}\;(\rr{s}_{A_B,2})^{s}
\end{equation}
for all $(r,s)\in\Z^2$.
The relations \eqref{eq:gru_act_02} and \eqref{eq:gru_act_02.1} along with the definition \eqref{eq:schw_spac} of $\s{A}_{A_B}^\infty$  imply that $\alpha_\theta(\s{A}_{A_B}^\infty)=\s{A}_{A_B}^\infty$ for all $\theta\in\T^2$.
Finally, the density of $\s{A}_{A_B}^\infty$ and the fact that $\alpha_\theta$ is norm-preserving imply that  $\alpha_\theta(\s{A}_{A_B})=\s{A}_{A_B}$, namely $\alpha_\theta\in{\rm Aut}(\s{A}_{A_B})$ for all $\theta\in\T^2$. Let us prove now the continuity of the group action.
Let $\rr{a}=\sum_{(r,s)\in\Z^2}\rr{g}_{r,s}\;(\rr{s}_{A_B,1})^{r}\;(\rr{s}_{A_B,2})^{s}$
according to \eqref{eq:schw_spac}. Then 
$$
\|\alpha_\theta(\rr{a})\;-\;\rr{a}\|\;\leqslant\;\sum_{(r,s)\in\Z^2}\big|\expo{-\ii (r\theta_1 +s\theta_2)}-1\big|\; \|\rr{g}_{r,s}\|\;\leqslant\;2\sum_{(r,s)\in\Z^2} \|\rr{g}_{r,s}\|
$$
and from the dominated convergence theorem (for series) it follows that 
\begin{equation}\label{eq:lim_0_01}
\lim_{\theta\to0}\|\alpha_\theta(\rr{a})\;-\;\rr{a}\|\;\leqslant\;\sum_{(r,s)\in\Z^2}\lim_{\theta\to0}\big|\expo{-\ii (r\theta_1 +s\theta_2)}-1\big|\; \|\rr{g}_{r,s}\|\;=\;0
\end{equation}
for all $\rr{a}\in \s{A}_{A_B}^\infty$. Now, let $\rr{b}\in \s{A}_{A_B}$ be a generic element and $\varepsilon>0$. By density it exists a $\rr{a}\in \s{A}_{A_B}^\infty$ such that $\|\rr{b}-\rr{a}\|<\frac{\varepsilon}{2}$. Moreover,
$$
\begin{aligned}
\|\alpha_\theta(\rr{b})\;-\;\rr{b}\|\;&\leqslant\;\|\alpha_\theta(\rr{a})\;-\;\rr{a}\|\;+\;\|\alpha_\theta(\rr{b}-\rr{a})\;-\;(\rr{b}-\rr{a})\|\\
&<\;\|\alpha_\theta(\rr{a})\;-\;\rr{a}\|+\varepsilon\;.
\end{aligned}
$$
Therefore, from \eqref{eq:lim_0_01} it follows that
$\lim_{\theta\to0}\|\alpha_\theta(\rr{b})\;-\;\rr{b}\|<\varepsilon
$,
independently of $\varepsilon>0$ and for all $\rr{b}\in \s{A}_{A_B}$. This concludes the proof.
\qed

\medskip

Let 
$$
{\rm Inv}_{\T^2}(\s{A}_{A_B})\;:=\;\big\{\rr{a}\in\s{A}_{A_B} \;\big|\; \alpha_\theta(\rr{a})=\rr{a}\;,\;\;\forall\;\theta\in\T^2 \big\}
$$
 be the set of invariant elements of  $\s{A}_{A_B}$. From \eqref{eq:gru_act_02} one gets that $\s{F}_B\subseteq {\rm Inv}_{\T^2}(\s{A}_{A_B})$. The next goal is to characterize ${\rm Inv}_{\T^2}(\s{A}_{A_B})$. For that let us denote with $\dd\mu(\theta):=(2\pi)^{-2}\dd\theta$  the normalized Haar measure on $\T^2$ and consider the averaging
$$
\langle\rr{a}\rangle\;:=\;\int_{\n{T}^2}\dd\mu(\theta)\; \alpha_\theta(\rr{a})\;,\qquad \rr{a}\in \s{A}_{A_B}
$$
where the integral is  meant in the Bochner sense. From the invariance of the Haar measure it follows that $\langle\rr{a}\rangle\in {\rm Inv}_{\T^2}(\s{A}_{A_B})$ by construction. Moreover, 
$\langle\rr{a}\rangle=\rr{a}$ if and only if $\rr{a}\in{\rm Inv}_{\T^2}(\s{A}_{A_B})$.
This means that every element of ${\rm Inv}_{\T^2}(\s{A}_{A_B})$ can be always represented as the
averaging of some element in $\s{A}_{A_B}$. The next result   characterizes the set of invariant elements.
\begin{lemma}\label{lemma:four_01}
It holds true that 
$$
{\rm Inv}_{\T^2}(\s{A}_{A_B})\;=\;\s{F}_B\;.
$$
\end{lemma}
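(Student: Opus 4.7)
The plan is to prove the non-trivial inclusion $\mathrm{Inv}_{\T^2}(\s{A}_{A_B})\subseteq \s{F}_B$ by showing that the averaging operator $\rr{a}\mapsto\langle\rr{a}\rangle$ acts as a conditional expectation onto $\s{F}_B$. The reverse inclusion $\s{F}_B\subseteq\mathrm{Inv}_{\T^2}(\s{A}_{A_B})$ is already recorded in equation \eqref{eq:gru_act_02}.

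First I would compute the averaging on the dense smooth subalgebra $\s{A}_{A_B}^\infty$. For a generic
$$
\rr{a}\;=\;\sum_{(r,s)\in\Z^2}\rr{g}_{r,s}\,(\rr{s}_{A_B,1})^{r}\,(\rr{s}_{A_B,2})^{s}\;\in\;\s{A}_{A_B}^\infty,
$$
the rapid-decrease condition on $\{\rr{g}_{r,s}\}\in\s{S}(\Z^2,\s{F}_B)$ ensures absolute (hence Bochner) convergence, so one can exchange the Bochner integral with the sum. Using \eqref{eq:gru_act_02}, \eqref{eq:gru_act_02.1} and the orthogonality relation
$$
\int_{\T^2}\dd\mu(\theta)\;\expo{-\ii(r\theta_1+s\theta_2)}\;=\;\delta_{r,0}\,\delta_{s,0},
$$
one obtains
$$
\langle\rr{a}\rangle\;=\;\sum_{(r,s)\in\Z^2}\rr{g}_{r,s}\left(\int_{\T^2}\dd\mu(\theta)\,\expo{-\ii(r\theta_1+s\theta_2)}\right)(\rr{s}_{A_B,1})^{r}(\rr{s}_{A_B,2})^{s}\;=\;\rr{g}_{0,0}\;\in\;\s{F}_B.
$$
Hence $\langle\s{A}_{A_B}^\infty\rangle\subseteq\s{F}_B$.

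Next I would extend this to arbitrary elements of $\s{A}_{A_B}$ by a density and closedness argument. The averaging map is norm-contractive (since each $\alpha_\theta$ is isometric and $\mu$ is a probability measure), therefore continuous on $\s{A}_{A_B}$. Given $\rr{b}\in\s{A}_{A_B}$, by Proposition \ref{prop_fre_top} and the inclusion $\s{A}_{A_B}^\infty\subset\s{A}_{A_B}$ with dense image, pick a sequence $\rr{a}_n\in\s{A}_{A_B}^\infty$ with $\rr{a}_n\to\rr{b}$ in norm. Then $\langle\rr{a}_n\rangle\to\langle\rr{b}\rangle$, and since $\langle\rr{a}_n\rangle\in\s{F}_B$ and $\s{F}_B$ is a closed subspace of $\s{A}_{A_B}$, one concludes $\langle\rr{b}\rangle\in\s{F}_B$. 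Finally, if $\rr{b}\in\mathrm{Inv}_{\T^2}(\s{A}_{A_B})$, the invariance of the Haar measure gives $\langle\rr{b}\rangle=\rr{b}$, so $\rr{b}\in\s{F}_B$, which is exactly the required inclusion.

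The only mildly delicate point is the justification of passing the Bochner integral through the series defining a smooth element; this is where the rapid decay encoded in $\s{S}(\Z^2,\s{F}_B)$ is genuinely used, via dominated convergence as in the proof of Proposition \ref{prop:group_act}. Everything else is a straightforward combination of orthogonality of characters on $\T^2$, contractivity of $\langle\cdot\rangle$, and closedness of $\s{F}_B$.
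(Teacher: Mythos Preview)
Your proof is correct and follows essentially the same route as the paper: compute the averaging on a dense subspace, then use contractivity of $\langle\cdot\rangle$ together with closedness of $\s{F}_B$, and conclude via $\langle\rr{b}\rangle=\rr{b}$ for invariant elements. The only cosmetic difference is that the paper works directly with monomials $\rr{g}\,(\rr{s}_{A_B,1})^{r}(\rr{s}_{A_B,2})^{s}$ (i.e.\ on $\s{A}_{A_B}^0$) rather than on $\s{A}_{A_B}^\infty$, which sidesteps the sum--integral interchange you flagged as the delicate point.
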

\proof

Since we already know that $\s{F}_B\subseteq {\rm Inv}_{\T^2}(\s{A}_{A_B})$ we only need to prove the opposite inclusion. Since every element in ${\rm Inv}_{\T^2}(\s{A}_{A_B})$ can be always represented as the
averaging of some element in $\s{A}_{A_B}$ it is enough to prove that $\langle\rr{a}\rangle\in \s{F}_B$
for all $\rr{a}\in\s{A}_{A_B}$. Since the map $\rr{a}\mapsto \langle\rr{a}\rangle$ is  continuous, \ie $\|\langle\rr{a}\rangle\|\leqslant\|{\rr{a}}\|$, and $\s{F}_B$ is closed, it is sufficient to prove that the 
averaging of the monomials \eqref{eq:mon_02} takes value in $\s{F}_B$. Based on  \eqref{eq:gru_act_02} and \eqref{eq:gru_act_02.1}, a direct computation shows that
\begin{equation}\label{eq:aux_0010}
\begin{aligned}
\langle\rr{g}\;(\rr{s}_{A_B,1})^{r}\;(\rr{s}_{A_B,2})^{s}\rangle\;&=\;\rr{g}\;(\rr{s}_{A_B,1})^{r}\;(\rr{s}_{A_B,1})^{s}\int_{\n{T}^2}\dd\mu(\theta)\; \expo{-\ii (r\theta_1 +s\theta_2)}\\
&=\;\rr{g}\;\delta_{r,0}\;\delta_{s,0}
\end{aligned}
\end{equation}
for all $\rr{g}\in \s{F}_B$
 and for all $(r,s)\in\Z^2$. This completes the proof.
\qed

\medskip

We are now in position to prove that every element of   $\s{A}_{A_B}$ can be represented as a
Fourier-type series in the generating monomials \eqref{eq:mon_02}. To make precise this statement, we need to introduce some notation.
Given a $\rr{a}\in \s{A}_{A_B}$ let us define the 
$\s{F}_B$-valued coefficients
\begin{equation}\label{eq:oxoco}
\begin{aligned}
\hat{\rr{a}}_{r,s}\;:&=\;\langle\rr{a}(\rr{s}_{A_B,2})^{-s}(\rr{s}_{A_B,1})^{-r}\rangle\\
&=\;\left(\int_{\n{T}^2}\dd\mu(\theta)\; \expo{\ii (r\theta_1 +s\theta_2)}\;\alpha_\theta(\rr{a})\right)(\rr{s}_{A_B,2})^{-s}(\rr{s}_{A_B,1})^{-r}\;.
\end{aligned}
\end{equation}
To every box $\Lambda_N:=[-N,N]^2\cap\Z^2$ with $N\in\N$, we associate the   \emph{Ces\`aro mean}
\begin{equation}\label{eq:oxoco_02}
\sigma_{N}(\rr{a})\;:=\;\sum_{(r,s)\in\Lambda_N}\left(1-\frac{|r|}{N+1}\right)
\left(1-\frac{|s|}{N+1}\right)\hat{\rr{a}}_{r,s}\;(\rr{s}_{A_B,1})^{r}\;(\rr{s}_{A_B,2})^{s}\;.
\end{equation}

\begin{theorem}[Fourier expansion - Ces\`aro mean]\footnote{The proof of this theorem is adapted from \cite[Theorem 5.5.7]{weaver-01}.}\label{theo:four-ces}
For every element $\rr{a}\in \s{A}_{A_B}$ it holds true that 
$$
\lim_{N\to\infty}\|\sigma_N(\rr{a})-\rr{a}\|\;=\;0\;.
$$

\end{theorem}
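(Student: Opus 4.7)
My strategy is to implement a noncommutative analogue of the classical Fej\'er summation theorem, using the toral action $\alpha_\theta$ from Proposition \ref{prop:group_act} as a substitute for the usual translation of functions over $\n{T}^2$.

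First, I would recast $\sigma_N(\rr{a})$ as a ``convolution'' against the two-dimensional Fej\'er kernel. From \eqref{eq:gru_act_02.1} one has
\[
\alpha_\theta\bigl((\rr{s}_{A_B,2})^{-s}(\rr{s}_{A_B,1})^{-r}\bigr)\;=\;\expo{\ii(r\theta_1+s\theta_2)}\,(\rr{s}_{A_B,2})^{-s}(\rr{s}_{A_B,1})^{-r},
\]
so multiplying the second expression in \eqref{eq:oxoco} on the right by $(\rr{s}_{A_B,1})^{r}(\rr{s}_{A_B,2})^{s}$ and commuting that right multiplication through the Bochner integral yields the key identity
\[
\hat{\rr{a}}_{r,s}\,(\rr{s}_{A_B,1})^{r}(\rr{s}_{A_B,2})^{s}\;=\;\int_{\n{T}^2}\dd\mu(\theta)\;\expo{\ii(r\theta_1+s\theta_2)}\,\alpha_\theta(\rr{a}).
\]
Inserting this into \eqref{eq:oxoco_02} and pulling the finite sum over $\Lambda_N$ inside the integral produces
\[
\sigma_N(\rr{a})\;=\;\int_{\n{T}^2}\dd\mu(\theta)\;K_N(\theta)\,\alpha_\theta(\rr{a}),
\]
where $K_N(\theta):=F_N(\theta_1)F_N(\theta_2)$ is the tensor product of the classical one-dimensional Fej\'er kernels $F_N$.

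Second, I would invoke the three familiar properties of $K_N$ on $\n{T}^2$: non-negativity, the normalization $\int_{\n{T}^2}K_N\,\dd\mu=1$, and the approximate-identity property that $\int_{\{|\theta|>\delta\}}K_N\,\dd\mu\to 0$ as $N\to\infty$ for every $\delta>0$. The normalization combined with the convolution formula gives
\[
\sigma_N(\rr{a})-\rr{a}\;=\;\int_{\n{T}^2}\dd\mu(\theta)\;K_N(\theta)\bigl(\alpha_\theta(\rr{a})-\rr{a}\bigr),
\]
and positivity of $K_N$ then yields the operator-norm bound
\[
\|\sigma_N(\rr{a})-\rr{a}\|\;\leqslant\;\int_{\n{T}^2}\dd\mu(\theta)\;K_N(\theta)\,\|\alpha_\theta(\rr{a})-\rr{a}\|.
\]

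Third, given $\varepsilon>0$, I would use the norm-continuity of $\theta\mapsto\alpha_\theta(\rr{a})$ already established in the proof of Proposition \ref{prop:group_act} to select $\delta>0$ such that $\|\alpha_\theta(\rr{a})-\rr{a}\|<\varepsilon$ whenever $|\theta|<\delta$. Splitting the previous integral over $\{|\theta|<\delta\}$ and its complement majorizes the right-hand side by $\varepsilon+2\|\rr{a}\|\int_{\{|\theta|\geqslant\delta\}}K_N\,\dd\mu$, which drops below $2\varepsilon$ for $N$ large enough. The only genuinely delicate point is the first step: justifying that right multiplication and the finite sum commute with the Bochner integral. However, since $\Lambda_N$ is finite and the integrands are norm-continuous (Proposition \ref{prop:group_act}), this reduces to routine bookkeeping. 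Once the convolution identity is in hand, the remainder is simply the classical Fej\'er argument transported into the $C^{\ast}$-algebraic framework.
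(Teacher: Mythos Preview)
Your proposal is correct and follows essentially the same approach as the paper: both establish the convolution identity $\sigma_N(\rr{a})=\int_{\n{T}^2}K_N(\theta)\,\alpha_\theta(\rr{a})\,\dd\mu(\theta)$ with $K_N=F_N\otimes F_N$, subtract $\rr{a}$ using the normalization of $K_N$, and then exploit the norm-continuity of $\theta\mapsto\alpha_\theta(\rr{a})$. The only cosmetic difference is in the final estimate: you run a direct two-dimensional $\varepsilon$--$\delta$ split (near/far from the origin), whereas the paper uses the algebraic identity $\alpha_\theta(\rr{a})-\rr{a}=\alpha_{(\theta_1,0)}\bigl(\alpha_{(0,\theta_2)}(\rr{a})-\rr{a}\bigr)+\alpha_{(\theta_1,0)}(\rr{a})-\rr{a}$ to reduce to two one-dimensional Fej\'er integrals and then invokes the summability-kernel property directly; both routes are standard and equally valid.
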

\proof
By combining \eqref{eq:oxoco} and \eqref{eq:oxoco_02} one gets
$$
\begin{aligned}
\sigma_N(\rr{a})\;=\;\int_{\n{T}^2}\dd\mu(\theta)\; K_N(\theta)\;\alpha_\theta(\rr{a})
\end{aligned}
$$
where
$$
\begin{aligned}
K_N(\theta)\;:&=\;\sum_{(r,s)\in\Lambda_N}\left(1-\frac{|r|}{N+1}\right)
\left(1-\frac{|s|}{N+1}\right)\expo{\ii (r\theta_1 +s\theta_2)}\\
&=\;F_N(\theta_1)\;F_N(\theta_2)
\end{aligned}
$$
and 
$$
F_N(\theta_j)\;:=\;\sum_{k=-N}^N\left(1-\frac{|k|}{N+1}\right)
\expo{\ii k\theta_j}\;=\;\frac{1}{N+1}\left(\frac{\sin\left(N\theta_j+\frac{\theta_j}{2}\right)}{\sin\left(\frac{\theta_j}{2}\right)}\right)^2
$$
is the Fej\'er kernel, with $j=1,2$ \cite[Chapter I, Section 2.5]{katznelson-04} or \cite[Chapter I, Section 3.1.3]{grafakos-14}.
Since $(2\pi)^{-1}\int_0^{2\pi}\dd\theta_jF_N(\theta_j)=1$, and consequently
$\int_{\n{T}^2}\dd\mu(\theta) K_N(\theta)=1$, one gets that
$$
\sigma_N(\rr{a})\;-\;\rr{a}\;=\;\int_{\n{T}^2}\dd\mu(\theta)\; K_N(\theta)\;\big[\alpha_\theta(\rr{a})-\rr{a}\big]\;.
$$
Using the identity $\alpha_\theta(\rr{a})-\rr{a}=\alpha_{(\theta_1,0)}(\alpha_{(0,\theta_2)}(\rr{a})-\rr{a}+\rr{a}-
\alpha_{(-\theta_1,0)}(\rr{a}))$ and the fact that the $\T^2$-action is isometric one gets
$$
\begin{aligned}
\|\sigma_N(\rr{a})\;-\;\rr{a}\|\;\leqslant\;&\int_{0}^{2\pi}\frac{\dd\theta_1}{2\pi}\; F_N(\theta_1)\;\|\alpha_{(\theta_1,0)}(\rr{a})-\rr{a}\|\\
&+\;\int_{0}^{2\pi}\frac{\dd\theta_2}{2\pi}\; F_N(\theta_2)\;\|\alpha_{(0,\theta_2)}(\rr{a})-\rr{a}\|\;.
\end{aligned}
$$
Since 
the functions $f_1(\theta_1):=\|\alpha_{(\theta_1,0)}(\rr{a})-\rr{a}\|$ and $f_2(\theta_2):=\|\alpha_{(0,\theta_2)}(\rr{a})-\rr{a}\|$ are continuous with $f_1(0)=0=f_2(0)$
and the the Fej\'er kernel is a summability kernel \cite[Chapter I, Section 2.2]{katznelson-04}
one obtains that the two integrals on the right go to zero when $N\to \infty$. This concludes the proof.
\qed

\medskip

Theorem \ref{theo:four-ces} states that every element of  $\rr{a}\in \s{A}_{A_B}$ can be approximated by the sequence $\sigma_N(\rr{a})\in \s{A}_{A_B}^0$ obtained from the \virg{Fourier coefficients} $\hat{\rr{a}}_{r,s}$.
It follows that two elements with the same $\s{F}_B$-valued coefficients are identical. 
\begin{corollary}\label{cor_uni_exp}
Let $\rr{a}\in \s{A}_{A_B}$. Then $\rr{a}=0$ if and only if 
$\hat{\rr{a}}_{r,s}=0$ for all $(r,s)\in\Z^2$.
\end{corollary}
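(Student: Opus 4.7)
The plan is to obtain this as an immediate consequence of Theorem \ref{theo:four-ces}, with the two implications handled separately.

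First I would dispatch the forward implication. If $\rr{a} = 0$, then directly from the definition \eqref{eq:oxoco} one reads off $\hat{\rr{a}}_{r,s} = 0$ for every $(r,s) \in \Z^2$, because the map $\rr{a} \mapsto \hat{\rr{a}}_{r,s}$ is linear (it is the composition of the group averaging $\langle \cdot \rangle$ against the character $\expo{\ii(r\theta_1+s\theta_2)}$ with right multiplication by the unitary $(\rr{s}_{A_B,2})^{-s}(\rr{s}_{A_B,1})^{-r}$).

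For the reverse implication, the key observation is that the Ces\`aro mean \eqref{eq:oxoco_02} is a \emph{finite} linear combination of the Fourier coefficients $\hat{\rr{a}}_{r,s}$ indexed by $(r,s) \in \Lambda_N$. Hence, if $\hat{\rr{a}}_{r,s} = 0$ for every $(r,s) \in \Z^2$, then $\sigma_N(\rr{a}) = 0$ for every $N \in \N$. Applying Theorem \ref{theo:four-ces}, one concludes
\begin{equation*}
\|\rr{a}\|\;=\;\lim_{N \to \infty}\|\sigma_N(\rr{a})-\rr{a}\|\;=\;0\;,
\end{equation*}
so that $\rr{a} = 0$.

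There is no real obstacle here: the content of the corollary is entirely a consequence of the norm convergence established in Theorem \ref{theo:four-ces}, together with the finite and linear nature of the Ces\`aro sums. The only thing worth double-checking is the linearity of \eqref{eq:oxoco} in $\rr{a}$, which is clear from the Bochner integral formulation of the averaging $\langle \cdot \rangle$.
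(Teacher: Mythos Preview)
Your proof is correct and takes essentially the same approach as the paper: the corollary is stated there as an immediate consequence of Theorem~\ref{theo:four-ces}, with the observation that the Ces\`aro means $\sigma_N(\rr{a})$ are built from the Fourier coefficients $\hat{\rr{a}}_{r,s}$. You have simply spelled out both implications in detail.
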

\begin{remark}[Ces\`aro vs. uniform convergence]\label{rk:ces_unif}
 By observing that 
$$
K_N(\theta)\;=\;\frac{1}{(N+1)^2}\sum_{n_1=0}^N\sum_{n_2=0}^N D_{(n_1,n_2)}(\theta)\;,
$$
where
$$
 D_{(n_1,n_2)}(\theta)\;:=\;\sum_{(r,s)\in\Lambda_{(n_1,n_2)}}\expo{\ii (r\theta_1 +s\theta_2)}\;=\;\frac{\sin\left(n_1\theta_1+\frac{\theta_1}{2}\right)\sin\left(n_1\theta_2+\frac{\theta_2}{2}\right)}{\sin\left(\frac{\theta_1}{2}\right)\sin\left(\frac{\theta_2}{2}\right)}
$$
is the \emph{Dirichlet kernel} of the rectangular domain $\Lambda_{(n_1,n_2)}:=([-n_1,n_1]\times[-n_2,n_2])\cap\Z^2$, one can rewrite \eqref{eq:oxoco_02} in the form
\begin{equation}\label{eq:ces_01}
\sigma_N(\rr{a})\;=\;\frac{1}{(N+1)^2}\sum_{(n_1,n_2)\in\Lambda_N} S_{(n_1,n_2)}(\rr{a})
\end{equation}
where
$$
\begin{aligned}
S_{(n_1,n_2)}(\rr{a})\;:&=\;\sum_{(r,s)\in\Lambda_{(n_1,n_2)}}\widehat{\rr{a}}_{r,s}\;(\rr{s}_{A_B,1})^{r}\;(\rr{s}_{A_B,2})^{s}\\
\end{aligned}
$$
is the  partial Fourier-type expansion of $\rr{a}$. Therefore, Theorem \ref{theo:four-ces} provides   
a justification of the series representation 
$$
\rr{a}\;\stackrel{\sigma}{=}\;\lim_{(n_1,n_2)\to\infty}S_{(n_1,n_2)}(\rr{a})\;:=\;\sum_{(r,s)\in\Z^2}\widehat{\rr{a}}_{r,s}\;(\rr{s}_{A_B,1})^{r}\;(\rr{s}_{A_B,2})^{s}
$$
where the symbol $\stackrel{\sigma}{=}$ means that the limit must be understood in the sense of 
Ces\`aro, as given by  equation \eqref{eq:ces_01}. This is the best that one can generally hope for a generic element $\rr{a}\in \s{A}_{A_B}$. Indeed, let $f\in\s{C}(\T)$ be the Fej\'er-type function constructed as in \cite[Chapter II, Section 2.1]{katznelson-04}. 
Then, the sequence of the partial Fourier-type expansions of the  element  $f(\rr{s}_1^{A_B})\in \s{A}_{A_B}$ cannot be convergent in norm. 
\hfill $\blacktriangleleft$
\end{remark}

\medskip

It is useful to characterize the collection of elements of  
$\s{A}_{A_B}$ having an absolutely convergent Fourier series of $\s{F}_B$-valued coefficients.
More precisely, let us introduce the space
$$
\s{A}_{A_B}^{\rm a.c.}\;:=\;\left\{\rr{a}\in\s{A}_{A_B}\;\left|\;\|\rr{a}\|_{\ell^1}:=\sum_{(r,s)\in\Z^2}\|\widehat{\rr{a}}_{r,s}\|<\infty\right.
\right\}
$$
where the coefficients $\widehat{\rr{a}}_{r,s}$ are defined by \eqref{eq:oxoco}. Since
$\s{A}_{A_B}^\infty\subset \s{A}_{A_B}^{\rm a.c.}\subset\s{A}_{A_B}$
it follows that $\s{A}_{A_B}^{\rm a.c.}$ is  dense in $\s{A}_{A_B}$. The main properties of 
$\s{A}_{A_B}^{\rm a.c.}$ are described in the next result.
\begin{proposition}
The space $\s{A}_{A_B}^{\rm a.c.}$, endowed with the norm $\|\;\|_{\ell^1}$, is a Banach $\ast$-algebra isomorphic to $\ell^1(\Z^2,\s{F}_B)$. In particular every $\rr{a}\in \s{A}_{A_B}^{\rm a.c.}$ agrees with its Fourier-type expansion, \ie
$$
\rr{a}\;=\;\sum_{(r,s)\in\Z^2}\rr{a}_{r,s}\;(\rr{s}_{A_B,1})^{r}\;(\rr{s}_{A_B,2})^{s}
$$
\end{proposition}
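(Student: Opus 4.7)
The plan is to establish an explicit isometric bijection between $\s{A}_{A_B}^{\rm a.c.}$ and $\ell^1(\Z^2,\s{F}_B)$ that transfers the Banach $\ast$-algebra structure from one side to the other. First I would define the map
$$
\Phi\;:\;\ell^1(\Z^2,\s{F}_B)\;\longrightarrow\;\s{A}_{A_B}\;,\qquad \Phi\big(\{\rr{g}_{r,s}\}\big)\;:=\;\sum_{(r,s)\in\Z^2}\rr{g}_{r,s}\,(\rr{s}_{A_B,1})^{r}\,(\rr{s}_{A_B,2})^{s}\;.
$$
This series is absolutely norm-convergent in $\s{A}_{A_B}$ because the magnetic translations are unitary, so the partial sums form a Cauchy sequence with limit bounded in norm by $\sum_{(r,s)}\|\rr{g}_{r,s}\|$. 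To see that $\Phi$ lands in $\s{A}_{A_B}^{\rm a.c.}$ and recovers the original sequence as Fourier coefficients, I would apply the averaging map $\langle\cdot\rangle$ together with the (right) shift by $(\rr{s}_{A_B,2})^{-s}(\rr{s}_{A_B,1})^{-r}$, pull the limit inside via continuity, and appeal to formula \eqref{eq:aux_0010} to isolate the $(r,s)$-coefficient.

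Conversely, the coefficient map $\rr{a}\mapsto\{\widehat{\rr{a}}_{r,s}\}$ sends $\s{A}_{A_B}^{\rm a.c.}$ into $\ell^1(\Z^2,\s{F}_B)$ by the very definition of $\|\cdot\|_{\ell^1}$. To show these two maps are mutually inverse I would use Corollary \ref{cor_uni_exp}: given $\rr{a}\in\s{A}_{A_B}^{\rm a.c.}$, the element $\Phi(\{\widehat{\rr{a}}_{r,s}\})$ is absolutely convergent and, by the previous step, has the same Fourier coefficients as $\rr{a}$; hence by uniqueness of the expansion the two elements coincide. This simultaneously proves the final assertion that every $\rr{a}\in\s{A}_{A_B}^{\rm a.c.}$ equals its Fourier series, and shows that $\Phi$ is an isometric bijection for the $\|\cdot\|_{\ell^1}$ norm, transferring completeness from $\ell^1(\Z^2,\s{F}_B)$ to $\s{A}_{A_B}^{\rm a.c.}$.

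It remains to verify the Banach $\ast$-algebra axioms, which I would do by computing the product of two Fourier series in canonical form. For a typical pair of monomials I would use Lemma \ref{lemma:comput_01} to push $\widehat{\rr{b}}_{r',s'}$ past $(\rr{s}_{A_B,1})^{r}(\rr{s}_{A_B,2})^{s}$ (producing $\tau_{(r,s)}(\widehat{\rr{b}}_{r',s'})\in\s{F}_B$), and then the commutation relation \eqref{eqgen_comm_rel_X} to reorder $(\rr{s}_{A_B,2})^{s}(\rr{s}_{A_B,1})^{r'}$ into canonical order at the cost of a unitary element $\rr{\phi}_{r,s,r',s'}\in\s{F}_B$ built from translates of the flux operator $\rr{f}_B$. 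The product therefore has the convolution-type expansion
$$
\widehat{\rr{a}\rr{b}}_{R,S}\;=\;\sum_{\substack{r+r'=R\\ s+s'=S}}\widehat{\rr{a}}_{r,s}\;\tau_{(r,s)}\big(\widehat{\rr{b}}_{r',s'}\big)\;\rr{\phi}_{r,s,r',s'}\;,
$$
which gives $\|\rr{a}\rr{b}\|_{\ell^1}\leqslant\|\rr{a}\|_{\ell^1}\|\rr{b}\|_{\ell^1}$ because $\tau$ is isometric and $\rr{\phi}$ is unitary. For the involution, each $[(\rr{s}_{A_B,1})^{r}(\rr{s}_{A_B,2})^{s}]^{\ast}$ is reduced to canonical form by the same manipulations, yielding a coefficient of identical norm, so $\|\rr{a}^{\ast}\|_{\ell^1}=\|\rr{a}\|_{\ell^1}$.

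The main obstacle I expect is not conceptual but bookkeeping: tracking the flux phases and translations when putting products and adjoints of monomials into canonical order \eqref{eq:mon_02}, and checking carefully that the resulting coefficients still belong to $\s{F}_B$ and satisfy the required norm bounds. Once this is done, submultiplicativity and $\ast$-invariance of the $\ell^1$ norm are automatic from the commutativity of $\s{F}_B$ and the fact that $\tau_{(r,s)}$ and $\rr{\phi}_{r,s,r',s'}$ preserve norms.
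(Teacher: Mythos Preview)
Your proposal is correct and follows essentially the same route as the paper: both directions of the bijection are set up via the Fourier-coefficient map and the reconstruction series, injectivity comes from Corollary~\ref{cor_uni_exp}, and the Banach $\ast$-algebra structure is checked by reducing products and adjoints of monomials to canonical form. The only difference is that the paper dispatches the last step by citing the classical argument in \cite[Chapter I, Section 6.1]{katznelson-04}, whereas you spell out the twisted convolution explicitly via Lemma~\ref{lemma:comput_01} and \eqref{eqgen_comm_rel_X}; your more detailed bookkeeping is exactly what that reference amounts to in this noncommutative setting.
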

\proof
Every $\rr{a}\in \s{A}_{A_B}^{\rm a.c.}$ defines an element $\{\widehat{\rr{a}}_{r,s}\}\in \ell^1(\Z^2,\s{F}_B)$ by definition. Moreover, the map $\rr{a}\mapsto \{\widehat{\rr{a}}_{r,s}\}$ is injective in view of Corollary \ref{cor_uni_exp}. The surjectivity follows by observing that every $\{\widehat{\rr{a}}_{r,s}\}\in \ell^1(\Z^2,\s{F}_B)$ defines an element 
$$
\rr{a}\;:=\;\lim_{N\to\infty}\sum_{(r,s)\in\Lambda_{N}}\widehat{\rr{a}}_{r,s}\;(\rr{s}_{A_B,1})^{r}\;(\rr{s}_{A_B,2})^{s}\;\in\;\s{A}_{A_B}^{\rm a.c.}\;.
$$
with $\s{F}_B$-valued coefficients $\{\widehat{\rr{a}}_{r,s}\}$.
Finally, a straightforward computation as in \cite[Chapter I, Section 6.1]{katznelson-04} shows that $\s{A}_{A_B}^{\rm a.c.}$ is closed under the operations inherited by the $\ast$-algebraic structure of $\s{A}_{A_B}$.
\qed

\subsection{Spatial derivations and differential structure}\label{sect:spa_deriv}
Proposition  \ref{prop:group_act} can be  reinterpreted
in the jargon of the theory of $C_0$-(semi)groups \cite[Chapter 3]{bratteli-robinson-87}  by saying that  the map $\theta\mapsto \alpha_\theta$ defines a strongly continuous $\n{T}^2$-action 
on the $C^*$-agebra $\s{A}_{A_B}$ by automorphisms \cite[Corollary 3.1.8]{bratteli-robinson-87}. This allows us to introduce the \emph{infinitesimal generators} ${\nabla}_1$ and 
${\nabla}_2$ defined by
$$
\begin{aligned}
{\nabla}_1(\rr{a})\;&:=\;\lim_{\theta_1\to 0}\frac{\alpha_{(\theta_1,0)}(\rr{a})-\rr{a}}{\theta_1}\\
{\nabla}_2(\rr{a})\;&:=\;\lim_{\theta_2\to 0}\frac{\alpha_{(0,\theta_2)}(\rr{a})-\rr{a}}{\theta_2}
\end{aligned}
$$
for suitable elements $\rr{a}\in\s{A}_{A_B}$ \cite[Definition 3.1.5]{bratteli-robinson-87}. Indeed,   ${\nabla}_1$ and 
${\nabla}_2$ are unbounded linear maps on $\s{A}_{A_B}$, defined on dense domains $\s{D}({\nabla}_1)$ and $\s{D}({\nabla}_2)$, respectively \cite[Proposition 3.1.6]{bratteli-robinson-87}. Moreover, they are \emph{(symmetric) derivations} \cite[Definition 3.2.21]{bratteli-robinson-87}, in the sense that 
\begin{equation}\label{eq:leib}
\begin{aligned}
{\nabla}_j(\rr{a}^*)\;&=\;{\nabla}_j(\rr{a})^*;\,\\
{\nabla}_j(\rr{a}\rr{b})\;&=\;\rr{a}\;{\nabla}_j(\rr{b})\;+\; {\nabla}_j(\rr{a})\;\rr{b}\;,
\end{aligned}\qquad \rr{a},\rr{b}\in\s{D}({\nabla}_j)\;,\;\;j=1,2\;.
\end{equation}
Since the subalgebra $\s{F}_B$ is invariant under the action $\alpha_\theta$ it follows that
\begin{equation}\label{eq:deriv_01}
{\nabla}_1(\rr{g})\;=\;{\nabla}_2(\rr{g})\;=\;0\;,\qquad\forall\; \rr{g}\in\s{F}_B\;.
\end{equation}
Moreover, a direct computation shows
\begin{equation}\label{eq:deriv_02}
\begin{aligned}
&{\nabla}_1\big((\rr{s}_{A_B,1})^{r}\;(\rr{s}_{A_B,2})^{s}\big)\;=\;-\ii r\;(\rr{s}_{A_B,1})^{r}\;(\rr{s}_{A_B,2})^{s}&&\\
&{\nabla}_2\big((\rr{s}_{A_B,1})^{r}\;(\rr{s}_{A_B,2})^{s}\big)\;=\;-\ii s\;(\rr{s}_{A_B,1})^{r}\;(\rr{s}_{A_B,2})^{s}\;,&&
\end{aligned}\qquad\forall\; (r,s)\in\Z^2\;.
\end{equation}
In particular, one can check  that
\begin{equation}\label{eq:deriv_022}
{\nabla}_j\big(\rr{g}\;(\rr{s}_{A_B,1})^{r}\;(\rr{s}_{A_B,2})^{s}\big)\;=\;\ii\big[\rr{g}\;(\rr{s}_{A_B,1})^{r}\;(\rr{s}_{A_B,2})^{s},\rr{n}_j\big]\;,
\end{equation}
where $[\;,\;]$ denotes the commutator. Indeed equation \eqref{eq:deriv_022} is a special case
of a more general result \cite[Definition 3.2.55]{bratteli-robinson-87}, which justifies the name of
 \emph{spatial derivation} for 
 ${\nabla}_1$ and 
${\nabla}_2$.

\medskip

From \eqref{eq:deriv_01} and \eqref{eq:deriv_02} it follows that
$$
\s{A}_{A_B}^0\;\subset\;\s{A}_{A_B}^\infty\;\subset\;\s{D}({\nabla}_1)\;\cap\;\s{D}({\nabla}_2)\;.
$$
Moreover, the elements of $\s{A}_{A_B}^\infty$ support several iterated derivations. Let  ${\nabla}_j^a:={\nabla}_j\circ\ldots\circ{\nabla}_j$ be the $a$-times iteration of the derivation ${\nabla}_j$.
Since the group $\n{T}^2$ is abelian, it follows that  ${\nabla}_1\circ{\nabla}_2={\nabla}_2\circ{\nabla}_1$ whenever the product of the derivatives is well defined. It follows that the expression  ${\nabla}_1^a{\nabla}_2^b$, for  $a,b\in\N_0$,
 is not ambiguous in suitable domains  like $\s{A}_{A_B}^0$ or $\s{A}_{A_B}^\infty$. 
Let us introduce the spaces
$$
\s{C}^k(\s{A}_{A_B})\;:=\;\overline{\s{A}_{A_B}^0}^{\;\|\;\|_k}\;,
$$
obtained by closing the noncommutative polynomials with respect to the norm
$$
\|\rr{a}\|_k\;:=\;\sum_{i=0}^k\sum_{a+b=i}\|{\nabla}_1^a{\nabla}_2^b(\rr{a})\|\;.
$$
A standard argument shows that $\rr{a}\in \s{C}^k(\s{A}_{A_B})$ if and only if ${\nabla}_1^a{\nabla}_2^b(\rr{a})\in \s{A}_{A_B}$ is well defined for all $a,b\in\N_0$ such that $a+b\leqslant k$, namely
$$
\s{C}^k(\s{A}_{A_B})\;=\;\left\{\rr{a}\in \s{A}_{A_B}\; \big|\; \theta\mapsto \alpha_\theta(\rr{a})\;\;\;\text{is}\; k\text{-differentiable} \right\}\;.
$$
The regularity of an element is reflected on the decay property of its $\s{F}_B$-valued coefficients. This is the content of the next result.
\begin{lemma}\label{lem:diff-abs}\footnote{
The results provided in
 Lemma \ref{lem:diff-abs} are not optimal, in general. For instance, in the case  of a zero magnetic field
 described in Example \ref{ex_B=0} one can replace \eqref{eq:coef_bound_0} with $(1+r^2+s^2)^k\|\widehat{\rr{a}}_{r,s}\|^2\to0$ when $(r,s)\to\infty$ \cite[Theorem 3.3.9]{grafakos-14}. Moreover, the absolute convergence of the series of coefficients is generally guaranteed by a degree of regularity weaker than $k>2$ \cite[Theorem 3.3.16]{grafakos-14}. However, for the purposes of this work we will not need such a  kind of generalization.}
Let $\rr{a}\in \s{C}^k(\s{A}_{A_B})$ then
\begin{equation}\label{eq:coef_bound_0}
\sup_{(r,s)\in\Z^2}\left(1+r^2+s^2\right)^k\|\widehat{\rr{a}}_{r,s}\|^2\;<\;\infty
\end{equation}
where the $\widehat{\rr{a}}_{r,s}$ are defined  by \eqref{eq:oxoco}. In particular
$$
\s{C}^k(\s{A}_{A_B})\;\subset\; \s{A}_{A_B}^{\rm a.c.}
$$
for all $k>2$.
\end{lemma}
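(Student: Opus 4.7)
The plan is to reduce both claims to a single Fourier-type identity:
\begin{equation*}
\widehat{{\nabla}_1^{a}{\nabla}_2^{b}(\rr{a})}_{r,s}\;=\;(-\ii)^{a+b}\,r^{a}s^{b}\,\widehat{\rr{a}}_{r,s}\;,\qquad (r,s)\in\Z^{2}\;,\quad a+b\leqslant k\;.
\end{equation*}
To establish this for $(a,b)=(1,0)$, I would start from the integral formula \eqref{eq:oxoco}, use that ${\nabla}_1$ commutes with $\alpha_\theta$ (since $\T^{2}$ is abelian) so that $\alpha_\theta({\nabla}_1(\rr{a}))=\partial_{\theta_1}\alpha_\theta(\rr{a})$, and integrate by parts in $\theta_1$ over $\T$ (no boundary contribution). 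Multiplying the resulting identity on the right by $(\rr{s}_{A_B,2})^{-s}(\rr{s}_{A_B,1})^{-r}$ produces $\widehat{{\nabla}_1(\rr{a})}_{r,s}=-\ii r\,\widehat{\rr{a}}_{r,s}$; an identical argument handles ${\nabla}_2$. Iteration on the stable domain $\s{C}^{k}(\s{A}_{A_B})$ yields the general formula (an alternative derivation uses the Leibniz rule \eqref{eq:leib} and \eqref{eq:deriv_02} on the product $\rr{a}(\rr{s}_{A_B,2})^{-s}(\rr{s}_{A_B,1})^{-r}$ together with the observation that $\langle{\nabla}_j(\cdot)\rangle={\nabla}_j\langle\cdot\rangle$ vanishes thanks to \eqref{eq:deriv_01} and Lemma \ref{lemma:four_01}).

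Next, since both the action $\alpha_\theta$ and the magnetic translations are unitary, the averaging $\langle\,\cdot\,\rangle$ is contractive and hence $\|\widehat{\rr{b}}_{r,s}\|\leqslant\|\rr{b}\|$ for every $\rr{b}\in\s{A}_{A_B}$. Applying this to $\rr{b}={\nabla}_1^{a}{\nabla}_2^{b}(\rr{a})$ gives the key pointwise estimate
\begin{equation*}
r^{2a}s^{2b}\,\|\widehat{\rr{a}}_{r,s}\|^{2}\;\leqslant\;\|{\nabla}_1^{a}{\nabla}_2^{b}(\rr{a})\|^{2}\;,\qquad a+b\leqslant k\;.
\end{equation*}
Combining this with the multinomial expansion
\begin{equation*}
(1+r^{2}+s^{2})^{k}\;=\;\sum_{a+b\leqslant k}\binom{k}{a,b,k-a-b}\,r^{2a}s^{2b}
\end{equation*}
proves \eqref{eq:coef_bound_0}, with explicit constant $\sum_{a+b\leqslant k}\binom{k}{a,b,k-a-b}\|{\nabla}_1^{a}{\nabla}_2^{b}(\rr{a})\|^{2}$.

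For the inclusion $\s{C}^{k}(\s{A}_{A_B})\subset\s{A}_{A_B}^{\rm a.c.}$ when $k>2$, the estimate \eqref{eq:coef_bound_0} rewrites as $\|\widehat{\rr{a}}_{r,s}\|\leqslant C(1+r^{2}+s^{2})^{-k/2}$ for some $C>0$; the comparison series $\sum_{(r,s)\in\Z^{2}}(1+r^{2}+s^{2})^{-k/2}$ converges exactly when $k>2$, as seen from the radial integral $\int_{1}^{\infty}\rho^{1-k}\,\dd\rho$. The only genuinely delicate point is the first step, namely justifying the commutation of the closed unbounded derivation ${\nabla}_j$ with the Bochner integral defining $\langle\,\cdot\,\rangle$; this is standard (\cf \cite[Chapter~3]{bratteli-robinson-87}) and rests on the $\alpha_\theta$-invariance of the Haar measure, which gives $\langle\alpha_\theta(\rr{c})\rangle=\langle\rr{c}\rangle$ and hence ${\nabla}_j\langle\rr{c}\rangle=\langle{\nabla}_j\rr{c}\rangle$ for every $\rr{c}\in\s{D}({\nabla}_j)$.
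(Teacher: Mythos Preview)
Your proof is correct and follows essentially the same route as the paper: the key Fourier identity $\widehat{{\nabla}_1^{a}{\nabla}_2^{b}(\rr{a})}_{r,s}=(-\ii)^{a+b}r^{a}s^{b}\,\widehat{\rr{a}}_{r,s}$ via integration by parts, then the contractivity of the averaging to bound $r^{2a}s^{2b}\|\widehat{\rr{a}}_{r,s}\|^{2}$, and finally a comparison with $\sum(1+r^{2}+s^{2})^{-k/2}$. The only cosmetic difference is that you use a single multinomial expansion of $(1+r^{2}+s^{2})^{k}$ where the paper applies the binomial formula twice (first to $(r^{2}+s^{2})^{k}$, then to $(1+(r^{2}+s^{2}))^{k}$); your explicit attention to justifying the commutation of ${\nabla}_j$ with the Bochner integral is a welcome addition that the paper leaves implicit.
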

\proof
Let $a,b\in\N_0$ such that $a+b\leqslant k$. Then ${\nabla}_1^a{\nabla}_2^b(\rr{a})\in \s{A}_{A_B}$ and we can calculate the $\s{F}_B$-valued coefficients according to \eqref{eq:oxoco}.
An iterated integration by parts provides
$$
\begin{aligned}
\widehat{{\nabla}_1^a{\nabla}_2^b(\rr{a})}_{r,s}\;&=\;\left(\int_{\n{T}^2}\dd\mu(\theta)\; \expo{\ii (r\theta_1 +s\theta_2)}\;\alpha_\theta({\nabla}_1^a{\nabla}_2^b(\rr{a}))\right)(\rr{s}_{A_B,2})^{-s}(\rr{s}_{A_B,1})^{-r}\\
&=\;(-\ii)^{a+b}r^as^b\left(\int_{\n{T}^2}\dd\mu(\theta)\; \expo{\ii (r\theta_1 +s\theta_2)}\;\alpha_\theta(\rr{a})\right)(\rr{s}_{A_B,2})^{-s}(\rr{s}_{A_B,1})^{-r}\\
&=\;(-\ii)^{a+b}r^as^b\;\widehat{\rr{a}}_{r,s}\;.
\end{aligned}
$$
Since $\|\widehat{\nabla_1^a\nabla_2^b(\rr{a})}_{r,s}\|\leqslant\|\nabla_1^a\nabla_2^b(\rr{a})\|=:C_{a,b}$ for all $(r,s)\in\Z^2$, 
we can define $C:=\max_{a+b=k}\{C_{a,b}\}$. It
follows that
$r^{2a}s^{2b}\|\widehat{\rr{a}}_{r,s}\|^2\leqslant C^2$ for all $a,b$ such that $a+b=k$. Then, by using the   formula for the  binomial expansion
one gets
\begin{equation}\label{eq:coef_bound}
(r^2+s^2)^{k}\|\widehat{\rr{a}}_{r,s}\|^2\;\leqslant\;2^kC^2\;.
\end{equation}
From \eqref{eq:coef_bound}, a second application of the formula for the  binomial expansion provides \eqref{eq:coef_bound_0} with bound given by $4^kC^2$. From \eqref{eq:coef_bound} one gets
$$
\begin{aligned}
\sum_{(r,s)\in\Z^2}\|\widehat{\rr{a}}_{r,s}\|\;&\leqslant\;\|\widehat{\rr{a}}_{0,0}\|\,+\;2^kC^2\sum_{(r,s)\in\Z^2\setminus(0,0)}\frac{1}{\left({r^2+s^2}\right)^{\frac{k}{2}}}\\
&=\;\|\widehat{\rr{a}}_{0,0}\|\,+\;2^{k+1}C^2\left( 2\sum_{r=1}^{+\infty}\sum_{s=1}^{+\infty}\frac{1}{\left({r^2+s^2}\right)^{\frac{k}{2}}}+\sum_{r=1}^{\infty}\frac{1}{r^k}+\sum_{s=1}^{\infty}\frac{1}{s^k}\right)
\\
&=\;\|\widehat{\rr{a}}_{0,0}\|\,+\;2^{k+2}C^2\left( \sum_{r=1}^{+\infty}\sum_{s=1}^{+\infty}\frac{1}{\left({r^2+s^2}\right)^{\frac{k}{2}}}+\sum_{r=1}^{+\infty}\frac{1}{r^k}\right)
\\
&\leq\;\|\widehat{\rr{a}}_{0,0}\|\,+\;2^{\frac{k}{2}+2}C^2\left( \left(\sum_{r=1}^{+\infty}\frac{1}{r^{\frac{k}{2}}}\right)\left(\sum_{s=1}^{+\infty}\frac{1}{s^{\frac{k}{2}}}\right)+\sum_{r=1}^{+\infty}\frac{1}{r^k}\right)
\end{aligned}
$$
where in the last inequality we used $2rs\leqslant r^2+s^2$. This concludes the proof.
\qed

\medskip

The space of the smooth elements is defined by
$$
\s{C}^\infty(\s{A}_{A_B})\;:=\;\bigcap_{k\in\N_0}\s{C}^k(\s{A}_{A_B})\;.
$$
For $\rr{a}\in\s{C}^\infty(\s{A}_{A_B})$ the 
 map $\theta\mapsto \alpha_\theta(a)$ turns out to be  smooth.
\begin{proposition}
The dense subalgebra $\s{A}_{A_B}^\infty$ defined by \eqref{eq:schw_spac} 
coincides with the algebra of the  
smooth elements with respect to the $\n{T}^2$-action, \ie
$$
\s{A}_{A_B}^\infty\;=\;\s{C}^\infty(\s{A}_{A_B})\;.
$$
\end{proposition}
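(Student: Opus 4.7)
The plan is to prove the two inclusions separately, using the Fourier expansion machinery developed in Theorem \ref{theo:four-ces}, Corollary \ref{cor_uni_exp}, and Lemma \ref{lem:diff-abs}.

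\textbf{Inclusion $\s{A}_{A_B}^\infty \subseteq \s{C}^\infty(\s{A}_{A_B})$.} Take $\rr{a}_{\rr{g}} \in \s{A}_{A_B}^\infty$ with coefficients $\{\rr{g}_{r,s}\} \in \s{S}(\Z^2,\s{F}_B)$. For arbitrary $a,b \in \N_0$ I would set
\[
\rr{b}_{a,b}\;:=\;\sum_{(r,s)\in\Z^2}(-\ii)^{a+b}\,r^a s^b\,\rr{g}_{r,s}\;(\rr{s}_{A_B,1})^{r}\;(\rr{s}_{A_B,2})^{s}\;,
\]
which converges absolutely in norm because $\{r^a s^b\,\rr{g}_{r,s}\}$ is again rapidly descending (and in particular in $\ell^1(\Z^2,\s{F}_B)$). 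First I would check on finite partial sums, using \eqref{eq:deriv_01}, \eqref{eq:deriv_02}, and the Leibniz rule \eqref{eq:leib}, that $\nabla_1^a\nabla_2^b$ applied to the partial sum yields the corresponding partial sum of $\rr{b}_{a,b}$. Then, because $\nabla_1$ and $\nabla_2$ are the infinitesimal generators of a strongly continuous $\T^2$-action and hence are \emph{closed} operators, passing to the limit gives $\nabla_1^a\nabla_2^b(\rr{a}_{\rr{g}}) = \rr{b}_{a,b} \in \s{A}_{A_B}$. This is true for all $a,b$, so $\rr{a}_{\rr{g}}\in\s{C}^k(\s{A}_{A_B})$ for every $k$.

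\textbf{Inclusion $\s{C}^\infty(\s{A}_{A_B}) \subseteq \s{A}_{A_B}^\infty$.} This is the direction where the substantial input is Lemma \ref{lem:diff-abs}. Given $\rr{a} \in \s{C}^\infty(\s{A}_{A_B})$, applying the lemma for every $k \in \N_0$ yields
\[
\sup_{(r,s)\in\Z^2}\bigl(1+r^2+s^2\bigr)^k\,\|\widehat{\rr{a}}_{r,s}\|^2\;<\;\infty\qquad\forall\;k\in\N_0\;,
\]
so the sequence $\{\widehat{\rr{a}}_{r,s}\}$ of Fourier-type coefficients is an element of $\s{S}(\Z^2,\s{F}_B)$. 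Define then
\[
\rr{b}\;:=\;\sum_{(r,s)\in\Z^2}\widehat{\rr{a}}_{r,s}\;(\rr{s}_{A_B,1})^{r}\;(\rr{s}_{A_B,2})^{s}\;\in\;\s{A}_{A_B}^\infty\;,
\]
where the series converges absolutely in norm by the rapid decay (or equivalently by applying Lemma \ref{lem:diff-abs} with $k > 2$ to conclude $\rr{b} \in \s{A}_{A_B}^{\rm a.c.}$).

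\textbf{Matching the two elements.} To finish I would show $\rr{a}=\rr{b}$ using Corollary \ref{cor_uni_exp}. Computing the $\s{F}_B$-valued coefficients $\widehat{\rr{b}}_{r',s'}$ via the definition \eqref{eq:oxoco}, the absolute convergence of the defining series for $\rr{b}$ permits interchanging the sum and the Bochner integral over $\T^2$. The orthogonality computation \eqref{eq:aux_0010} then yields $\widehat{\rr{b}}_{r',s'} = \widehat{\rr{a}}_{r',s'}$ for every $(r',s') \in \Z^2$. Corollary \ref{cor_uni_exp} therefore gives $\rr{a}-\rr{b}=0$, hence $\rr{a} \in \s{A}_{A_B}^\infty$.

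The main obstacle is not any of the individual steps but making sure the machinery lines up cleanly: the closedness of the generators $\nabla_j$ in the first direction, and the interchange of sum with the Bochner integral in the coefficient computation for $\rr{b}$ in the second direction. Both are standard once the absolute convergence provided by the Schwartz decay is in hand, so no genuine technical surprise is expected.
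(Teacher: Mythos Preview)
Your proof is correct and follows essentially the same route as the paper's: Lemma \ref{lem:diff-abs} for one inclusion, and the formal term-by-term differentiation of the series for the other. You are simply more explicit where the paper is terse---you invoke the closedness of the generators $\nabla_j$ to justify passing derivatives through the series, and you spell out the identification $\rr{a}=\rr{b}$ via Corollary \ref{cor_uni_exp}, whereas the paper absorbs both of these into one-line references to the Fourier machinery already in place (Proposition \ref{prop_fre_top} and the proof of Lemma \ref{lem:diff-abs}).
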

\proof
Let $\rr{a}\in \s{A}_{A_B}^\infty$. Then 
the computation of the $\s{F}_B$-valued coefficients of $\nabla_1^a\nabla_2^b(\rr{a})$ provided in the proof of  Lemma \ref{lem:diff-abs} shows that
$\nabla_1^a\nabla_2^b(\rr{a})\in \s{A}_{A_B}^{\rm a.c.}$ for all $a,b\in\N_0$. This implies that $\s{A}_{A_B}^\infty\subset\s{C}^k(\s{A}_{A_B})$ for all $k\in\N_0$, and so
$\s{A}_{A_B}^\infty\subseteq\s{C}^\infty(\s{A}_{A_B})$. On the other 
hand it is also true that $\s{C}^\infty(\s{A}_{A_B})\subseteq\s{A}_{A_B}^\infty$. In fact, if $\rr{a}\in \s{C}^\infty(\s{A}_{A_B})$ then \eqref{eq:coef_bound_0} applies for all $k\in\N_0$, showing that $\rr{a}\in \s{A}_{A_B}^\infty$. This concludes the proof.
\qed

\medskip
The last result justifies the name of \emph{smooth algebra} for $\s{A}_{A_B}^\infty$.
Let us recall that a  pre-$C^*$-algebra is a dense subalgebra of a $C^*$-algebra which is stable under holomorphic functional calculus (see \cite[Definition 3.26]{gracia-varilly-figueroa-01}). 
\begin{proposition}\label{prop:pre-C-ast}
The smooth algebra $\s{A}_{A_B}^\infty$ defined by \eqref{eq:schw_spac} is a  unital Fréchet  pre-$C^*$-algebra of $\s{A}_{A_B}$.
\end{proposition}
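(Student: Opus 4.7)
Most of the ingredients are already in place. Unitality is immediate since $\mathbf{1}=\mathrm{a}_{\rr{g}}$ for the trivially supported sequence $\rr{g}_{r,s}=\delta_{r,0}\delta_{s,0}\,\mathbf{1}\in\s{F}_B$. Density follows from the chain of inclusions $\s{A}_{A_B}^0\subset \s{A}_{A_B}^\infty\subset \s{A}_{A_B}$ together with the density of the noncommutative polynomials $\s{A}_{A_B}^0$. The Fréchet property is precisely Proposition~\ref{prop_fre_top}. Hence the only genuine content of the statement is stability under holomorphic functional calculus.

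I will reduce this to \emph{spectral invariance}, \ie to showing that whenever $\rr{a}\in\s{A}_{A_B}^\infty$ is invertible in $\s{A}_{A_B}$, its inverse $\rr{a}^{-1}$ lies in $\s{A}_{A_B}^\infty$. The reduction is standard: a unital Fréchet subalgebra of a unital $C^*$-algebra that is inverse-closed is automatically stable under holomorphic functional calculus, via the Cauchy integral representation $f(\rr{a})=\tfrac{1}{2\pi\mathrm{i}}\oint f(z)(z\mathbf{1}-\rr{a})^{-1}\dd z$ together with the continuity of $z\mapsto (z\mathbf{1}-\rr{a})^{-1}$ in each of the Fréchet seminorms $|||\cdot|||_k$.

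For spectral invariance I exploit the identification $\s{A}_{A_B}^\infty=\s{C}^\infty(\s{A}_{A_B})$ proved in the preceding proposition. Each spatial derivation $\nabla_j$ is the infinitesimal generator of a strongly continuous group of $\ast$-automorphisms, hence is a \emph{closed} derivation. A classical computation (differentiating the identity $\rr{a}\rr{a}^{-1}=\mathbf{1}$ along $\alpha_\theta$) shows that for $\rr{a}\in\s{D}(\nabla_j)$ invertible in $\s{A}_{A_B}$ one has $\rr{a}^{-1}\in\s{D}(\nabla_j)$ with
\[
\nabla_j(\rr{a}^{-1})\;=\;-\,\rr{a}^{-1}\,\nabla_j(\rr{a})\,\rr{a}^{-1}\;.
\]
Iterating this formula and using the Leibniz rule \eqref{eq:leib}, any mixed derivative $\nabla_1^a\nabla_2^b(\rr{a}^{-1})$ can be expressed as a finite sum of products of $\rr{a}^{-1}$ with derivatives $\nabla_1^{a'}\nabla_2^{b'}(\rr{a})$ of order $a'+b'\leqslant a+b$. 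If $\rr{a}\in\s{C}^\infty(\s{A}_{A_B})$, each of these products lies in $\s{A}_{A_B}$, so $\rr{a}^{-1}\in\s{C}^k(\s{A}_{A_B})$ for every $k$ and therefore $\rr{a}^{-1}\in\s{C}^\infty(\s{A}_{A_B})=\s{A}_{A_B}^\infty$.

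The main technical point, and the step I would verify most carefully, is the first-order inverse-closedness statement $\rr{a}^{-1}\in\s{D}(\nabla_j)$ for a closed derivation $\nabla_j$. This is where closedness of $\nabla_j$ is essential: one writes
\[
\frac{\alpha_{(\theta_1,0)}(\rr{a}^{-1})-\rr{a}^{-1}}{\theta_1}\;=\;-\,\alpha_{(\theta_1,0)}(\rr{a}^{-1})\,\frac{\alpha_{(\theta_1,0)}(\rr{a})-\rr{a}}{\theta_1}\,\rr{a}^{-1}\;,
\]
uses the strong continuity of $\theta\mapsto \alpha_\theta$ on $\s{A}_{A_B}$ (Proposition~\ref{prop:group_act}) to conclude $\alpha_{(\theta_1,0)}(\rr{a}^{-1})\to \rr{a}^{-1}$ in norm, and then takes $\theta_1\to 0$ to obtain the norm limit. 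Once this first step is in place, the induction on $a+b$ is routine and the holomorphic functional calculus argument closes the proof.
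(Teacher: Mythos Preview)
Your argument is correct. The paper, however, does not spell any of this out: its entire proof is the observation that $\n{T}^2$ is a Lie group and an appeal to \cite[Proposition~3.45]{gracia-varilly-figueroa-01}, which states in general that the smooth elements for a strongly continuous action of a Lie group on a $C^*$-algebra form a Fr\'echet pre-$C^*$-algebra. What you have written is, in effect, the proof of that cited proposition specialized to the present setting: the key ingredient there is exactly the inverse-closedness of $\s{C}^\infty$, established via the identity $\nabla_j(\rr{a}^{-1})=-\rr{a}^{-1}\nabla_j(\rr{a})\rr{a}^{-1}$ and induction on the order of differentiation. So your approach is not genuinely different, just more explicit and self-contained; the paper trades that transparency for a one-line citation. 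One minor point worth being aware of: the definition of pre-$C^*$-algebra in \cite{gracia-varilly-figueroa-01} requires stability under holomorphic functional calculus also at the level of matrix algebras $\mathrm{Mat}_N(\s{A}_{A_B}^\infty)$, but your derivation argument extends verbatim to that case since the $\nabla_j$ act entrywise.
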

\proof
Since $\n{T}^2$ is a Lie group, the criterion established in \cite[Proposition 3.45]{gracia-varilly-figueroa-01} applies proving the claim.
\qed

\medskip

The  Fréchet topology of  the pre-$C^*$-algebra $\s{A}_{A_B}^\infty$ is provided by the system of norms
described in Proposition \ref{prop_fre_top}.

\subsection{Magnetic hull and integration theory}\label{sect:integr}
The first task of this section is to define the \emph{magnetic hull} by following the construction
 sketched in \cite[Section 2.4]{becku-bellissard-denittis-18}. 
Then, we will construct \emph{traces} on the magnetic algebra by following the ideas of \cite[VIII.3]{davidson-96}.

\medskip

Let $B:\Z^2\to\R$ be a magnetic field and $f_B\in\ell^\infty(\Z^2)$ the function defined by $f_B(n):=\expo{\ii B(n)}$ for all $n\in\Z^2$. 
The natural discrete topology of $\Z^2$ implies that 
$\ell^\infty(\Z^2)= \s{C}_{\rm b}(\Z^2)$ 
as anticipated in Note \ref{note:1}.
The $C^*$-algebra $\s{C}_{\rm b}(\Z^2)$ carries the $\Z^2$-action  defined by 
\eqref{eq:tau_act}. Since elements of $\s{C}_{\rm b}(\Z^2)$  are uniformly continuous one has that
$$
\lim_{\gamma\to0}\|\tau_{\gamma}(g)-g\|_\infty\;=\;\lim_{\gamma\to0}\left(\sup_{n\in\Z^2}|g(n-\gamma)-g(n)|\right)\;=\;0\;,
$$
for all $g\in \s{C}_{\rm b}(\Z^2)$. This means that the $\Z^2$-action $\gamma\mapsto\tau_\gamma$
acts continuously on $\s{C}_{\rm b}(\Z^2)$. It is worth recalling that  the Gelfand-Na\v{\i}mark theorem \cite[Theorem 1.4]{gracia-varilly-figueroa-01} provides the  isomorphism 
$\s{C}_{\rm b}(\Z^2)\simeq\s{C}(\beta\Z^2)$ where $\beta\Z^2$ is the \emph{Stone-\v{C}ech compactification} of $\Z^2$ \cite[Section 1.3]{gracia-varilly-figueroa-01}. In particular, one has a canonical inclusion  $\Z^2\hookrightarrow\beta\Z^2$, which identifies the lattice $\Z^2$  with an open and dense subset of $\beta\Z^2$.

\medskip

Let 
\begin{equation}\label{eq:def_Cfb}
\s{C}(f_B,\Z^2)\;:=\;C^*\left(\tau_\gamma(f_B)\;,\gamma\in\Z^2\right)
\end{equation}
be the $C^*$-subalgebra of $\s{C}_{\rm b}(\Z^2)$ generated by the $\Z^2$-translated of $f_B$ and its complex conjugated.
It turns out that there is an isomorphism $\s{C}(f_B,\Z^2)\simeq \s{F}_B$ under the identification of $\s{C}_{\rm b}(\Z^2)$ with 
the von Neumann algebra $\s{M}$
of bounded  multiplication operators described in Section \ref{sect:gen_MT}.
As anticipated in Section \ref{sect:magnetic_C_Al},
 the Gelfand-Na\v{\i}mark theorem provides the isomorphism   $\s{C}(f_B,\Z^2)\simeq\s{C}(\Omega_B)$
where $\Omega_B$ is  a compact Hausdorff space. Since $\s{C}(f_B,\Z^2)$ is generated by a countable family, it follows that  it is separable (\ie it has a countable and dense subset) and in turn $\Omega_B$ is second countable  and metrizable as a separable complete metric space\footnote{Indeed, $\Omega_B$  is a compact \emph{Polish space}.} \cite[Proposition 1.11]{gracia-varilly-figueroa-01} (see also \cite[Section 2.2]{arveson-76}). We will refer to the  topological space $\Omega_B$ 
as the \emph{hull} of the magnetic field $B$, or the \emph{magnetic hull} for short.

\medskip

Actually, $\Omega_B$ is built as the \emph{Gelfand spectrum} of  $\s{C}(f_B,\Z^2)$, namely the set of characters defined as the $\ast$-homomorphisms $\omega:\s{C}(f_B,\Z^2)\to\C$.
As a consequence, $\Z^2$  acts by duality on $\Omega_B$. For every $\gamma\in\Z^2$ let ${\tau}^*_\gamma:\Omega_B\to\Omega_B$ be the map defined on $\omega\in \Omega_B$ as ${\tau}^*_\gamma(\omega)(g):=\omega(\tau_{-\gamma}(g))$ for all $g\in \s{C}(f_B,\Z^2)$. It is straightforward to show that ${\tau}^*_\gamma\in{\rm Homeo}(\Omega_B)$
are homeomorphisms of $\Omega_B$ and that the mapping $\gamma\mapsto {\tau}^*_\gamma$ provides a continuous $\Z^2$-action by homeomorphisms. As a result $(\Omega_B,{\tau}^*,\Z^2)$ is a genuine
\emph{topological dynamical system} (see \eg \cite[Chapter 5]{walters-82}). In $\Omega_B$ there is a  remarkable point $\omega_0$, called the \emph{evaluation at $0$}, defined by 
$\omega_0(g):=g(0)$ for all $g\in \s{C}(f_B,\Z^2)$. Let $\omega_\gamma:={\tau}^*_\gamma(\omega_0)=\omega_0\circ {\tau}_{-\gamma}$ be the $\gamma$-translated of $\omega_0$ and  ${\rm Orb}(\omega_0):=\{\omega_\gamma\in \Omega_B\;|\; \gamma\in\Z^2\}$ the \emph{$\Z^2$-orbit} of $\omega_0$. The next result provides a relevant property of the dynamical system  $(\Omega_B,{\tau}^*,\Z^2)$.
\begin{proposition}
\label{prop:dens_orb}
The $\Z^2$-orbit of $\omega_0$ is dense, \ie
$$
\overline{{\rm Orb}(\omega_0)}\;=\; \Omega_B\;.
$$
\end{proposition}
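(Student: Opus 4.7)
The plan is to use Gelfand duality applied to the canonical inclusion $\iota:\s{C}(f_B,\Z^2)\hookrightarrow\s{C}_{\rm b}(\Z^2)\simeq\s{C}(\beta\Z^2)$. This inclusion induces a continuous restriction map $\iota^*:\beta\Z^2\to\Omega_B$ between Gelfand spectra. The strategy is to show that $\iota^*$ sends the copy of $\Z^2$ inside $\beta\Z^2$ onto ${\rm Orb}(\omega_0)$, and then to exploit the surjectivity of $\iota^*$ together with the density of $\Z^2$ in $\beta\Z^2$.

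First I would compute $\omega_\gamma$ explicitly: unwinding the definitions yields
\[
\omega_\gamma(g)\;=\;{\tau}^*_\gamma(\omega_0)(g)\;=\;\omega_0(\tau_{-\gamma}(g))\;=\;\tau_{-\gamma}(g)(0)\;=\;g(\gamma)
\]
for every $g\in\s{C}(f_B,\Z^2)$. Under the standard identification $\Z^2\hookrightarrow\beta\Z^2$, each $n\in\Z^2$ corresponds to the evaluation character $\mathrm{ev}_n:\s{C}_{\rm b}(\Z^2)\to\C$, $g\mapsto g(n)$, and its restriction via $\iota$ to $\s{C}(f_B,\Z^2)$ is precisely $\omega_n$. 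Therefore $\iota^*(\Z^2)={\rm Orb}(\omega_0)$.

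Next I would invoke the surjectivity of $\iota^*$, which is a general fact for unital inclusions of unital commutative $C^*$-algebras: every character of a unital $C^*$-subalgebra extends to a character of the ambient algebra (for instance, by Hahn-Banach combined with Krein-Milman applied to the convex set of state extensions, since pure states of a commutative $C^*$-algebra are exactly its characters). Combining surjectivity with continuity of $\iota^*$ and the density of $\Z^2$ in $\beta\Z^2$ gives
\[
\Omega_B\;=\;\iota^*(\beta\Z^2)\;=\;\iota^*\bigl(\overline{\Z^2}\bigr)\;\subseteq\;\overline{\iota^*(\Z^2)}\;=\;\overline{{\rm Orb}(\omega_0)}\;,
\]
and the reverse inclusion is trivial.

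The only mildly delicate point is the surjectivity of $\iota^*$. If one prefers to sidestep the character-extension theorem, an equivalent route is by contradiction: a proper closed subset $\overline{{\rm Orb}(\omega_0)}\subsetneq\Omega_B$ would, by Urysohn's lemma on the compact Hausdorff space $\Omega_B$, yield a nonzero $\hat{g}\in\s{C}(\Omega_B)\simeq\s{C}(f_B,\Z^2)$ with $\hat{g}(\omega_\gamma)=g(\gamma)=0$ for every $\gamma\in\Z^2$; since $g\in\s{C}_{\rm b}(\Z^2)$ is determined by its values on $\Z^2$, this forces $g=0$ and hence $\hat{g}=0$, a contradiction.
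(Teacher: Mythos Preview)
Your proof is correct and follows essentially the same route as the paper: both arguments dualize the inclusion $\s{C}(f_B,\Z^2)\hookrightarrow\s{C}_{\rm b}(\Z^2)\simeq\s{C}(\beta\Z^2)$ to obtain a continuous surjection $\beta\Z^2\to\Omega_B$, identify the image of $\Z^2$ with ${\rm Orb}(\omega_0)$, and conclude via the density of $\Z^2$ in $\beta\Z^2$. The paper cites \cite[Proposition 2.3.24]{bratteli-robinson-87} for the character-extension step where you invoke Hahn--Banach/Krein--Milman; your alternative Urysohn argument is a pleasant addition not present in the paper.
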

\proof
In view of the
Gelfand-Na\v{\i}mark isomorphism 
$\s{C}_{\rm b}(\Z^2)\simeq\s{C}(\beta\Z^2)$,  
the Gelfand spectrum of $\s{C}_{\rm b}(\Z^2)$ can be  identified  
 with the Stone-\v{C}ech compactification $\beta\Z^2$. The inclusion $\jmath: \s{C}(f_B,\Z^2) \hookrightarrow \s{C}_{\rm b}(\Z^2)$ provides, by duality, a continuous map ${\jmath}':\beta\Z^2\to  \Omega_B$ defined by ${\jmath}'(\tilde{\omega}):=\tilde{\omega}\circ \jmath$ where $\tilde{\omega}\in \beta\Z^2$ is a  character of $\s{C}_{\rm b}(\Z^2)$. More precisely, ${\jmath}'(\tilde{\omega})$ is by definition the restriction of the character $\tilde{\omega}$ to the subalgebra $\s{C}(f_B,\Z^2)$. 
On the other hand, every character $\omega$ of $\s{C}(f_B,\Z^2)$ admits an (not necessarily unique) extension $\tilde{\omega}$ to a character of $\s{C}_{\rm b}(\Z^2)$ \cite[Proposition 2.3.24]{bratteli-robinson-87}.  As a result, it turns out that ${\jmath}'$ is a continuous surjection. 
Therefore, if $X\subset\beta\Z^2$ is dense in $\beta\Z^2$ then ${\jmath}'(X)\subset  \Omega_B$ is dense in $\Omega_B$.
In view of the Riesz-Markov-Kakutani representation theorem \cite[Theorem IV.14]{reed-simon1},   the Gelfand spectrum of $\s{C}_{\rm b}(\Z^2)$ consists of the evaluations (Dirac measures) at the points of  $\beta\Z^2$. Since $\Z^2$ can be identified with a dense open subset of $\beta\Z^2$, it follows that the set of characters $\{\tilde{\omega}_\gamma\;|\;\gamma\in\Z^2\}$, defined by
$\tilde{\omega}_\gamma(f):=f(\gamma)$ for $f\in\s{C}_{\rm b}(\Z^2)$, is dense in the  Gelfand spectrum of $\s{C}_{\rm b}(\Z^2)$.  On the other hand, it holds true that $\omega_\gamma={\jmath}'(\tilde{\omega}_\gamma)$, and consequently
$$
{\jmath}'\big(\{\tilde{\omega}_\gamma\;|\;\gamma\in\Z^2\}\big)\;=\;{\rm Orb}(\omega_0)\;.
$$
The last equality proves the  density of ${\rm Orb}(\omega_0)$.
\qed

\medskip

For a given $g\in \s{C}(f_B,\Z^2)$ the Gelfand transform $\hat{g}\in \s{C}(\Omega_B)$ is defined by the equation $\hat{g}(\omega):=\omega(g)$ for all $\omega\in \Omega_B$. The density of the orbit of 
$\omega_0$ implies that the Gelfand transform is entirely defined by the equation $g(\gamma)=\omega_\gamma(g)=\hat{g}({\tau}^*_\gamma(\omega_0))$ for all $\gamma\in\Z^2$.

\begin{remark}[Topological transitivity]\label{rk:top_trans}
Proposition \ref{prop:dens_orb} can be rephrased by saying that the dynamical system $(\Omega_B,{\tau}^*,\Z^2)$ is  \emph{topologically transitive} \cite[Definition 5.6]{walters-82}.  
As a consequence, every invariant element of $\s{C}(\Omega_B)$ is automatically constant 
\cite[Theorem 5.14]{walters-82}.
In our specific setting ($\Omega_B$ compact and second countable) the notion of 
topological transitivity for $(\Omega_B,{\tau}^*,\Z^2)$ is equivalent to the following property: Whenever $U$ and $V$ are nonempty open subsets of $\Omega_B$, then there exists a $\gamma\in\Z^2$ such that ${\tau}^*_{\gamma}(U)\cap V\neq\emptyset$ \cite[Theorem 5.8]{walters-82}.
The latter, is the usual definition of topological transitivity in the context of the general theory of topological dynamical systems (see \eg \cite{kolyada-snoha-97,akin-carlson-12} and references therein). hfill $\blacktriangleleft$
\end{remark}

The subsets ${\rm Orb}(\omega_0)$ and $\partial \Omega_B:=\Omega_B\setminus{\rm Orb}(\omega_0)$ are disjoint and ${\tau}^*$-invariant by construction. Moreover, 
$\partial \Omega_B$ is nowhere dense \cite[Theorem 5.8]{walters-82} and  is contained in the subset of \emph{non-wandering points} of the dynamical system \cite[Theorem 5.6]{walters-82}.
Let ${\rm Mes}_{1,{\tau}^*}( \Omega_B)$ be the set of the normalized and ${\tau}^*$-invariant regular Borel measures\footnote{By the Riesz-Markov-Kakutani representation theorem \cite[Theorem IV.14]{reed-simon1}, ${\rm Mes}_{1,{\tau}^*}( \Omega_B)$
provides the space of  ${\tau}^*$-invariant states of the Abelian $C^*$-algebra $\s{C}(\Omega_B)$.} of the dynamical system $(\Omega_B,{\tau}^*,\Z^2)$.
It is well known that  ${\rm Mes}_{1,{\tau}^*}( \Omega_B)$ is a non-empty, convex and compact set (\ie a Choquet simplex) whose extreme points  are exactly the \emph{ergodic} measures \cite[Corollary 6.9.1 \& Theorem 6.10]{walters-82}. Let ${\rm Erg}( \Omega_B)$ be the subset of the ergodic probability measures of  $(\Omega_B,{\tau}^*,\Z^2)$. It is worth recalling that  an ergodic measure  $\n{P}\in{\rm Erg}( \Omega_B)$  is  characterized by the dichotomy $\n{P}(X)=1$ or $\n{P}(X)=0$ for every  ${\tau}^*$-invariant subset $X\subseteq \Omega_B$.
 A $\n{P}\in{\rm Erg}( \Omega_B)$ such that $\n{P}(\partial \Omega_B)=1$ will be called a \emph{measure at infinity}.

\begin{example}[Magnetic hull for a constant magnetic field]
\label{ex:cost_hull}
In the case of a constant magnetic field of strength $b$ 
one has that $f_b(n):=\expo{\ii b}$ for all $n\in\Z^2$ (see Example \ref{Ex1:const_B}) and accordingly
 $\s{C}(f_b,\Z^2)=\C$. Therefore the associated magnetic hull $\Omega_b\simeq\{\omega_0\}$ is a singleton (or one point set) on which the $\tau^*$-action is trivial. The unique normalized ergodic measure on $\Omega_b$ is entirely specified  by $\n{P}(\{\omega_0\})=1$.\hfill $\blacktriangleleft$
\end{example}

\begin{example}[Iwatsuka magnetic hull]
\label{ex:hull_Iwatsuka}
In the case of the Iwatsuka  magnetic field \eqref{eq:watsuka_B} 
one has 
$$
f_{\rm I}\;:=\;\expo{\ii B_{\rm I}}\;=\;\expo{\ii b_-}\delta_-\;+\;\expo{\ii b_0}\delta_0\;+\;
\expo{\ii b_+}\delta_+\;.
$$
Let us assume $b_-\neq b_+$.
From the definition one has that $\tau_{(0,q e_2)}(f_{\rm I})=f_{\rm I}$ 
and $\tau_{(q e_1,0)}(f_{\rm I})\neq f_{\rm I}$
for all $q\in\Z\setminus \{0\}$. This means that the $\Z^2$-action on $f_{\rm I}$ indeed reduces to a $\Z$-action.
It follows that the Gelfand isomorphism $\s{C}(f_
{\rm I},\Z^2)\simeq\s{C}(\Omega_{\rm I})$ is provided by the 
\emph{Iwatsuka magnetic hull}
\begin{equation}\label{eq:dec_hull_iwa}
\Omega_{\rm I}\;\simeq\;\Z\;\cup\;\{-\infty\}\;\cup\;\{+\infty\}
\end{equation}
given by the two-point compactification of $\Z$. The inclusion $\Z\ni q \mapsto \omega_q\in \Omega_{\rm I}$ is given by the evaluation at \emph{finite distance}   
 defined by 
 $$
 \omega_q\big(\tau_\gamma(f_{\rm I})\big)\;:=\;f_{\rm I}\big((q-\gamma_1) e_1-\gamma_2 e_2\big)
 $$ for every $\gamma=(\gamma_1,\gamma_2)\in\Z^2$. The two limit points $\{\pm \infty\}$ are identified with the evaluations at infinity $\omega_{\pm\infty}\in \Omega_{\rm I}$ defined by
 $$
 \omega_{\pm\infty}\big(\tau_\gamma(f_{\rm I})\big)\;:=\;\expo{\ii b_{\pm}}
 $$ for every $\gamma\in\Z^2$. 
From the construction it follows that $\Z\simeq {\rm Orb}(\omega_0)$ and in turn $\{\pm\infty\}\simeq \partial \Omega_{\rm I}$. Therefore, equation \eqref{eq:dec_hull_iwa} provides a decomposition of $\Omega_{\rm I}$ in three invariant subsets.
 Since $\Z^2$ acts on ${\rm Orb}(\omega_0)$ as a one dimensional shift it follows that ${\rm Orb}(\omega_0)$ is made by wondering points \cite[Definition 5.5]{walters-82}. As a consequence every ergodic measure $\n{P}\in{\rm Erg}( \Omega_{\rm I})$ necessarily must satisfy $\n{P}({\rm Orb}(\omega_0))=0$
 \cite[Theorem 6.15]{walters-82}.
 This implies that the set ${\rm Erg}( \Omega_{\rm I})=\{\n{P}_{\pm\infty}\}$ is made by two ergodic measures at infinity specified by the condition $\n{P}_{\pm\infty}(\{\pm\infty\})=1$.\hfill $\blacktriangleleft$
\end{example}

\begin{example}[Magnetic hull for a localized magnetic field]
\label{ex:magn_hull_loc}
In the case of a localized  magnetic field \eqref{eq:loc_B} 
one has 
$$
f_{\Lambda}\;:=\;\expo{\ii B_{\Lambda}}\;=\;(\expo{\ii b}-1)\;\delta_\Lambda\;+\;1\;.
$$
Observe that $\tau_\gamma(f_{\Lambda})=f_{\gamma+\Lambda}\neq f_{\Lambda}$ for every $\gamma\in\Z^2\setminus\{0\}$.
In this case the Gelfand isomorphism $\s{C}(f_
{\Lambda},\Z^2)\simeq\s{C}(\Omega_{\Lambda})$ is given by the 
\emph{localized magnetic hull}
\begin{equation}\label{eq:dec_hull}
\Omega_{\Lambda}\;\simeq\;\Z^2\;\cup\;\{\infty\}\end{equation}
given by the one-point compactification of $\Z^2$. The inclusion $\Z^2\ni \xi \mapsto \omega_\xi\in \Omega_{\Lambda}$ is given by the  evaluation at \emph{finite distance}   
 defined by 
 $$
 \omega_\xi\big(\tau_\gamma(f_{\Lambda})\big)\;:=\;f_{\Lambda}\big(\xi-\gamma\big)
 $$ for every $\gamma\in\Z^2$. The  limit point $\{ \infty\}$ is identified with the evaluation at infinity $\omega_{\infty}\in \Omega_{\Lambda}$ given by
 $$
 \omega_{\infty}\big(\tau_\gamma(f_{\Lambda})\big)\;:=\;1 $$ for every $\gamma\in\Z^2$. 
 From the construction it follows that $\Z^2\simeq {\rm Orb}(\omega_0)$ and  $\{\infty\}\simeq \partial \Omega_{\Lambda}$ are the two invariant subsets of $\Omega_{\Lambda}$.
Since ${\rm Orb}(\omega_0)$ is made of wondering points under the action of $\Z^2$ it follows that ${\rm Erg}( \Omega_{\Lambda})=\{\n{P}_{\infty}\}$ where the measure at infinity $\n{P}_{\infty}$ is specified by $\n{P}_{\infty}(\{\infty\})=1$. \hfill $\blacktriangleleft$
\end{example}

\medskip

The ergodic measures of  $(\Omega_B,{\tau}^*,\Z^2)$ play a crucial role for the construction of the integration theory of the magnetic algebra $\s{A}_{A_B}$. 
Let us start by the following fact.
\begin{lemma}\label{lemma:inv_prop}
Under the isomorphism $\iota : \s{F}_B\rightarrow\s{C}(\Omega_B)$
 every invariant measure $\n{P}\in {\rm Mes}_{1,{\tau}^*}( \Omega_B)$ defines a trace $t_{\n{P}}$ on  $\s{F}_B$ through the formula
$$
t_{\n{P}}(\rr{g})\;:=\;\int_{\Omega_B}\dd\n{P}(\omega)\; \iota(\rr{g})(\omega)\;,\qquad\quad \rr{g}\in\s{F}_B\;.
$$
The trace $t_{\n{P}}$ is $\Z^2$-invariant in the sense that
$$
t_{\n{P}}\left(\tau_\gamma(\rr{g})\right)\;=\;t_{\n{P}}(\rr{g})\;,\qquad \forall\; \gamma=(\gamma_1,\gamma_2)\in\Z^2
$$ 
where $\tau_\gamma(\rr{g}):=(\rr{s}_{A_B,1})^{\gamma_1}(\rr{s}_{A_B,2})^{\gamma_2}\rr{g}
(\rr{s}_{A_B,2})^{-\gamma_2}(\rr{s}_{A_B,1})^{-\gamma_1}$
\end{lemma}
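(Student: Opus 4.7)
My plan is to verify the claim in three short steps: positivity/linearity, traciality, and $\Z^2$-invariance.

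First, I would observe that $\iota:\s{F}_B\to\s{C}(\Omega_B)$ is a $\ast$-isomorphism of unital commutative $C^*$-algebras, so in particular it preserves positivity. Since integration against the probability measure $\n{P}$ is a positive, unital, bounded linear functional on $\s{C}(\Omega_B)$ (a state), its composition with $\iota$ is a state $t_{\n{P}}$ on $\s{F}_B$. Because $\s{F}_B$ is commutative, the trace property $t_{\n{P}}(\rr{g}\rr{h})=t_{\n{P}}(\rr{h}\rr{g})$ is automatic, so nothing needs to be checked beyond linearity and positivity.

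For invariance, the next step is to verify that $\iota$ intertwines the $\Z^2$-action $\tau_\gamma$ on $\s{F}_B$ (defined by conjugation by magnetic translations as in Lemma \ref{lemma:comput_01}) with the pullback action on $\s{C}(\Omega_B)$ induced by the homeomorphisms $\tau^*_\gamma$. Concretely, I would unwind the definitions from Section \ref{sect:integr}: for any $\rr{g}\in\s{F}_B$ with corresponding $g\in\s{C}(f_B,\Z^2)$ and any $\omega\in\Omega_B$,
\[
\iota(\tau_\gamma(\rr{g}))(\omega)\;=\;\omega(\tau_\gamma(g))\;=\;(\tau^*_{-\gamma}(\omega))(g)\;=\;\iota(\rr{g})(\tau^*_{-\gamma}(\omega)).
\]
This uses only the definition $\tau^*_\gamma(\omega)(g):=\omega(\tau_{-\gamma}(g))$ together with the identification between conjugation by magnetic translations on multiplication operators and the translation action on the function $g$ (\cf Lemma \ref{lemma:comput_01}).

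Finally, I would combine the previous step with the defining property of $\n{P}\in{\rm Mes}_{1,\tau^*}(\Omega_B)$, namely $(\tau^*_{-\gamma})_{\ast}\n{P}=\n{P}$ for all $\gamma\in\Z^2$. The corresponding change of variables then yields
\[
t_{\n{P}}(\tau_\gamma(\rr{g}))\;=\;\int_{\Omega_B}\iota(\rr{g})(\tau^*_{-\gamma}(\omega))\,\dd\n{P}(\omega)\;=\;\int_{\Omega_B}\iota(\rr{g})(\omega)\,\dd\n{P}(\omega)\;=\;t_{\n{P}}(\rr{g}),
\]
which is the asserted $\Z^2$-invariance. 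There is no genuine obstacle in this argument; the only delicate point is bookkeeping the two mutually dual $\Z^2$-actions (on $\s{F}_B$ and on $\Omega_B$) so that the signs and directions line up, which is settled by following the conventions fixed in Section \ref{sect:integr}.
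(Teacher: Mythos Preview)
Your proposal is correct and follows essentially the same approach as the paper, which simply notes that integration with respect to $\n{P}$ defines an invariant trace on $\s{C}(\Omega_B)$. You have merely spelled out in detail what the paper's one-line proof leaves implicit: the automatic traciality from commutativity and the intertwining of the $\Z^2$-actions under $\iota$ that turns measure invariance into invariance of $t_{\n{P}}$.
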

\proof
The claim follows immediately by noting that the integration with respect to $\n{P}$ defines an invariant trace on  $\s{C}(\Omega_B)$.
\qed

\medskip

We are now in position to construct the integration theory of the magnetic algebra $\s{A}_{A_B}$. Let $\n{P}\in {\rm Mes}_{1,{\tau}^*}( \Omega_B)$ be an invariant measure and define the map $\bb{T}_{\n{P}}:\s{A}_{A_B}\to \C$ by
\begin{equation}\label{eq:tra_01}
\bb{T}_{\n{P}}(\rr{a})\;:=\;t_{\n{P}}(\hat{\rr{a}}_{0,0})\;,\qquad\quad \rr{a}\in\s{A}_{A_B}
\end{equation}
where the $\s{F}_B$-valued coefficient $\hat{\rr{a}}_{0,0}$ is defined by \eqref{eq:oxoco}.
\begin{proposition}\label{prop:trace_properties}
The map $\bb{T}_{\n{P}}:\s{A}_{A_B}\to \C$ defined by \eqref{eq:tra_01} is a tracial state of the $C^*$-algebra $\s{A}_{A_B}$. Moreover, it holds true that:
\begin{itemize}
\item[(i)] $\bb{T}_{\n{P}}(\nabla_j(\rr{a}))=0$ for all $\rr{a}\in\s{C}^1(\s{A}_{A_B})$ and $j=1,2$;
\vspace{1mm}
\item[(ii)] $\bb{T}_{\n{P}}(\rr{b}\nabla_j(\rr{a}))=-\bb{T}_{\n{P}}(\rr{a}\nabla_j(\rr{b}))$ for all $\rr{a},\rr{b}\in \s{C}^1(\s{A}_{A_B})$ and $j=1,2$.
\end{itemize}
\end{proposition}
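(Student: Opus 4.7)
The goal is to verify the four properties: linearity/continuity, positivity, normalization, the trace identity, and then items (i)--(ii).
My plan is to first check the \emph{statehood} of $\bb{T}_{\n{P}}$ using the bound $\|\hat{\rr{a}}_{0,0}\|=\|\langle\rr{a}\rangle\|\leqslant\|\rr{a}\|$ (which follows from $\langle\cdot\rangle$ being an integral over the isometric action $\alpha_\theta$), together with the computation $\langle\rr{a}^*\rr{a}\rangle=\int_{\T^2}\alpha_\theta(\rr{a})^*\alpha_\theta(\rr{a})\,\dd\mu(\theta)\geqslant 0$ (Bochner integral of positive elements) and $\hat{{\bf 1}}_{0,0}={\bf 1}$ with $t_{\n{P}}({\bf 1})=\n{P}(\Omega_B)=1$.

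The main work is the \emph{trace identity}. By continuity of $\bb{T}_{\n{P}}$ it suffices to prove it on $\s{A}^0_{A_B}$, and by bilinearity on monomials of the form \eqref{eq:mon_02}. Take $\rr{a}=\rr{g}\,(\rr{s}_{A_B,1})^{r}(\rr{s}_{A_B,2})^{s}$ and $\rr{b}=\rr{h}\,(\rr{s}_{A_B,1})^{r'}(\rr{s}_{A_B,2})^{s'}$ with $\rr{g},\rr{h}\in\s{F}_B$. Repeated use of the commutation relation \eqref{eqgen_comm_rel_X} and of Lemma \ref{lemma:comput_01} rewrites both $\rr{a}\rr{b}$ and $\rr{b}\rr{a}$ as elements of $\s{F}_B$ times $(\rr{s}_{A_B,1})^{r+r'}(\rr{s}_{A_B,2})^{s+s'}$, so the averaging \eqref{eq:aux_0010} forces both Fourier coefficients $\hat{(\rr{a}\rr{b})}_{0,0}$ and $\hat{(\rr{b}\rr{a})}_{0,0}$ to vanish unless $r'=-r$ and $s'=-s$. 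In the non-trivial case, set $\rr{u}:=(\rr{s}_{A_B,1})^{r}(\rr{s}_{A_B,2})^{s}$ and write $(\rr{s}_{A_B,1})^{-r}(\rr{s}_{A_B,2})^{-s}=\Psi\,\rr{u}^{-1}$ for a suitable $\Psi\in\s{F}_B$ produced by the flux operator; letting $\rr{h}':=\rr{h}\,\Psi\in\s{F}_B$, Lemma \ref{lemma:comput_01} yields
\begin{equation*}
\rr{a}\rr{b}\;=\;\rr{g}\,\rr{u}\,\rr{h}'\,\rr{u}^{-1}\;=\;\rr{g}\,\tau_{(r,s)}(\rr{h}')\;,\qquad \rr{b}\rr{a}\;=\;\rr{h}'\,\rr{u}^{-1}\,\rr{g}\,\rr{u}\;=\;\rr{h}'\,\tau_{(-r,-s)}(\rr{g})\;,
\end{equation*}
both sitting in the commutative algebra $\s{F}_B$. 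Applying the $\Z^2$-invariance of $t_{\n{P}}$ from Lemma \ref{lemma:inv_prop} and commutativity of $\s{F}_B$ gives $t_{\n{P}}(\rr{g}\,\tau_{(r,s)}(\rr{h}'))=t_{\n{P}}(\tau_{(-r,-s)}(\rr{g})\,\rr{h}')=t_{\n{P}}(\rr{h}'\,\tau_{(-r,-s)}(\rr{g}))$, which is the required identity $\bb{T}_{\n{P}}(\rr{a}\rr{b})=\bb{T}_{\n{P}}(\rr{b}\rr{a})$. The main obstacle here is precisely the careful bookkeeping of the flux factor $\Psi$, and the trick is that it is automatically absorbed into $\rr{h}'$ before invoking invariance of $t_{\n{P}}$.

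Part (i) is then immediate: by continuity of $\nabla_j$ on $\s{C}^1(\s{A}_{A_B})$ and the Leibniz structure, one reduces to monomials, and \eqref{eq:deriv_02} shows $\nabla_j\big(\rr{g}(\rr{s}_{A_B,1})^{r}(\rr{s}_{A_B,2})^{s}\big)$ is a scalar multiple of the same monomial with either factor $r$ or $s$, hence vanishes whenever $(r,s)=(0,0)$ and carries a monomial with $(r,s)\ne(0,0)$ otherwise; the averaging \eqref{eq:aux_0010} kills the latter, so $\hat{\nabla_j(\rr{a})}_{0,0}=0$. Part (ii) now follows by combining the Leibniz identity \eqref{eq:leib}, the trace property already established, and (i): applying $\bb{T}_{\n{P}}$ to $\nabla_j(\rr{a}\rr{b})=\nabla_j(\rr{a})\rr{b}+\rr{a}\nabla_j(\rr{b})$ gives $0=\bb{T}_{\n{P}}(\nabla_j(\rr{a})\rr{b})+\bb{T}_{\n{P}}(\rr{a}\nabla_j(\rr{b}))$, and replacing $\bb{T}_{\n{P}}(\nabla_j(\rr{a})\rr{b})$ with $\bb{T}_{\n{P}}(\rr{b}\nabla_j(\rr{a}))$ using the trace property yields the claimed formula.
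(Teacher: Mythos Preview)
Your proof is correct and follows essentially the same route as the paper: state properties via the averaging map, reduction of the trace identity to monomials and use of the $\Z^2$-invariance of $t_{\n{P}}$, and (i)--(ii) via the vanishing of the zeroth Fourier coefficient of a derivative together with Leibniz. The only cosmetic difference is in the bookkeeping of the trace identity: the paper computes the flux factor $\rr{k}_{(r,s)}=\rr{u}\,\rr{u}^{-1}_{\text{reordered}}$ explicitly and then shows $(\widehat{\rr{b}\rr{a}})_{0,0}=\tau_{(-r,-s)}\big((\widehat{\rr{a}\rr{b}})_{0,0}\big)$, whereas you absorb that factor into $\rr{h}'=\rr{h}\Psi$ and work with $\rr{u}^{-1}$ directly, which is a slightly cleaner way of saying the same thing.
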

\proof
The map $\bb{T}_{\n{P}}$ is evidently linear (composition of linear maps) and normalized, \ie 
$\bb{T}_{\n{P}}({\bf 1})=1$. The positivity follows by observing that
$$
(\widehat{\rr{a}^*\rr{a}})_{0,0}\;=\;\int_{\n{T}^2}\dd\mu(\theta)\; \alpha_\theta(\rr{a}^*)\;\alpha_\theta(\rr{a})\;\geqslant\;0
$$
and consequently $\bb{T}_{\n{P}}(\rr{a}^*\rr{a})=t_{\n{P}}((\widehat{\rr{a}^*\rr{a}})_{0,0})\geqslant0$ since $t_{\n{P}}$ is a trace state (hence positive) on $\s{F}_B$.
Since $\bb{T}_{\n{P}}$ is linear and positive then it is automatically continuous \cite[Proposition 2.3.11]{bratteli-robinson-87}. To prove the cyclic property of the trace let us consider two monomials $\rr{u}_j :=\rr{g}_{j}\;(\rr{s}_{A_B,1})^{r_j}(\rr{s}_{A_B,2})^{s_j}$ 
with $\rr{g}_{j}\in \s{F}_B$ and $j=1,2$. Observe that 
$$
\rr{u}_1\rr{u}_2\;=\;\rr{g}_{1}\tau_{(r_1,s_1)}(\rr{g}_{2})\;(\rr{s}_{A_B,1})^{r_1}(\rr{s}_{A_B,2})^{s_1}(\rr{s}_{A_B,1})^{r_2}(\rr{s}_{A_B,2})^{s_2}
$$
where $\tau_{(r_1,s_1)}(\rr{g}_{2}):=(\rr{s}_{A_B,1})^{r_1}(\rr{s}_{A_B,2})^{s_1}\rr{g}_{2}(\rr{s}_{A_B,1})^{-r_1}(\rr{s}_{A_B,2})^{-s_1}$ and by mimicking the computation of 
\eqref{eq:aux_0010} one gets
$$
(\widehat{\rr{u}_1\rr{u}_2})_{0,0}\;=\;\rr{g}_{1}\;\tau_{(r_1,s_1)}(\rr{g}_{2})\;(\rr{s}_{A_B,1})^{r_1}(\rr{s}_{A_B,2})^{s_1}(\rr{s}_{A_B,1})^{-r_1}(\rr{s}_{A_B,2})^{-s_1}\;\delta_{r_1,-r_2}\delta_{s_1,-s_2}\;.
$$
A similar argument provides
$$
(\widehat{\rr{u}_2\rr{u}_1})_{0,0}\;=\;\tau_{(-r_1,-s_1)}(\rr{g}_{1})\;\rr{g}_{2}\;(\rr{s}_{A_B,1})^{-r_1}(\rr{s}_{A_B,2})^{-s_1}(\rr{s}_{A_B,1})^{r_1}(\rr{s}_{A_B,2})^{s_1}\;\delta_{r_1,-r_2}\delta_{s_1,-s_2}\;.
$$
An iterated application of the commutation relation \eqref{eqgen_comm_rel_X} provides
$$
\begin{aligned}
(\rr{s}_{A_B,1})^{r_1}(\rr{s}_{A_B,2})^{s_1}(\rr{s}_{A_B,1})^{-r_1}(\rr{s}_{A_B,2})^{-s_1}\;&=\;\prod_{j=0}^{r_1-1}\prod_{i=0}^{s_1-1}\tau_{(j,i)}(\rr{f}_{B})\\
&=:\;\rr{k}_{(r_1,s_1)}\;\in\;\s{F}_B\;.
\end{aligned}
$$
This implies
$$
(\widehat{\rr{u}_1\rr{u}_2})_{0,0}\;=\;\rr{k}_{(r_1,s_1)}
\;\rr{g}_{1}\;\tau_{(r_1,s_1)}(\rr{g}_{2})\;\delta_{r_1,-r_2}\delta_{s_1,-s_2}
$$
and
$$
\begin{aligned}
(\widehat{\rr{u}_2\rr{u}_1})_{0,0}\;&=\;\tau_{(-r_1,-s_1)}(\rr{k}_{(r_1,s_1)})
\;\tau_{(-r_1,-s_1)}(\rr{g}_{1})\;\rr{g}_{2}\;\delta_{r_1,-r_2}\delta_{s_1,-s_2}\\
&=\;\tau_{(-r_1,-s_1)}((\widehat{\rr{u}_1\rr{u}_2})_{0,0})\;.
\end{aligned}
$$
From the invariance property of Lemma \ref{lemma:inv_prop} it follows that
$$
t_{\n{P}}((\widehat{\rr{u}_1\rr{u}_2})_{0,0})\;=\;t_{\n{P}}((\widehat{\rr{u}_2\rr{u}_1})_{0,0})
$$
and in turn $\bb{T}_{\n{P}}(\rr{u}_1\rr{u}_2)=\bb{T}_{\n{P}}(\rr{u}_2\rr{u}_1)$
for all pair of monomials $\rr{u}_1,\rr{u}_2$.
It turns out that $\bb{T}_{\n{P}}$ satisfies the cyclic property of the trace on the dense subalgebra $\s{A}_{A_B}^0$ of the noncommutative polynomials, and by continuity
on the whole algebra $\s{A}_{A_B}$.
Property (i) follows from the computation at the beginning of of the proof of Lemma \ref{lem:diff-abs} which provides 
$\widehat{\nabla_j(\rr{a})}_{0,0}=0$ for $j=1,2$. Property (ii) follows by the application of property (i) along with the Leibniz's rule \eqref{eq:leib}.
\qed

\medskip

The trace property of the map $\bb{T}_{\n{P}}$
is guaranteed by the invariance property of the measure $\n{P}$. The ergodicity of $\n{P}$ plays a role for the physical interpretation of 
$\bb{T}_{\n{P}}$. For the next result we need to introduce some notation. 
Let $\{\Lambda_i\}_{i\in\N}\subset \s{P}_0(\Z^2)$ a sequence of bounded subsets of  cardinality $|\Lambda_i|$.  The family $\{\Lambda_i\}_{i\in\N}$ is  a \emph{F{\o}lner sequence}  \cite{greenleaf-69} if: (i) it is increasing, \ie $\Lambda_i\subseteq \Lambda_{i+1}$ for all $i\in\N$; (ii) it is exhaustive, \ie  $\Lambda_i\nearrow\Z^2$; (iii) it meets the F{\o}lner condition. \ie 
$$
\lim_{i\to \infty}\frac{|(\gamma+\Lambda_i)\triangle\Lambda_i|}{|\Lambda_i|}\;=\;0\;,\qquad\quad \forall\;\gamma\in\Z^2\;,
$$
where $\gamma+\Lambda_i$ is the $\gamma$-translated  of  $\Lambda_i$ and $\triangle$ is the  symmetric difference.

\medskip
\medskip

Let $\n{P}\in{\rm Erg}( \Omega_B)$ be an ergodic measure and $\{\Lambda_i\}_{i\in\N}$ a 
F{\o}lner sequence.  The Birkhoff's Ergodic Theorem \cite[Lemma 6.13]{walters-82}
assures that there exists a Borelian subset $Y\subseteq \Omega_B$ such that $\n{P}(Y)=1$ and
$$
t_{\n{P}}(\rr{g})\;=\;\lim_{i\to\infty}\frac{1}{|\Lambda_i|}\sum_{\gamma\in \Lambda_i}\iota(\rr{g})\big(\tau^*_\gamma(\omega)\big)\;,\qquad \forall\;  \omega\in Y\;,\;\;\forall\;\rr{g}\in\s{F}_B\;.
$$
 By observing that 
$\iota(\rr{g})\circ\tau^*_\gamma=\iota(\tau_{-\gamma}(\rr{g}))$
and recalling the definition of the trace $\bb{T}_{\n{P}}$ given by
\eqref{eq:tra_01} one gets
$$
\bb{T}_{\n{P}}(\rr{a})\;=\;\lim_{i\to\infty}\frac{1}{|\Lambda_i|}\sum_{\gamma\in \Lambda_i}\iota\big(\tau_{-\gamma}\big(\hat{\rr{a}}_{0,0}\big)\big)(\omega)\;,\qquad \forall\;  \omega\in Y\;,\;\;\forall\;\rr{a}\in \s{A}_{A_B}\;.
$$
 Finally, by observing that the  extraction of the $\s{F}_B$-valued coefficient commutes with the translations one gets 
$$
\bb{T}_{\n{P}}(\rr{a})\;=\;\lim_{i\to\infty}\frac{1}{|\Lambda_i|}\sum_{\gamma\in \Lambda_i}\iota\left(\widehat{\tau_{-\gamma}(\rr{a})}_{0,0}\right)(\omega)\;,\qquad \forall\;  \omega\in Y\;,\;\;\forall\;\rr{a}\in \s{A}_{A_B}\;.
$$
The latter formula becomes physically meaningful in the special case $\Omega_B=\{\omega_0\}$ as for the constant magnetic field (\cf Example \ref{ex:cost_hull}).
\begin{proposition}[Trace per unit volume]
\label{prop:tr_u_vol}
Assume that $\Omega_B=\{\omega_0\}$ and let $\n{P}$ be the (ergodic) measure supported on $\{\omega_0\}$.
Let $\{\Lambda_i\}_{i\in\N}$ be a 
F{\o}lner sequence and for every 
$\Lambda_i$ let $\rr{p}_{\Lambda_i}$ be the associated projection defined by $(\rr{p}_{\Lambda_i}\psi)(n)=\delta_{\Lambda_i}(n)\psi(n)$ for all $\psi\in\ell^2(\Z^2)$.
Then, it holds true that
$$
\bb{T}_{\n{P}}(\rr{a})\;:=\;\lim_{i\to\infty}\frac{1}{|\Lambda_i|}{\rm Tr}_{\ell^2(\Z^2)}\big(\rr{p}_{\Lambda_i}\;\rr{a}\;\rr{p}_{\Lambda_i}\big)\;,\qquad\quad \forall\; \rr{a}\in \s{A}_{A_B}\;.
$$
\end{proposition}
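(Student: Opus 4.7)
The plan is first to exploit the hypothesis $\Omega_B = \{\omega_0\}$: by the Gelfand isomorphism one has $\s{F}_B \simeq \s{C}(\{\omega_0\}) = \C$, hence every $\rr{g} \in \s{F}_B$ reduces to a scalar multiple of the identity, with the scalar given exactly by $t_{\n{P}}(\rr{g})$. Equivalently, viewed inside $\s{M}$, each such $\rr{g}$ corresponds to the constant sequence $n \mapsto t_{\n{P}}(\rr{g})$. Since $\s{A}_{A_B}^0$ is dense in $\s{A}_{A_B}$ and the trace $\bb{T}_{\n{P}}$ is continuous, it is enough to verify the identity on the noncommutative polynomials and then to extend by a continuity argument.

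I would then verify the claim on a single monomial $\rr{u} = \rr{g}\,(\rr{s}_{A_B,1})^{r}(\rr{s}_{A_B,2})^{s}$ with $\rr{g} = c\,{\bf 1}$, $c \in \C$. A direct computation from the definition of the magnetic translations shows that $(\rr{s}_{A_B,1})^{r}(\rr{s}_{A_B,2})^{s}\delta_n$ is a unit vector proportional to $\delta_{n + re_1 + se_2}$, so the diagonal matrix element $\langle \delta_n, \rr{u}\,\delta_n\rangle$ vanishes unless $(r,s) = (0,0)$, in which case it equals $c$. Summing over $n \in \Lambda_i$ and normalizing yields
\begin{equation*}
\frac{1}{|\Lambda_i|}\,{\rm Tr}_{\ell^2(\Z^2)}\big(\rr{p}_{\Lambda_i}\,\rr{u}\,\rr{p}_{\Lambda_i}\big)\;=\;c\,\delta_{r,0}\,\delta_{s,0}\;=\;t_{\n{P}}\big(\hat{\rr{u}}_{0,0}\big)\;=\;\bb{T}_{\n{P}}(\rr{u})\,,
\end{equation*}
where the second equality uses equation \eqref{eq:aux_0010}. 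The key point is that this is an \emph{exact} equality for every index $i$, with no limit required. By linearity the same identity holds verbatim for every $\rr{a} \in \s{A}_{A_B}^0$ and every $i$.

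For the extension to the full $C^*$-algebra, let $F_i(\rr{a}) := |\Lambda_i|^{-1}\,{\rm Tr}_{\ell^2(\Z^2)}(\rr{p}_{\Lambda_i}\,\rr{a}\,\rr{p}_{\Lambda_i})$ and observe the elementary bound $|F_i(\rr{a})| \leqslant \|\rr{a}\|$, obtained by dominating each diagonal matrix element by $\|\rr{a}\|$ and summing $|\Lambda_i|$ of them. Together with the automatic bound $|\bb{T}_{\n{P}}(\rr{a})| \leqslant \|\rr{a}\|$, a standard $3\varepsilon$ argument — approximating an arbitrary $\rr{a} \in \s{A}_{A_B}$ by $\rr{a}' \in \s{A}_{A_B}^0$ with $\|\rr{a}-\rr{a}'\|<\varepsilon$ and applying the previous step to $\rr{a}'$ — yields $\lim_i F_i(\rr{a}) = \bb{T}_{\n{P}}(\rr{a})$, which is the claim. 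There is no serious obstacle in this proof: the F{\o}lner condition on $\{\Lambda_i\}$ actually plays no role here because the ergodic averaging trivialises when $\Omega_B$ is a singleton and any exhausting sequence of finite-rank projections would do. It would instead be the essential ingredient in the companion cases where $\Omega_B$ is non-trivial (as for the Iwatsuka or localized fields of Examples \ref{ex:hull_Iwatsuka} and \ref{ex:magn_hull_loc}), where Birkhoff's ergodic theorem would be required to produce the averaged diagonal value.
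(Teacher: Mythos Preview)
Your proof is correct. It differs from the paper's in organisation rather than in substance: the paper first proves the identity $\iota(\hat{\rr{a}}_{0,0})(\omega_0)=\langle\psi_0,\rr{a}\psi_0\rangle$ for \emph{every} $\rr{a}\in\s{A}_{A_B}$ (via the Ces\`aro means of Theorem~\ref{theo:four-ces} and continuity of the scalar product), and then plugs this into the Birkhoff-type averaging formula already established before the proposition, so that the trace-per-unit-volume expression drops out for all $\rr{a}$ at once. You instead verify the identity exactly on monomials, extend to $\s{A}_{A_B}^0$ by linearity, and pass to the closure with the uniform bound $|F_i(\rr{a})|\leqslant\|\rr{a}\|$. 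Your route is more elementary and makes transparent your closing remark that the F{\o}lner condition is irrelevant here (the averaging is already exact at each stage $i$); the paper's route has the advantage of slotting directly into the ergodic machinery set up just before the proposition, without a separate density argument.
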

\proof
Let $\psi_\gamma\in\ell^2(\Z^2)$ be the normalized vector defined by $\psi_\gamma(n):=\delta_{n,\gamma}$. Then, it holds true that
$\imath(\hat{\rr{a}}_{0,0})(\omega_0)=\langle\psi_0,\rr{a}\psi_0\rangle$ for all $\rr{a}\in \s{A}_{A_B}$
where  $\omega_0$ can be identified with the evaluation at $0\in\Z^2$.
 Indeed, from \eqref{eq:oxoco_02} one gets that $\imath(\hat{\rr{a}}_{0,0})(\omega_0)=\langle\psi_0,\sigma_N(\rr{a})\psi_0\rangle$ for all $N\in\N$ and the continuity of the scalar product concludes the argument. Therefore after some straightforward computation one obtains
$$
\begin{aligned}
\sum_{\gamma\in \Lambda_i}\imath\left(\widehat{\tau_{-\gamma}(\rr{a})}_{0,0}\right)(\omega_0)\;&=\;\sum_{\gamma\in \Lambda_i}\langle\psi_0,\tau_{-\gamma}(\rr{a})\psi_0\rangle\\
&=\;\sum_{\gamma\in \Lambda_i}\langle\psi_\gamma,\rr{a}\psi_\gamma\rangle\;=\;
{\rm Tr}_{\ell^2(\Z^2)}\big(\rr{p}_{\Lambda_i}\;\rr{a}\;\rr{p}_{\Lambda_i}\big)\;,
\end{aligned}
$$
and this concludes the proof
\qed

\subsection{Noncommutative Sobolev spaces}\label{sect:NC-sobolev}
The existence of a trace $\bb{T}_{\n{P}}$ for the magnetic algebra $\s{A}_{A_B}$ allows to define the $L^p$-norms
$$
\|\rr{a}\|_{L^p}\;:=\;\bb{T}_{\n{P}}\left(|\rr{a}|^p\right)^{\frac{1}{p}}\;,\qquad \rr{a}\in \s{A}_{A_B}\;,\;\; p>0
$$
where $|\rr{a}|=\sqrt{\rr{a}\rr{a}^*}$, and the noncommutative $L^p$-spaces
$$
\s{L}_{A_B}^p\;:=\;\overline{ \s{A}_{A_B}}^{\;\|\;\|_{L^p}}
$$
For a review of the theory of the noncommutative $L^p$-spaces
 we refer to \cite{segal-53,nelson-13,terp-93} (see also~\cite[Section
3.2]{denittis-lein-book} and references therein).

\medskip

The noncommutative Sobolev spaces are defined by
$$
\s{W}^{k,p}_{A_B}\;:=\;\overline{\s{A}_{A_B}^0}^{\;\|\;\|_{k,L^p}}\;,
$$
obtained by closing the noncommutative polynomials with respect to the Sobolev norms
$$
\|\rr{a}\|_{k,L^p}\;:=\;\sum_{i=0}^k\sum_{a+b=i}\|\nabla_1^a\nabla_2^b(\rr{a})\|_{L^p}\;.
$$

\section{Magnetic interfaces, Toeplitz extensions and \texorpdfstring{$K$-}-theory}
\label{sec:k-theo_gen_B_over}
In this section we will describe shows that magnetic $C^*$-algebras
are  Toeplitz-type extensions of its \emph{interfaces} subalgebras. 
This observation will be used to study the $K$-theory of the magnetic $C^*$-algebras.

\subsection{Evaluation homomorphisms and interface algebra}\label{sect:asym_ses}
Let $B_1$ and $B_2$ be two magnetic fields with associated vector potentials $A_{B_1}$ and $A_{B_2}$, respectively. In this section we will  study a  family of $C^\ast$-homomorphisms between the magnetic algebras $\s{A}_{A_{B_1}}$ and $\s{A}_{A_{B_2}}$ which will be of central importance in the rest of the work.
\begin{definition}[Evaluation homomorphisms]
A $C^\ast$-homomorphism 
${\rm ev}:\s{A}_{A_{B_1}}\to\s{A}_{A_{B_2}}$ such that 
$$
\begin{aligned}
{\rm ev}\big(\rr{s}_{A_{B_1},1}\big)\;:&=\;\rr{s}_{A_{B_2},1}\\
{\rm ev}\big(\rr{s}_{A_{B_1},2}\big)\;:&=\;\rr{s}_{A_{B_2},2}\;.
\end{aligned}
$$
will be called an \emph{evaluation homomorphism} from $\s{A}_{A_{B_1}}$ to $\s{A}_{A_{B_2}}$. 
\end{definition}

\medskip

Let $\rr{f}_{B_1}$ and $\rr{f}_{B_2}$ be the  flux operators of the magnetic algebras $\s{A}_{A_{B_1}}$ and $\s{A}_{A_{B_2}}$, respectively. If 
${\rm ev}:\s{A}_{A_{B_1}}\to\s{A}_{A_{B_2}}$ is an evaluation homomorphism then from \eqref{eqgen_comm_rel_X} it follows that
$
{\rm ev}(\rr{f}_{B_1})=\rr{f}_{B_2}
$.
More in general, one can check that 
\begin{equation}\label{eq:surj_F_B_01}
{\rm ev}\big(\tau_\gamma(\rr{f}_{B_1})\big)\;=\;\tau_\gamma(\rr{f}_{B_2})\;,\qquad\quad \forall\;\gamma\in\Z^2
\end{equation}
where, with a little abuse of notation,  $\tau_\gamma$ denotes the $\Z^2$-action described in Section \ref{sect:magnetic_C_Al},  for both algebras. 
\begin{lemma}\label{lemm:surj_ev_01}
Let ${\rm ev}:\s{A}_{A_{B_1}}\to\s{A}_{A_{B_2}}$ be an  evaluation homomorphism. Then
$$
{\rm ev}|_{\s{F}_{B_1}}\;:\;\s{F}_{B_1}\;\longrightarrow\;\s{F}_{B_2}
$$
restricts to a surjective $C^\ast$-homomorphism.
\end{lemma}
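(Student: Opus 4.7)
The plan is to verify two inclusions: that ${\rm ev}$ sends $\s{F}_{B_1}$ into $\s{F}_{B_2}$, and that its image is all of $\s{F}_{B_2}$. The only nontrivial input is equation \eqref{eq:surj_F_B_01}, which identifies the action of ${\rm ev}$ on the natural generating set $\{\tau_\gamma(\rr{f}_{B_1})\}_{\gamma\in\Z^2}$ of $\s{F}_{B_1}$.

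First I would argue that ${\rm ev}(\s{F}_{B_1})\subseteq \s{F}_{B_2}$. Since $\s{F}_{B_1}$ is, by definition \eqref{eq:def_F_B}, the norm closure of the $\ast$-subalgebra $\s{P}_1$ of Laurent polynomials in the family $\{\tau_\gamma(\rr{f}_{B_1})\}_{\gamma\in\Z^2}$, and since ${\rm ev}$ is a $\ast$-homomorphism, the image ${\rm ev}(\s{P}_1)$ consists of polynomials in $\{\tau_\gamma(\rr{f}_{B_2})\}_{\gamma\in\Z^2}$ by \eqref{eq:surj_F_B_01}, hence lies in $\s{F}_{B_2}$. Because any $C^\ast$-algebra homomorphism is automatically norm-contractive (and in particular norm-continuous), taking closures gives ${\rm ev}(\s{F}_{B_1})\subseteq \s{F}_{B_2}$. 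Thus ${\rm ev}|_{\s{F}_{B_1}}\colon \s{F}_{B_1}\to \s{F}_{B_2}$ is a well-defined $C^\ast$-homomorphism between commutative unital $C^\ast$-algebras.

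For surjectivity, I would invoke the standard fact that the image of a $C^\ast$-algebra under a $C^\ast$-homomorphism is closed (equivalently, is itself a $C^\ast$-subalgebra of the target); this is a consequence of the $C^\ast$-identity together with the fact that such a homomorphism factors as an isometric embedding of the quotient by its kernel. Granting this, ${\rm ev}(\s{F}_{B_1})$ is a $C^\ast$-subalgebra of $\s{F}_{B_2}$. By \eqref{eq:surj_F_B_01} it contains every generator $\tau_\gamma(\rr{f}_{B_2})$ of $\s{F}_{B_2}$, so it must coincide with the whole $C^\ast$-algebra they generate, namely $\s{F}_{B_2}$.

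There is essentially no obstacle: the statement is a formal consequence of \eqref{eq:surj_F_B_01}, the definition \eqref{eq:def_F_B} of the subalgebras $\s{F}_{B_i}$, and the general theory of $C^\ast$-homomorphisms. The only point that would deserve a line in the write-up is the closedness of the image; alternatively, one can bypass it by noting that for every $\rr{g}\in \s{F}_{B_2}$ there is a sequence $\rr{g}_n\in \s{P}_2$ of Laurent polynomials in $\{\tau_\gamma(\rr{f}_{B_2})\}$ with $\rr{g}_n\to \rr{g}$, and that the \virg{same} polynomials $\rr{g}_n'\in \s{P}_1$ in the variables $\{\tau_\gamma(\rr{f}_{B_1})\}$ form a Cauchy sequence in $\s{F}_{B_1}$ (because ${\rm ev}$ preserves norms of polynomial combinations through \eqref{eq:surj_F_B_01} on commutative subalgebras and continuity of the Gelfand calculus), whose limit $\rr{g}'\in \s{F}_{B_1}$ satisfies ${\rm ev}(\rr{g}')=\rr{g}$.
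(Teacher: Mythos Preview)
Your main argument is correct and essentially identical to the paper's: both show ${\rm ev}(\s{F}_{B_1})\subseteq\s{F}_{B_2}$ by mapping generators to generators via \eqref{eq:surj_F_B_01} and using continuity, and both obtain surjectivity from the fact that the image of a $C^\ast$-homomorphism is a closed $C^\ast$-subalgebra containing the dense polynomial algebra $\s{F}_{B_2}^0$.

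However, your \emph{alternative} argument in the last paragraph is flawed. You claim that the lifted polynomials $\rr{g}_n'\in\s{F}_{B_1}$ form a Cauchy sequence because ``${\rm ev}$ preserves norms of polynomial combinations through \eqref{eq:surj_F_B_01}''. This is false in general: a $C^\ast$-homomorphism is only contractive, and ${\rm ev}|_{\s{F}_{B_1}}$ is isometric only when it is injective. In the typical situation (e.g.\ the Iwatsuka case, where $\s{F}_{B_2}=\C\oplus\C$ while $\s{F}_{B_1}$ is infinite-dimensional) the kernel is nontrivial, so $\|\rr{g}_n'-\rr{g}_m'\|$ can be strictly larger than $\|\rr{g}_n-\rr{g}_m\|$ and the sequence $(\rr{g}_n')$ need not be Cauchy. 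Drop this alternative and keep the closedness-of-image argument, which is the one the paper uses.
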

\proof
Let $\s{F}_{B_j}^0\subset\s{F}_{B_j}$, $j=1,2$, be the dense subalgebra generated by the finite polynomials in the generators $\tau_\gamma(\rr{f}_{B_j})$. From \eqref{eq:surj_F_B_01} it follows that 
$\s{F}^0_{B_2}\subseteq{\rm ev}(\s{F}^0_{B_1})\subset\s{F}_{B_2}$.  Since by assumption ${\rm ev}$ is  a 
$C^\ast$-homomorphism and $\s{F}^0_{B_1}$ is dense, one gets that 
$$
\s{F}^0_{B_2}\;\subseteq\;{\rm ev}(\s{F}^0_{B_1})\;\subset\;{\rm ev}(\s{F}_{B_1})\;\subseteq\;\overline{{\rm ev}\big(\s{F}^0_{B_1}\big)}\;\subseteq\;
\s{F}_{B_2}\;.
$$
From the chain of inclusions above it follows that ${\rm ev}(\s{F}_{B_1})$ is a $C^*$-subalgebra of $\s{F}_{B_2}$ \cite[Proposition 2.3.1]{bratteli-robinson-87} which contains the dense set $\s{F}^0_{B_2}$.
This implies that: (i) the restriction  ${\rm ev}|_{\s{F}_{B_1}}$ is well defined, and (ii)
${\rm ev}(\s{F}_{B_1})=\s{F}_{B_2}$, \ie the surjectivity of the map.
\qed

\medskip

Since ${\rm ev}|_{\s{F}_{B_1}}$ is a well defined $C^\ast$-homomorphism between $\s{F}_{B_1}$ and $\s{F}_{B_2}$ it follows that 
${\rm Ker}({\rm ev}|_{\s{F}_{B_1}})\subset \s{F}_{B_1}$ is a closed (two-sided) ideal.
\begin{definition}[Interface algebra]\label{def:interf}
Let ${\rm ev}:\s{A}_{A_{B_1}}\to\s{A}_{A_{B_2}}$ be an evaluation homomorphism. The \emph{interface algebra} $\s{I}\subset \s{A}_{A_{B_1}}$ is the closed two-sided ideal generated in $\s{A}_{A_{B_1}}$ by ${\rm Ker}({\rm ev}|_{\s{F}_{B_1}})$.
\end{definition}

\medskip

In other words $\s{I}$ coincides with the $C^*$-subalgebra of $\s{A}_{A_{B_1}}$ generated by elements of the type $\rr{a}\rr{g}\rr{b}$ with $\rr{g}\in {\rm Ker}({\rm ev}|_{\s{F}_{B_1}})$ and $\rr{a},\rr{b}\in \s{A}_{A_{B_1}}$. This justifies the following notation
$$
\s{I}\;:=\;\s{A}_{A_{B_1}}\;{\rm Ker}({\rm ev}|_{\s{F}_{B_1}})\;\s{A}_{A_{B_1}}\;.
$$ 

\medskip

Let $\s{K}(\s{H})$ be the closed ideal of compact operators on the separable Hilbert space $\s{H}$. It is worth recalling that $\s{K}(\s{H})$ is an \emph{essential ideal} in $\s{B}(\s{H})$ \cite[Example 3.1.2]{murphy-90}.  Let $\s{C}(\n{S}^1)$ be the $C^*$-algebra of the continuous function on the unit circle $\n{S}^1\simeq\R/2\pi\Z$.
\begin{definition}[Localized and straight-line interfaces]\label{def:typ_inter}
Let $\s{I}$ be the interface algebra associated to a given evaluation homomorphism. We will say that the interface is \emph{localized} if $\s{I}=\s{K}(\ell^2(\Z^2))$.
The interface will be called
 \emph{straight-line} if $\s{I}\simeq\s{C}(\n{S}^1)\otimes\s{K}(\ell^2(\Z))$ up to a unitary transformation.
\end{definition}

\medskip

The motivation for the terminology introduced in Definition \ref{def:typ_inter} will be clarified in part 
 in the next example and in part in Section \ref{sect:toep_iwatsuka_MT}.

\begin{example}[Interface algebra for a localized magnetic field]
\label{ex:int_alg_loc}
According to the notations introduced in Example \ref{Ex3:loc}, Example \ref{Ex3:MT_loc} and Example \ref{ex_B=0},
let $\s{A}_\Lambda$ be the magnetic algebra associated to a localized magnetic field $B_\Lambda$ and $\s{A}_0$ be the algebra associated to a (constant) zero magnetic field. The map
 defined by ${\rm ev}(\rr{s}_{\Lambda,j})=\rr{s}_j$, where $\rr{s}_j$  are the canonical shift operators, extends to a $C^*$-homomorphism ${\rm ev}:\s{A}_\Lambda\to \s{A}_0$ (see the proof of Proposition \ref{prop:cond_ev_hom}). Therefore, from \eqref{eq:surj_F_B_01} one has that ${\rm ev}(\rr{f}_\Lambda)={\bf 1}$, and in turn
$$
{\rm ev}(\rr{p}_\Lambda)\;=\;{\rm ev}\left((\expo{\ii b}-1)^{-1}(\rr{f}_\Lambda-{\bf 1})\right)\;=\;0\;,\qquad b\notin2\pi\Z\;.
$$ 
This shows that $\rr{p}_\Lambda\in\s{I}$ is an element of the interface algebra. Moreover, by acting with the magnetic translations $\rr{s}_{\Lambda,j}$ one obtains that also  $\rr{p}_{\gamma+\Lambda}\in\s{I}$ for every $\gamma\in\Z^2$. This fact 
is the key observation to conclude that $\s{I}=\s{K}(\ell^2(\Z^2))$. Let us start by the simplest case of a flux tube supported on a single site  $\Lambda:=\{\lambda_0\}$. In this case the latter  observation implies that every rank-one 
projection $\rr{p}_{\{\gamma\}}$ is in $\s{I}$ and this is  enough to conclude that $\s{I}$ is the $C^*$-algebra of compact operators (see \cite[Appendix B]{denittis-schulz-baldes-16} for the details). In the more general case $|\Lambda|>1$ one can show that  it is always possible to build a projector in $\s{I}$ supported in a single point. Let $\lambda_0,\lambda\in\Lambda$ be two distinct points and $\gamma_0:=\lambda_0-\lambda$. Then $\lambda_0\in\gamma_0+\Lambda$
and $\rr{p}_{\Lambda}({\bf 1}-\rr{p}_{\gamma_0+\Lambda})$ is a projection in $\s{I}$ which projects on a subset $\Lambda'\subset\Lambda$ where the strict inclusion is justified by the fact that $\lambda_0\notin\Lambda'$. By iterating the procedure  a sufficient number of times one ends with a projection on a singleton. In summary, a localized magnetic field always provides a localized interface in the sense of Definition \ref{def:typ_inter}.\hfill $\blacktriangleleft$
\end{example}

\subsection{Toeplitz extensions by an interface}\label{sect:toep_inter}
Let $\s{A}, \s{B}$ and $\s{C}$ three $C^*$-algebras fitting into the short exact sequence
\begin{equation}\label{eq:ex_seq}
0\;\stackrel{}{\longrightarrow}\;\s{A}\;\stackrel{\alpha}{\longrightarrow}\;\s{B}\;\stackrel{\beta}{\longrightarrow}\s{C}\;\stackrel{}{\longrightarrow}0\;.
\end{equation}
In such a case we will say that $\s{B}$ is the \emph{Toeplitz extension} of $\s{A}$ by $\s{C}$. For a simple and complete review of the theory of extension of $C^*$-algebras we refer to
\cite[Chapter 3]{wegge-olsen-93}. It is worth pointing out that we are proposing the use of the expression Toeplitz extension
in a somehow (ultra) generalized sense. Indeed the original notion of Toeplitz extension refers to a very specific example of extension of $C^*$-algebras, see \eg \cite[Section 3.5]{murphy-90} or \cite[Exercise 3.F]{wegge-olsen-93}. However, such a generalized use of the name Toeplitz extension  is becoming standard in condensed matter community (see \eg \cite{arici-mesland-19})
and we decided to adhere to this use.

\medskip

The main aim of this section is to show that an evaluation homomorphism automatically provides a Toeplitz extension.

\begin{theorem}\label{theo:SES}
Every evaluation homomorphism ${\rm ev}:\s{A}_{A_{B_1}}\to\s{A}_{A_{B_2}}$ fits into the short exact sequence
\begin{equation}\label{eq:ex_seq_021_0_int}
0\;\stackrel{}{\longrightarrow}\;\s{I}\;\stackrel{\imath}{\longrightarrow}\;\s{A}_{A_{B_1}}\;\stackrel{\rm ev}{\longrightarrow}\s{A}_{A_{B_2}}\;\stackrel{}{\longrightarrow}0\;
\end{equation}
where $\s{I}$ is the related interface algebra
and $\imath$ is the (natural) inclusion map.
\end{theorem}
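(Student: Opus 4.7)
The plan is to verify the three standard conditions for exactness of \eqref{eq:ex_seq_021_0_int}: injectivity of the inclusion $\imath$ (trivial, since $\s{I}$ is defined as a closed subalgebra of $\s{A}_{A_{B_1}}$), surjectivity of ${\rm ev}$, and the identity ${\rm Ker}({\rm ev})=\s{I}$. The first and second are quick, while the content of the theorem lies in the third, which I would attack via the Fourier/Ces\`aro machinery of Section \ref{sect:four_theo}.

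For surjectivity, I would argue that the image ${\rm ev}(\s{A}_{A_{B_1}})$ is a $C^*$-subalgebra of $\s{A}_{A_{B_2}}$ (the image of any $\ast$-homomorphism between $C^*$-algebras is automatically closed) which contains the generators $\rr{s}_{A_{B_2},1}$ and $\rr{s}_{A_{B_2},2}$; by Definition \ref{def:magnetic-algebra} it must therefore equal $\s{A}_{A_{B_2}}$. The easy inclusion $\s{I}\subseteq{\rm Ker}({\rm ev})$ follows from the definition of the interface algebra as the closed two-sided ideal generated by ${\rm Ker}({\rm ev}|_{\s{F}_{B_1}})$: every generator of $\s{I}$ has the form $\rr{a}\rr{g}\rr{b}$ with $\rr{g}\in{\rm Ker}({\rm ev}|_{\s{F}_{B_1}})$, so ${\rm ev}(\rr{a}\rr{g}\rr{b})={\rm ev}(\rr{a})\cdot 0\cdot{\rm ev}(\rr{b})=0$, and this extends by continuity to all of $\s{I}$.

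The main work is the reverse inclusion ${\rm Ker}({\rm ev})\subseteq\s{I}$. Here I would first observe that ${\rm ev}$ intertwines the $\T^2$-actions of Proposition \ref{prop:group_act}: a direct check on the generators using \eqref{eq:gru_act_02.1} yields
$$
{\rm ev}\bigl(\alpha_\theta(\rr{s}_{A_{B_1},j})\bigr)\;=\;\expo{-\ii\theta_j}\,\rr{s}_{A_{B_2},j}\;=\;\alpha_\theta\bigl({\rm ev}(\rr{s}_{A_{B_1},j})\bigr)\;,
$$
which propagates to all of $\s{A}_{A_{B_1}}$ since ${\rm ev}$ is a bounded $\ast$-homomorphism. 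Because ${\rm ev}$ is also bounded and linear, it commutes with the Bochner integrals defining the $\s{F}_{B_1}$-valued Fourier coefficients \eqref{eq:oxoco}, giving the key identity
$$
{\rm ev}\bigl(\hat{\rr{a}}_{r,s}\bigr)\;=\;\widehat{{\rm ev}(\rr{a})}_{r,s}\;,\qquad (r,s)\in\Z^2\;.
$$

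With this identity in hand, let $\rr{a}\in{\rm Ker}({\rm ev})$. Then $\widehat{{\rm ev}(\rr{a})}_{r,s}=0$ for all $(r,s)$, so $\hat{\rr{a}}_{r,s}\in{\rm Ker}({\rm ev}|_{\s{F}_{B_1}})$ for every $(r,s)$. Consequently each term of the Ces\`aro mean
$$
\sigma_N(\rr{a})\;=\;\sum_{(r,s)\in\Lambda_N}\!\left(1-\tfrac{|r|}{N+1}\right)\!\left(1-\tfrac{|s|}{N+1}\right)\hat{\rr{a}}_{r,s}\,(\rr{s}_{A_{B_1},1})^{r}(\rr{s}_{A_{B_1},2})^{s}
$$
is of the form $\rr{g}\,(\rr{s}_{A_{B_1},1})^{r}(\rr{s}_{A_{B_1},2})^{s}$ with $\rr{g}\in{\rm Ker}({\rm ev}|_{\s{F}_{B_1}})$, hence lies in the two-sided ideal $\s{I}$; therefore $\sigma_N(\rr{a})\in\s{I}$ for every $N$. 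By Theorem \ref{theo:four-ces} we have $\sigma_N(\rr{a})\to\rr{a}$ in norm, and since $\s{I}$ is closed, $\rr{a}\in\s{I}$. This finishes the proof. The only real obstacle, namely controlling kernel elements without a closed formula, is bypassed precisely by the Ces\`aro approximation, which converts the kernel condition on $\rr{a}$ into a coefficient-wise kernel condition that is manifestly visible inside $\s{I}$.
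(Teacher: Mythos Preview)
Your proposal is correct and follows essentially the same approach as the paper: surjectivity from the image being a closed $C^*$-subalgebra containing the generators, and ${\rm Ker}({\rm ev})=\s{I}$ via the intertwining ${\rm ev}\circ\alpha_\theta=\alpha_\theta\circ{\rm ev}$, the resulting identity ${\rm ev}(\hat{\rr{a}}_{r,s})=\widehat{{\rm ev}(\rr{a})}_{r,s}$, and Ces\`aro approximation into the closed ideal $\s{I}$. The only cosmetic difference is that the paper phrases surjectivity through the dense polynomial subalgebras $\s{A}^0$ and Lemma~\ref{lemm:surj_ev_01}, whereas you invoke closedness of the image directly.
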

\proof
The map $\imath$ is injective by definition. Therefore, to complete the proof we need to prove that the evaluation homomorphism   is   surjective and that 
${\rm Ker}({\rm ev})=\s{I}$. The surjectivity is a consequence of Lemma \ref{lemm:surj_ev_01}
which ensures  ${\rm ev}(\s{A}_{A_{B_1}}^0)=\s{A}_{A_{B_2}}^0$ (or equivalently
${\rm ev}(\s{A}_{A_{B_1}}^\infty)=\s{A}_{A_{B_2}}^\infty$). Then, as in the proof of Lemma \ref{lemm:surj_ev_01}, the chain of inclusions
$$
\s{A}_{A_{B_2}}^0\;=\;{\rm ev}(\s{A}_{A_{B_1}}^0)\;\subset\;{\rm ev}(\s{A}_{A_{B_1}})\;\subseteq\;\overline{{\rm ev}(\s{A}_{A_{B_1}}^0)}\;\subseteq\;
\s{A}_{A_{B_2}}\;
$$
implies  ${\rm ev}(\s{A}_{A_{B_1}})=\s{A}_{A_{B_2}}$. The description of the kernel of 
${\rm ev}$ is a consequence of Corollary \ref{cor_uni_exp}  which guarantees that
  $\rr{a}\in {\rm Ker}({\rm ev})$ if and only if all the $\s{F}_{B_2}$-coefficients of 
${\rm ev}(\rr{a})$ are   zero. From 
the definition \eqref{eq:oxoco}, the linearity of the integral and the fact that the evaluation homomorphism ${\rm ev}$ commutes (by construction) with the family of automorphisms $\alpha_\theta$, it follows that 
$\widehat{{\rm ev}(\rr{a})}_{r,s}={\rm ev}(\widehat{\rr{a}}_{r,s})$. Then $\rr{a}\in{\rm Ker}({\rm ev})$ if and only if ${\rm ev}(\widehat{\rr{a}}_{r,s})=0$ for all $(r,s)\in\Z^2$. 
This implies that $\rr{a}\in{\rm Ker}({\rm ev})$ if and only if
$\sigma_N(\rr{a})\in\s{I}$ for every $N\in\N$, where $\sigma_N(\rr{a})$ is the {Ces\`aro mean} \eqref{eq:oxoco_02} which converges to $\rr{a}$. Since $\s{I}$ is a closed ideal it follows that $\rr{a}\in{\rm Ker}({\rm ev})$ if and only if $\rr{a}\in\s{I}$.
\qed

\medskip

By using the terminology introduced at the beginning of this section we will say that $\s{A}_{A_{B_1}}$ is the Toeplitz extension of the \emph{interface} $\s{I}$ by the final (or \emph{bulk}\footnote{The use of this terminology will be clarified in Definition \ref{def:bulk_alg}.}) algebra $\s{A}_{A_{B_2}}$. 
\begin{corollary}\label{cor:triv_inter}
${\bf 1}\in \s{I}$ if and only if ${\rm ev}=0$.
\end{corollary}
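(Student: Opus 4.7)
The plan is very short because Theorem~\ref{theo:SES} already does all the heavy lifting: it identifies the interface algebra with the kernel of the evaluation homomorphism, i.e.\ $\s{I}=\mathrm{Ker}({\rm ev})$. The corollary is then essentially the standard fact that a (proper, two-sided, closed) ideal in a unital algebra is the whole algebra if and only if it contains the unit.

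First I would dispense with the easy direction. If ${\rm ev}=0$ then $\mathrm{Ker}({\rm ev})=\s{A}_{A_{B_1}}$, so in particular $\mathbf{1}\in\s{A}_{A_{B_1}}=\s{I}$.

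For the converse, assume $\mathbf{1}\in\s{I}$. Since $\s{I}$ is a two-sided ideal in the unital $C^*$-algebra $\s{A}_{A_{B_1}}$ (as guaranteed by Definition~\ref{def:interf}), for any $\rr{a}\in\s{A}_{A_{B_1}}$ we have $\rr{a}=\rr{a}\cdot\mathbf{1}\in\s{I}$. Hence $\s{I}=\s{A}_{A_{B_1}}$, and by Theorem~\ref{theo:SES} this means $\mathrm{Ker}({\rm ev})=\s{A}_{A_{B_1}}$, i.e.\ ${\rm ev}(\rr{a})=0$ for every $\rr{a}\in\s{A}_{A_{B_1}}$, which is exactly ${\rm ev}=0$.

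There is no real obstacle here; the only conceptual point worth noting is that, because ${\rm ev}$ is unital by construction (it sends generators to generators and hence $\mathbf{1}\mapsto\mathbf{1}$), the condition ${\rm ev}=0$ forces $\mathbf{1}=0$ in $\s{A}_{A_{B_2}}$, i.e.\ $\s{A}_{A_{B_2}}=\{0\}$. Thus the corollary is really telling us that in any non-degenerate situation the unit cannot belong to the interface algebra, a fact which will be used later as a sanity check when analysing concrete models such as the Iwatsuka and localized-field cases.
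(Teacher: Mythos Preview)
Your proof is correct and follows essentially the same approach as the paper: both use that $\s{I}=\mathrm{Ker}({\rm ev})$ from Theorem~\ref{theo:SES} together with the standard fact that an ideal in a unital algebra equals the whole algebra if and only if it contains the unit. Your version simply spells out the two directions in more detail than the paper's one-line proof.
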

\proof
Since $\s{I}$ is an ideal one has that ${\bf 1}\in \s{I}$ if and only if $\s{I}=\s{A}_{A_{B_1}}$.
\qed

\begin{example}[Localized interface and discrete spectrum]
\label{ex:loc_int}
In the case of a localized interface $\s{I}=\s{K}(\ell^2(\Z^2))$ (like in Example \ref{ex:int_alg_loc}) the short exact sequence \eqref{eq:ex_seq_021_0_int} provides the isomorphism
$$
\s{A}_{A_{B_2}}\;\simeq\;\s{A}_{A_{B_1}}/\s{K}(\ell^2(\Z^2))\;.
$$
This means that the elements of $\s{A}_{A_{B_1}}$ are compact perturbations of elements of the (bulk) algebra $\s{A}_{A_{B_2}}$. In this case the algebra $\s{A}_{A_{B_2}}$ is known as the \emph{corona} of $\s{A}_{A_{B_1}}$
and the  short exact sequence \eqref{eq:ex_seq_021_0_int}
 is called \emph{essential} \cite[Definition 3.2.1]{wegge-olsen-93}. The isomorphism above  is useful to analyze the spectrum of elements $\rr{a}\in \s{A}_{A_{B_1}}$. In fact it holds true that the  evaluation ${\rm ev}(\rr{a})\in  \s{A}_{A_{B_2}}$ contains the information about the essential spectrum  $\sigma_{\rm ess}(\rr{a})$ while the discrete spectrum  $\sigma_{\rm d}(\rr{a})$ is generated by the part of $\rr{a}$ which belongs to the interface. Usually, the discrete spectrum of $\rr{a}$ is located in the gaps of the spectrum of  ${\rm ev}(\rr{a})$.
\hfill $\blacktriangleleft$
\end{example}

\medskip

The short exact sequence  \eqref{eq:ex_seq} is called \emph{split exact} if there exists a $C^*$-homomorphism (the lifting map)
$\beta':\s{C}\to\s{B}$ such that $\beta\circ\beta'={\rm Id}_{\s{C}}$. In such a case 
both $\alpha(\s{A})$ and $\beta'(\s{C})$
 are $C^*$-subalgebras of $\s{B}$ and 
 $\s{B}=\alpha(\s{A})+\beta'(\s{C})$ is the Banach space \emph{direct sum} of these two $C^*$-subalgebras. It is worth pointing out that this is not the same of the direct sum of $C^*$-algebras. The latter condition is stronger and requires that also $\beta'(\s{C})$ is an ideal in $\s{B}$.
When this extra condition holds true one has that  
  $\s{B}=\alpha(\s{A})\oplus\beta'(\s{C})$ is an \emph{orthogonal} direct sum of $C^*$-algebras.

 \begin{example}[Toeplitz extension for a localized magnetic field]
\label{ex:toep_ext_loc}
From 
Example
\ref{ex:int_alg_loc} one infers that  a localized magnetic field  provides the Toeplitz extension
\begin{equation}\label{eq:ex_seq_021_0_int_loc}
0\;\stackrel{}{\longrightarrow}\;\s{K}\big(\ell^2(\Z^2)\big)\;\stackrel{\imath}{\longrightarrow}\;\s{A}_{\Lambda}\;\stackrel{\rm ev}{\longrightarrow}\s{A}_{0}\;\stackrel{}{\longrightarrow}0\;.
\end{equation}
By adapting the argument in \cite[Proposition 2]{denittis-schulz-baldes-16}, and up to gauge transformation,	one can 
show that the difference $\rr{s}_{\Lambda,1}-\rr{s}_{1}=(\rr{y}_{A_\Lambda,1}-{\bf 1})\rr{s}_{1}$ is a compact operator.
Therefore, one has that  $\s{A}_{\Lambda}$ coincides with the $C^*$-algebra generated by $\s{K}(\ell^2(\Z^2))$ and $\s{A}_{0}$
\cite[Theorem 11]{denittis-schulz-baldes-16} and, as a consequence, 
the Toeplitz extension \eqref{eq:ex_seq_021_0_int_loc} is split exact in view of the (inclusion) homomorphism $\jmath:\s{A}_{0}\hookrightarrow \s{A}_{\Lambda}$. This fact implies that 
$\s{A}_{\Lambda}=\s{K}(\ell^2(\Z^2))+\s{A}_{0}$  as direct sum of Banach spaces.
\hfill $\blacktriangleleft$
\end{example}
 
\medskip

Example  \ref{ex:toep_ext_loc} is somehow special since 
  the Toeplitz extensions \eqref{eq:ex_seq_021_0_int} considered in this work will be  not split exact in general. Nevertheless,
in many cases it is possible to show that there exists a \emph{linear map} (not a $C^*$-homomorphism) $\jmath:\s{A}_{A_{B_2}}\to \s{A}_{A_{B_1}}$ such that ${\rm ev}\circ \jmath$ is the identity on ${\s{A}_{A_{B_2}}}$ and which provides a splitting of the linear space structure of ${\s{A}_{A_{B_1}}}$. This is   made possible by the fact that the magnetic algebras are generated by monomials in the magnetic translations.
An extended discussion of this aspect can be found in \cite[Section 3.2.2]{prodan-schulz-baldes-book}.

\subsection{Evaluation homomorphisms and dynamics}\label{sect:dim_ev}
In the previous section we described the consequences of having an evaluation homomorphism between two magnetic algebras.
In this section we will analyze the relation between the existence of evaluation homomorphisms and the dynamical properties  of the dynamical systems 
generated by the magnetic hulls. As a result we will provide a generalized definition of \emph{magnetic multi-interface} based on purely dynamical properties of the  magnetic hulls. 

\medskip

 Let $(\Omega_{B_1},{\tau}^*,\Z^2)$ and $(\Omega_{B_2},{\tau}^*,\Z^2)$ be the two 
topological dynamical systems 
associated to the magnetic fields $B_1$ and $B_2$, respectively. An \emph{equivariant} map from $\Omega_{B_2}$ to $\Omega_{B_1}$ is a continuous function ${\phi}^*:\Omega_{B_2}\to\Omega_{B_1}$ such that
$$
{\phi}^*\;\circ\;{{\tau}}^*_\gamma\;=\;{\tau}^*_\gamma\;\circ\; {\phi}^*\;,\qquad\quad \forall\;\gamma\in\Z^2\;.
$$
\begin{proposition}\label{prop:sub_dyn_sys}
Every evaluation homomorphism ${\rm ev}:\s{A}_{A_{B_1}}\to\s{A}_{A_{B_2}}$ defines an injective closed
{equivariant} map ${\phi}^*:\Omega_{B_2}\hookrightarrow\Omega_{B_1}$.
\end{proposition}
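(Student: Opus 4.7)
The plan is to obtain $\phi^*$ by dualizing the restriction of ${\rm ev}$ to the commutative subalgebras $\s{F}_{B_j}$ via Gelfand duality, and then to check equivariance by pushing the $\Z^2$-action through the evaluation homomorphism.

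First, I would invoke Lemma \ref{lemm:surj_ev_01} to ensure that ${\rm ev}|_{\s{F}_{B_1}}\colon \s{F}_{B_1}\to\s{F}_{B_2}$ is a well-defined surjective unital $C^*$-homomorphism. Using the Gelfand isomorphisms $\s{F}_{B_j}\simeq\s{C}(\Omega_{B_j})$, this dualizes to a continuous map
$$
\phi^*\;:\;\Omega_{B_2}\;\longrightarrow\;\Omega_{B_1}\;,\qquad \phi^*(\omega_2)(\rr{g})\;:=\;\omega_2\bigl({\rm ev}(\rr{g})\bigr)\;,\quad\rr{g}\in\s{F}_{B_1}\;.
$$
The map $\phi^*$ is well defined since $\omega_2\circ{\rm ev}|_{\s{F}_{B_1}}$ is a character of $\s{F}_{B_1}$, and injective because surjectivity of ${\rm ev}|_{\s{F}_{B_1}}$ implies that $\omega_2\mapsto \omega_2\circ{\rm ev}|_{\s{F}_{B_1}}$ separates characters. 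The closedness is then automatic: a continuous map from the compact space $\Omega_{B_2}$ to the Hausdorff space $\Omega_{B_1}$ sends closed sets (which are compact) to compact sets (which are closed).

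Next I would verify equivariance. By Lemma \ref{lemma:comput_01}, on both sides the $\Z^2$-action on $\s{F}_{B_j}$ is implemented by conjugation with magnetic translations. Since ${\rm ev}$ is a $*$-homomorphism satisfying ${\rm ev}(\rr{s}_{A_{B_1},i})=\rr{s}_{A_{B_2},i}$, it intertwines these conjugations, so that ${\rm ev}(\tau_\gamma(\rr{g}))=\tau_\gamma({\rm ev}(\rr{g}))$ for every $\rr{g}\in\s{F}_{B_1}$ and every $\gamma\in\Z^2$. Dualizing this identity gives, for every $\omega_2\in\Omega_{B_2}$ and every $\rr{g}\in\s{F}_{B_1}$,
$$
(\phi^*\circ\tau^*_\gamma)(\omega_2)(\rr{g})\;=\;\omega_2\bigl(\tau_{-\gamma}({\rm ev}(\rr{g}))\bigr)\;=\;\omega_2\bigl({\rm ev}(\tau_{-\gamma}(\rr{g}))\bigr)\;=\;(\tau^*_\gamma\circ\phi^*)(\omega_2)(\rr{g})\;,
$$
which is precisely the required equivariance condition.

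I expect no serious obstacle: the argument is a direct application of the contravariant equivalence between commutative unital $C^*$-algebras and compact Hausdorff spaces, together with functoriality of the conjugation action. The only point requiring a small check is the compatibility between the abstract $\tau_\gamma$ defined on each $\s{F}_{B_j}$ and the spatial action $\tau^*_\gamma$ introduced in Section \ref{sect:integr}, but this follows immediately from the definition $\tau^*_\gamma(\omega)(\rr{g})=\omega(\tau_{-\gamma}(\rr{g}))$.
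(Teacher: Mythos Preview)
Your proof is correct and follows essentially the same route as the paper: dualize the surjective restriction ${\rm ev}|_{\s{F}_{B_1}}$ via Gelfand duality to obtain $\phi^*$, deduce injectivity from surjectivity, and obtain closedness from the compact-to-Hausdorff argument (the paper phrases the latter as the Closed Map Lemma). If anything, you are slightly more complete, since you spell out the equivariance check that the paper leaves implicit.
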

\proof
Let us consider the Gelfand trasforms 
$\bb{G}_j:\s{F}_{B_j}\to \s{C}(\Omega_{B_j})$, with $j=1,2$. The map
$\phi:\s{C}(\Omega_{B_1})\to\s{C}(\Omega_{B_2})$ defined by 
$$
\phi\;:=\;\bb{G}_2\;\circ\;{\rm ev}|_{\s{F}_{B_1}}\;\circ\;\bb{G}_1^{-1}
$$
is the composition of surjective $C^*$-homomorphisms, hence it is a surjective $C^*$-homomorphism. 
By duality, $\phi$ induces a continuous map ${\phi}^*:\Omega_{B_2}\to\Omega_{B_1}$ defined by
$$
{\phi}^*(\omega)\;:=\;\omega\;\circ\;\phi\;.
$$ 
Indeed, if  
$\omega\in \Omega_2$  is meant as a character of $\s{C}(\Omega_{B_2})$, then 
${\phi}^*(\omega)$ is a character of $\s{C}(\Omega_{B_1})$, hence a point of $\Omega_1$.
The surjectivity of $\phi$ implies the injectivity of ${\phi}^*$. Indeed, ${\phi}^*(\omega_1)={\phi}^*(\omega_2)$ implies that $\omega_1(\hat{g})=\omega_2(\hat{g})$ for all $\hat{g}\in \s{C}(\Omega_{B_2})$ which is exactly   $\omega_1=\omega_2$. Finally ${\phi}^*$ is closed in view of the \emph{Closed Map Lemma} \cite[Lemma 4.25]{lee-00} since $\Omega_{B_1}$ and $\Omega_{B_2}$ are both compact Hausdorff spaces. 
\qed

\medskip

Let us recall that a continuous closed injection between topological spaces is usually called a \emph{(topological) embedding}.
Let ${\phi}^*:\Omega_{B_2}\hookrightarrow\Omega_{B_1}$ be the 
 {equivariant} embedding of  Proposition \eqref{prop:sub_dyn_sys}. The subset ${\Omega}_{\ast}:={\phi}^*(\Omega_{B_2})$ is evidently a closed invariant subset of $\Omega_{B_1}$ and 
$({\Omega}_{\ast},{\tau}^*,\Z^2)$ becomes a dynamical subsystem of $(\Omega_{B_1},{\tau}^*,\Z^2)$. 
Moreover
$$
{\Omega}_{\ast}\;=\;{\phi}^*\left(\overline{{\rm Orb}(\omega_0)}\right)\;=\;\overline{{\rm Orb}(\omega_\ast)}
$$
where $\omega_\ast:={\phi}^*(\omega_0)$ and $\omega_0\in \Omega_{B_2}$ is the evaluation at $0$.
In conclusion, Proposition \eqref{prop:sub_dyn_sys} states that every 
evaluation homomorphism identifies (up to  isomorphisms) a 
dynamical subsystem of the initial magnetic hull. However, in view of Proposition 
\ref{prop:dens_orb}, the only possibilities 
for a closed and invariant subset $\Omega_\ast$ are $\Omega_\ast\subseteq \partial\Omega_{B_1}$ or $\Omega_\ast=\Omega_{B_1}$. The latter circumstance corresponds to the case of ${\phi}^*$ being an isomorphism and, as a consequence of Proposition \ref{prop:sub_dyn_sys} and the 
short exact sequence of Theorem \ref{theo:SES}, this is equivalent to the
 isomorphism $\s{A}_{A_{B_1}}\simeq\s{A}_{A_{B_2}}$. 
This case will be called trivial as opposite to the  non trivial case in which ${\phi}^*$ defines a  proper dynamical subsystem of the initial dynamical system. The next result provides a sort  of inverse implication of Proposition \eqref{prop:sub_dyn_sys}.
\begin{proposition}\label{prop:cond_ev_hom}
Let $\s{A}_{A_{B}}$ be a magnetic algebra and 
$(\Omega_{B},{\tau}^*,\Z^2)$
the topological dynamical system associated to its magnetic hull. Let $\Omega_\ast\subseteq \partial\Omega_{B}$ be a proper invariant closed subset. 
Assume that $\Omega_\ast=\overline{{\rm Orb}(\omega_\ast)}$ for some $\omega_\ast\in\partial\Omega_{B}$.
Then, there is a  magnetic algebra $\s{A}_{A_{B_\ast}}$ with magnetic hull $\Omega_\ast$ and an evaluation homomorphism ${\rm ev}:\s{A}_{A_{B}}\to\s{A}_{A_{B_\ast}}$.
\end{proposition}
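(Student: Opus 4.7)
Use the distinguished character $\omega_\ast \in \partial\Omega_B$ to define a function $f_{B_\ast} : \Z^2 \to \n{S}^1$ by
$$
f_{B_\ast}(\gamma) \;:=\; \hat{\rr{f}}_B\big(\tau^*_\gamma(\omega_\ast)\big)\;=\;\omega_\ast\big(\tau_{-\gamma}(\rr{f}_B)\big)\,, \qquad \gamma \in \Z^2\,.
$$
Since $\rr{f}_B$ is unitary in $\s{F}_B$, $f_{B_\ast}$ takes values in $\n{S}^1$, and writing $f_{B_\ast} = \expo{\ii B_\ast}$ produces a magnetic field $B_\ast : \Z^2 \to \R$ (the lift modulo $2\pi\Z$ being immaterial). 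Pick any vector potential $A_{B_\ast}$ for $B_\ast$, whose existence is guaranteed by the facts recalled at the beginning of Section~\ref{sect:MF&VP}. This yields the magnetic algebra $\s{A}_{A_{B_\ast}} \subset \s{B}(\ell^2(\Z^2))$, with flux operator $\rr{f}_{B_\ast} = \rr{m}_{f_{B_\ast}}$ and flux subalgebra $\s{F}_{B_\ast}$.

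\textbf{Step 2: Identification of the hull of $B_\ast$.} Since $\Omega_\ast \subseteq \Omega_B$ is closed, restriction defines a surjective $*$-homomorphism $\s{C}(\Omega_B) \twoheadrightarrow \s{C}(\Omega_\ast)$ by the Tietze extension theorem. Transporting through the Gelfand isomorphism $\s{F}_B \simeq \s{C}(\Omega_B)$ yields a surjective $*$-homomorphism $\pi : \s{F}_B \twoheadrightarrow \s{C}(\Omega_\ast)$. Under the orbit--evaluation dictionary of Section~\ref{sect:integr}, $\pi(\tau_\gamma(\rr{f}_B))$ corresponds to $\tau_\gamma(\rr{f}_{B_\ast})$; consequently the image of $\pi$ is exactly $\s{C}(f_{B_\ast},\Z^2) \simeq \s{F}_{B_\ast}$, and its Gelfand spectrum is $\overline{{\rm Orb}(\omega_\ast)} = \Omega_\ast$. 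In particular $\pi$ is $\Z^2$-equivariant (this uses the $\tau^*$-invariance of $\Omega_\ast$) and satisfies $\pi(\rr{f}_B) = \rr{f}_{B_\ast}$.

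\textbf{Step 3: Definition and extension of ${\rm ev}$.} Corollary~\ref{cor_uni_exp} implies that every element of $\s{A}_{A_B}$ is uniquely determined by its $\s{F}_B$-valued Fourier coefficients. On the dense $*$-subalgebra $\s{A}_{A_B}^0$ of noncommutative polynomials, declare
$$
{\rm ev}\Big(\rr{g}\,(\rr{s}_{A_B,1})^r(\rr{s}_{A_B,2})^s\Big) \;:=\; \pi(\rr{g})\,(\rr{s}_{A_{B_\ast},1})^r(\rr{s}_{A_{B_\ast},2})^s\,,\qquad \rr{g}\in\s{F}_B,\;(r,s)\in\Z^2\,,
$$
and extend by linearity. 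Using the monomial normal form \eqref{eq:mon_02} together with Lemma~\ref{lemma:comput_01} and the commutation relation \eqref{eqgen_comm_rel_X}, the compatibility of ${\rm ev}$ with products reduces to the two properties of $\pi$ established in Step~2: equivariance, which makes the reordering of $(\rr{s}_{A_B,j})$ past $\rr{g}$ match the reordering of $(\rr{s}_{A_{B_\ast},j})$ past $\pi(\rr{g})$, and $\pi(\rr{f}_B)=\rr{f}_{B_\ast}$, which makes the twist constants $\rr{k}_{(r,s)} \in \s{F}_B$ appearing in the proof of Proposition~\ref{prop:trace_properties} transform to their $\s{F}_{B_\ast}$-counterparts. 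Compatibility with the involution is immediate. Thus ${\rm ev}$ is a $*$-homomorphism $\s{A}_{A_B}^0 \to \s{A}_{A_{B_\ast}}^0$, hence contractive, and extends by continuity to a $C^*$-homomorphism on $\s{A}_{A_B}$ with ${\rm ev}(\rr{s}_{A_B,j}) = \rr{s}_{A_{B_\ast},j}$, as required.

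\textbf{Main obstacle.} The substantive content is the multiplicativity check in Step~3: one must verify that the twist encoded in \eqref{eqgen_comm_rel_X} and Lemma~\ref{lemma:comput_01} passes transparently through $\pi$. Conceptually this is the statement that $\s{A}_{A_B}$ and $\s{A}_{A_{B_\ast}}$ are faithful concrete realizations of the same twisted crossed-product construction over $\Z^2$, with coefficient algebras $\s{F}_B$ and $\s{F}_{B_\ast}$ and $2$-cocycles built from $\rr{f}_B$ and $\rr{f}_{B_\ast}$ respectively; the equivariant surjection $\pi$ intertwines the cocycles and therefore induces a surjection on the $C^*$-completions. Once this principle is isolated, everything else in the argument is formal.
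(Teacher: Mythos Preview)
Your argument follows essentially the same route as the paper: build $f_{B_\ast}$ from $\omega_\ast$, identify the hull, push the restriction map $\s{F}_B\to\s{F}_{B_\ast}$ through the Fourier/monomial expansion, and extend. The multiplicativity check you describe is correct and matches the paper's reasoning.

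There is, however, one genuine gap in Step~3. You write that ${\rm ev}$ is a $*$-homomorphism $\s{A}_{A_B}^0\to\s{A}_{A_{B_\ast}}^0$, ``hence contractive''. This inference fails: automatic contractivity of $*$-homomorphisms requires the \emph{domain} to be a $C^*$-algebra (or at least spectrally invariant in its $C^*$-closure). The polynomial algebra $\s{A}_{A_B}^0$ has neither property, so nothing prevents ${\rm ev}$ from blowing up norms. (A toy illustration: evaluation at $2$ is a $*$-homomorphism $\C[x]\to\C$, but is not contractive for the norm inherited from $\s{C}([0,1])$.) Without contractivity you cannot extend by continuity, and the proof stalls.

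The paper closes this gap by working instead on the smooth subalgebra $\s{A}_{A_B}^\infty$, which is a Fr\'echet pre-$C^*$-algebra (Proposition~\ref{prop:pre-C-ast}), and invoking \cite[Lemma~3.41]{gracia-varilly-figueroa-01}: a $*$-homomorphism out of a pre-$C^*$-algebra into a $C^*$-algebra is automatically continuous, because stability under holomorphic functional calculus forces spectral invariance and hence the spectral-radius bound goes through. Your own ``Main obstacle'' paragraph points to an equally valid alternative: both algebras are faithful representations of the twisted crossed products $\s{C}(\Omega_B)\rtimes_{\tau,B}\Z^2$ and $\s{C}(\Omega_\ast)\rtimes_{\tau,B_\ast}\Z^2$ (Appendix~\ref{sect:crossedproduct}), and the equivariant surjection $\s{C}(\Omega_B)\to\s{C}(\Omega_\ast)$ induces a $C^*$-homomorphism on the (full $=$ reduced, since $\Z^2$ is amenable) crossed products by functoriality. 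Either fix repairs your argument; as written, the extension step is unjustified.
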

\proof
Let $\phi:\s{C}(\Omega_{B})\to\s{C}(\Omega_{\ast})$ be the surjective \emph{restriction}
$C^*$-homomorphism defined by $\phi(\hat{g}):=\hat{g}|_{\Omega_{\ast}}$ for all ${g}\in \s{C}(\Omega_{B})$.
Let $\widehat{f}_B$ be the Gelfand transform of the generator  $f_B$ of $\s{C}(f_B,\Z^2)$
and define the function $f_{B_\ast}:\Z^2\to\C$ by
$$
f_{B_\ast}(\gamma)\;:=\;
\widehat{f}_B({\tau}^*_\gamma(\omega_\ast))\;,\qquad\quad \gamma\in\Z^2\;.
$$
By the Gelfand isomorphism one obtains that $\s{C}(f_{B_\ast},\Z^2)\simeq\s{C}(\Omega_{\ast})$.
The function $f_{B_\ast}$ provides  a magnetic flux with an associated (non unique) magnetic field $B_\ast:\Z^2\to\R$. Let $A_{B_\ast}$ be a suitable vector potential for  $B_\ast$ and  $\s{A}_{A_{B_\ast}}$ the associated magnetic algebra. Let $\s{F}_{B_\ast}\subset \s{A}_{A_{B_\ast}}$ be the abelian subalgebra generated by the magnetic flux $\rr{f}_{B_\ast}:=\rr{m}_{f_{B_\ast}}$. The surjective 
$C^*$-homomorphism $\phi$ and the Gelfand isomorphism provide a surjective 
$C^*$-homomorphism $\widetilde{{\rm ev}}:\s{F}_{B}\to\s{F}_{B_\ast}$ characterized by 
$\widetilde{{\rm ev}}(\rr{f}_{B})=\rr{f}_{B_\ast}$. It turns out that the map
${\rm ev}:\s{A}_{B}^\infty\to\s{A}_{B_\ast}^\infty$ defined by
$$
{\rm ev}\left(\sum_{(r,s)\in\Z^2}\rr{g}_{r,s}\;(\rr{s}_{A_B,1})^{r}\;(\rr{s}_{A_B,2})^{s}\right)\:=\;\sum_{(r,s)\in\Z^2}\widetilde{{\rm ev}}(\rr{g}_{r,s})\;(\rr{s}_{A_{B_\ast},1})^{r}\;(\rr{s}_{A_{B_\ast},2})^{s}
$$
is a $\ast$-homomorphism of pre-$C^*$-algebras (Proposition \ref{prop:pre-C-ast}). Therefore, the claim follows from \cite[Lemma 3.41]{gracia-varilly-figueroa-01}.\qed

\medskip

\begin{remark}[Non-uniqueness of the magnetic field]
The magnetic algebra $\s{A}_{A_{B_\ast}}$ which enters in Proposition \ref{prop:cond_ev_hom} is
not unique for two reasons. First of all $\s{A}_{A_{B_\ast}}$ depends on the election of a vector potential $A_{B_\ast}$ for the magnetic field $B_\ast$ and this involves the election of gauge. However, magnetic algebras related to different gauges are 
unitarily equivalent as discussed in Section \ref{sect:magnetic_C_Al}. The second source of ambiguity is  more subtle and is related with the determination of the magnetic field ${B_\ast}$ from the magnetic flux $f_{B_\ast}$. Indeed, 
the natural candidate would be $B_\ast=-\ii\log(f_{B_\ast})$
but the the logarithm is not univocally defined in the complex plane. In particular, given a magnetic field ${B_\ast}$ compatible with the magnetic flux $f_{B_\ast}$ and a (not necessarily bounded) function $\zeta:\Z^2\to\Z$ one gets that 
$B_\ast':=B_\ast+2\pi\zeta$ provides the same magnetic flux.
A way to solve this ambiguity is to fix the convention that
$B_\ast:={\rm Arg}(f_{B_\ast})\in[0,2\pi)$ is given by the \emph{principal argument} of the flux $f_{B_\ast}$.
This correspond to a sort of \emph{minimal growth} assumption for the magnetic field at infinity and we will use this convention in the rest of this work.
\hfill $\blacktriangleleft$
\end{remark}

\medskip

We are now in position to introduce a key definition for this work.
\begin{definition}[Magnetic multi-interface]
\label{def:bulk_alg}
A  system subjected to a magnetic field $B:\Z^2\to[0,2\pi)$ and with the boundary of the magnetic hull given by 
a finite collection of invariant points
$$
\partial\Omega_B\;=\;\{\omega_{\ast,1},\ldots,\omega_{\ast,N+1}\}
$$
will be called a \emph{$N$-interface} magnetic system. In this case the associated Toeplitz extension is given by
\begin{equation}\label{eq:ex_seq_021_0_int_mult}
0\;\stackrel{}{\longrightarrow}\;\s{I}\;\stackrel{\imath}{\longrightarrow}\;\s{A}_{A_{B}}\;\stackrel{\rm ev}{\longrightarrow}\s{A}_{\rm bulk}\;\stackrel{}{\longrightarrow}0\;
\end{equation}
where $\s{A}_{A_{B}}$ is any magnetic algebra associated to the magnetic field $B$. The \emph{bulk algebra}  
\begin{equation}\label{eq:ex_seq_021_0_int_mult-02}
\s{A}_{\rm bulk}\;:=\;\s{A}_{b_1}\;\oplus\;\ldots \;\oplus\;\s{A}_{b_{N+1}}
\end{equation}
 is given by the orthogonal direct sum of $N+1$ magnetic algebras of constant magnetic fields. In particular  $\s{A}_{b_j}$ is constructed as in Example \ref{ex_B=0} with respect to the constant magnetic field of strength $b_j:={\rm Arg}(\widehat{f}_B(\omega_{\ast,j}))$ for every $j=1,\ldots,N+1$ where $\widehat{f}_B$ is the Gelfand transform of the flux function  $f_B$ as described in the proof of Proposition \ref{prop:cond_ev_hom}. Finally the evaluation map and the interface algebra $\s{I}$ are completely 
specified by
$$
{\rm ev} (\rr{f}_B)\;:=\;\left(\expo{\ii b_1}{\bf 1}\;,\ldots,\expo{\ii b_{N+1}}{\bf 1}\right)
$$
as discussed in Section \ref{sect:asym_ses}.
\end{definition}

\medskip

As showed in Example \ref{ex:magn_hull_loc} and Example \ref{ex:int_alg_loc}, a localized magnetic field provides an example of a magnetic interface with order $N=0$. On the other hand Example \ref{ex:hull_Iwatsuka} shows that 
the Iwatsuka magnetic field provides an example of magnetic interface of order $N=1$. The case of the Iwatsuka magnetic field 
will be discussed extensively in Section \ref{sect:iwatsuka}.

\subsection{The \texorpdfstring{$K$-}-theory of magnetic interfaces}\label{sect:K-theory}
In this section we will discuss some aspects of the $K$-theory of magnetic interfaces. There is a large literature concerning the $K$-theory for $C^*$-algebras. We will refer to the classic monographs \cite{murphy-90,wegge-olsen-93,blackadar-98,gracia-varilly-figueroa-01} as well as \cite{prodan-schulz-baldes-book}
for a stronger connection with  condensed matter problems.

\medskip

Let us recall that for each Toeplitz extension of type \eqref{eq:ex_seq} there is an associated \emph{six-term sequence} in $K$-theory \cite[Theorem 9.3.2]{wegge-olsen-93}. Therefore, there is   a six-term sequence for every magnetic
Toeplitz extension of type \eqref{eq:ex_seq_021_0_int} or 
\eqref{eq:ex_seq_021_0_int_mult}. We will focus here on the latter case concerning a magnetic multi-interface.

\medskip

From the exact sequence  \eqref{eq:ex_seq_021_0_int_mult} one obtains the  {six-term}  sequence 
\begin{equation}\label{eq:6-ex-seq}
\begin{array}{ccccc}
K_0(\s{I})&\overset{\imath_*}{\longrightarrow}          
 &K_0(\s{A}_{A_{B}}) &      \overset{{\rm ev}_*}{\longrightarrow}    
  &K_0(\s{A}_{\rm bulk})\\
  &&&&\\
\mbox{\footnotesize ind}\Big\uparrow   &&&&\Big\downarrow \mbox{\footnotesize exp}\\
   &&&&\\
  K_1(\s{A}_{\rm bulk})& \underset{{\rm ev}_*}{\longleftarrow}       
 &K_1(\s{A}_{A_{B}}) &   \underset{\imath_*}{\longleftarrow}        
  &K_1(\s{I})\\
\end{array}
\end{equation}
where the canonical maps ${\rm ind}$ and ${\rm exp}$ are  called \emph{index map} and \emph{exponential map} respectively. 
The role of the {six-term}  sequence \eqref{eq:6-ex-seq} is twofold: first of all it allows  to reconstruct the $K$-theory
of $\s{A}_{A_{B}}$ from the knowledge of the $K$-theory of $\s{I}$ and $\s{A}_{\rm bulk}$; secondly it defines how the $K$-theory of $\s{A}_{A_{B}}$ intertwines the $K$-theories of $\s{I}$ and $\s{A}_{\rm bulk}$ through the maps ${\rm ind}$ and ${\rm exp}$. The latter aspect is the core of the topological interpretation of the \emph{bulk-boundary correspondence} in
condensed matter \cite{prodan-schulz-baldes-book}.

\medskip

The $K$-theory of the bulk algebra $\s{A}_{\rm bulk}$ can be easily computed since  $\s{A}_{\rm bulk}$ is an orthogonal direct sum of noncommutative tori (\cf Example \ref{ex_B=0}) and the $K$-theory of the noncommutative torus is well-known (\cf Appendix \ref{app:Ktheo NCT}).
\begin{proposition}\label{prop:K-bulk}
Let $\s{A}_{\rm bulk}$ be the bulk algebra \eqref{eq:ex_seq_021_0_int_mult-02} of an {$N$-interface} magnetic system. Then 
$$
\begin{aligned}
K_0(\s{A}_{\rm bulk})\;&=\;\bigoplus_{j=1}^{N+1}K_0(\s{A}_{b_j})\;\simeq\;\bigoplus_{j=1}^{N+1}\Z^2\;,\\
K_1(\s{A}_{\rm bulk})\;&=\;\bigoplus_{j=1}^{N+1}K_1(\s{A}_{b_j})\;\simeq\;\bigoplus_{j=1}^{N+1}\Z^2\;.\\
\end{aligned}
$$
The isomorphisms above are induced by the $K$-theory of the noncommutative tori
according to \eqref{eq:K-teo_NCT_ind}.
\end{proposition}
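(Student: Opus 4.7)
The proof will reduce everything to two standard ingredients: the additivity of $K$-theory on finite direct sums and the known computation of the $K$-groups of the noncommutative torus reviewed in Appendix \ref{app:Ktheo NCT}. My plan is therefore to split the statement into these two observations and apply them in sequence.

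First I would invoke the standard fact that the $K_0$ and $K_1$ functors commute with finite direct sums of $C^*$-algebras (see, e.g., \cite[Proposition 4.3.4 and Exercise 8.3]{wegge-olsen-93}). Applied to the definition
\begin{equation*}
\s{A}_{\rm bulk}\;=\;\s{A}_{b_1}\;\oplus\;\ldots \;\oplus\;\s{A}_{b_{N+1}}
\end{equation*}
this immediately yields
\begin{equation*}
K_i(\s{A}_{\rm bulk})\;\simeq\;\bigoplus_{j=1}^{N+1}K_i(\s{A}_{b_j})\;,\qquad i=0,1\;,
\end{equation*}
which establishes the first equality in each line of the statement.

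Next I would compute each summand $K_i(\s{A}_{b_j})$. By Example \ref{ex_B=0}, $\s{A}_{b_j}$ is a faithful representation on $\ell^2(\Z^2)$ of the noncommutative torus $\n{A}_{\theta_{b_j}}$ with deformation parameter $\theta_{b_j}:=b_j(2\pi)^{-1}$. Since the defining generators of $\n{A}_{\theta_{b_j}}$ are mapped to the unitaries $\rr{s}_{b_j,1}$ and $\rr{s}_{b_j,2}$, the faithful $\ast$-representation extends to a $C^*$-algebra isomorphism $\n{A}_{\theta_{b_j}}\simeq\s{A}_{b_j}$. The $K$-theory of the noncommutative torus is well-known to be $K_0(\n{A}_{\theta})\simeq K_1(\n{A}_{\theta})\simeq \Z^2$ (see the formula \eqref{eq:K-teo_NCT_ind} recalled in Appendix \ref{app:Ktheo NCT}), and the functoriality of $K$-theory under $\ast$-isomorphisms provides $K_i(\s{A}_{b_j})\simeq \Z^2$ for $i=0,1$ and for every $j=1,\ldots,N+1$.

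Combining the two steps yields the claimed identifications. The argument is essentially formal, and the only substantive point is the invocation of the $K$-theory of the noncommutative torus (which is why the statement explicitly refers to \eqref{eq:K-teo_NCT_ind}); I do not anticipate any genuine obstacle, since the case of irrational $\theta_{b_j}$ can be treated via the Pimsner--Voiculescu exact sequence (whose details are deferred to Appendix \ref{app:PV}) and the rational case, including $b_j\in 2\pi\Z$, reduces via Fourier theory to $\s{C}(\T^2)$, whose $K$-theory is also $\Z^2$ in both degrees by the Bott/Künneth formula. In either case, the conclusion is the same and the proposition follows.
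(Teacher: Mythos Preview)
Your proposal is correct and follows essentially the same approach as the paper: additivity of $K$-theory under finite direct sums of $C^*$-algebras, followed by the known computation of the $K$-groups of the noncommutative torus from Appendix~\ref{app:Ktheo NCT}. The extra discussion of the rational/irrational dichotomy for $\theta_{b_j}$ is not needed here, since the result \eqref{eq:K-teo_NCT_ind} holds uniformly, but it does no harm.
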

\proof
The first part of the claim follows from the additive property of the $K$-theory with respect to the orthogonal direct sum of $C^*$-algebras \cite[Exercise 6.E \& Example 7.1.11(4)]{wegge-olsen-93}. The second part is a consequence of the structure of the $K$-theory of the noncommutative torus (\cf  Appendix \ref{app:Ktheo NCT}).
\qed

\medskip

The $K$-theory of the interface algebra requires a preliminary observation. In fact, if one  assumes that ${\rm ev}$ is \emph{not} trivial one has that  $\s{I}$ is \emph{not} unital (Corollary \ref{cor:triv_inter}) and as a consequence 
$K_j(\s{I})$, $j=0,1$, must be understood as the $K$-groups of the unitalization\footnote{In our case $\s{I}\subset\s{B}(\ell^2(\Z^2))$ is a concrete $C^*$-algebra, therefore its unitalization is given by $
\s{I}^+:=\left\{\rr{a}+\alpha{\bf 1}\;|\;\rr{a}\in\s{I}\;,\;\;\alpha\in\C\right\}
.$} $\s{I}^+$ of $\s{I}$ \cite{wegge-olsen-93,rordam-larsen-laustsen-00}. The main case of interest for this work\footnote{The ansatz \eqref{eq:ansatz_inter}
imposes a quite strong condition on the geometry of the interface. To handle more general geometries like corners, the ansatz \eqref{eq:ansatz_inter} must be modified in the form discussed in \cite{thiang-20}.} is when there exists a unitary equivalence
\begin{equation}\label{eq:ansatz_inter}
\s{I}\;\stackrel{\varpi}{\simeq}\; \s{I}_0\;\otimes\;\s{K}(\s{H}_{\rm red})
\end{equation}
where 
$\s{I}_0$ 
is a unital and abelian $C^*$-algebra and $\s{K}(\s{H}_{\rm red})$ is the $C^*$-algebra of compact operators on the (reduced) separable Hilbert space $\s{H}_{\rm red}$. In such case one has
$$
K_j(\s{I})\;\simeq\; K_j(\s{I}_0)\;,\qquad j=0,1\;,
$$
because of the stability property of  $K$-theory \cite[Corollary 6.2.11 \& Corollary 7.1.9]{wegge-olsen-93}.

 \begin{example}[{six-term}  sequence  for a localized magnetic field]
The  {six-term}  sequence  associated to the Toeplitz extension
\eqref{eq:ex_seq_021_0_int_loc} for a localized magnetic field can be easily computed by observing that in this case the interface algebra has the form $\s{I}\simeq\C\otimes\s{K}$ (\cf Example \ref{ex:int_alg_loc}) and in turn its $K$-theory is given by
$$
K_0(\s{I})\;\simeq\; K_0(\C)\;\simeq\;\Z\;,\qquad
K_1(\s{I})\;\simeq\; K_1(\C)\;=\;0\;.
$$
Moreover, with the same argument used in the proof of \cite[Theorem 12]{denittis-schulz-baldes-16} one gets
$$
\begin{aligned}
K_0(\s{A}_\Lambda)\;&=\; K_0(\s{A}_0)\;\oplus\;\Z[\rr{p}_{\{0\}}]\;\simeq\;\Z^3,\\
K_1(\s{A}_\Lambda)\;&=\; K_1(\s{A}_0)\;\simeq\;\Z^2\;
\end{aligned}
$$  
 where $\s{A}_0$ is the magnetic algebra for a zero magnetic field (\cf Example \ref{ex_B=0}) and $\rr{p}_{\{0\}}$ is the projection on the  site $0\in\Z^2$
 (\cf Example \ref{ex:int_alg_loc}).\hfill $\blacktriangleleft$
 \end{example}

\medskip

The case of  straight-line interface (Definition \ref{def:typ_inter}) will be relevant in Section \ref{sect:iwatsuka}. Its $K$-theory is described below.
\begin{proposition}[$K$-theory for the straight-line interface]
\label{prop:kteor_intr1D}
In the case of a straight-line interface $\s{I}\simeq\s{C}(\n{S}^1)\otimes\s{K}(\ell^2(\Z))$ the $K$-theory is given by
$$
K_0(\s{I})\;\simeq\;\Z\;,\qquad K_1(\s{I})\;\simeq\;\Z\;.
$$
\end{proposition}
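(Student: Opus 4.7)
The plan is to reduce the computation to the well-known $K$-theory of the circle algebra, via the stability of $K$-theory under tensoring with compact operators.

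First, I would invoke the stability property: for every separable Hilbert space $\s{H}_{\rm red}$ and every $C^*$-algebra $\s{B}$, the inclusion $\s{B} \hookrightarrow \s{B}\otimes \s{K}(\s{H}_{\rm red})$ (say via rank-one projection) induces an isomorphism on both $K_0$ and $K_1$; this is the content of \cite[Corollary 6.2.11 \& Corollary 7.1.9]{wegge-olsen-93} already cited above in the excerpt. Applied to $\s{B}=\s{C}(\n{S}^1)$ and $\s{H}_{\rm red}=\ell^2(\Z)$, this gives
\[
K_j(\s{I}) \;\simeq\; K_j\big(\s{C}(\n{S}^1)\otimes \s{K}(\ell^2(\Z))\big) \;\simeq\; K_j(\s{C}(\n{S}^1)), \qquad j=0,1.
\]
Note that even though $\s{I}$ is non-unital (because $\s{K}(\ell^2(\Z))$ is), the stability isomorphism is formulated for non-unital algebras in terms of the unitalization, so the reduction is legitimate.

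Second, I would compute $K_*(\s{C}(\n{S}^1))$. This is a standard result that can be obtained in several equivalent ways; the most economical, and the one most consistent with the style of the paper, is to use the Pimsner-Voiculescu exact sequence (\cf Appendix \ref{app:PV}) for the trivial crossed product $\s{C}(\n{S}^1)\simeq \C\rtimes_{\rm id}\Z$. The six-term PV sequence reduces in this case to
\[
0 \longrightarrow K_1(\s{C}(\n{S}^1)) \longrightarrow K_0(\C) \stackrel{0}{\longrightarrow} K_0(\C)\longrightarrow K_0(\s{C}(\n{S}^1)) \longrightarrow K_1(\C)\longrightarrow 0,
\]
because the automorphism inducing the connecting maps is the identity (so $\mathrm{id}_*-\mathrm{id}_*=0$). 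Since $K_0(\C)\simeq\Z$ and $K_1(\C)=0$, one reads off $K_0(\s{C}(\n{S}^1))\simeq\Z$ (generated by the class of the unit) and $K_1(\s{C}(\n{S}^1))\simeq\Z$ (generated by the class of the identity unitary $z\mapsto z$).

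Combining the two steps gives $K_0(\s{I})\simeq\Z$ and $K_1(\s{I})\simeq\Z$. There is no genuine obstacle here; the only point deserving care is the non-unitality of $\s{I}$, which is handled automatically by working with $K_j$ of the unitalization in the stability isomorphism. If one prefers to avoid the PV machinery, the same conclusion follows from the identification $K^*(\n{S}^1)\simeq K_*(\s{C}(\n{S}^1))$ with topological $K$-theory of the circle.
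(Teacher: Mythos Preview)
Your proof is correct and follows essentially the same route as the paper: reduce to $K_j(\s{C}(\n{S}^1))$ via the stability isomorphism for tensoring with compacts, then use the known $K$-theory of the circle. The paper simply cites the result $K_0(\s{C}(\n{S}^1))\simeq\Z[{\bf 1}]$, $K_1(\s{C}(\n{S}^1))\simeq\Z[u]$ from \cite[Section 6.5]{wegge-olsen-93} rather than deriving it via the Pimsner--Voiculescu sequence, but your additional derivation is perfectly fine and consistent with the tools used elsewhere in the paper.
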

\proof
The result follows from the stability property of $K$-theory 
along with 
$K_0(\s{C}(\n{S}^1))\simeq\Z[{\bf 1}]$ and $K_1(\s{C}(\n{S}^1))\simeq\Z[u]$ where $u(k)=\expo{\ii k}$ \cite[Section 6.5]{wegge-olsen-93}. 
\qed

\subsection{Bulk and interface currents}
\label{sec:Bulk_interface_currents}
Let $\s{A}_{A_{B}}$ be a magnetic algebra endowed with the trace $\bb{T}_{\n{P}}$ associated to an ergodic measure $\n{P}\in{\rm Erg}( \Omega_B)$ as discussed in Section \ref{sect:integr}.
Given a  differentiable projection $\rr{p}\in \s{C}^1(\s{A}_{A_B})$, the (generalized) \emph{transverse Hall conductance} associated to $\rr{p}$ is defined by
\begin{equation}\label{eq:kub_01}
\sigma_{B,\n{P}}(\rr{p})\;:=\; \frac{e^2}{h}\;{\rm Ch}_{B,\n{P}}(\rr{p})
\end{equation}
where $e$ is the electron charge, $h=2\pi\hslash$ is the Planck's constant and the dimensionless part, known as  \emph{Chern number},  is given by
\begin{equation}\label{eq:chern_01}
{\rm Ch}_{B,\n{P}}(\rr{p})\;:=\;\ii 2\pi\;\bb{T}_{\n{P}}\big(\rr{p}\big[\nabla_1(\rr{p}),\nabla_2(\rr{p})\big]\big)\;.
\end{equation}
The projection $\rr{p}$ is usually obtained as the spectral projection into a gap of a self-adjoint element (Hamiltonian) of $\s{A}_{A_{B}}$ and represents the ground state of the system 
as described by the Fermi-Dirac distribution in the limit of the temperature $T=0$ and chemical potential (Fermi energy) sited into the gap. 
The quantity \eqref{eq:kub_01} enters in the (microscopic) Ohm's law
\begin{equation}\label{eq:kub_02}
J_\bot\;=\;\sigma_{B,\n{P}}(\rr{p})\;E
\end{equation}
which describes the transverse current density $J_\bot$ 
generated in the material as a response to the external electric perturbation $E$. The expression \eqref{eq:kub_02} is usually known as \emph{Kubo's formula} and is obtained in the linear response approximation. There are countless derivations of the Kubo's formula \eqref{eq:kub_02} in the literature. For our aims we will refer to \cite{bellissard-elst-schulz-baldes-94,schulz-baldes-bellissard-98} for the case of a constant magnetic field 
and to \cite{denittis-lein-book} for more general cases.

\medskip

In the case of a constant magnetic field $B$ of strength $b$ there is a unique ergodic measure (\cf Example \ref{ex:cost_hull})
and the associated trace, simply denoted with $\bb{T}$, is given by the trace per unit volume as proved in Proposition \ref{prop:tr_u_vol}.
Therefore, it is appropriate to rewrite equations \eqref{eq:kub_01} and 
\eqref{eq:chern_01} with the lighter notation 
\begin{equation}\label{cond_cost_eq}
\sigma_{b}(\rr{p})\;=\;\frac{e^2}{h}\;{\rm Ch}_{b}(\rr{p})
\end{equation}
In particular, the map ${\rm Ch}_{b}$ can be obtained from the trilinear map
$\xi_b:\s{C}^1(\s{A}_{b})^{\times 3} \to\C$, defined by
\begin{equation}\label{eq:chern_02}
\xi_b(\rr{a}_0,\rr{a}_1,\rr{a}_2)\;:=\;\ii 2\pi\;\bb{T}\big(\rr{a}_0\big(\nabla_1(\rr{a}_1)\nabla_2(\rr{a}_2)-\nabla_2(\rr{a}_1)\nabla_1(\rr{a}_2)\big)\big)\;,
\end{equation}
according to ${\rm Ch}_{b}(\rr{p})=\xi_b(\rr{p},\rr{p},\rr{p})$.
Formula \eqref{eq:chern_02} is crucial in the study of the topology of the algebra $\s{A}_{b}$ (which coincides with the noncommutative torus). In fact, as discussed in \cite[Chapter 3]{connes-94}, \cite[Chapter 12]{gracia-varilly-figueroa-01} or \cite[Chapter 5]{prodan-schulz-baldes-book} among others,
it turns out that the map $\xi_b$ is a cyclic 2-cocycle of the $C^*$-algebra $\s{A}_{b}$ and therefore defines a class $[\xi_b]\in HC^2(\s{A}_{b})$ in the cyclic cohomology of $\s{A}_{b}$. 
The class $[\xi_b]$ plays a special role in the canonical bilinear pairing 
$$
\prec\;,\;\succ\;:\;K_0(\s{A}_{b})\;\times\; HC^2(\s{A}_{b})\;\longrightarrow\;\C
$$
between (even) $K$-theory and (even) cyclic cohomology, defined by
$$
\big([\rr{p}],[\varphi]\big)\;\longmapsto\;\prec[\rr{p}],[\varphi]\succ\;:=\;({\rm tr}\sharp\varphi)(\rr{p},\rr{p},\rr{p})
$$
where the projection $\rr{p}\in \s{C}^1(\s{A}_{b})\otimes{\rm Mat}_N(\C)$ is a representative of the class $[\rr{p}]$, $N\in\N$ is a suitable integer\footnote{For a \emph{non-trivial} magnetic field $b(2\pi)^{-1}\in\R\setminus\Z$ it is always possible to fix $N=1$ since the $K$-theory is entirely realized inside the algebra $\s{A}_{b}$ (\cf Appendix \ref{app:Ktheo NCT}).} and ${\rm tr}$ denotes the trace on ${\rm Mat}_N(\C)$ \cite[Theorem 5.1.4]{prodan-schulz-baldes-book}. In the case  $N=1$,
a comparison with equations \eqref{eq:chern_01} and \eqref{eq:chern_02} shows that 
\begin{equation}\label{eq:chern_03}
{\rm Ch}_{b}(\rr{p})\;=\;\prec[\rr{p}],[\xi_b]\succ\;\in\Z\;\;\qquad 
\end{equation}
where the integrality of the pairing $[\rr{p}]\mapsto\prec[\rr{p}],[\xi_b]\succ$ 
is the celebrated \emph{index theorem} for the even $K$-theory \cite[Section 3.3, Corollary 16]{connes-94}.
 Equation \eqref{eq:chern_03} along with \eqref{eq:kub_01} provides the quantization (in units of $e^2h^{-1}$) of the transverse Hall conductance for a constant magnetic field \cite{thouless-kohmoto-nightingale-nijs-82,bellissard-elst-schulz-baldes-94}.

\medskip

The conductance for the bulk algebra  \ref{eq:ex_seq_021_0_int_mult-02} can be defined (by linearity)
from the case of a constant magnetic field.
\begin{definition}[Bulk transverse conductance]\label{def:cond_bulk}
Let $\s{A}_{\rm bulk}$ be the bulk algebra defined in equation \eqref{eq:ex_seq_021_0_int_mult-02} and $\rr{p}:=(\rr{p}_1,\ldots,\rr{p}_{N+1})$ a projection in $\s{C}^1(\s{A}_{\rm bulk})$. The \emph{bulk transverse conductance} for the projection $\rr{p}$ is given by the collection
$$
\sigma_{\rm bulk}(\rr{p})\;:=\; \big\{\sigma_{b_1}(\rr{p}_1),\ldots,\sigma_{b_{N+1}}(\rr{p}_{N+1})\big\}
$$
where every $\sigma_{b_j}(\rr{p}_j)$ is defined by \eqref{cond_cost_eq}.
\end{definition}

\medskip

Let us now consider the current   associated with the interface algebra $\s{I}$. We will focus on the case described by the ansatz \eqref{eq:ansatz_inter} and we will assume that the 
unital and abelian $C^*$-algebra $\s{I}_0$ is endowed with a faithful (normalized) trace $\tau_0$ and a suitable (unbounded) derivation
$\delta_0$
which meet the compatibility condition $\tau_0\circ \delta_0=0$. In this way one can define a faithful lower-semicontinuous trace $\bb{T}_{\s{I}}$ on $\s{I}$ through the prescription
$$
\bb{T}_{\s{I}}(\rr{a})\;:=\;\tau_0\otimes{\rm Tr}_{\s{H}_{\rm red}}\big(\varpi(\rr{a})\big)\;,\qquad \rr{a}\in\bb{D}_{\s{I}}
$$
where the ideal $\bb{D}_{\s{I}}\subset \s{I}$ is defined by
$\bb{D}_{\s{I}}:=\varpi^{-1}(\s{I}_0\otimes \s{L}^1(\s{H}_{\rm red}))$ and $\s{L}^1(\s{H}_{\rm red})$ is the ideal of trace class operators on  $\s{H}_{\rm red})$. 
Similarly, one can endow $\s{I}$ with the derivation $\nabla_{\s{I}}$ given by
$$
\nabla_{\s{I}}(\rr{a})\;:=\;\delta_0\otimes{\rm Id}_{\s{K}}
\big(\varpi(\rr{a})\big)\;,\qquad \rr{a}\in\s{C}^1_{\s{I}}
$$
where $\s{C}^k_{\s{I}}:=\varpi^{-1}(\s{C}^k(\s{I}_0)\otimes \s{K}(\s{H}_{\rm red}))$ for every $k\in\N$. Therefore, such a derivation  can be extended to the unitalization $\s{I}^+$ by the prescription $\nabla_{\s{I}}({\bf 1})=0$. With these structures one can define the map
\begin{equation}\label{eq:wind_numb}
W_{\s{I}}(\rr{u})\;:=\;\ii\bb{T}_{\s{I}}\big((\rr{u}^*-{\bf 1})\;\nabla_{\s{I}}(\rr{u}-{\bf 1})\big)\;=\;\ii\bb{T}_{\s{I}}\left(\rr{u}^*\nabla_{\s{I}}(\rr{u})\right)
\end{equation}
for every unitary operator $\rr{u}\in\s{I}^+$  such that $\rr{u}-{\bf 1}\in \s{C}^1_{\s{I}}\cap \bb{D}_{\s{I}}$. The map $W_{\s{I}}$ is known as the \emph{(non-commutative) winding number of} $\rr{u}$.
\begin{example}[Triviality of the winding number in the localized case]
According to Example \eqref{ex:int_alg_loc} the structure of the interface algebra in the case of a localized magnetic field is given by $\s{I}\simeq \C\otimes \s{K}(\ell^2(\Z^2))$ and therefore it satisfies the ansatz \eqref{eq:ansatz_inter}.
However, in view of the simple structure of $\s{I}_0=\C$ one has that the only faithful (normalized) trace $\tau_0$ is the identity $\tau_0(a)=a$ and the only derivation $\delta_0$ is the null-map   $\delta_0(a)=0$ for all $a\in\C$. As a consequence the 
associated trace on $\s{I}$ coincides with the canonical trace of the Hilbert space $\ell^2(\Z^2)$,  while there is no non-trivial derivation compatible with the ansatz \eqref{eq:ansatz_inter}. In view of that one has that $W_{\s{I}}=0$ identically in the case of a localized magnetic field.
\hfill $\blacktriangleleft$\end{example}
\begin{definition}[Interface conductance]\label{def:cond_inter}
Let $\s{I}$ be an interface algebra of type \ref{eq:ansatz_inter} endowed with the derivation $\nabla_{\s{I}}$ and the trace  $\bb{T}_{\s{I}}$. Let $\rr{u}\in\s{I}^+$ be a unitary operator such that
$\rr{u}-{\bf 1}\in \s{C}^1_{\s{I}}\cap \bb{D}_{\s{I}}$.
The \emph{interface conductance} associated to the \emph{configuration} $\rr{u}$ is defined by
\begin{equation}\label{eq:int_coduct}
\sigma_{\s{I}}(\rr{u})\;:=\; \frac{ e^2}{h}\;W_{\s{I}}(\rr{u})\;.
\end{equation}
\end{definition}

 \medskip

The definition  above is justified by the fact that $\sigma_{\s{I}}$ provides the proportionality coefficient for the current that flows along the interface (\cf \cite{schulz-baldes-kellendonk-richter-00} or \cite[Section 7.1]{prodan-schulz-baldes-book}). 
To clarify Definition \ref{eq:int_coduct} we need   some intermediate concept.

\medskip

Let us call \emph{magnetic Hamiltonians}  the
self-adjoint elements of $\s{A}_{A_B}$. Let $\hat{\rr{h}}\in \s{A}_{A_B}$ be a magnetic Hamiltonian and ${\rr{h}}:={\rm ev}(\hat{\rr{h}})\in \s{A}_{\rm bulk}$ its image in the bulk algebra.
By construction the bulk Hamiltonian $\rr{h}=(\rr{h}_1,\ldots,\rr{h}_{N+1})$ is made by a $N+1$-upla of suitable self-adjoint elements of the constant magnetic field algebras $\s{A}_{b_j}$  and its spectrum is given by $\sigma(\rr{h})=\bigcup_{j=1}^{N+1}\sigma(\rr{h}_j)$.
\begin{definition}[Non-trivial bulk gap]\label{def_bulk_gap}
The magnetic Hamiltonian $\hat{\rr{h}}\in\s{A}_{A_B}$ has a  \emph{non-trivial bulk gap} if there is a compact set $\Delta\in\R$ such that 
 $$
 \min \sigma({\rr{h}})\; <\;\min \Delta\;<\;\max \Delta
 \;<\;\max\sigma({\rr{h}})\;
 $$
 and $\Delta\cap\sigma({\rr{h}})=\emptyset$.
\end{definition}

\medskip

According to the above definition  $\Delta$ lies inside a non-trivial spectral gap of the bulk Hamiltonian ${\rr{h}}$ and for every chemical potential $\mu\in\Delta$ the \emph{Fermi projection}
$$
\rr{p}_\mu\;=\;(\rr{p}_{\mu,1},\ldots,\rr{p}_{\mu,N+1})\;\in\;\s{A}_{\rm bulk}\;,\qquad\rr{p}_{\mu,j}\;:=\;\chi_{(-\infty,\mu)}(\rr{h}_j)\;\in\;  \s{A}_{b_j}
$$
is an element of the bulk algebra. If the {magnetic Hamiltonian} is smooth, \ie $\hat{\rr{h}}\in \s{A}_{\rm I}^\infty$, then  also ${\rr{h}}\in\s{A}_{\rm bulk}^\infty$ (the evaluation map preserves the regularity), and in turn 
$\rr{p}_\mu\in\s{A}_{\rm bulk}^\infty$ since $\s{A}_{\rm bulk}^\infty$ is closed under holomorphic functional calculus. 
Let $[\rr{p}_\mu]=[(\rr{p}_{\mu,1},\ldots,\rr{p}_{\mu,N+1})]\in K_0(\s{A}_{\rm bulk})$ be the class of the 
Fermi projection in the $K_0$-group of $\s{A}_{\rm bulk}$. As a first step let us compute the image of $[\rr{p}_\mu]$ inside $K_1(\s{I})$ under the exponential map.
\begin{proposition}\label{prop:exp:map_interf}
Assume that the magnetic Hamiltonian $\hat{\rr{h}}\in\s{A}_{A_B}$ has a  \emph{non-trivial bulk gap} detected by $\Delta$.
Let $g:\R\to[0,1]$ be a non-decreasing (smooth)  function such that $g=0$ below $\Delta$ and $g=1$ above $\Delta$ and consider the unitary operator 
\begin{equation}\label{eq:pseudo_unit_XX}
\rr{u}_\Delta\;:=\;\expo{\ii2\pi g(\hat{\rr{h}})}\;.
\end{equation}
 Then $\rr{u}_\Delta\in\s{I}^+$ and
$$
{\rm exp}([\rr{p}_\mu])\;=\;-\left[\rr{u}_\Delta\right]\;\in\;K_1(\s{I}) \;.
$$
\end{proposition}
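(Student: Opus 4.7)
The plan is to verify both claims using the definition of the exponential map for the Toeplitz extension \eqref{eq:ex_seq_021_0_int_mult}, together with the continuous functional calculus applied inside the unital $C^*$-algebra $\s{A}_{A_B}$.

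First I would check that $\rr{u}_\Delta \in \s{I}^+$. Since ${\rm ev}$ is a $C^*$-homomorphism, it commutes with continuous functional calculus, and therefore ${\rm ev}(\rr{u}_\Delta) = \expo{\ii 2\pi g(\rr{h})}$. The non-trivial bulk gap assumption gives $\sigma(\rr{h}) \cap \Delta = \emptyset$, so the function $g$ takes only the values $0$ and $1$ on $\sigma(\rr{h})$; by spectral theory, $g(\rr{h})$ is then a projection in $\s{A}_{\rm bulk}$, and hence $\expo{\ii 2\pi g(\rr{h})} = {\bf 1}$. Consequently $\rr{u}_\Delta - {\bf 1} \in \ker({\rm ev}) = \s{I}$ by Theorem \ref{theo:SES}, which is precisely the statement $\rr{u}_\Delta \in \s{I}^+$.

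For the $K$-theoretic identity I would invoke the standard recipe for the exponential map associated to \eqref{eq:ex_seq_021_0_int_mult}: a class $[\rr{p}] \in K_0(\s{A}_{\rm bulk})$ is mapped to $[\expo{\ii 2\pi \rr{x}}] \in K_1(\s{I}^+)$, where $\rr{x} \in \s{A}_{A_B}$ is any self-adjoint lift of $\rr{p}$. A convenient lift of the Fermi projection $\rr{p}_\mu = \chi_{(-\infty,\mu)}(\rr{h})$ is given by $\rr{x} := (1-g)(\hat{\rr{h}}) = {\bf 1} - g(\hat{\rr{h}}) \in \s{A}_{A_B}$: indeed $1 - g$ equals $1$ below $\Delta$ and $0$ above it, so ${\rm ev}(\rr{x}) = (1-g)(\rr{h}) = \chi_{(-\infty,\mu)}(\rr{h}) = \rr{p}_\mu$ for every $\mu \in \Delta$. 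A direct manipulation then yields
\[
\expo{\ii 2\pi \rr{x}} \;=\; \expo{\ii 2\pi}\, \expo{-\ii 2\pi g(\hat{\rr{h}})} \;=\; \rr{u}_\Delta^{*}\;,
\]
and therefore ${\rm exp}([\rr{p}_\mu]) = [\rr{u}_\Delta^{*}] = -[\rr{u}_\Delta]$ in $K_1(\s{I})$, using that taking adjoints inverts classes in $K_1$.

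No serious obstacle arises, as the argument is a textbook application of the exponential map once the correct lift has been identified. The only substantive point to take care of is the sign convention in the definition of ${\rm exp}$, which varies in the literature between $\expo{\pm \ii 2\pi \rr{x}}$; the convention must be fixed so that the minus sign in the statement is reproduced. The fact that $g(\hat{\rr{h}}),\; (1-g)(\hat{\rr{h}}) \in \s{A}_{A_B}$ is automatic from $\hat{\rr{h}}$ being self-adjoint in the unital $C^*$-algebra $\s{A}_{A_B}$.
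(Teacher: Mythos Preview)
Your proof is correct and follows essentially the same route as the paper: both use that ${\rm ev}$ commutes with functional calculus to see $g(\rr{h})={\bf 1}-\rr{p}_\mu$ is a projection (hence $\rr{u}_\Delta-{\bf 1}\in\ker({\rm ev})=\s{I}$), then take ${\bf 1}-g(\hat{\rr{h}})$ as the self-adjoint lift of $\rr{p}_\mu$ and read off the exponential map. Your remark about the sign convention is apt; the paper uses the Wegge-Olsen convention $\exp([\rr{p}])=[\expo{-\ii 2\pi \rr{x}}]$ for a lift $\rr{x}$, which leads to the same conclusion.
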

\proof
The proof is similar to that of \cite[Proposition 4.3.1]{prodan-schulz-baldes-book}.
Since the evaluation map is a homomorphism of $C^*$-algebras it commutes with the functional calculus and consequently 
$$
{\rm ev}(g(\hat{\rr{h}}))\;=\;g(\rr{h})\;=\;(g(\rr{h}_1),\ldots,g(\rr{h}_{N+1}))\;=\;{\bf 1}\;-\;\rr{p}_\mu\;
$$
due to the fact that $g$ is equal to 0 below the bulk gap and to 1 above the bulk gap and therefore $g(\rr{h}_j)={\bf 1}-\rr{p}_{\mu,j}$ .
As a consequence 
$$
{\rm ev}\left({\bf 1}- \expo{\ii 2\pi g(\hat{\rr{h}})}\right)\;=\;{\bf 1}-\expo{\ii 2\pi g({\rr{h}})}\;=\;0
$$
showing that $\rr{u}_\Delta$ is a unitary element in $\s{I}^+$.
Since ${\bf 1}- g(\hat{\rr{h}})$ is a self-adjoint lift of $\rr{p}_\mu$ one can compute the exponential map as in 
\cite[Definition 9.3.1 \& Exercise 9.E]{wegge-olsen-93} obtaining 
in this way
$$
{\rm exp}([\rr{p}_\mu])\;=\;\left[\expo{-\ii 2\pi ({\bf 1}-g(\hat{\rr{h}}))}\right]\;=\; \left[\expo{-\ii 2\pi g(\hat{\rr{h}})}\right]\;=\;-\left[\rr{u}_\Delta\right]\;
$$
where the additive notation\footnote{\label{note_ad_not}In terms of the additive notation of $K_1(\s{I})$, the trivial element is $[{\bf 1}]=0$   and $-[\rr{u}]=[\rr{u}^*]$ denotes the inverse of $[\rr{u}]$.} for $K_1(\s{I})$ has been used.
\qed

\medskip

In the case $\hat{\rr{h}}\in\s{A}_{A_B}^\infty$ it follows from the construction that $\rr{u}_\Delta\in\s{I}^+\cap\s{A}_{A_B}^\infty$ acquires the same regularity.
It is worth noting that the element ${\bf 1}- \rr{u}_\Delta$ can be constructed entirely from the spectral subspace of $\hat{\rr{h}}$ corresponding to the bulk insulating gap $\Delta$. Indeed, the support of the function $\expo{\ii 2\pi g}-1$ is 
contained inside  the region $\Delta$ which lies in  the insulating gap.

\begin{remark}[Gap closing as a topological obstraction]
The condition $[\rr{u}_\Delta]\neq0 $ (\cf Note \ref{note_ad_not})  measures the obstruction to lift the  Fermi projection $\rr{p}_\mu\in\s{A}_{\rm bulk}$ to a true projection in $\s{A}_{A_B}\otimes \rm{Mat}_N(\C)$ (for some $N$ large enough). From the construction emerges that this  obstruction  detects the presence of spectrum of $\hat{\rr{h}}$ inside $\Delta$ which is generated by the existence of the magnetic interface. Since the election of $\Delta$ inside the bulk gap is totally arbitrary, and the Fermi projection does not depend on the specific $\mu$ inside the bulk gap, one gets that 
for any given $\Delta$ the related element ${\bf 1}-g(\hat{\rr{h}})$ is a self-adjoint lift of the Fermi projection. This implies immediately that the condition $[\rr{u}_\Delta]\neq 0$ guarantees the complete closure of the bulk gap due to the presence of the magnetic interface. \hfill $\blacktriangleleft$
\end{remark}

\medskip

Let $g$ as in the claim of Proposition \ref{prop:exp:map_interf}.
The derivative $g'$ is non-negative, supported in $\Delta$ and normalized in the sense that $\|g'\|_{L^1}=1$. By construction the element $g'(\hat{\rr{h}})$  satisfies the condition ${\rm ev}(g'(\hat{\rr{h}}))=g'({\rm ev (\hat{\rr{h}})})=0$ and so 
$g'(\hat{\rr{h}})\in\s{I}$ is an element of the interface algebra. Moreover $g'(\hat{\rr{h}})$ can be regarded as a density matrix which describes a state of the system with energy distributed in the region $\Delta$. If one interprets the operator $\hslash^{-1}\nabla_{\s{I}}(\hat{\rr{h}})$ as the \emph{velocity} along the interface one deduces that
\begin{equation}\label{ewq:int_condI}
J_{\s{I}}(\Delta)\;:=\;-\frac{e}{\hslash}\bb{T}_{\s{I}}\left(g'(\hat{\rr{h}})\;\nabla_{\s{I}}(\hat{\rr{h}})\right)
\end{equation}
is the \emph{current density} along the interface carried by the \virg{extended states} in $\Delta$ and, as a consequence, $\sigma_{\s{I}}=e J_{\s{I}}$ provides the associated  conductance
(we are assuming that $e>0$ is the magnitude of the electron charge). 
The connection between the latter formula and Definition \ref{def:cond_inter}
is provided by the following result originally proved in   \cite{schulz-baldes-kellendonk-richter-00}.
\begin{proposition}\label{prop:int_condI}
It holds true that
$$
\bb{T}_{\s{I}}\left(g'(\hat{\rr{h}})\;\nabla_{\s{I}}(\hat{\rr{h}})\right)\;=\;-\frac{1}{2\pi}\; W_{\s{I}}\big(\rr{u}_\Delta\big)\;.
$$
\end{proposition}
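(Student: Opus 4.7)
The plan is to reduce the identity to two standard facts: a Duhamel formula for the derivation $\nabla_{\s{I}}$ applied to the exponential $\rr{u}_\Delta=\expo{\ii 2\pi g(\hat{\rr{h}})}$, and the trace--derivation compatibility $\bb{T}_{\s{I}}(\nabla_{\s{I}}(f(\hat{\rr{h}})))=\bb{T}_{\s{I}}(f'(\hat{\rr{h}})\,\nabla_{\s{I}}(\hat{\rr{h}}))$ for smooth $f$ with support inside the bulk gap $\Delta$. Together these will convert the winding number $W_{\s{I}}(\rr{u}_\Delta)=\ii\,\bb{T}_{\s{I}}(\rr{u}_\Delta^{\ast}\nabla_{\s{I}}(\rr{u}_\Delta))$ into $-2\pi\,\bb{T}_{\s{I}}(g'(\hat{\rr{h}})\,\nabla_{\s{I}}(\hat{\rr{h}}))$, which is the claim.

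Step one: since $g'$ is supported in $\Delta$ and the bulk gap condition gives ${\rm ev}(g'(\hat{\rr{h}}))=g'({\rm ev}(\hat{\rr{h}}))=0$, the element $g'(\hat{\rr{h}})$ lies in the interface ideal $\s{I}$. Moreover $\rr{u}_\Delta-{\bf 1}$ is a smooth function of $\hat{\rr{h}}$ supported in $\Delta$ and so lies in $\s{C}^{1}_{\s{I}}\cap\bb{D}_{\s{I}}$, guaranteeing that the winding number formula \eqref{eq:wind_numb} is defined. Step two: apply the Duhamel expansion
\begin{equation*}
\nabla_{\s{I}}(\rr{u}_\Delta)\;=\;\ii 2\pi\int_{0}^{1}\expo{\ii 2\pi s\, g(\hat{\rr{h}})}\;\nabla_{\s{I}}(g(\hat{\rr{h}}))\;\expo{\ii 2\pi (1-s)\, g(\hat{\rr{h}})}\,\mathrm{d}s\;,
\end{equation*}
multiply on the left by $\rr{u}_\Delta^{\ast}=\expo{-\ii 2\pi g(\hat{\rr{h}})}$, take $\bb{T}_{\s{I}}$, and use cyclicity of the trace to collapse the two exponential factors on either side of $\nabla_{\s{I}}(g(\hat{\rr{h}}))$; the $s$-integral becomes trivial and one obtains
\begin{equation*}
\bb{T}_{\s{I}}\big(\rr{u}_\Delta^{\ast}\,\nabla_{\s{I}}(\rr{u}_\Delta)\big)\;=\;\ii 2\pi\;\bb{T}_{\s{I}}\big(\nabla_{\s{I}}(g(\hat{\rr{h}}))\big)\;.
\end{equation*}

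Step three: prove $\bb{T}_{\s{I}}(\nabla_{\s{I}}(g(\hat{\rr{h}})))=\bb{T}_{\s{I}}(g'(\hat{\rr{h}})\,\nabla_{\s{I}}(\hat{\rr{h}}))$. For polynomials $p(\lambda)=\lambda^{n}$ this is immediate from the Leibniz rule \eqref{eq:leib} and the cyclicity of $\bb{T}_{\s{I}}$, which together give $\bb{T}_{\s{I}}(\nabla_{\s{I}}(\hat{\rr{h}}^{n}))=n\,\bb{T}_{\s{I}}(\hat{\rr{h}}^{n-1}\nabla_{\s{I}}(\hat{\rr{h}}))$. For a smooth $g$ supported in the compact set $\Delta$ one passes from polynomials to $g$ either by Stone--Weierstrass applied to $C(\sigma(\hat{\rr{h}})\cap\overline{\Delta})$, or more directly by writing $g$ via the Helffer--Sj\"ostrand functional calculus and invoking the Duhamel identity $\nabla_{\s{I}}((z-\hat{\rr{h}})^{-1})=(z-\hat{\rr{h}})^{-1}\nabla_{\s{I}}(\hat{\rr{h}})(z-\hat{\rr{h}})^{-1}$ together with trace cyclicity. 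Finally, combining steps two and three yields
\begin{equation*}
W_{\s{I}}(\rr{u}_\Delta)\;=\;\ii\,\bb{T}_{\s{I}}\big(\rr{u}_\Delta^{\ast}\,\nabla_{\s{I}}(\rr{u}_\Delta)\big)\;=\;-2\pi\,\bb{T}_{\s{I}}\big(g'(\hat{\rr{h}})\,\nabla_{\s{I}}(\hat{\rr{h}})\big)\;,
\end{equation*}
which is the desired identity.

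The main technical obstacle will be the third step, namely justifying trace cyclicity and the passage from polynomials to general smooth $g$ inside the \emph{semifinite} trace $\bb{T}_{\s{I}}$, which is only defined on the ideal $\bb{D}_{\s{I}}$. One has to verify that each of the intermediate expressions (the resolvents $(z-\hat{\rr{h}})^{-1}\nabla_{\s{I}}(\hat{\rr{h}})(z-\hat{\rr{h}})^{-1}$ multiplied against a cutoff supported in $\Delta$, and the Duhamel integrand above) is integrable with respect to $\bb{T}_{\s{I}}$; the crucial input here is again that the bulk gap hypothesis localizes everything on the spectral region $\Delta$ where the evaluation map kills the functional calculus, so the resulting operators belong to $\bb{D}_{\s{I}}$, and both cyclicity and Fubini for the $s$- and $z$-integrations can be applied without obstruction.
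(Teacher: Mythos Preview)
Your overall strategy (Duhamel plus a ``chain rule under the trace'') is the right one and is essentially what the reference cited in the paper does. However, as written, Steps 2 and 3 route the argument through the quantity $\bb{T}_{\s{I}}\big(\nabla_{\s{I}}(g(\hat{\rr{h}}))\big)$, and this is where the proposal breaks down. The function $g$ is \emph{not} supported in $\Delta$ (it equals $1$ above $\Delta$), so $g(\hat{\rr{h}})\notin\s{I}$; in fact ${\rm ev}\big(\nabla_{\s{I}}(g(\hat{\rr{h}}))\big)=-\nabla_2(\rr{p}_\mu)\neq 0$ in general, hence $\nabla_{\s{I}}(g(\hat{\rr{h}}))\notin\s{I}\supset\bb{D}_{\s{I}}$. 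Consequently the right-hand side of your Step~2 is not defined, the cyclicity you invoke there cannot be applied to the integrand (which itself is not in $\s{I}$ for fixed $s$), and the polynomial case of Step~3 never gets started because $\hat{\rr{h}}^{n}$ and $\nabla_{\s{I}}(\hat{\rr{h}}^n)$ are not in the trace domain. Worse: if one \emph{could} make sense of $\bb{T}_{\s{I}}\big(\nabla_{\s{I}}(g(\hat{\rr{h}}))\big)$ via some extension on which $\bb{T}_{\s{I}}\circ\nabla_{\s{I}}=0$ still held, your Steps~2 and~3 would force $W_{\s{I}}(\rr{u}_\Delta)=0$, contradicting the very nontriviality the proposition is meant to capture.

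The fix is to never separate the trace-class factor from the derivation. Work instead with $\phi(\lambda):=\expo{\ii 2\pi g(\lambda)}-1$, which \emph{is} supported in $\Delta$, so that $\phi(\hat{\rr{h}})\in\s{I}$. Then $W_{\s{I}}(\rr{u}_\Delta)=\ii\,\bb{T}_{\s{I}}\big(\overline{\phi}(\hat{\rr{h}})\,\nabla_{\s{I}}(\phi(\hat{\rr{h}}))\big)$, and the chain rule you need is the \emph{two-factor} version
\[
\bb{T}_{\s{I}}\big(k(\hat{\rr{h}})\,\nabla_{\s{I}}(F(\hat{\rr{h}}))\big)\;=\;\bb{T}_{\s{I}}\big(k(\hat{\rr{h}})\,F'(\hat{\rr{h}})\,\nabla_{\s{I}}(\hat{\rr{h}})\big)
\]
for $k$ supported in $\Delta$ (your Helffer--Sj\"ostrand argument proves exactly this, because the factor $k(\hat{\rr{h}})$ ensures trace-class membership throughout). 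Applying it with $k=\overline{\phi}$, $F=\phi$ gives $\overline{\phi}\,\phi'=\ii 2\pi g'-\phi'$; the term $\bb{T}_{\s{I}}(\phi'(\hat{\rr{h}})\nabla_{\s{I}}(\hat{\rr{h}}))$ vanishes because $\phi$ itself is supported in $\Delta$, so the chain rule with $k=1$ is legitimate here and yields $\bb{T}_{\s{I}}(\nabla_{\s{I}}(\phi(\hat{\rr{h}})))=0$. This produces the claimed identity without ever leaving the trace ideal.
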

\proof
The result can be obtained by adapting step by step the proof of \cite[Proposition 7.1.2]{prodan-schulz-baldes-book}. Indeed the proof is purely algebraic and only uses the properties of the trace $\bb{T}_{\s{I}}$ and the derivation $\nabla_{\s{I}}$
assumed by hypothesis at the beginning of this section.
\qed

\medskip

By combining definition \ref{ewq:int_condI} (which is motivated by physics) with Proposition \ref{prop:int_condI} one gets that the 
interface conductance generated by the \virg{extended states} in $\Delta$ is given by
\begin{equation}\label{ewq:int_condII}
\sigma_{\s{I}}(\Delta)\;:=\;\frac{ e^2}{h}\;W_{\s{I}}(\rr{u}_\Delta)\;.
\end{equation}
This equation justifies the \virg{abstract} Definition \ref{def:cond_inter}.

\medskip

The relevance of Definition \ref{def:cond_inter} lies in its topological interpretation.
Consider the map $\eta_{\s{I}}:(\s{C}^1({\s{I}})\cap \bb{D}_{\s{I}})^{\times 2} \to\C$, defined by
\begin{equation}\label{eq:chern_0dd_03}
\eta_{\s{I}}(\rr{b}_0,\rr{b}_1)\;:=\;\ii \;\bb{T}_{\s{I}}\big(\rr{b}_0\; \nabla_{\s{I}}(\rr{b}_1)\big)\;.
\end{equation}
In view of the properties of $\bb{T}_{\s{I}}$ and  $\nabla_{\s{I}}$
assumed by hypothesis,  $\eta_{\s{I}}$ turns out to be a cyclic 1-cocycle  and therefore defines a class $[\eta_{\s{I}}]\in HC^1(\s{I})$ in the cyclic cohomology of  the $C^*$-algebra $\s{I}$ \cite[Chapter 3]{connes-94}. 
Let 
$$
\prec\;,\;\succ\;:\;K_1(\s{I})\;\times\; HC^1(\s{I})\;\longrightarrow\;\C
$$
be the  canonical bilinear pairing
between (odd) $K$-theory and (odd) cyclic cohomology, defined by
$$
\big([\rr{u}],[\phi]\big)\;\longmapsto\;\prec[\rr{u}],[\phi]\succ\;:=\;({\rm tr}\sharp\phi)\big((\rr{u}-{\bf 1})^*,\rr{u}-{\bf 1}\big)
$$
where the unitary $\rr{u}\in \s{C}^1(\s{I}^+)\otimes{\rm Mat}_N(\C)$ is any representative of the class $[\rr{u}]$. \cite[Section 3.3, Proposition 3]{connes-94}.
Since every unitary $\rr{u}\in\s{I}^+$ (like $\rr{u}_\Delta$) defines an element $[\rr{u}]\in K_1(\s{I})$
in the $K_1$-group of the interface algebra, one gets that 
\begin{equation}\label{eq:chern_03_BIS}
W_{\s{I}}(\rr{u})\;=\;\prec[\rr{u}],[\eta_{\s{I}}]\succ
\end{equation}
only depends on the class  $[\rr{u}]$. In particular, by combining together 
Proposition \ref{prop:exp:map_interf} and equation \eqref{eq:chern_03_BIS}
one gets
\begin{equation}\label{ewq:int_condIIII}
\sigma_{\s{I}}(\Delta)\;:=\;\frac{ e^2}{h}\;\prec[\rr{u}_\Delta],[\eta_{\s{I}}]\succ\;=\;-\frac{ e^2}{h}\;\prec {\rm exp}([\rr{p}_\mu]),[\eta_{\s{I}}]\succ
\;.
\end{equation}

\medskip

Equation \eqref{ewq:int_condIIII} is the topological essence of the 
\emph{bulk-interface duality} and will be used in Sections \ref{sect:bulk-int-Iwatsuka}  to prove equation \eqref{eq:int_01}  in the case of the   
Iwatsuka magnetic field (\cf Theorem \ref{theo_main_Iw}).

\section{The Iwatsuka \texorpdfstring{$C^*$-}-algebra}\label{sect:iwatsuka}
This section is devoted to the detailed study of the 
Toeplitz extension and the $K$-theory of the Iwatsuka $C^*$-algbera. The magnetic translations associated to the Iwatsuka magnetic field has been described in 
Example \ref{Ex2:MT_Iwatsuka} and the Iwatsuka $C^*$-algbera has been defined in Example \ref{ex_B=0}.

\subsection{Toeplitz extension for the Iwatsuka magnetic field}\label{sect:toep_iwatsuka_MT}
The simplest examples of a 
magnetic multi-interface system as described in Definition \ref{def:bulk_alg} is provided by the
 Iwatsuka magnetic $B_{\rm I}$ defined by
\eqref{eq:watsuka_B}  with the conditions
\begin{equation}\label{eq:nontriv_iwat}
b_-\;\neq\; b_+\;,\qquad \; b_\pm\notin\R\setminus2\pi\Z\;.
\end{equation}
In fact, according to the content of Example 
\ref{ex:hull_Iwatsuka} one has that the boundary of the Iwatsuka magnetic hull $\Omega_{\rm I}$ can be represented as 
$\partial\Omega_{\rm I}=\{\omega_{-\infty},\omega_{+\infty}\}$
with $\omega_{\pm\infty}$ two distinct invariant points. As a consequence the associated  Toeplitz extension is given by 
\begin{equation}\label{eq:ex_seq_021_0_int_mult_iwatz}
0\;\stackrel{}{\longrightarrow}\;\s{I}\;\stackrel{\imath}{\longrightarrow}\;\s{A}_{\rm I}\;\stackrel{\rm ev}{\longrightarrow}\s{A}_{\rm bulk}\;\stackrel{}{\longrightarrow}0\;
\end{equation}
with  \emph{bulk algebra}  given by
\begin{equation}\label{eq:ex_seq_021_0_int_mult-02_iwatz}
\s{A}_{\rm bulk}\;:=\;\s{A}_{b_-}\;\oplus\;\s{A}_{b_{+}}
\end{equation}
and evaluation map defined by
\begin{equation}\label{eq:evalumap_iwatz}
\begin{aligned}
{\rm ev}(\rr{s}_{{\rm I},1})\;:&=\;(\rr{s}_{b_-,1},\rr{s}_{b_+,1})\\
{\rm ev}(\rr{s}_{{\rm I},2})\;:&=\;(\rr{s}_{b_-,2},\rr{s}_{b_+,2})\\
{\rm ev} (\rr{f}_{\rm I})\;:&=\;\left(\expo{\ii b_-}{\bf 1},\expo{\ii b_{+}}{\bf 1}\right)
\end{aligned}
\end{equation}
where $\rr{s}_{{\rm I},1}$ and $\rr{s}_{{\rm I},2}$ are the 
Iwatsuka magnetic translations and 
$\rr{f}_{\rm I}$ is the associated flux operator as defined in Example \ref{Ex2:MT_Iwatsuka}.
\begin{remark}[Interpretation of the evaluation map]\label{rk:gen_lim}
It is worth  interpreting the action of the evaluation map on the 
commutative subalgebra $\s{F}_{\rm I}$ generated by $\rr{f}_{\rm I}$ (\cf eq. \eqref{eq:def_F_B}) as a generalized limit. Let $\rr{g}\in \s{F}_{\rm I}$ and ${g}\in \s{C}(\Omega_{\rm I})$ its Gelfand transform as a continuous function on the hull $\Omega_{\rm I}$. According to the discussion in Example \ref{ex:hull_Iwatsuka},   $\s{C}(\Omega_{\rm I})$ coincides with the the $C^*$-subalgebra of $\s{C}_{\rm b}(\Z^2)$ of sequences that admit left and right limits.  Then, it follows that 
\begin{equation}\label{eq:gen_lim}
{\rm ev} (\rr{g})\;=\;\left(\lim\limits_{s\rightarrow -\infty}g(a,s),\lim\limits_{s\rightarrow +\infty}g(b,s)\right)
\end{equation}
for every $a,b\in\Z$. It is immediate to check that equation \eqref{eq:gen_lim} holds true on the dense subalgebra of  $\s{F}_{\rm I}$ generated by finite linear combinations of elements of the type
$
(\rr{s}_{{\rm I},1})^{r}\;(\rr{s}_{{\rm I},2})^{s}\;\rr{f}_{\rm I}\;(\rr{s}_{{\rm I},2})^{-s}\;(\rr{s}_{{\rm I},1})^{-r}=
(\rr{s}_{{\rm I},1})^{r}\;\rr{f}_{\rm I}\;(\rr{s}_{{\rm I},1})^{-r}\;
$
and the result follows from a standard density argument.\hfill $\blacktriangleleft$
\end{remark}

\subsection{Interface algebra for the  Iwatsuka magnetic field}\label{sect:interf_iwatsuka_MT}
The Iwatsuka $C^*$-algbera
 $\s{A}_{\rm I}$ contains several interesting projections. Let us introduce the projections 
$$
\begin{aligned}
(\rr{p}_\pm\psi)(n)\;:&=\;\delta_\pm(n)\;\psi(n)\\
(\rr{p}_0\psi)(n)\;:&=\;\delta_0(n)\;\psi(n)\\
\end{aligned}\;,
\qquad \psi\in \ell^2(\Z^2)\;
$$
where the functions $\delta_\pm$ and $\delta_0$ are defined in Example \ref{Ex2:Iwatsuka}.
\begin{lemma}\label{lemm:proj_1}
Under the assumption \eqref{eq:nontriv_iwat}
  the projections $\rr{p}_\pm$ and $\rr{p}_0$ are elements of  $\s{A}_{\rm I}$.
\end{lemma}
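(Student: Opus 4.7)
The plan is to show that $\rr{p}_\pm$ and $\rr{p}_0$ all lie in the commutative subalgebra $\s{F}_{\rm I}\subset\s{A}_{\rm I}$ generated by the $\Z^2$-translates of the flux operator $\rr{f}_{\rm I}$. The most conceptual route is through the Gelfand picture established in Example \ref{ex:hull_Iwatsuka}: under the standing assumption $b_-\neq b_+$, the isomorphism $\s{F}_{\rm I}\simeq\s{C}(\Omega_{\rm I})$ identifies the hull with the two-point compactification $\Z\cup\{-\infty,+\infty\}$. The functions $\delta_-,\delta_0,\delta_+$ on $\Z^2$ (which depend only on $n_1$) restrict to the dense orbit ${\rm Orb}(\omega_0)\simeq\Z$ with well-defined limits at $\pm\infty$, so they extend to the characteristic functions of the subsets $\{\omega_q:q<0\}\cup\{\omega_{-\infty}\}$, $\{\omega_0\}$, and $\{\omega_q:q>0\}\cup\{\omega_{+\infty}\}$ of $\Omega_{\rm I}$ respectively. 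Each of these is clopen (the singleton is isolated in the discrete part, and neighborhood bases of $\omega_{\pm\infty}$ have the form $\{\omega_q:\pm q\geqslant N\}\cup\{\omega_{\pm\infty}\}$), so the extensions are continuous. By Gelfand duality they lie in $\s{C}(\Omega_{\rm I})\simeq\s{F}_{\rm I}\subset\s{A}_{\rm I}$, which forces $\rr{p}_\pm,\rr{p}_0\in\s{A}_{\rm I}$.

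Alternatively, one can give a hands-on construction via continuous functional calculus. Writing $\rr{f}_{\rm I}=\expo{\ii b_-}\rr{p}_-+\expo{\ii b_0}\rr{p}_0+\expo{\ii b_+}\rr{p}_+$, if $\expo{\ii b_0}$ is distinct from both $\expo{\ii b_\pm}$ then $\rr{f}_{\rm I}$ has three distinct eigenvalues and Lagrange interpolation directly produces $\rr{p}_-,\rr{p}_0,\rr{p}_+$. In the degenerate case $\expo{\ii b_0}=\expo{\ii b_-}$ (the case $\expo{\ii b_0}=\expo{\ii b_+}$ is symmetric), spectral calculus on $\rr{f}_{\rm I}$ alone only yields $\rr{p}_-+\rr{p}_0$ and $\rr{p}_+$. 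To separate $\rr{p}_-$ from $\rr{p}_0$ I would apply functional calculus to $\tau_{-e_1}(\rr{f}_{\rm I})$, which under the hypothesis $b_0\equiv b_-$ takes the value $\expo{\ii b_-}$ on $\{n_1\leqslant -1\}$ and $\expo{\ii b_+}$ on $\{n_1\geqslant 0\}$; its spectral projection on $\expo{\ii b_-}$ is exactly $\rr{p}_-$, after which $\rr{p}_0={\bf 1}-\rr{p}_--\rr{p}_+$ finishes the job.

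The role of the hypotheses \eqref{eq:nontriv_iwat} is precisely to guarantee that in each case the eigenvalues of $\rr{f}_{\rm I}$ and of its nearest translate invoked above are genuinely distinct, so that continuous functional calculus separates the relevant eigenspaces, and that the flux is non-trivial at $\pm\infty$ in the sense required for the hull description of Example \ref{ex:hull_Iwatsuka}. The main obstacle is the degenerate possibility $\expo{\ii b_0}\in\{\expo{\ii b_-},\expo{\ii b_+}\}$, where $\rr{f}_{\rm I}$ alone cannot resolve the column $\{n_1=0\}$ from the adjacent half-plane; pulling in a single translate $\tau_{\mp e_1}(\rr{f}_{\rm I})$ is exactly what refines the Gelfand spectrum of $\s{F}_{\rm I}$ enough to separate all three sets.
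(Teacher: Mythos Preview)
Your proposal is correct. The second, hands-on approach is essentially the paper's own proof: the paper spells out the Lagrange interpolants in $\rr{f}_{\rm I}$ explicitly in the generic case $b_0\notin\{b_-,b_+\}$, and in the degenerate case (it treats $b_0=b_+$) recovers $\rr{p}_0$ via $\rr{p}_0=\rr{p}_\geqslant-\rr{s}_{{\rm I},1}\rr{p}_\geqslant\rr{s}_{{\rm I},1}^*$, which is exactly your trick of bringing in a single translate (conjugation by $\rr{s}_{{\rm I},1}$ shifts the step by one site, cf.\ Lemma~\ref{lemma:comput_01}).

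Your first, Gelfand-theoretic approach is a genuinely different and more economical route: once Example~\ref{ex:hull_Iwatsuka} has identified $\s{F}_{\rm I}\simeq\s{C}(\Omega_{\rm I})$ with $\Omega_{\rm I}$ the two-point compactification of $\Z$, membership of $\rr{p}_\pm,\rr{p}_0$ in $\s{F}_{\rm I}$ reduces to the observation that $\delta_\pm,\delta_0$ extend continuously to $\Omega_{\rm I}$ (their level sets are clopen), with no case analysis needed. The trade-off is that the paper's explicit polynomial formulas are self-contained (they do not rely on the hull computation) and are reused verbatim in the proof of Lemma~\ref{lemma:ev_01}, whereas your Gelfand argument front-loads the work into Example~\ref{ex:hull_Iwatsuka}.
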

\proof
The identity $\bf{1}$ and the flux operator $\rr{f}_{\rm I}$ defined by \eqref{eq:flux_op} are elements of $\s{A}_{\rm I}$. Let us start  with the case  $b_0\neq b_+$ and $b_0\neq b_-$.
A straightforward computation shows that
$$
\rr{p}_0\;=\;\left(\expo{\ii b_-}-\expo{\ii b_0}\right)^{-1}\left(\expo{\ii b_+}-\expo{\ii b_0}\right)^{-1}\left(\expo{\ii b_-}{\bf 1}-\rr{f}_{\rm I}\right)\left(\expo{\ii b_+}{\bf 1}-\rr{f}_{\rm I}\right)\;,
$$
hence $\rr{p}_0\in \s{A}_{\rm I}$. Similarly, one can check that 
$$
\rr{p}_\pm\;=\;\left(\expo{\ii b_\mp}-\expo{\ii b_\pm}\right)^{-1}\left(\expo{\ii b_\mp}{\bf 1}-\rr{f}_{\rm I}\right)\left({\bf 1}-\rr{p}_0\right)\;.
$$
Let us assume now  $b_0=b_+$ and consider the projection $\rr{p}_\geqslant:=\rr{p}_0+\rr{p}_+$.  One can check as above that
\begin{equation}\label{eq:aux_001}
\begin{aligned}
\rr{p}_\geqslant\;&=\;\left(\expo{\ii b_-}-\expo{\ii b_+}\right)^{-1}\left(\expo{\ii b_-}{\bf 1}-\rr{f}_{\rm I}\right)\\
\rr{p}_-\;&=\;\left(\expo{\ii b_+}-\expo{\ii b_-}\right)^{-1}\left(\expo{\ii b_+}{\bf 1}-\rr{f}_{\rm I}\right)
\end{aligned}
\end{equation}
are both elements of $\s{A}_{\rm I}$. Moreover, the equality
\begin{equation}\label{eq:aux_002}
\rr{p}_0\;=\; \rr{p}_\geqslant\;-\; \rr{s}_{{\rm I},1}\;\rr{p}_\geqslant\;\rr{s}_{{\rm I},1}^*
\end{equation}
shows that also $\rr{p}_0\in \s{A}_{\rm I}$. Finally $\rr{p}_+=\rr{p}_\geqslant-\rr{p}_0$. The case $b_0=b_-$ is similar.
\qed

\medskip

For every $j\in\Z$ let us introduce  the projection 
$$
(\rr{p}_j\psi)(n)\;:=\;\delta_0(n-je_1)\;\psi(n)\;
\qquad \psi\in \ell^2(\Z^2)\;.
$$
From the definition it follows that $\rr{p}_j$ is the translation of $\rr{p}_0$ along the vertical line located at $n_1=j$. The projections $\rr{p}_j$ are mutually orthogonal.
\begin{corollary}\label{corol:proj_1}
Under the assumption \eqref{eq:nontriv_iwat} it holds true that $\rr{p}_j\in\s{A}_{\rm I}$ for all $j\in\Z$.
\end{corollary}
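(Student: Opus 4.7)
The plan is to obtain $\rr{p}_j$ from $\rr{p}_0$ by an appropriate conjugation using the magnetic translations, which are generators of $\s{A}_{\rm I}$. Since Lemma \ref{lemm:proj_1} already places $\rr{p}_0$ inside $\s{A}_{\rm I}$, and $\s{A}_{\rm I}$ is a $\ast$-algebra containing $\rr{s}_{{\rm I},1}$ and its inverse, any such conjugation will automatically land in $\s{A}_{\rm I}$.

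More concretely, I would first invoke Lemma \ref{lemma:comput_01} with $\gamma=(j,0)$ applied to the multiplication operator $\rr{p}_0=\rr{m}_{\delta_0}$. This gives
\[
(\rr{s}_{{\rm I},1})^{j}\,\rr{p}_0\,(\rr{s}_{{\rm I},1})^{-j}\;=\;\rr{m}_{\tau_{(j,0)}(\delta_0)}\;.
\]
A direct unwinding of the definition \eqref{eq:tau_act} shows $\tau_{(j,0)}(\delta_0)(n)=\delta_0(n-je_1)$, which is precisely the characteristic function defining $\rr{p}_j$. Hence $\rr{p}_j=(\rr{s}_{{\rm I},1})^{j}\,\rr{p}_0\,(\rr{s}_{{\rm I},1})^{-j}$, and since $\rr{p}_0\in\s{A}_{\rm I}$ by Lemma \ref{lemm:proj_1} and $\rr{s}_{{\rm I},1}$ is unitary and belongs to $\s{A}_{\rm I}$ by Definition \ref{def:magnetic-algebra}, the product lies in $\s{A}_{\rm I}$.

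There is essentially no obstacle here: the only small point worth noting is that Lemma \ref{lemma:comput_01} is stated for arbitrary magnetic vector potentials, and one should briefly observe that the conjugation formula applies in particular to $A_{\rm I}$ and to multiplication by $\delta_0\in\ell^\infty(\Z^2)$, so no additional hypothesis beyond \eqref{eq:nontriv_iwat} (already used to guarantee $\rr{p}_0\in\s{A}_{\rm I}$) is required. The result then extends to every $j\in\Z$ by the same argument with $j$ replaced by $-j$, or equivalently by taking adjoints and using $\rr{s}_{{\rm I},1}^{-1}=\rr{s}_{{\rm I},1}^{*}$.
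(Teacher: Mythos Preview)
Your proof is correct and follows essentially the same approach as the paper: establish $\rr{p}_0\in\s{A}_{\rm I}$ via Lemma~\ref{lemm:proj_1}, then obtain $\rr{p}_j$ by conjugating with powers of the unitary $\rr{s}_{{\rm I},1}\in\s{A}_{\rm I}$. The paper simply states the conjugation formula $\rr{p}_j=(\rr{s}_{{\rm I},1})^{j}\rr{p}_0(\rr{s}_{{\rm I},1}^*)^{j}$ as a direct computation, while you additionally invoke Lemma~\ref{lemma:comput_01} to justify it, but the underlying argument is the same.
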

\proof
From Lemma \ref{lemm:proj_1} we know that $\rr{p}_0\in\s{A}_{\rm I}$. Moreover, a direct computation shows that
 \begin{equation}\label{eq_p_j}
\rr{p}_{j}\;=\; \left\{
\begin{aligned}
&(\rr{s}_{{\rm I},1})^{j}\;\rr{p}_0\;(\rr{s}_{{\rm I},1}^*)^{j}&\quad&\text{if}\;\; j>0\\
&(\rr{s}_{{\rm I},1}^*)^{|j|}\;\rr{p}_0\;(\rr{s}_{{\rm I},1})^{|j|}&\quad&\text{if}\;\; j<0\;.
\end{aligned}
\right.
\end{equation}
This completes the proof.
\qed

\medskip

From \eqref{eq_p_j} one gets the useful formula
\begin{equation}\label{eq_p_j_bis}
\rr{p}_{j}\; \rr{s}_{{\rm I},1}\;=\;\rr{s}_{{\rm I},1}\;\rr{p}_{j-1}\;,\qquad\quad j\in\Z\;.
\end{equation}
The next result provides a first step for the description
 of the  evaluation map.
\begin{lemma}\label{lemma:ev_01}
Under the assumption \eqref{eq:nontriv_iwat} it holds true that
\begin{equation}\label{eq:ev_01}
\begin{aligned}
&{\rm ev}(\rr{p}_{+})&=&\;\;(0,{\bf 1})\\
&{\rm ev}(\rr{p}_{-})&=&\;\;({\bf 1},0)\;
\end{aligned}
\end{equation}
and
\begin{equation}\label{eq:ev_0101}
{\rm ev}(\rr{p}_{j})\;=\;\;\;(0,0)\;,\qquad \forall\; j\in\Z\;.
\end{equation}
\end{lemma}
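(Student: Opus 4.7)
The strategy is a direct substitution: the four projections $\rr{p}_\pm,\rr{p}_0,\rr{p}_j$ all lie in the commutative subalgebra $\s{F}_{\rm I}\subset\s{A}_{\rm I}$, and each has been exhibited in the proof of Lemma \ref{lemm:proj_1} as a polynomial expression in $\rr{f}_{\rm I}$ (possibly conjugated by $\rr{s}_{{\rm I},1}$). Since ${\rm ev}$ is a $\ast$-homomorphism and ${\rm ev}(\rr{f}_{\rm I})=(\expo{\ii b_-}{\bf 1},\expo{\ii b_+}{\bf 1})$ by \eqref{eq:evalumap_iwatz}, the verification of \eqref{eq:ev_01} collapses to scalar arithmetic on each summand of $\s{A}_{b_-}\oplus\s{A}_{b_+}$.

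Consider first the generic regime $b_0\neq b_\pm$. Substituting ${\rm ev}(\rr{f}_{\rm I})$ into the formula for $\rr{p}_0$ produces the factor $(\expo{\ii b_-}{\bf 1}-\rr{f}_{\rm I})(\expo{\ii b_+}{\bf 1}-\rr{f}_{\rm I})$, whose first factor vanishes on the left component of $\s{A}_{\rm bulk}$ while its second factor vanishes on the right, so ${\rm ev}(\rr{p}_0)=(0,0)$. Plugging this and ${\rm ev}(\rr{f}_{\rm I})$ into the formula for $\rr{p}_+$ yields $(\expo{\ii b_-}-\expo{\ii b_+})^{-1}\bigl((\expo{\ii b_-}-\expo{\ii b_-}){\bf 1},\,(\expo{\ii b_-}-\expo{\ii b_+}){\bf 1}\bigr)\cdot({\bf 1},{\bf 1})=(0,{\bf 1})$; the symmetric computation gives ${\rm ev}(\rr{p}_-)=({\bf 1},0)$. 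The degenerate cases $b_0=b_\pm$ are handled by the alternative expressions \eqref{eq:aux_001}: in the case $b_0=b_+$, evaluating $\rr{p}_\geqslant$ gives $(0,{\bf 1})$ directly, then identity \eqref{eq:aux_002} together with the fact that ${\rm ev}(\rr{s}_{{\rm I},1})=(\rr{s}_{b_-,1},\rr{s}_{b_+,1})$ is unitary on each component yields ${\rm ev}(\rr{p}_0)=(0,{\bf 1})-(0,{\bf 1})=(0,0)$, and finally $\rr{p}_+=\rr{p}_\geqslant-\rr{p}_0$ delivers $(0,{\bf 1})$; the subcase $b_0=b_-$ is symmetric.

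For \eqref{eq:ev_0101}, the conjugation identity \eqref{eq_p_j} makes the argument immediate: applying the $\ast$-homomorphism ${\rm ev}$ to $\rr{p}_j=(\rr{s}_{{\rm I},1})^{\pm j}\,\rr{p}_0\,(\rr{s}_{{\rm I},1}^{\ast})^{\pm j}$ and invoking ${\rm ev}(\rr{p}_0)=(0,0)$ just established forces ${\rm ev}(\rr{p}_j)=(0,0)$ for every $j\in\Z$. No step presents a genuine obstacle; the only care required is to treat the degenerate regimes $b_0\in\{b_-,b_+\}$ separately, since in those cases $\rr{p}_0$ is not a polynomial in $\rr{f}_{\rm I}$ alone. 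As a conceptual shortcut one can bypass all algebraic manipulation by invoking Remark \ref{rk:gen_lim}: the Gelfand transforms of $\rr{p}_\pm,\rr{p}_0,\rr{p}_j$ are the characteristic functions $\delta_\pm,\delta_0,\delta_{\{j\}}$ viewed as $n_2$-independent sequences on $\Z^2$, whose asymptotic values at $\pm\infty$ read off as $(0,1),(1,0),(0,0),(0,0)$ respectively, matching \eqref{eq:ev_01}–\eqref{eq:ev_0101}.
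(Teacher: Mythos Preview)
Your proof is correct and follows essentially the same approach as the paper: both apply the $\ast$-homomorphism ${\rm ev}$ to the explicit polynomial expressions for $\rr{p}_0,\rr{p}_\pm$ obtained in Lemma~\ref{lemm:proj_1}, treat the degenerate subcases $b_0\in\{b_-,b_+\}$ separately via \eqref{eq:aux_001}--\eqref{eq:aux_002}, and deduce \eqref{eq:ev_0101} from the conjugation formula \eqref{eq_p_j}. Your write-up is in fact more explicit than the paper's, and the closing shortcut via Remark~\ref{rk:gen_lim} is a nice addition not present in the original.
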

\proof
Let us start with the case   $b_0\neq b_+$ and $b_0\neq b_-$.
Then the result follows from the last equation in \eqref{eq:evalumap_iwatz}, the formulas for $\rr{p}_\pm$  and $\rr{p}_j$  in Lemma \ref{lemm:proj_1} and Corollary \ref{corol:proj_1} along with the fact that ${\rm ev}$
is a $C^*$-homomorphism. In the case $b_0=b_+$ one obtains from \eqref{eq:aux_001} that 
${\rm ev}(\rr{p}_{\geqslant})=(0,{\bf 1})$ and ${\rm ev}(\rr{p}_{-})=({\bf 1},0)$
. Moreover, from \eqref{eq:aux_002} one gets that 
$$
{\rm ev}(\rr{p}_{+})\;=\;(0,{\bf 1})\;-\;(0,\rr{s}_{b_+,1}{\bf 1}\rr{s}_{b_+,1}^*)\;=\;0\;.
$$
The case $b_0=b_-$ is similar.
\qed

\medskip

Let $\Sigma\subset\Z$ be a finite subset and
\begin{equation}\label{eq_p_j_02}
\rr{p}_\Sigma\;:=\;\bigoplus_{j\in\Sigma}\rr{p}_j\;.
\end{equation}
the next result is a direct consequence of Lemma \ref{lemma:ev_01}.
\begin{corollary}\label{corol:ev_01}
Under the assumption \eqref{eq:nontriv_iwat} it holds true that 
$$
{\rm ev}(\rr{a}\rr{p}_\Sigma\rr{b})\;=\;0
$$
for all $\rr{a},\rr{b}\in\s{A}_{\rm I}$ and for all finite subset $\Sigma\subset\Z$. 
\end{corollary}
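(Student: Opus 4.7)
The plan is essentially immediate given what has already been established. I would first observe that $\rr{p}_\Sigma$ is a \emph{finite} orthogonal sum of projections $\rr{p}_j$ with $j \in \Sigma$, so by linearity of the evaluation homomorphism,
$$
{\rm ev}(\rr{p}_\Sigma) \;=\; \sum_{j\in\Sigma} {\rm ev}(\rr{p}_j)\;.
$$
Then I would invoke equation \eqref{eq:ev_0101} of Lemma \ref{lemma:ev_01}, which gives ${\rm ev}(\rr{p}_j) = (0,0)$ for every $j \in \Z$, to conclude that ${\rm ev}(\rr{p}_\Sigma) = 0$ in $\s{A}_{\rm bulk}$.

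The final step is just multiplicativity: since ${\rm ev}:\s{A}_{\rm I}\to\s{A}_{\rm bulk}$ is a $C^*$-algebra homomorphism, one has ${\rm ev}(\rr{a}\rr{p}_\Sigma\rr{b}) = {\rm ev}(\rr{a})\,{\rm ev}(\rr{p}_\Sigma)\,{\rm ev}(\rr{b}) = 0$ for any $\rr{a},\rr{b}\in\s{A}_{\rm I}$. There is no real obstacle to overcome here; the statement is a direct corollary of Lemma \ref{lemma:ev_01} combined with the fact that the finiteness of $\Sigma$ ensures $\rr{p}_\Sigma\in \s{A}_{\rm I}$ (a priori an infinite direct sum would only define an element of the weak closure, and the homomorphism property would not apply so cleanly). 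In other words, the only thing worth noting explicitly in the proof is the role of the finiteness hypothesis on $\Sigma$, which guarantees that $\rr{p}_\Sigma$ lies in $\s{A}_{\rm I}$ by Corollary \ref{corol:proj_1} and that the sum can be pulled out of the homomorphism.
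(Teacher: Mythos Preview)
Your proposal is correct and matches the paper's approach: the paper simply states that the corollary is a direct consequence of Lemma \ref{lemma:ev_01}, which is exactly the argument you spell out (linearity on the finite sum, ${\rm ev}(\rr{p}_j)=0$, and multiplicativity of the $C^*$-homomorphism). Your remark about the role of the finiteness of $\Sigma$ is a useful clarification.
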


\medskip

Elements of the type $\rr{a}\rr{p}_\Lambda\rr{b}$ can be considered as \virg{localized} operators (in the direction $e_1$) and
Corollary \ref{corol:ev_01} establishes  that localized elements are in the kernel of the evaluation map, namely they are elements of the interface algebra $\s{I}$ in view of Theorem \ref{theo:SES}. We are now in position to provide a useful characterization of the {interface algebra}.

\begin{proposition}\label{pro_intAlg_iwa}
The \emph{interface algebra} $\s{I}$ is  the closed  two-sided ideal  of $\s{A}_{\rm I}$ generated by $\rr{p}_{0}$, \ie
$$
\s{I}\;=\;\s{A}_{\rm I}\;\rr{p}_{0}\;\s{A}_{\rm I}\;:=\;{\rm span}\left\{\rr{a}\rr{p}_{0}\rr{b}\;|\; \rr{a},\rr{b}\in \s{A}_{\rm I}\right\}\;.
$$
\end{proposition}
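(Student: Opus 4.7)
The plan is to prove the two inclusions separately, with the easy direction coming directly from Corollary \ref{corol:ev_01} and the harder reverse inclusion following from a Gelfand-theoretic analysis of $\mathrm{Ker}({\rm ev}|_{\s{F}_{\rm I}})$ combined with a uniform approximation argument.

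For the inclusion $\s{A}_{\rm I}\,\rr{p}_{0}\,\s{A}_{\rm I}\subseteq\s{I}$: by Corollary \ref{corol:ev_01} (taking $\Sigma=\{0\}$) one has ${\rm ev}(\rr{a}\rr{p}_{0}\rr{b})=0$ for every $\rr{a},\rr{b}\in\s{A}_{\rm I}$. Hence $\rr{a}\rr{p}_{0}\rr{b}\in\mathrm{Ker}({\rm ev})$, which by Theorem \ref{theo:SES} equals $\s{I}$. Since $\s{I}$ is a closed linear subspace, it contains the closed linear span of all such elements, \ie $\s{A}_{\rm I}\,\rr{p}_{0}\,\s{A}_{\rm I}$.

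For the reverse inclusion $\s{I}\subseteq\s{A}_{\rm I}\,\rr{p}_{0}\,\s{A}_{\rm I}$: by Definition \ref{def:interf}, $\s{I}$ is the smallest closed two-sided ideal of $\s{A}_{\rm I}$ containing $\mathrm{Ker}({\rm ev}|_{\s{F}_{\rm I}})$. The right-hand side is manifestly a closed two-sided ideal of $\s{A}_{\rm I}$, so it suffices to prove $\mathrm{Ker}({\rm ev}|_{\s{F}_{\rm I}})\subseteq\s{A}_{\rm I}\,\rr{p}_{0}\,\s{A}_{\rm I}$. I would identify this kernel via the Gelfand isomorphism $\s{F}_{\rm I}\simeq\s{C}(\Omega_{\rm I})$ of Example \ref{ex:hull_Iwatsuka}. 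Under the decomposition $\Omega_{\rm I}\simeq\Z\cup\{-\infty\}\cup\{+\infty\}$ and the description of ${\rm ev}$ given by Remark \ref{rk:gen_lim}, the restriction ${\rm ev}|_{\s{F}_{\rm I}}$ corresponds to the pair of evaluations at $\pm\infty$; therefore its kernel is precisely the ideal of functions in $\s{C}(\Omega_{\rm I})$ vanishing at $\pm\infty$, which is $\s{C}_{0}(\Z)$.

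Now every $g\in\s{C}_{0}(\Z)$ is the uniform limit of its finite truncations $g_{N}:=\sum_{|j|\leqslant N}g(j)\,\delta_{\{j\}}$. Under the Gelfand isomorphism the characteristic function $\delta_{\{j\}}$ corresponds to the projection $\rr{p}_{j}$, and by Corollary \ref{corol:proj_1} together with \eqref{eq_p_j} one has
\[
\rr{p}_{j}\;=\;(\rr{s}_{{\rm I},1})^{j}\,\rr{p}_{0}\,(\rr{s}_{{\rm I},1})^{-j}\;\in\;\s{A}_{\rm I}\,\rr{p}_{0}\,\s{A}_{\rm I}\;,\qquad j\in\Z\;.
\]
Hence every truncation $\rr{m}_{g_{N}}=\sum_{|j|\leqslant N}g(j)\,\rr{p}_{j}$ lies in $\s{A}_{\rm I}\,\rr{p}_{0}\,\s{A}_{\rm I}$. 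Since the operator norm of a multiplication operator equals the sup norm of its symbol, $\|\rr{m}_{g_{N}}-\rr{m}_{g}\|\to 0$, and the closedness of $\s{A}_{\rm I}\,\rr{p}_{0}\,\s{A}_{\rm I}$ yields $\rr{m}_{g}\in\s{A}_{\rm I}\,\rr{p}_{0}\,\s{A}_{\rm I}$. This proves $\mathrm{Ker}({\rm ev}|_{\s{F}_{\rm I}})\subseteq\s{A}_{\rm I}\,\rr{p}_{0}\,\s{A}_{\rm I}$ and completes the proof.

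The main subtlety I expect is the clean identification in the third step — namely making precise that under Gelfand transform the kernel of ${\rm ev}|_{\s{F}_{\rm I}}$ really coincides with $\s{C}_{0}(\Z)$ viewed as functions on the orbit part ${\rm Orb}(\omega_{0})\subset\Omega_{\rm I}$, and that the projections $\rr{p}_{j}$ provide a total (in the sup norm) family of its generators. Everything else is a straightforward ideal-closure argument.
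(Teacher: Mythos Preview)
Your proof is correct and follows essentially the same route as the paper: both reduce the claim to showing that $\mathrm{Ker}({\rm ev}|_{\s{F}_{\rm I}})$ is generated (inside the ideal $\s{A}_{\rm I}\,\rr{p}_{0}\,\s{A}_{\rm I}$) by the single projection $\rr{p}_{0}$, identify this kernel with $\s{C}_{0}(\Z)$ via the Gelfand picture of Example~\ref{ex:hull_Iwatsuka} and Remark~\ref{rk:gen_lim}, and then approximate by finitely supported sequences, each of which is a linear combination of the conjugates $\rr{p}_{j}=(\rr{s}_{{\rm I},1})^{j}\rr{p}_{0}(\rr{s}_{{\rm I},1})^{-j}$. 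Your version is slightly more explicit in separating the two inclusions and in invoking Theorem~\ref{theo:SES} for the easy direction, but the argument is the same.
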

\proof
A comparison with Definition \ref{def:interf} shows  the  claim is equivalent to state that $\rr{p}_{0}$ generates ${\rm Ker}({\rm ev}|_{\s{F}_{\rm I}})$. From Corollary \ref{corol:ev_01} one gets that $\rr{p}_{j}\in {\rm Ker}({\rm ev}|_{\s{F}_{\rm I}})\subset \s{F}_{\rm I}$ for every $j\in\Z$. A close look at the construction of $\s{F}_{\rm I}$ shows
 that every $\rr{g}\in \s{F}_{\rm I}$ admits the (unique) representation 
$$
\rr{g}\;=\;\sum_{j\in\Z}g_j\; \rr{p}_{j}\;.
$$
where the sequence $\{g_j\}\in \s{C}_{\rm b}(\Z)$ 
admits left and right limits (\cf Remark \ref{rk:gen_lim}). Therefore, one gets that $\rr{g}\in {\rm Ker}({\rm ev}|_{\s{F}_{\rm I}})$ if and only if the associated sequence $\{g_j\}$ vanishes at infinity, \ie if and only if   $\{g_j\}\in\s{C}_0(\Z)$. The proof is completed by observing that $\s{C}_0(\Z)$ is the uniform closure of the sequences with compact support on $\Z$.
\qed

\medskip

The Iwatsuka magnetic field is constant in one direction and therefore one can use the  magnetic Bloch-Floquet transform described in Appendix \ref{app:bloch-floquet} to study the interface algebra. Indeed, the Iwatsuka magnetic translations commute with the operator $V_f:=\expo{\ii f(\rr{n}_1)}\rr{s}_2$ defined through the function
$$
f(m)\;:=\;
\left\{
\begin{aligned}
&m b_+\quad        &\text{if}&\;m\geqslant0\\
&(m+1)b_- -b_0     &\text{if}&\;m<0\;.
\end{aligned}
\right.
$$
Let $\s{U}_B:\ell^2(\Z^2)\to L^2(\n{S}^1)\otimes\ell^2(\Z)$ be the associated magnetic Bloch-Floquet transform as defined in Appendix \ref{app:bloch-floquet}. The next result contains the main feature of the Iwatsuka interface algebra.
\begin{proposition}\label{prop:iwats_in_stri}
It holds true that $\s{U}_B\s{I}\s{U}_B^{-1}=\s{C}(\n{S}^1)\otimes\s{K}(\ell^2(\Z))$. In particular the Iwatsuka interface algebra is a \emph{straight-line} according to Definition \ref{def:typ_inter}.
\end{proposition}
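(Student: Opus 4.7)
The plan is to exploit the magnetic Bloch–Floquet transform $\s{U}_B$ adapted to $V_f$, which by construction commutes with both $\rr{s}_{{\rm I},1}$ and $\rr{s}_{{\rm I},2}$ (indeed the defining relation $f(m)-f(m-1)=B_{\rm I}(m)$ is precisely what makes $V_f$ commute with $\rr{s}_{{\rm I},1}$, while commutation with $\rr{s}_{{\rm I},2}=\rr{s}_{2}$ is trivial). Consequently $\s{U}_B\s{A}_{\rm I}\s{U}_B^{-1}$ is a $C^*$-subalgebra of $\s{C}(\n{S}^1,\s{B}(\ell^2(\Z)))$ fibered over the \emph{$V_f$-momentum circle} $\n{S}^1$, with $V_f$ mapped to multiplication by $\expo{\ii k}$. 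The projection $\rr{p}_0$ depends only on $\rr{n}_1$, hence commutes with $V_f$ and sits in this picture as the constant section $\mathbf{1}\otimes\rr{q}_0$, where $\rr{q}_0:=|0\rangle\langle 0|$ is the rank-one projection on $\ell^2(\Z)$. Since by Proposition \ref{pro_intAlg_iwa} the interface $\s{I}$ is the closed two-sided ideal of $\s{A}_{\rm I}$ generated by $\rr{p}_0$, its image is the closed ideal generated by $\mathbf{1}\otimes\rr{q}_0$ inside $\s{U}_B\s{A}_{\rm I}\s{U}_B^{-1}$.

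For the inclusion $\s{U}_B\s{I}\s{U}_B^{-1}\subseteq \s{C}(\n{S}^1)\otimes\s{K}(\ell^2(\Z))$, any element of $\s{I}$ is a norm-limit of finite sums $\sum_{\ell}\rr{a}_\ell\rr{p}_0\rr{b}_\ell$; fiber-wise this becomes $\sum_\ell \rr{a}_\ell(k)\rr{q}_0\rr{b}_\ell(k)$, manifestly a finite-rank (hence compact) operator depending continuously on $k$, as the Bloch–Floquet decomposition of $\s{A}_{\rm I}$-elements yields norm-continuous fields. For the reverse inclusion I will first produce the subalgebra $\s{C}(\n{S}^1)\otimes\C\rr{q}_0$. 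The key identity is that $f(0)=0$ forces $\rr{p}_0\rr{s}_{{\rm I},2}=\rr{p}_0 V_f$, so $\rr{p}_0\rr{s}_{{\rm I},2}^{n}=\rr{p}_0 V_f^{n}\in\s{I}$ for every $n\in\Z$ and its image under $\s{U}_B$ is $\expo{\ii nk}\otimes\rr{q}_0$. Closing the linear span over $n$ and applying Stone–Weierstrass on $\n{S}^1$ gives $\s{C}(\n{S}^1)\otimes\C\rr{q}_0\subseteq \s{U}_B\s{I}\s{U}_B^{-1}$.

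To sweep over all rank-one operators on $\ell^2(\Z)$, I use \eqref{eq_p_j_bis}: $\rr{s}_{{\rm I},1}^{i}\rr{p}_0=\rr{p}_i\rr{s}_{{\rm I},1}^{i}\in\s{I}$. A direct computation of the Bloch–Floquet fiber of $\rr{s}_{{\rm I},1}$ (starting from the Landau–Iwatsuka formula \eqref{eq:MT_iwa}) shows that it acts in each fiber as a unimodular weighted shift, so left- and right-multiplication by $\rr{s}_{{\rm I},1}^{i}$ and $(\rr{s}_{{\rm I},1}^{*})^{j}$ moves the fiber-wise rank-one operator $\rr{q}_0$ into $\expo{\ii\phi_{i,j}(k)}|i\rangle\langle j|$ for a continuous \emph{nowhere-vanishing} phase $\phi_{i,j}\in \s{C}(\n{S}^1)$. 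Multiplying the already-obtained elements of $\s{C}(\n{S}^1)\otimes\C\rr{q}_j$ (obtained by repeating the argument of the previous paragraph at the shifted site $j$) from the right, and using that $|i\rangle\langle j|\cdot\rr{q}_j=|i\rangle\langle j|$, absorbs $\rr{q}_j$ and yields the full $\s{C}(\n{S}^1)\otimes\C|i\rangle\langle j|$. Running over all $i,j\in\Z$ and taking the closed linear span delivers $\s{C}(\n{S}^1)\otimes\s{K}(\ell^2(\Z))$, completing the identification and showing that the Iwatsuka interface is indeed a straight-line interface in the sense of Definition \ref{def:typ_inter}.

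The main obstacle is the fiber-wise analysis of $\rr{s}_{{\rm I},1}$: one must write out explicitly the magnetic Bloch–Floquet transform of the Landau–Iwatsuka $\rr{s}_{{\rm I},1}$, verify that its fiber operators are weighted shifts with unimodular weights (so the transition phases $\phi_{i,j}(k)$ never vanish, which is crucial for inverting and recovering the full function algebra $\s{C}(\n{S}^1)$ after multiplication), and check norm-continuity of the resulting field so that closure in the $C^*$-norm genuinely produces $\s{C}(\n{S}^1)\otimes\s{K}(\ell^2(\Z))$ and not merely a proper subalgebra.
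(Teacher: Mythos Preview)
Your approach is essentially the same as the paper's: transform $\rr{p}_0$ to ${\bf 1}\otimes\rr{q}_0$, use powers of $\rr{s}_{{\rm I},2}$ to generate $\s{C}(\n{S}^1)\otimes\C\rr{q}_j$, and then shift with $\rr{s}_{{\rm I},1}$ to reach all matrix units $|i\rangle\langle j|$. Two remarks. First, the ``main obstacle'' you flag dissolves: the explicit Bloch--Floquet formulas in Appendix~\ref{app:bloch-floquet} give $\s{U}_B\,\rr{s}_{{\rm I},1}\,\s{U}_B^{-1}={\bf 1}\otimes\rr{s}$, the plain unweighted shift, constant in $k$; so your phases $\phi_{i,j}(k)$ are identically zero and there is nothing to invert. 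Second, you do add something the paper's proof leaves implicit, namely the inclusion $\s{U}_B\s{I}\s{U}_B^{-1}\subseteq\s{C}(\n{S}^1)\otimes\s{K}(\ell^2(\Z))$ via the fiber-wise finite-rank argument, which is a clean way to close the equality.
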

\proof
A direct computation shows that $\s{U}_B\rr{p}_j\s{U}_B^{-1}={\bf 1}\otimes \pi_j$ where $\pi_j$ is the rank-one projection on $\ell^2(\Z)$ defined by $(\pi_j\phi)(m):=\delta_{m,j}\phi(m)$.
Since $\s{U}_B(\rr{s}_{{\rm I},2}\rr{p}_j)^n\s{U}_B^{-1}=\s{U}_B(\rr{s}_{{\rm I},2})^n\rr{p}_j\s{U}_B^{-1}$ is proportional to $\expo{\ii n k}\otimes \pi_j$ up to a phase factor one gets that $g\otimes \pi_j\in \s{U}_B\s{I}\s{U}_B^{-1}$ for every $g\in \s{C}(\n{S}^1)$ and $j\in\Z$. Acting with powers of $\s{U}_B\rr{s}_{{\rm I},1}\s{U}_B^{-1}$ on the latter elements one gets that also $g\otimes \pi_{i,j}\in \s{U}_B\s{I}\s{U}_B^{-1}$ where $\pi_{i,j}$ is the rank-one operator defined by
$(\pi_{i,j}\phi)(m):=\delta_{m,j}\phi(i)$. The result follows by observing that  the rank-one operators generates the compact operators.
\qed

\medskip

Following the procedure described in Section \ref{sec:Bulk_interface_currents} we can use Proposition \ref{prop:iwats_in_stri} to equip $\s{I}$ with a derivation and a trace. The natural derivation on $\s{C}(\n{S}^1)$ is $\delta_0:=-\frac{\dd}{\dd k}$. With this sign convention 
a comparison with \eqref{eq:deriv_02}
provides
$$
\delta_0\otimes{\rm Id}_{\s{K}}\left(\s{U}_B\rr{a}\s{U}_B^{-1}\right)\;=\;\nabla_2(\rr{a})\;=\;\ii[\rr{a},\rr{n}_2]
$$
for differentiable elements $\rr{a}\in \s{I}$. Therefore we
obtain that the interface derivation is given by $\nabla_{\s{I}}:=\ii[\;\cdot\;,\rr{n}_2]$. Similarly the natural trace on $\s{C}(\n{S}^1)$ is given by $\tau_0:=\int_{\n{S}^1}\dd k$ where 
$\dd k$ is the normalized Haar measure. Since $\tau_0(\expo{\ii n k})=\delta_{n,0}$ one gets that 
$$
\tau_0\otimes{\rm Tr}_{\ell^2(\Z)}\left(\s{U}_B\rr{a}\s{U}_B^{-1}\right)\;=\;{\rm Tr}_{\ell^2(\Z^2)}(\rr{q}_0\rr{a}\rr{q}_0)
$$
where the projection $\rr{q}_0$ is given by $(\rr{q}_0\psi)(n,m)=\delta_{m,0}\psi(n,m)$
and $\rr{a}\in\s{I}$ is any suitable integrable elements. In this way one can define the interface trace as
\begin{equation}\label{eq:tra_int}
\bb{T}_{\s{I}}(\rr{a})\;:=\;{\rm Tr}_{\ell^2(\Z^2)}(\rr{q}_0\rr{a}\rr{q}_0)\;=\;\sum_{n\in\Z}\bra{n,0}\,\rr{a}\,\ket{n,0}
\end{equation}
where the Dirac notation in the right-hand side turns out to be particularly useful.

\subsection{Linear space splitting of the  Toeplitz extension}\label{sect:split_Toeplitz}
The 
Toeplitz extension for the Iwatsuka magnetic field admits a natural splitting of the  linear space structure which turns out to be useful in applications. Such a fact has already been anticipated at the end of Section \ref{sect:toep_inter}.
 
 \medskip

 Let us start by recalling that 
$\s{A}_{\rm bulk}$ is generated, as $\ast$-linear space, by the linear combinations of monomials  of the type $(\rr{s}_{b_-,1}^r\rr{s}_{b_-,2}^s,\rr{s}_{b_+,1}^p\rr{s}_{b_+,2}^q)$ with $r,s,p,q\in\Z$. 
Consider the \emph{linear} map $\jmath:\s{A}_{\rm bulk}\to \s{A}_{\rm I}$ initially defined on the monomials by
\begin{equation}\label{eq:j_split}
\jmath\big(\rr{s}_{b_-,1}^r\rr{s}_{b_-,2}^s,\rr{s}_{b_+,1}^p\rr{s}_{b_+,2}^q\big)\;:=\;
\rr{p}_{-}\;\rr{s}_{{\rm I},1}^r\; \rr{s}_{{\rm I},2}^s\; \rr{p}_{-}\;+\;
\rr{p}_{+}\;\rr{s}_{{\rm I},1}^p\; \rr{s}_{{\rm I},2}^q\; \rr{p}_{+}
\end{equation}
and then extended linearly to $\s{A}_{\rm bulk}$. 
Such a  map is well defined because both $\s{A}_{\rm bulk}$ and 
$\s{A}_{\rm I}$ are spanned as Banach spaces by the families of respective  monomials.
From it very definition it follows that ${\rm ev}\circ \jmath={\rm Id}_{\s{A}_{\rm bulk}}$, namely $\jmath$ provides a 
splitting of the linear structures. It follows that 
$$
\s{A}_{\rm I}\;=\;\s{I}\;+\; \jmath\big(\s{A}_{\rm bulk}\big)
$$
as direct sum of linear spaces \cite[Proposition 3.1.3]{wegge-olsen-93}.

\medskip

It is worth noting that the \emph{linear} map $\jmath$ defined by \eqref{eq:j_split} cannot be extended to a $C^*$-homomorphism. For instance, a direct computation shows that 
$$
\begin{aligned}
\jmath\big({\bf 1},\rr{s}_{b_+,1}\big)\jmath\big({\bf 1},\rr{s}_{b_+,1}^*\big)\;-\;\jmath\big({\bf 1},{\bf 1}\big)\;&=\;\rr{p}_{+}\big(\rr{s}_{b_+,1}\rr{p}_{+}\rr{s}_{b_+,1}^*-{\bf 1}\big)\rr{p}_{+}\;=\;-\rr{p}_{1}\\
\end{aligned}
$$
since $\rr{s}_{b_+,1}\rr{p}_{+}\rr{s}_{b_+,1}^*=\rr{p}_{+}-\rr{p}_{1}$.
On the other hand, 
$$
\begin{aligned}
\jmath\big({\bf 1},\rr{s}_{b_+,1}^*\big)\jmath\big({\bf 1},\rr{s}_{b_+,1}\big)\;-\;\jmath\big({\bf 1},{\bf 1}\big)\;&=\;\rr{p}_{+}\big(\rr{s}_{b_+,1}^*\rr{p}_{+}\rr{s}_{b_+,1}-{\bf 1}\big)\rr{p}_{+}\;=\;0\\
\end{aligned}
$$
due to $\rr{s}_{b_+,1}^*\rr{p}_{+}\rr{s}_{b_+,1}=\rr{p}_{+}+\rr{p}_{0}$.

\begin{remark}[Failure of the $C^*$-splitting]\label{rk:split}
A linear splitting is the best that we can do since the existence of a $C^*$-lifting  would imply
 the short exact sequence (see \cite[Proposition 8.2.2]{wegge-olsen-93} or \cite[Proposition 3.29]{gracia-varilly-figueroa-01})
\begin{equation}\label{fake-short-exact-sequence}
0\longrightarrow K_0(\s{I})\overset{\imath_*}{\longrightarrow}K_0(\s{A}_{\rm I})\overset{{\rm ev}_*}{\longrightarrow}K_0(\s{A}_{\rm bulk})\longrightarrow 0,
\end{equation}
 at the level of the $K_0$-groups. However, it will be proved that $K_0(\s{I})\simeq \Z$ (Proposition \ref{prop:k-group-Interface}) and  $\imath_*=0$ (Remark \ref{rk:i=0}) making the 
short exact sequencee \eqref{fake-short-exact-sequence}
just impossible.\hfill $\blacktriangleleft$
\end{remark}

\subsection{\texorpdfstring{$K$-}-theory for the Iwatsuka \texorpdfstring{$C^*$-}-algebra}\label{sect:K-Iwatsuka}
In this section we will provide a preliminary study of the $K$-theory of the   Iwatsuka $C^*$-algebra which will be complemented in next Section \ref{sect:6t-K-Iwatsuka}. We will make use of the fact that the  $C^*$-algebra $\s{A}_{\rm I}$ can be represented as an iterated crossed product with $\Z$ (see Appendix \ref{sect:crossedproduct}), and in turn we will exploit the Pimsner-Voiculescu exact sequence (Appendix \ref{app:PV})  described in \cite{pimsner-voiculescu-exact-seq} or in \cite[Chapter V]{blackadar-98}.

\medskip

By adapting the notation of Appendix \ref{sect:crossedproduct} we have the isomorphisms
$$
\s{A}_{\rm I} \;\simeq\; \s{Y}_{{\rm I},1}\rtimes_{\alpha_2}\Z\;,\qquad \s{Y}_{{\rm I},1}\;:=\; \s{F}_{\rm I}\rtimes_{\alpha_1}\Z$$
where $\s{F}_{\rm I}$ is the $C^*$-algebra generated by the the flux operator $\rr{f}_{\rm I}$ according to \eqref{eq:def_F_B}, the automorphism $\alpha_1$ is defined by 
$\alpha_1(\rr{g}):=\rr{s}_{\rm I,1}\rr{g}\rr{s}_{\rm I,1}^*$ for every $\rr{g}\in \s{F}_{\rm I}$ and the automorphism $\alpha_2$ is defined by $\alpha_2(\rr{g}\rr{s}_{\rm I,1}^r):=\rr{s}_{\rm I,2}\rr{g}\rr{s}_{\rm I,1}^r\rr{s}_{\rm I,2}^*$
for every $\rr{g}\in \s{F}_{\rm I}$ and $r\in\N_0$. 

\medskip

The $K$-theory of the $C^*$-algebra $\s{F}_{\rm I}$ is calculated in Appendix \ref{app:K-theoIMH} and is given by
\begin{equation}
\begin{aligned}
K_0(\s{F}_{\rm I})\;&=\;\left(\bigoplus_{i\in\Z}\Z[\rr{p}_j]\right)\oplus\Z[\rr{p}_-]\oplus\Z[\rr{p}_+]\;,\\
K_1(\s{F}_{\rm I})\;&=\;0\;.
\end{aligned}
\end{equation}
The $K$-theory of the first crossed product $\s{Y}_{{\rm I},1}$ can be  computed from the Pimsner-Voiculescu exact sequence

\begin{equation}\label{eq:6termexactseq_PV_Iwa1}
\begin{array}{ccccc}
K_0(\s{F}_{\rm I})&\overset{\beta_{1,\ast}}{\longrightarrow}          
 &K_0(\s{F}_{\rm I}) &      \overset{{\imath}_*}{\longrightarrow}    
  &K_0(\s{Y}_{{\rm I},1})\\
  &&&&\\
{\partial_1}\Big\uparrow   &&&&\Big\downarrow{\partial_0}\\
   &&&&\\
  K_1(\s{Y}_{{\rm I},1})& \underset{{\imath}_*}{\longleftarrow}       
 &K_1(\s{F}_{\rm I}) &   \underset{\beta_{1,\ast}}{\longleftarrow}        
  &K_1(\s{F}_{\rm I})\\
\end{array}
\end{equation}
where  the connecting maps $\beta_{1,\ast}$ and $\imath_*$ and the
 boundary maps $\partial_0$ and
$\partial_1$ are described in Appendix \ref{app:PV}.
\begin{proposition}\label{prop:PV_1}
Consider the six-term exact sequence \eqref{eq:6termexactseq_PV_Iwa1}. Then, it holds true that:
	\begin{itemize}
		\item[(i)] The image and kernel of the map $\beta_{1,*}:K_0(\s{F}_{\rm I})\rightarrow K_0(\s{F}_{\rm I})$ are given by
	$$
	{\rm Im}(\beta_{1,*})\;=\;\bigoplus_{j\in\Z}\Z[\rr{p}_j]\;,\qquad {\rm Ker}(\beta_{1,*})\;=\;\Z[{\bf 1}]\;.
	$$	
\vspace{1mm}		
		\item[(ii)]  $\partial_1[\rr{s}_{{\rm I},1}]=-[\bf{1}]$.
	\end{itemize}
	Consequently,
	$$K_0(\s{Y}_{{\rm I},1})\;=\;\Z[\rr{p}_-]\oplus \Z[\rr{p}_+]\;,\qquad  K_1(\s{Y}_{{\rm I},1})\;=\;\Z[\rr{s}_{{\rm I},1}]\;.$$
\end{proposition}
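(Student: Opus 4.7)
The plan is to compute the Pimsner--Voiculescu map $\beta_{1,*}=\mathrm{id}_*-\alpha_{1,*}^{-1}$ explicitly on the generators of $K_0(\s{F}_{\rm I})$ exhibited in Appendix~\ref{app:K-theoIMH}, and then to read the $K$-theory of $\s{Y}_{{\rm I},1}$ off the sequence~\eqref{eq:6termexactseq_PV_Iwa1}. Since $\alpha_1=\mathrm{Ad}(\rr{s}_{{\rm I},1})$ acts on multiplication operators in $\s{F}_{\rm I}$ as the unit translation in the $e_1$-direction (the magnetic phase $\rr{y}_{A_{\rm I},1}$ is itself a multiplication operator and drops out), inspection of the defining indicator functions of $\rr{p}_\pm$ and $\rr{p}_j$ together with~\eqref{eq_p_j} gives
\[
\alpha_1^{-1}(\rr{p}_j)=\rr{p}_{j-1},\qquad \alpha_1^{-1}(\rr{p}_-)=\rr{p}_--\rr{p}_{-1},\qquad \alpha_1^{-1}(\rr{p}_+)=\rr{p}_++\rr{p}_0.
\]
Passing to $K_0$ one obtains
\[
\beta_{1,*}[\rr{p}_j]=[\rr{p}_j]-[\rr{p}_{j-1}],\qquad \beta_{1,*}[\rr{p}_-]=[\rr{p}_{-1}],\qquad \beta_{1,*}[\rr{p}_+]=-[\rr{p}_0].
\]

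From these three formulas the inclusion ${\rm Im}(\beta_{1,*})\subseteq\bigoplus_{j\in\Z}\Z[\rr{p}_j]$ is immediate, while the reverse inclusion follows by telescoping starting from $[\rr{p}_{-1}]=\beta_{1,*}[\rr{p}_-]$ and iterating $[\rr{p}_k]=[\rr{p}_{k-1}]+\beta_{1,*}[\rr{p}_k]$. For the kernel, a general element $x=a_-[\rr{p}_-]+a_+[\rr{p}_+]+\sum_{j\in\Z}c_j[\rr{p}_j]$ (with $c_j=0$ for $|j|\gg 0$) is mapped by $\beta_{1,*}$ to a combination whose coefficient at $[\rr{p}_k]$ equals $c_k-c_{k+1}$ for $k\neq-1,0$, equals $c_{-1}-c_0+a_-$ at $k=-1$, and equals $c_0-c_1-a_+$ at $k=0$. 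Setting these to zero and using the finite-support hypothesis forces $c_j=0$ for every $j\neq 0$ together with $a_-=a_+=c_0$, so ${\rm Ker}(\beta_{1,*})=\Z\cdot([\rr{p}_-]+[\rr{p}_0]+[\rr{p}_+])=\Z[{\bf 1}]$ by the partition ${\bf 1}=\rr{p}_-+\rr{p}_0+\rr{p}_+$ in $\s{F}_{\rm I}$. This establishes~(i). Part~(ii) is the standard description (see Appendix~\ref{app:PV}) of the boundary map on the canonical implementing unitary of the crossed product: $\partial_1[\rr{s}_{{\rm I},1}]=-[{\bf 1}]$.

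With (i) and (ii) in hand, and $K_1(\s{F}_{\rm I})=0$, the sequence~\eqref{eq:6termexactseq_PV_Iwa1} collapses to the short exact sequence
\[
0\longrightarrow K_1(\s{Y}_{{\rm I},1})\stackrel{\partial_1}{\longrightarrow}K_0(\s{F}_{\rm I})\stackrel{\beta_{1,*}}{\longrightarrow}K_0(\s{F}_{\rm I})\stackrel{\imath_*}{\longrightarrow}K_0(\s{Y}_{{\rm I},1})\longrightarrow 0.
\]
Injectivity of $\partial_1$ combined with (ii) yields $K_1(\s{Y}_{{\rm I},1})=\Z[\rr{s}_{{\rm I},1}]$, while $K_0(\s{Y}_{{\rm I},1})\simeq K_0(\s{F}_{\rm I})/{\rm Im}(\beta_{1,*})=\Z[\rr{p}_-]\oplus\Z[\rr{p}_+]$ follows directly from~(i). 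The main obstacle is the kernel analysis in~(i): the boundary classes $[\rr{p}_\pm]$ interact non-trivially with the $\Z$-family $\{[\rr{p}_j]\}$, and only the finite-support condition inherited from the direct-sum structure of $K_0(\s{F}_{\rm I})$ rules out a constant solution of the recurrence $c_k=c_{k+1}$, pinning down the single generator $[{\bf 1}]$ of the kernel.
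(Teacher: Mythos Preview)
Your proof is correct and follows essentially the same approach as the paper: compute $\beta_{1,*}$ on the generators $[\rr{p}_j],[\rr{p}_\pm]$ via the explicit action of $\alpha_1^{-1}$, read off image and kernel, and then use exactness of~\eqref{eq:6termexactseq_PV_Iwa1} together with $K_1(\s{F}_{\rm I})=0$. The only difference is that for~(ii) the paper carries out the index computation explicitly (lifting $\rr{s}_{{\rm I},1}$ by the isometry $V=\rr{s}_{{\rm I},1}\otimes v$ in $\s{T}_{\alpha_1}$ and evaluating $[{\bf 1}-V^*V]-[{\bf 1}-VV^*]$), whereas you invoke it as a standard Pimsner--Voiculescu fact; both are fine, but the appendix you cite does not quite state the identity $\partial_1[u]=-[{\bf 1}]$ outright, so one explicit line would strengthen the argument.
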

\proof
For (i) let us recall that $\beta_{1,*}={\rm Id}_*-\alpha_{1,*}^{-1}$, as described in Appendix \ref{app:PV}. Therefore, one gets
	\begin{equation*}
	\begin{aligned}
	\beta_{1,*}\big([\rr{p}_j]\big)&\;=\; [\rr{p}_j-\rr{s}_{\rm I,1}^*\rr{p}_j\rr{s}_{\rm I,1}]\;=\;[\rr{p}_j]-[\rr{p}_{j-1}]\;,\\
	\beta_{1,*}\big([\rr{p}_-]\big)&\;=\; [\rr{p}_--\rr{s}_{\rm I,1}^*\rr{p}_-\rr{s}_{\rm I,1}]\;=\; [\rr{p}_{-1}]\;,\\
	\beta_{1,*}\big([\rr{p}_+]\big)&\;=\; [\rr{p}_+-\rr{s}_{\rm I,1}^*\rr{p}_+\rr{s}_{\rm I,1}]\;=\;-[\rr{p}_{0}]\;.\\
	\end{aligned}
	\end{equation*}
	It follows that the image of $\beta_{1,*}$ is $\bigoplus_{j\in\Z}\Z[\rr{p}_j]$ and
	$$
	\beta_{1,*}\left(n_-[\rr{p}_-]+n_+[\rr{p}_+]+\sum_{j=-M}^{+M}n_j[\rr{p}_j]\right)\;=\;0
	$$
	has a non-trivial solution if and only if $n_-=n_0=n_+$, and $n_j=0$ in all other cases.
	As a consequence one has that the kernel of $\beta_{1,*}$ is generated by $[\rr{p}_-]+[\rr{p}_0]+[\rr{p}_+]=[{\bf 1}]$.
	For (ii) let us recall that the boundary map $\partial_1:=k_0^{-1}\circ{\rm ind}$ is the composition of the index map ${\rm ind}:K_1(\s{Y}_{{\rm I},1})\to K_0(\s{F}_{\rm I}\otimes\s{K})$ associated to the Toeplitz extension \eqref{eq:actual_toeplitz_ext_app} and the inverse of the stabilization isomorphism $\kappa_0:K_0(\s{F}_{\rm I})\to K_0(\s{F}_{\rm I}\otimes\s{K})$
	induced by the identification $\rr{g}\mapsto \rr{g}\otimes \pi_0$ (here $\pi_0\in \s{K}$ is any fixed rank-one projection). The isometry $V:=\rr{s}_{\rm I,1}\otimes v$ which generates the 	Toeplitz algebra $\s{T}_{\alpha_1}$ together with $\s{F}_{\rm I}\otimes{\bf 1}$
verifies the condition $\psi(V)=\rr{s}_{\rm I,1}$. Therefore,
$V$ provides a lift of $\rr{s}_{\rm I,1}$ by an isometry. Consider the unitary matrix
$$
w(\rr{s}_{\rm I,1})\;:=\;\begin{pmatrix}
	V        & P\\
	0 & V^*
	\end{pmatrix}\;\in\;{\rm Mat}_2(\s{T}_{\alpha_1})\;
$$	
where $P:={\bf 1}-VV^*$. By construction $w(\rr{s}_{\rm I,1})$ is a lift of ${\rm diag}(\rr{s}_{\rm I,1},\rr{s}_{\rm I,1}^*)$ and $[{\rm diag}(\rr{s}_{\rm I,1},\rr{s}_{\rm I,1}^*)]=[{\bf 1}]$	as a class in $K_1(\s{Y}_{{\rm I},1})$. As a consequence we can construct the index map according to \cite[Definition 8.1.1]{wegge-olsen-93} and after an explicit computation one gets
$$
\begin{aligned}
{\rm ind}\big([\rr{s}_{\rm I,1}]\big)\;&=\;\varphi_*^{-1}([{\bf 1}-V^*V]-[{\bf 1} -VV^*])\\
&=\;\varphi_*^{-1}([0]-[P])\;=\;-[{\bf 1}\otimes \pi_0]\;
\end{aligned}
$$
where in the last equality  we used the property $\varphi({\bf 1}\otimes \pi_0)=P$. By using the isomorphism $\kappa_0$, one finally gets 
$\partial_1[\rr{s}_{{\rm I},1}]=-[\bf{1}]$. Since $K_1(\s{F}_{\rm I})=0$, it follows that $\partial_1$ provides an isomorphism between $K_1(\s{Y}_{{\rm I},1})$ and ${\rm Ker}(\beta_{1,*})$. In view of (i) and (ii) one infers that $K_1(\s{Y}_{{\rm I},1})=\Z[\rr{s}_{\rm I,1}]$. Again, $K_1(\s{F}_{\rm I})=0$ implies the surjectivity of $\imath_*:K_0(\s{F}_{\rm I})\to K_0(\s{Y}_{{\rm I},1})$ and so $K_0(\s{Y}_{{\rm I},1})\simeq K_0(\s{F}_{\rm I})/{\rm Im}(\beta_{1,*})=\Z[\rr{p}_-]\oplus \Z[\rr{p}_+]$.
\qed

\medskip

For the $K$-theory of the second crossed product $\s{A}_{\rm I} \simeq \s{Y}_{{\rm I},1}\rtimes_{\alpha_2}\Z$ we need the 
Pimsner-Voiculescu exact sequence
\begin{equation}\label{eq:6termexactseq_PV_Iwa2}
\begin{array}{ccccc}
K_0(\s{Y}_{{\rm I},1})&\overset{\beta_{2,\ast}}{\longrightarrow}          
 &K_0(\s{Y}_{{\rm I},1}) &      \overset{{\imath}_*}{\longrightarrow}    
  &K_0(\s{A}_{\rm I})\\
  &&&&\\
{\partial_1}\Big\uparrow   &&&&\Big\downarrow{\partial_0}\\
   &&&&\\
  K_1(\s{A}_{\rm I})& \underset{{\imath}_*}{\longleftarrow}       
 &K_1(\s{Y}_{{\rm I},1}) &   \underset{\beta_{2,\ast}}{\longleftarrow}        
  &K_1(\s{Y}_{{\rm I},1})\\
\end{array}
\end{equation}
\begin{theorem}[$K$-theory of the Iwatsuka $C^*$-algebra I]\label{prop:PV_2} 
Consider the six-term exact sequence \eqref{eq:6termexactseq_PV_Iwa2}. Then, it holds true that
	\begin{itemize}
		\item[(i)] Both maps $\beta_{2,\ast}:K_j(\s{Y}_{{\rm I},1})\rightarrow K_j(\s{Y}_{{\rm I},1})$, with $j=1,2$, vanish;
		\vspace{1mm}
		\item[(ii)] The map $\partial_1$ verifies
		$$
		\begin{aligned}
		\partial_1\big([\rr{p_-}\rr{s}_{{\rm I},2}+\rr{p}_0+\rr{p}_+]\big)\;=\;-[\rr{p}_-]\;,\\
		\partial_1\big([\rr{p_-}+\rr{p}_0+\rr{p}_+\rr{s}_{{\rm I},2}]\big)\;=\;-[\rr{p}_+]\; ;
		\end{aligned}
		$$
\vspace{1mm}
		\item[(iii)] There exists $N\in\N$ and a projection $\rr{p}_{\rm I}\in \s{A}_{\rm I}\otimes{\rm Mat}_N(\C)$ such that 
		$$ 
		\partial_0[\rr{p}_{\rm I}]=[\rr{s}_{{\rm I},1}]\;.
		$$
		\end{itemize}
	Consequently,
	$$
	\begin{aligned}
	K_0(\s{A}_{\rm I})\;&=\;\Z[\rr{p}_-]\oplus\Z[\rr{p}_-]\oplus\Z[\rr{p}_{\rm I}]\\
	K_1(\s{A}_{\rm I})\;&=\;\Z[\rr{w}_{{\rm I},-}]\oplus\Z[\rr{w}_{{\rm I},+}]\oplus\Z[\rr{s}_{{\rm I},1}]\;
		\end{aligned}
	$$
	where $\rr{w}_{{\rm I},\pm}:={\bf 1}+\rr{p}_\pm(\rr{s}_{{\rm I},2}-{\bf 1})$.
	\end{theorem}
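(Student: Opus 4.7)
The plan is to copy almost verbatim the structure used in the proof of Proposition \ref{prop:PV_1}, now applied to the second crossed-product step $\s{A}_{\rm I}\simeq\s{Y}_{{\rm I},1}\rtimes_{\alpha_2}\Z$. First I would establish (i) and (ii) by direct computation on the generators identified in Proposition \ref{prop:PV_1}, and then deduce (iii) as a purely exactness-theoretic statement; the final $K$-group computation then follows by splitting the resulting six-term hexagon.

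For (i) I recall that $\beta_{2,\ast}=\mathrm{Id}_\ast-\alpha_{2,\ast}^{-1}$, so it suffices to check that $\alpha_{2,\ast}$ acts as the identity on the generators $[\rr{p}_\pm]$, $[\rr{s}_{{\rm I},1}]$ obtained in Proposition \ref{prop:PV_1}. On $K_0$ this is immediate: since $\rr{p}_\pm$ are multiplications by functions of $n_1$ only and $\rr{s}_{{\rm I},2}=\rr{s}_2$ is the ordinary shift in the $n_2$-direction, $\alpha_2(\rr{p}_\pm)=\rr{s}_{{\rm I},2}\rr{p}_\pm\rr{s}_{{\rm I},2}^\ast=\rr{p}_\pm$. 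For $K_1$ the commutation relation \eqref{eqgen_comm_rel_X} gives $\alpha_2(\rr{s}_{{\rm I},1})=\rr{f}_{\rm I}^\ast\rr{s}_{{\rm I},1}$, so $\alpha_{2,\ast}[\rr{s}_{{\rm I},1}]=[\rr{f}_{\rm I}^\ast]+[\rr{s}_{{\rm I},1}]$ in $K_1(\s{Y}_{{\rm I},1})$; since $\rr{f}_{\rm I}\in\s{F}_{\rm I}$ and $K_1(\s{F}_{\rm I})=0$ by Proposition \ref{prop:PV_1}, the inclusion $\s{F}_{\rm I}\hookrightarrow\s{Y}_{{\rm I},1}$ sends $[\rr{f}_{\rm I}]$ to the trivial class, so $\beta_{2,\ast}[\rr{s}_{{\rm I},1}]=0$.

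For (ii) I would mimic step (ii) in the proof of Proposition \ref{prop:PV_1}, but now using the isometry $V_2:=\rr{s}_{{\rm I},2}\otimes v\in\s{T}_{\alpha_2}$ (with $v^\ast v={\bf 1}$, $vv^\ast={\bf 1}-\pi_0$) as a lift of $\rr{s}_{{\rm I},2}$. A routine verification using $\rr{p}_-\rr{p}_0=\rr{p}_-\rr{p}_+=\rr{p}_0\rr{p}_+=0$ and the identity $\rr{s}_{{\rm I},2}^\ast\rr{p}_\pm\rr{s}_{{\rm I},2}=\rr{p}_\pm$ shows that $\rr{w}_{{\rm I},\pm}$ are unitary in $\s{A}_{\rm I}$ and that their partial-isometry lifts
\begin{equation*}
W_\pm\;:=\;\rr{p}_\mp\;+\;\rr{p}_0\;+\;\rr{p}_\pm\,V_2\;\in\;\s{T}_{\alpha_2}
\end{equation*}
satisfy $W_\pm^\ast W_\pm={\bf 1}$ and $W_\pm W_\pm^\ast={\bf 1}-\rr{p}_\pm\otimes\pi_0$. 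Applying the explicit formula for the index map (\cf \cite[Definition 8.1.1]{wegge-olsen-93}) one gets $\mathrm{ind}[\rr{w}_{{\rm I},\pm}]=-[\rr{p}_\pm\otimes\pi_0]$ in $K_0(\s{Y}_{{\rm I},1}\otimes\s{K})$, and postcomposing with $\kappa_0^{-1}$ produces the stated identities $\partial_1[\rr{w}_{{\rm I},\pm}]=-[\rr{p}_\pm]$.

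The most delicate point is (iii), where one must promote an abstract $K_0$-class into an honest projection. Since item (i) yields $\beta_{2,\ast}=0$ on $K_1(\s{Y}_{{\rm I},1})$, exactness of \eqref{eq:6termexactseq_PV_Iwa2} makes $\partial_0\colon K_0(\s{A}_{\rm I})\twoheadrightarrow K_1(\s{Y}_{{\rm I},1})=\Z[\rr{s}_{{\rm I},1}]$ surjective, so a preimage class certainly exists; writing it as a formal difference $[p]-[q]$ of projections in some $M_N(\s{A}_{\rm I})$ and adding a trivial complement $[{\bf 1}_M-q]$ (whose $\partial_0$ vanishes because trivial projections pull back through $\imath_\ast$) one obtains a single projection $\rr{p}_{\rm I}$ with $\partial_0[\rr{p}_{\rm I}]=[\rr{s}_{{\rm I},1}]$. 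I expect this purely categorical argument to be the cleanest; a fully explicit Powers--Rieffel-type construction inside $\s{A}_{\rm I}^\infty$ is possible and would pin down a concrete $\rr{p}_{\rm I}$, but it is not required for the statement. Finally, the assembly of the $K$-groups is routine: with $\beta_{2,\ast}\equiv 0$, the hexagon \eqref{eq:6termexactseq_PV_Iwa2} degenerates into the two short exact sequences
\begin{equation*}
0\;\to\;K_j(\s{Y}_{{\rm I},1})\;\xrightarrow{\imath_\ast}\;K_j(\s{A}_{\rm I})\;\xrightarrow{\partial_j}\;K_{1-j}(\s{Y}_{{\rm I},1})\;\to\;0\;,\qquad j=0,1\;,
\end{equation*}
both of which split because the right-hand terms are free abelian. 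The splittings are realized by $[\rr{p}_{\rm I}]$ for the $K_0$-sequence (via (iii)) and by $[\rr{w}_{{\rm I},\pm}]$ for the $K_1$-sequence (via (ii), where the sign is immaterial for generating), yielding the claimed presentations of $K_0(\s{A}_{\rm I})$ and $K_1(\s{A}_{\rm I})$.
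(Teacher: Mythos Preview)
Your proposal is correct and follows essentially the same approach as the paper's proof: for (i) you verify $\alpha_{2,\ast}$ acts trivially on the generators (the paper does the equivalent computation with $\alpha_{2,\ast}^{-1}$), for (ii) you use the same isometric lifts $W_\pm=(\rr{p}_\mp+\rr{p}_0)\otimes{\bf 1}+(\rr{p}_\pm\otimes{\bf 1})V$ and the standard index formula, and for (iii) you invoke surjectivity of $\partial_0$ together with the standard picture of $K_0$ to replace a formal difference by a single projection, exactly as the paper does via \cite[Proposition 6.2.7]{wegge-olsen-93}. The only cosmetic difference is that the paper phrases (iii) directly as $[\rr{p}_{\rm I}]-M[{\bf 1}]$ with $[{\bf 1}]=[\rr{p}_-]+[\rr{p}_+]\in\mathrm{Im}(\imath_\ast)=\ker\partial_0$, which is slightly cleaner than your ``add a trivial complement'' formulation.
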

\proof For (i) it is enough to note that $\alpha_2(\rr{p}_l)=\rr{p}_l$ for $l\in\Z\cup\{\pm\}$ and
	$$\alpha_{2,*}^{-1}[\rr{s}_{\rm I,1}]\;=\;[\rr{s}_{\rm I,2}^*\rr{s}_{\rm I,1}\rr{s}_{\rm I,2}]\;=\;[(\rr{s}_{\rm I,2}^*\rr{f}_{\rm I}\rr{s}_{\rm I,2})\rr{s}_{\rm I,1}]\;=\;[\rr{f}_{\rm I}][\rr{s}_{\rm I,1}]\;=\;[\rr{s}_{\rm I,1}]\;,
	$$
	since $[\rr{f}_{\rm I}]=[{\bf 1}]$ in $K_1(\s{Y}_{\rm I,1})$. As a consequence $\beta_{2,*}={\rm Id}_*-\alpha_{2,*}^{-1}=0$.	
For (ii) let us observe  that the isometry $V=\rr{s}_{\rm I,2}\otimes v\in \s{T}_{\alpha_2}$ satisfies
$\psi((\rr{p}_+\otimes {\bf 1})V)=\rr{p}_+\rr{s}_{\rm I,2}$.	
It follows that $W:=(\rr{p}_-\otimes {\bf 1})V+(\rr{p}_0+\rr{p}_+)\otimes {\bf 1}$ is an isometry which provides a lift of 	
	$\rr{p}_-\rr{s}_{\rm I,2}+\rr{p}_0+\rr{p}_+$ in $\s{T}_{\alpha_2}$. The index map of the latter element can be computed as in the proof of Proposition \ref{prop:PV_1} and  after some computation one gets
$$
\begin{aligned}
{\rm ind}\big([\rr{p}_-\rr{s}_{\rm I,2}+\rr{p}_0+\rr{p}_+]\big)\;&=\;\varphi_*^{-1}([{\bf 1}-W^*W]-[{\bf 1} -WW^*])\\
&=\;\varphi_*^{-1}([0]-[\rr{p}_-\otimes P])\;=\;-[\rr{p}_-\otimes \pi_0]\;.
\end{aligned}
$$	
where we used $P:={\bf 1}-VV^*$ and $\varphi(\rr{p}_-\otimes \pi_0)=\rr{p}_-\otimes P$. 
After recalling that $\partial_1:=k_0^{-1}\circ{\rm ind}$, with $\kappa_0$ stabilization isomorphism, one gets  
the first equation in (ii). The derivation of the second  equation is identical.
Item (iii) follows from the fact that  $\beta_{2,*}=0$ implies the surjectivity of the map $\partial_0$ and so there must be a projection $\rr{p}_{\rm I}\in \s{A}_{\rm I}\otimes {\rm Mat}_N(\C)$ and $M\leq N$ such that
$$\partial_0([\rr{p}_{\rm I}]-M[{\bf 1}])=[\rr{s}_{{\rm I},1}]$$
(see \cite[Proposition 6.2.7]{wegge-olsen-93}). Now, since $[{\bf 1}]=[\rr{p}_{-}]+[\rr{p}_+]\in {\rm Im}(\imath_*)=\ker \partial_0$ it follows that $\partial_0[\rr{p}_{\rm I}]=[\rr{s}_{{\rm I},1}]$.
The exactness of the sequence \eqref{eq:6termexactseq_PV_Iwa2} along  with  $\beta_{2,*}=0$ implies
$K_j(\s{A}_{\rm I})=\imath_*(K_j(\s{Y}_{{\rm I},1}))\oplus\partial_j^{-1}(K_{j+1}(\s{Y}_{{\rm I},1}))$
with $j=0,1$ (mod. 2). This concludes the proof.
\qed

\medskip

Observe that the proof for the existence of the element $\rr{p}_{\rm I}$ works as well for the case $\partial_0(\rr{p}_{\rm I}')=[\rr{s}_{\rm I,1}^*]=-[\rr{s}_{\rm I,1}]$. Any projection $\rr{p}_{\rm I}$ with the property 
$$\partial_0(\rr{p}_{\rm I})\in \{ [\rr{s}_{\rm I,1}],[\rr{s}_{\rm I,1}^*] \}$$
will be called a \emph{Power-Rieffel-Iwatsuka projection} or simply a \emph{PRI-projection}. We can say a little more about $\rr{p}_{\rm I}$. From its very definition one has that 
$\exp[\rr{p}_{\rm I}]=[(\rr{s}_{{\rm I},1}-{\bf 1})\otimes \pi_{0}+\bf 1]$
where $\exp$ is the actual exponential map associated with the Toeplitz exact sequence \eqref{eq:actual_toeplitz_ext_app}.

\subsection{Six-term exact sequence for the Iwatsuka \texorpdfstring{$C^*$-}-algebra}\label{sect:6t-K-Iwatsuka}
We are now in position to study the {six-term} exact sequence
 associated with the Toeplitz extension for the Iwatsuka magnetic field \eqref{eq:ex_seq_021_0_int_mult_iwatz}. This is given by%
\begin{equation}\label{eq:6-ex-seq_Iwa}
\begin{array}{ccccc}
K_0(\s{I})&\overset{\imath_*}{\longrightarrow}          
 &K_0(\s{A}_{\rm I}) &      \overset{{\rm ev}_*}{\longrightarrow}    
  &K_0(\s{A}_{\rm bulk})\\
  &&&&\\
\mbox{\footnotesize ind}\Big\uparrow   &&&&\Big\downarrow \mbox{\footnotesize exp}\\
   &&&&\\
  K_1(\s{A}_{\rm bulk})& \underset{{\rm ev}_*}{\longleftarrow}       
 &K_1(\s{A}_{\rm I}) &   \underset{\imath_*}{\longleftarrow}        
  &K_1(\s{I})\\
\end{array}
\end{equation}
The $K$-theory of the bulk algebra is described in Proposition \ref{prop:K-bulk} and is explicitly given by
\begin{equation*}
\begin{aligned}
K_0(\s{A}_{\text{bulk}})&\;=\;\Z[({\bf 1},0)]\oplus\Z[(\rr{p}_{\theta_-},0)]\oplus\Z[(0,{\bf 1})]\oplus\Z[(0,\rr{p}_{\theta_+})]\;,\\
K_1(\s{A}_{\text{bulk}})&\;=\;\Z[(\rr{s}_{b_-,1},{\bf 1})]\oplus\Z[(\rr{s}_{b_-,2},{\bf 1})]\oplus\Z[({\bf 1},\rr{s}_{b_+,1})]\oplus\Z[({\bf 1},\rr{s}_{b_+,2})]\;,
\end{aligned}
\end{equation*}
where $\rr{p}_{\theta_\pm}$ are the the Power-Rieffel projections of the $C^*$-algebras $\s{A}_{b_\pm}$, respectively (\cf Appendix \ref{app:Ktheo NCT}).

\medskip

The description of the $K$-theory of the interface algebra follows from Proposition \ref{prop:kteor_intr1D} and Proposition \ref{prop:iwats_in_stri}. 
\begin{proposition}\label{prop:k-group-Interface}
It holds true that
$$
\begin{aligned}
K_0(\s{I})&\;=\;\Z[\rr{p}_0]\;,\qquad\;
K_1(\s{I})&\;=\;\Z[ \rr{w}_{\s{I}}]\;,
\end{aligned}
$$
where $\rr{w}_{\s{I}}:=\rr{p}_-+\rr{p}_0\rr{s}_{{\rm I},2}+\rr{p}_+={\bf 1}+\rr{p}_0(\rr{s}_{{\rm I},2}-{\bf 1})\in\s{I}^+$.
\end{proposition}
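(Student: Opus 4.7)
The plan is to leverage the unitary equivalence $\s{U}_B\,\s{I}\,\s{U}_B^{-1}=\s{C}(\n{S}^1)\otimes\s{K}(\ell^2(\Z))$ already established in Proposition \ref{prop:iwats_in_stri}. Once we are working in the target algebra, the K-groups are known: by the stability of $K$-theory together with $K_0(\s{C}(\n{S}^1))\simeq\Z[{\bf 1}]$ and $K_1(\s{C}(\n{S}^1))\simeq\Z[u]$ (where $u(k):=\expo{\ii k}$) we already have, via Proposition \ref{prop:kteor_intr1D}, that both $K_0(\s{I})$ and $K_1(\s{I})$ are isomorphic to $\Z$. The real task is therefore only to identify \emph{explicit generators} in the original picture, that is, to match $[\rr{p}_0]$ with the generator of $K_0(\s{I})$ and $[\rr{w}_{\s{I}}]$ with the generator of $K_1(\s{I})$.

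First I would compute the image of $\rr{p}_0$ under $\s{U}_B$: as already noted in the proof of Proposition \ref{prop:iwats_in_stri}, $\s{U}_B\rr{p}_0\s{U}_B^{-1}={\bf 1}\otimes \pi_0$, where $\pi_0$ is the rank-one projection onto the site $0\in\Z$ in the second tensor factor. The stability isomorphism $K_0(\s{C}(\n{S}^1))\to K_0(\s{C}(\n{S}^1)\otimes\s{K}(\ell^2(\Z)))$ is induced by the corner embedding $g\mapsto g\otimes\pi_0$ and sends the class of the unit of $\s{C}(\n{S}^1)$ to $[{\bf 1}\otimes\pi_0]$. Hence $[\rr{p}_0]$ corresponds to the generator of $K_0(\s{C}(\n{S}^1))\simeq\Z$, proving the first assertion.

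Next I would compute the image of $\rr{w}_{\s{I}}={\bf 1}+\rr{p}_0(\rr{s}_{{\rm I},2}-{\bf 1})$. Using the auxiliary operator $V_f=\expo{\ii f(\rr{n}_1)}\rr{s}_2$ introduced before Proposition \ref{prop:iwats_in_stri} and the defining property of $\s{U}_B$, one gets $\s{U}_B V_f\s{U}_B^{-1}=\expo{\ii k}\otimes{\bf 1}$ and $\s{U}_B\rr{s}_{{\rm I},2}\s{U}_B^{-1}=\expo{\ii k}\otimes \expo{-\ii f(m)}$. Since $f(0)=0$ by the very definition of $f$, the only surviving contribution on the support of $\pi_0$ is trivial, so
\[
\s{U}_B(\rr{p}_0\,\rr{s}_{{\rm I},2})\s{U}_B^{-1}\;=\;\expo{\ii k}\otimes\pi_0,
\]
and consequently
\[
\s{U}_B\,\rr{w}_{\s{I}}\,\s{U}_B^{-1}\;=\;{\bf 1}\otimes({\bf 1}-\pi_0)\;+\;u\otimes\pi_0\,.
\]
This is precisely the image of $[u]$ under the standard stability isomorphism $K_1(\s{C}(\n{S}^1))\to K_1(\s{C}(\n{S}^1)\otimes\s{K}(\ell^2(\Z)))$, which at the level of unitaries sends $u$ to $u\otimes\pi_0+{\bf 1}\otimes({\bf 1}-\pi_0)$ in the unitalization. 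Hence $[\rr{w}_{\s{I}}]$ is a generator of $K_1(\s{I})\simeq\Z$.

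The only delicate point is the bookkeeping in the second computation: one has to make sure that the twist by $\expo{-\ii f(m)}$ coming from the definition of $V_f$ does not spoil the identification, and this reduces to the normalization $f(0)=0$ which is immediate from the defining formula. Everything else is purely formal: invoking Proposition \ref{prop:iwats_in_stri}, the stability of $K$-theory, and the explicit form of the generators of $K_\ast(\s{C}(\n{S}^1))$.
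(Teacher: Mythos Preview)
Your proof is correct and follows essentially the same approach as the paper: both use the unitary equivalence of Proposition~\ref{prop:iwats_in_stri} together with the stability isomorphism and the known $K$-theory of $\s{C}(\n{S}^1)$, and the key computational step---that the twist disappears on the site $0$---is the same (you phrase it as $f(0)=0$, the paper as $V_f\rr{p}_0=\rr{s}_2\rr{p}_0$). The only cosmetic difference is direction: you push $\rr{w}_{\s{I}}$ forward through $\s{U}_B$, while the paper pulls the standard generator back through $\s{U}_B^{-1}$.
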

\proof
Let us start with the $K_0$-group. As showed in the proof of Proposition \ref{prop:kteor_intr1D}
the generator of $K_0(\s{C}(\n{S}^1))$ is the constant function 1. The group isomorphism $K_0(\s{C}(\n{S}^1))\simeq K_0(\s{C}(\n{S}^1)\otimes\s{K}(\ell^2(\Z)))$ is induced by the $C^*$-homomorphism
$\mu:\s{C}(\n{S}^1)\to \s{C}(\n{S}^1)\otimes\s{K}(\ell^2(\Z))$ defined by $\mu:g\mapsto g\otimes \pi_0$ where $\pi_0$ is the projection on $\ell^2(\Z)$ defined by $(\pi_0\phi)(m):=\delta_{m,0}\phi(m)$ \cite[Corollary 6.2.11]{wegge-olsen-93}. The result follows by observing that $\s{U}_B^{-1}(1\otimes \pi_0)\s{U}_B=\rr{p}_0$ where $\s{U}_B$ is the magnetic Bloch-Floquet transform used in Proposition \ref{prop:iwats_in_stri}.
The argument for the $K_1$-group follows a similar structure.
We already know that the generator of $K_1(\s{C}(\n{S}^1))$ is the exponential function $\expo{\ii k}$ and the isomorphism 
$K_1(\s{C}(\n{S}^1))\simeq K_1(\s{C}(\n{S}^1)\otimes\s{K}(\ell^2(\Z)))$ is induced by the same isomorphism
$\mu$ defined above \cite[Proposition 8.2.8]{rordam-larsen-laustsen-00}. However, since the $K_1$ is computed from the unitalization of the related $C^*$-algebra one needs to promote $\expo{\ii k}\otimes\pi_0$ to a unitary in $(\s{C}(\n{S}^1)\otimes\s{K}(\ell^2(\Z)))^+$. This can be done  through the map
$$
\expo{\ii k}\otimes\pi_0\;\longmapsto\;\expo{\ii k}\otimes\pi_0\;-\;1\otimes\pi_0\;+\; 1\otimes{\bf 1}
$$ 
as described in \cite[Proposition 8.1.6]{rordam-larsen-laustsen-00}. As a result one has the generator of the $K_1$-group can be identified with the class of $(\expo{\ii k}-1)\otimes\pi_0+1\otimes{\bf 1}$. Finally,  the  magnetic Bloch-Floquet transform
$$
\s{U}_B^{-1}\big((\expo{\ii k}-1)\otimes\pi_0+1\otimes{\bf 1}\big)\s{U}_B\;=\;
V_f\rr{p}_0\;-\;\rr{p}_0\;+\;{\bf 1}
$$
along with the identities $V_f\rr{p}_0=\rr{s}_{2}\rr{p}_0=\rr{s}_{{\rm I},2}\rr{p}_0$ and ${\bf 1}-\rr{p}_0=\rr{p}_-+\rr{p}_+$
provides  the desired result
\qed

\medskip

We now have all the ingredients to study the vertical homomorphisms of the diagram \eqref{eq:6-ex-seq_Iwa}. Let us start with the index map..

\begin{proposition}\label{prop:indeximage}
	The image of the generators of $K_0(\s{A}_{\text{bulk}})$
	under the map ${\rm ind}$ in diagram \eqref{eq:6-ex-seq_Iwa} are given by
	\begin{equation}\label{ind_map}
	\begin{aligned}
	\rm{ind}([(\rr{s}_{b_-,2},{\bf 1})])&\;=\;\rm{ind}([({\bf 1},\rr{s}_{b_+,2})])\;=\; 0\;,\\
	\rm{ind}([(\rr{s}_{b_-,1},{\bf 1})])&\;=\;-\rm{ind}([({\bf 1},\rr{s}_{b_+,1})])\;=\;
	[\rr{p}_{0}]\;.\\
	\end{aligned}
	\end{equation}	
	Consequently the index map  is surjective.
\end{proposition}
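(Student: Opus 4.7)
The plan is to compute each image by constructing an explicit partial isometry lift $v\in\s{A}_{\rm I}$ of the unitary generator $u\in\s{A}_{\rm bulk}$, and then invoke the standard formula $\mathrm{ind}([u])=[1-v^{\ast}v]-[1-vv^{\ast}]\in K_{0}(\s{I})$ (see e.g. \cite[Proposition 9.2.4]{rordam-larsen-laustsen-00}). The elementary building blocks for these lifts are the column projections $\rr{p}_{\pm}$, $\rr{p}_0$ (and more generally $\rr{p}_j$), which were shown in Lemma \ref{lemm:proj_1} and Corollary \ref{corol:proj_1} to lie in $\s{A}_{\rm I}$, together with the evaluations \eqref{eq:ev_01}--\eqref{eq:ev_0101} that give $\mathrm{ev}(\rr{p}_{-})=(\mathbf{1},0)$, $\mathrm{ev}(\rr{p}_{+})=(0,\mathbf{1})$, $\mathrm{ev}(\rr{p}_0)=(0,0)$.

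For the two second--direction generators I would use the lifts $v_{-}:=\rr{p}_{-}\rr{s}_{{\rm I},2}+\rr{p}_0+\rr{p}_{+}$ and $v_{+}:=\rr{p}_{-}+\rr{p}_0+\rr{p}_{+}\rr{s}_{{\rm I},2}$. Since $\rr{s}_{{\rm I},2}$ coincides with the ordinary shift in the $e_{2}$ direction (see Example \ref{Ex2:MT_Iwatsuka}), it commutes with every column projection $\rr{p}_{\pm}$, $\rr{p}_j$; therefore $v_{\pm}^{\ast}v_{\pm}=v_{\pm}v_{\pm}^{\ast}=\mathbf{1}$, i.e.\ $v_{\pm}$ are actually unitary lifts. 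Consequently $\mathrm{ind}([(\rr{s}_{b_{-},2},\mathbf{1})])=\mathrm{ind}([(\mathbf{1},\rr{s}_{b_{+},2})])=0$.

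The non--trivial case is the first direction, where the lifts fail to be unitary. I would take $w_{-}:=\rr{p}_{-}\rr{s}_{{\rm I},1}+\rr{p}_0+\rr{p}_{+}$ and $w_{+}:=\rr{p}_{-}+\rr{p}_{+}\rr{s}_{{\rm I},1}$ (both of which map to the correct elements of $\s{A}_{\rm bulk}$ under $\mathrm{ev}$). The key identity is equation \eqref{eq_p_j_bis}, which rearranges to $\rr{s}_{{\rm I},1}^{\ast}\rr{p}_{j}\rr{s}_{{\rm I},1}=\rr{p}_{j+1}$ and, by summation, yields $\rr{s}_{{\rm I},1}^{\ast}\rr{p}_{-}\rr{s}_{{\rm I},1}=\rr{p}_{-}-\rr{p}_{-1}$ and $\rr{s}_{{\rm I},1}^{\ast}\rr{p}_{+}\rr{s}_{{\rm I},1}=\rr{p}_{+}+\rr{p}_{0}$. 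A short computation then gives $w_{-}^{\ast}w_{-}=\mathbf{1}-\rr{p}_{-1}$, $w_{-}w_{-}^{\ast}=\mathbf{1}$, and $w_{+}^{\ast}w_{+}=\mathbf{1}$, $w_{+}w_{+}^{\ast}=\mathbf{1}-\rr{p}_{0}$. Hence $\mathrm{ind}([(\rr{s}_{b_{-},1},\mathbf{1})])=[\rr{p}_{-1}]$ and $\mathrm{ind}([(\mathbf{1},\rr{s}_{b_{+},1})])=-[\rr{p}_{0}]$. The Murray--von Neumann equivalence $[\rr{p}_{-1}]=[\rr{p}_{0}]$ in $K_{0}(\s{I})$, implemented by the partial isometry $\rr{p}_{0}\rr{s}_{{\rm I},1}$ (a consequence of \eqref{eq_p_j}), finishes the proof of \eqref{ind_map}.

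Surjectivity is then immediate: by Proposition \ref{prop:k-group-Interface} the group $K_{0}(\s{I})$ is generated by $[\rr{p}_{0}]$, which is already realized in the image, so $\mathrm{ind}$ is onto. The only subtle point in the whole argument is the careful bookkeeping of which column is \emph{lost} when $\rr{s}_{{\rm I},1}$ shifts the characteristic functions $\delta_{\pm}$; beyond that, the computation is essentially algebraic manipulation of mutually orthogonal projections.
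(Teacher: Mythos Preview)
Your argument is correct and follows essentially the same strategy as the paper---construct explicit partial-isometry lifts of each generator and apply the standard formula $\mathrm{ind}([u])=[{\bf 1}-v^{\ast}v]-[{\bf 1}-vv^{\ast}]$---though the paper uses the two-sided compressions $\jmath(A)$ of \eqref{eq:j_split} rather than your one-sided lifts. One small slip: the rearrangement of \eqref{eq_p_j_bis} should read $\rr{s}_{{\rm I},1}^{\ast}\rr{p}_{j}\rr{s}_{{\rm I},1}=\rr{p}_{j-1}$, not $\rr{p}_{j+1}$; your subsequent identities for $\rr{p}_{\pm}$ and the final index computations are nonetheless correct.
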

\begin{proof}
Let us construct the index map according to \cite[Definition 8.1.1]{wegge-olsen-93} for the set of generators 
$A\in\{(\rr{s}_{b_-,1},{\bf 1}),(\rr{s}_{b_-,2},{\bf 1}),({\bf 1},\rr{s}_{b_+,1}),({\bf 1},\rr{s}_{b_+,2})\}\subset\s{A}_{\rm bulk}$ of the $K_1$-group of $\s{A}_{\rm bulk}$.
Let $\jmath$ as in $\eqref{eq:j_split}$ and define the map
	$$
	w(A)\;:=\;\begin{pmatrix}
	\jmath(A)         & {\bf 1}-\jmath(A)\jmath(A)^*\\
	{\bf 1}-\jmath(A)^*\jmath(A) & \jmath(A)^*
	\end{pmatrix}\;\in\;{\rm Mat}_2(\s{A}_{\rm I})\;.
	$$
	A direct check shows that $\jmath(A)\in\s{A}_{\rm I}$ is a partial isometry for every $A$ in the generator set, indeed

	\begin{equation}\label{eq:preindex}
	\begin{aligned}
	\jmath(\rr{s}_{b_-,1},{\bf 1})\jmath(\rr{s}_{b_-,1},{\bf 1})^*&\;=\;{\bf 1}-\rr{p}_0\;,\\
	\jmath(\rr{s}_{b_-,2},{\bf 1})\jmath(\rr{s}_{b_-,2},{\bf 1})^*&\;=\;{\bf 1}-\rr{p}_0\;,\\
	\jmath({\bf 1},\rr{s}_{b_+,1})\jmath({\bf 1},\rr{s}_{b_+,1})^*&\;=\;{\bf 1}-(\rr{p}_0+\rr{p}_{1})\;,\\
	\jmath({\bf 1},\rr{s}_{b_+,2})\jmath({\bf 1},\rr{s}_{b_+,2})^*&\;=\;{\bf 1}-\rr{p}_0\;,\\
\end{aligned}
	\end{equation}
and, on the other hand,	
	\begin{equation}\label{eq:preindex_2}
	\begin{aligned}
	\jmath(\rr{s}_{b_-,1},{\bf 1})^*\jmath(\rr{s}_{b_-,1},{\bf 1})&\;=\;{\bf 1}-(\rr{p}_0+\rr{p}_{-1})\;,\\
	\jmath(\rr{s}_{b_-,2},{\bf 1})^*\jmath(\rr{s}_{b_-,2},{\bf 1})&\;=\;{\bf 1}-\rr{p}_0\;,\\
	\jmath({\bf 1},\rr{s}_{b_+,1})^*\jmath({\bf 1},\rr{s}_{b_+,1})&\;=\;{\bf 1}-\rr{p}_0\;,\\
	\jmath({\bf 1},\rr{s}_{b_+,2})^*\jmath({\bf 1},\rr{s}_{b_+,2})&\;=\;{\bf 1}-\rr{p}_0\;.
	\end{aligned}
	\end{equation}
As a consequence one can check that  $w(A)$ is a unitary operator for every generator $A$. Moreover ${\rm ev}(w(A))={\rm diag}(A,A^*)$ showing that $w(A)$ is a \emph{unitary lift} of ${\rm diag}(A,A^*)$. Finally $[{\rm diag}(A,A^*)]\simeq[{\bf 1}]$ as a class in the $K_1$-group. With all these data we can compute the index map of each generators according to 
$\text{ind}([A]):=[w(A)P_1w(A)^*]-[P_1]$ where $P_1:=\text{diag}({\bf 1},0)$. An explicit computation provides
$$
\begin{aligned}
\text{ind}([A])\;&=\;\left[\begin{pmatrix}
	\jmath(A)\jmath(A)^*   & 0\\
	0            & {\bf 1}-\jmath(A)^*\jmath(A)
	\end{pmatrix}\right]\;-\;\left[\begin{pmatrix}
	{\bf 1}  & 0\\
	0            & 0
	\end{pmatrix}\right]\\
	&=\;\left[\begin{pmatrix}
	0   & 0\\
	0            & {\bf 1}-\jmath(A)^*\jmath(A)
	\end{pmatrix}\right]\;-\;\left[\begin{pmatrix}
	{\bf 1} -\jmath(A)\jmath(A)^* & 0\\
	0            & 0
	\end{pmatrix}\right]\\
	&=\;[{\bf 1}-\jmath(A)^*\jmath(A)]-[{\bf 1} -\jmath(A)\jmath(A)^*]
\end{aligned}
$$
where the second and third equality are understood in the sense
of the $K_0$-group. The equations \eqref{ind_map} follow from the latter formula along with the computations \eqref{eq:preindex} and \eqref{eq:preindex_2} and the observation that, in view of \eqref{eq_p_j},  $\rr{p}_j$ is unitarily equivalent to $\rr{p}_0$ for every $j\in\Z$. The latter fact implies  $[\rr{p}_j]=[\rr{p}_k]$ for every pair $j,k\in\Z$ as elements of $K_0(\s{A}_{\rm I})$.
Consequently,  $\text{ind}([\rr{s}_{b_-,1},{\bf 1}])=[\rr{p}_{0}]$, and  $[\rr{p}_{0}]$ is the generators of $K_0(\s{I})$. This shows that the map ${\rm ind}$ is surjective.
\end{proof}

\begin{remark}\label{rk:i=0}
 As a consequence of Proposition \ref{prop:indeximage} and the exactness of diagram \ref{eq:6-ex-seq_Iwa} one infers that the map $\imath_*:K_0(\s{I})\to          
 K_0(\s{A}_{\rm I})$
 is just the zero map.  This implies that $[\imath_*([\rr{p}_j])]=[\rr{p}_j]=0$ as  element of $K_0(\s{A}_{\rm I})$. This fact is in agreement with the description of $K_0(\s{A}_{\rm I})$ in Theorem \ref{prop:PV_2} and can be justified by the following direct argument: From $\rr{p}_0=\rr{s}^*_{{\rm I},1}\rr{p}_+\rr{s}_{{\rm I},1}-\rr{p}_+$ one gets
 $[\rr{p}_0]=[\rr{s}^*_{{\rm I},1}\rr{p}_+\rr{s}_{{\rm I},1}]-[\rr{p}_+]=[\rr{p}_+]-[\rr{p}_+]=0$ and $[\rr{p}_0]=[\rr{p}_j]$
for every $j\in\Z$ as justified at the end of the proof of Proposition \ref{prop:indeximage}. 
\hfill $\blacktriangleleft$
\end{remark}

Now we are in position to study the exponential map of diagram \eqref{eq:6-ex-seq_Iwa}. 

\begin{proposition}\label{prop:expimage}
The map ${\rm exp}$ in diagram \eqref{eq:6-ex-seq_Iwa} is surjective. Moreover, it holds true that
\begin{equation}\label{eq:exp_form}
\begin{aligned}
{\rm exp}([({\bf 1},0)])\;&=\;{\rm exp}([(0,{\bf 1})])\;=\;[{\bf 1}]\;=\;0\;,\\
{\rm exp}([(0,\rr{p}_{\theta_+})])\;&=\;-{\rm exp}([(\rr{p}_{\theta_-},0)])\;=\;-[ \rr{w}_{\s{I}}]\;,
\end{aligned}
\end{equation}
where the additive notation for the group  $K_1(\s{I})$ is used (\cf Note \ref{note_ad_not}) and  $[\rr{w}_{\s{I}}]$ denotes the generator of $K_1(\s{I})$ defined in Proposition \ref{prop:k-group-Interface}.
\end{proposition}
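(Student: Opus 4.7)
The plan is to treat the four generators of $K_0(\s{A}_{\rm bulk})$ separately, combining direct projection lifts for the trivial classes with an application of Proposition \ref{prop:exp:map_interf} for the Power--Rieffel classes. For $[({\bf 1}, 0)]$ and $[(0, {\bf 1})]$ the computation is immediate: by Lemma \ref{lemma:ev_01}, the projections $\rr{p}_-, \rr{p}_+ \in \s{A}_{\rm I}$ are \emph{genuine} (not merely self-adjoint) lifts of the corresponding bulk classes. Since any projection has spectrum in $\{0,1\}$, one gets $\expo{-\ii 2\pi \rr{p}_\mp} = {\bf 1}$ by functional calculus, and the standard construction of the exponential map in the six-term sequence yields ${\rm exp}([({\bf 1}, 0)]) = {\rm exp}([(0, {\bf 1})]) = [{\bf 1}] = 0$.

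For $[(0, \rr{p}_{\theta_+})]$ I will realize it as a Fermi projection and apply Proposition \ref{prop:exp:map_interf}. I choose a self-adjoint $\rr{h}_+ \in \s{A}_{b_+}$ with $\rr{p}_{\theta_+} = \chi_{(-\infty, \mu)}(\rr{h}_+)$ for some Fermi energy $\mu$ in a non-trivial spectral gap $\Delta_+$ of $\rr{h}_+$, together with a constant $C > \max\sigma(\rr{h}_+)$ so that the Fermi projection of the bulk Hamiltonian $\rr{h} := (C{\bf 1}, \rr{h}_+) \in \s{A}_{\rm bulk}$ at energy $\mu$ is $(0, \rr{p}_{\theta_+})$ and $\Delta_+$ is a non-trivial bulk gap in the sense of Definition \ref{def_bulk_gap}. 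A self-adjoint lift is provided by $\hat{\rr{h}} := C (\rr{p}_- + \rr{p}_0) + \jmath_{\rm sym}(0, \rr{h}_+) \in \s{A}_{\rm I}$, where $\jmath_{\rm sym}(0, \rr{h}_+) := \tfrac{1}{2}(\jmath(0, \rr{h}_+) + \jmath(0, \rr{h}_+)^*)$ is the self-adjoint symmetrization of the splitting \eqref{eq:j_split}. Proposition \ref{prop:exp:map_interf} then yields ${\rm exp}([(0, \rr{p}_{\theta_+})]) = -[\rr{u}_{\Delta_+}]$ with $\rr{u}_{\Delta_+} := \expo{\ii 2\pi g(\hat{\rr{h}})}$ for a smooth cutoff $g$ localized inside $\Delta_+$. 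The key identification is $[\rr{u}_{\Delta_+}] = [\rr{w}_{\s{I}}]$ in $K_1(\s{I})$: under the Bloch--Floquet isomorphism of Proposition \ref{prop:iwats_in_stri}, $\s{I} \simeq \s{C}(\n{S}^1) \otimes \s{K}(\ell^2(\Z))$ and $[\rr{w}_{\s{I}}]$ corresponds to $[(\expo{\ii k}-1) \otimes \pi_0 + 1 \otimes {\bf 1}]$ (see the proof of Proposition \ref{prop:k-group-Interface}); the operator $\hat{\rr{h}}$ decomposes as a direct integral of half-plane edge Hamiltonians $\hat{\rr{h}}(k)$ on $\ell^2(\Z)$, and the net signed count of eigenvalue branches of $\hat{\rr{h}}(k)$ crossing $\mu$ as $k$ traverses $\n{S}^1$ equals the Chern number ${\rm Ch}_{b_+}(\rr{p}_{\theta_+}) = 1$, yielding $[\rr{u}_{\Delta_+}] = [\rr{w}_{\s{I}}]$.

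For $[(\rr{p}_{\theta_-}, 0)]$, instead of repeating the half-plane construction on the left I exploit the PRI projection $\rr{p}_{\rm I} \in \s{A}_{\rm I} \otimes {\rm Mat}_N(\C)$ from Theorem \ref{prop:PV_2}(iii). A rank count using $\imath_* = 0$ (Remark \ref{rk:i=0}), which forces ${\rm ev}_*$ to be injective with cokernel $K_1(\s{I}) \simeq \Z$, shows that ${\rm ev}_*[\rr{p}_{\rm I}]$ must equal $[(\rr{p}_{\theta_-}, \rr{p}_{\theta_+})]$ up to integer multiples of the trivial classes $[({\bf 1}, 0)]$ and $[(0, {\bf 1})]$. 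Since $\rr{p}_{\rm I}$ is a projection in $\s{A}_{\rm I}$, exactness of \eqref{eq:6-ex-seq_Iwa} implies ${\rm exp}({\rm ev}_*[\rr{p}_{\rm I}]) = 0$; combined with the vanishing of ${\rm exp}$ on the trivial classes, this gives ${\rm exp}[(\rr{p}_{\theta_-}, 0)] + {\rm exp}[(0, \rr{p}_{\theta_+})] = 0$, whence ${\rm exp}[(\rr{p}_{\theta_-}, 0)] = [\rr{w}_{\s{I}}]$. Surjectivity is then immediate: $[\rr{w}_{\s{I}}]$ generates $K_1(\s{I}) \simeq \Z$ by Proposition \ref{prop:k-group-Interface} and now lies in the image of ${\rm exp}$.

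The main obstacle is the spectral-flow identification $[\rr{u}_{\Delta_+}] = [\rr{w}_{\s{I}}]$, which is morally a form of the bulk-edge correspondence one is ultimately driving toward and thus carries a genuine risk of circularity with Section \ref{sect:bulk-int-Iwatsuka}. To avoid this, I would compute the cyclic-cohomology pairing $\prec [\rr{u}_{\Delta_+}], [\eta_{\s{I}}] \succ$ of \eqref{eq:chern_0dd_03} directly via the trace \eqref{eq:tra_int}, exploiting that $\hat{\rr{h}}$ differs from a \virg{fully polarized} reference model (with $\rr{p}_+$-wrapped noncommutative-torus Harper generators on the right and $C$ on the left) only by norm-controllable, interface-localized corrections, and then read off the pairing value $\pm 1$ in that model; the integrality of the pairing together with the identification $K_1(\s{I}) = \Z[\rr{w}_{\s{I}}]$ would then force $[\rr{u}_{\Delta_+}] = [\rr{w}_{\s{I}}]$ without invoking any a priori form of the bulk-edge duality.
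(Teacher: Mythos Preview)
Your treatment of $[({\bf 1},0)]$ and $[(0,{\bf 1})]$ is correct and matches the paper. The rest of the argument, however, has two genuine gaps.

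\textbf{The identification $[\rr{u}_{\Delta_+}]=[\rr{w}_{\s{I}}]$.} You correctly flag this as the crux and as carrying a risk of circularity, but neither of your proposed routes closes the gap. The spectral-flow statement \virg{the net signed count of eigenvalue crossings equals ${\rm Ch}_{b_+}(\rr{p}_{\theta_+})$} \emph{is} the bulk-edge correspondence for a half-plane model; it is precisely what Theorem~\ref{theo_main_Iw} is meant to deduce from Proposition~\ref{prop:expimage}, so invoking it here is circular. Your fallback (compute $\prec[\rr{u}_{\Delta_+}],[\eta_{\s{I}}]\succ$ directly on a reference model) is a reasonable idea in principle, but as written it is a sketch, not a proof: you have not specified the reference model, shown that the homotopy from $\hat{\rr{h}}$ to it preserves the bulk gap, or computed the pairing there. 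The paper avoids all of this by an explicit algebraic computation: it takes the concrete self-adjoint lift $\rr{q}_+=\rr{v}_+^*\rr{d}_1+\rr{d}_0\rr{p}_\geqslant+\rr{d}_1\rr{v}_+$ built directly from the Power--Rieffel data, proves by induction that $\rr{q}_+^N=(\rr{q}_+-\rr{d}_0\rr{L}\rr{p}_0)+(\rr{d}_0\rr{L})^N\rr{p}_0$, reads off $\expo{-\ii 2\pi\rr{q}_+}=\rr{p}_-+\expo{-\ii 2\pi\rr{d}_0\rr{L}}\rr{p}_0+\rr{p}_+$, and then uses an explicit homotopy $\expo{-\ii 2\pi\rr{d}_0\rr{L}}\sim\rr{s}_{{\rm I},2}^*$ inside $C^*(\rr{s}_2)$ to land exactly on $\rr{w}_{\s{I}}^*$.

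\textbf{The PRI-projection step.} Your rank-count argument does not yield what you claim. From $\imath_*=0$ on $K_0$ you get that ${\rm ev}_*:K_0(\s{A}_{\rm I})\to K_0(\s{A}_{\rm bulk})$ is injective. Writing ${\rm ev}_*[\rr{p}_{\rm I}]=a[({\bf 1},0)]+b[(0,{\bf 1})]+c[(\rr{p}_{\theta_-},0)]+d[(0,\rr{p}_{\theta_+})]$, exactness forces $\Z^4/{\rm Im}({\rm ev}_*)\simeq\Z^2/\langle(c,d)\rangle$ to embed in $K_1(\s{I})=\Z$, which gives only $\gcd(c,d)=1$, \emph{not} $c=d=1$. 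Hence ${\rm exp}\circ{\rm ev}_*=0$ yields $c\cdot{\rm exp}[(\rr{p}_{\theta_-},0)]+d\cdot{\rm exp}[(0,\rr{p}_{\theta_+})]=0$ with unknown coprime $(c,d)$, and you cannot conclude ${\rm exp}[(\rr{p}_{\theta_-},0)]=-{\rm exp}[(0,\rr{p}_{\theta_+})]$. (The identification ${\rm ev}_*[\rr{p}_{\rm I}]=[(\rr{p}_{\theta_-},\rr{p}_{\theta_+})]+M[{\bf 1}]$ that you appear to have in mind is established in Remark~\ref{rk:betterK_theo} \emph{after} Proposition~\ref{prop:expimage}, using its conclusion.) The paper instead repeats the explicit-lift computation for $(\rr{p}_{\theta_-},0)$ with a lift $\rr{q}_-$ supported on $\rr{p}_-$, obtains $\expo{-\ii 2\pi\rr{q}_-}=\expo{-\ii 2\pi\rr{d}_0'\rr{L}'}\rr{p}_{-1}+({\bf 1}-\rr{p}_{-1})$, uses a second explicit homotopy to reach $\rr{s}_{{\rm I},2}\rr{p}_{-1}+({\bf 1}-\rr{p}_{-1})$, and finally conjugates by an involution in $\s{I}^+$ to move $\rr{p}_{-1}$ to $\rr{p}_0$ and recover $[\rr{w}_{\s{I}}]$.
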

\proof
The surjectivity of the exponential map can be deduced directly by the exactness of the diagram \eqref{eq:6-ex-seq_Iwa}. Since ${\rm Ker}(\exp)\simeq K_0(\s{A}_{\rm I})\simeq\Z^3$ and $K_0(\s{A}_{\rm bulk})\simeq\Z^4$ it follows that there is an element $[q]\in K_0(\s{A}_{\rm bulk})$ such that ${\rm exp}([q])=m[ \rr{p}_-+\rr{s}_{{\rm I},2}\rr{p}_0+\rr{p}_+]\in K_1(\s{I})$ for some $m\in\Z\setminus 0$.
However, by observing that $K_1(\s{A}_{\rm I})\simeq\Z^3$ is torsion-free one infers that $m=\pm 1$ are the only admissible values. In both cases the exponential map turns out to be surjective. Now we can prove formulas \eqref{eq:exp_form}. The construction of the exponential map is described in 
 \cite[Definition 9.3.1 \& Exercise 9.E]{wegge-olsen-93}.
The first step is to construct appropriate lifts of the representatives of the elements of the group $K_0(\s{A}_{\rm bulk})$. Let us start with the two generators $({\bf 1},0)$ and $(0,{\bf 1})$. From Lemma \ref{lemma:ev_01} we get that suitable self-adjoint lifts are given by
${\rm lift}({\bf 1},0):=\rr{p}_-$ and ${\rm lift}(0,{\bf 1}):=\rr{p}_+$. Moreover, since $\rr{p}_\pm$ are genuine projections one gets $\expo{-\ii 2\pi \rr{p}_\pm}={\bf 1}\in\s{I}^+$. As a consequence, one gets the first equation in \eqref{eq:exp_form}. For the second set of equations we need to construct explicitly the element $[q]$ introduced abstractly above. We will follows quite closely the strategy in 
\cite[pp. 114-116]{pimsner-voiculescu-exact-seq}.
Let us start with the Power-Rieffel projection (\cf Appendix \ref{app:Ktheo NCT})
$$
\rr{p}_{\theta_+}\;=\;\rr{s}_{b_+,1}^*\;\rr{d}_1\;+\;\rr{d}_0\;+\;\rr{d}_1\;\rr{s}_{b_+,1}\;\in\;\s{A}_{b_+}
$$
where $\rr{d}_1:=g(\rr{s}_{2})$ and $\rr{d}_0:=f(\rr{s}_{2})$
are self-adjoint elements of $\s{A}_{b_+}\cap \s{A}_{\rm I}$
in view of $\rr{s}_{2}=\rr{s}_{b_+,2}=\rr{s}_{{\rm I},2}$.
Consider the self-adjoint lift of $(0,\rr{p}_{\theta_+})$ given by
$$
\rr{q}_{+}\;=\;\rr{v}_{+}^*\;\rr{d}_1\;+\;\rr{d}_0\rr{p}_\geqslant\;+\;\rr{d}_1\;\rr{v}_{+}\;,
$$
where  $\rr{v}_+:=\rr{s}_{{\rm I},1}\rr{p}_\geqslant=\rr{s}_{b_+,1}\rr{p}_\geqslant$ and $\rr{p}_\geqslant:=\rr{p}_0+\rr{p}_+$. It is worth remembering that $[\rr{d}_i,\rr{p}_0]=[\rr{d}_i,\rr{p}_+]=0$ for $i=0,1$.
A direct computation shows that
\begin{equation}\label{eq:lif_proj}
\rr{q}_{+}^2\;=\;\rr{q}_{+}\;-\;\rr{d}_1^2\rr{p}_0\;=\;\rr{q}_{+}\;+\;(\rr{d}_0^2-\rr{d}_0)\rr{L} \rr{p}_0\;.
\end{equation}
The first equality in \eqref{eq:lif_proj} is justified by the relations 
$$
\begin{aligned}
\rr{d}_1\rr{v}_{+}\rr{d}_1\rr{v}_{+}\;&=\;
\rr{v}_{+}(\rr{s}_{b_+,1}^*\rr{d}_1\rr{s}_{b_+,1}\rr{d}_1)\rr{v}_{+}\;=\;0\;,\\
\rr{d}_0\rr{p}_\geqslant\rr{d}_1\rr{v}_{+}+\rr{d}_1\rr{v}_{+}\rr{d}_0\rr{p}_\geqslant\;
&=\;(\rr{d}_0\rr{d}_1+\rr{d}_1\rr{s}_{b_+,1}\rr{d}_0\rr{s}_{b_+,1}^*)\rr{v}_{+}\;=\;\rr{d}_1\rr{v}_{+}\;,\\
\rr{d}_0^2\rr{p}_\geqslant+\rr{v}_{+}^*\rr{d}_1\rr{d}_1\rr{v}_{+}+
\rr{d}_1\rr{v}_{+}\rr{v}_{+}^*\rr{d}_1\;&=\;
\rr{p}_\geqslant(\rr{d}_0^2+\rr{s}_{b_+,1}^*\rr{d}_1^2\rr{s}_{b_+,1}+\rr{d}_1^2)\rr{p}_\geqslant-\rr{d}_1^2\rr{p}_0\\
&=\;\rr{d}_0\rr{p}_\geqslant-\rr{d}_1^2\rr{p}_0
\end{aligned}
$$
deduced from \eqref{eq:conPRproj}. The second equality in 
\eqref{eq:lif_proj} follows from \eqref{eq:ulteram}
where $\rr{L}:=\rr{L}(\rr{d}_1)$ is the  support projection of $\rr{d}_1$
(in the von Neumann algebra generated by $\s{A}_{b_+}$).
An inductive argument, based on the identities
$$
\rr{q}_{+}\rr{p}_0\;=\;\rr{d}_0\rr{p}_1+\rr{d}_1\rr{s}_{b_+,1}\rr{p}_0\;,\qquad \rr{d}_1\rr{s}_{b_+,1}\rr{L}\;=\;0
$$
and the commutation relations $[\rr{L},\rr{d}_i]=0=[\rr{L},\rr{p}_0]$,
provides
\begin{equation}\label{eq:lif_proj-2}
\rr{q}_{+}^N\;=\;\rr{q}_{+}\;+\;(\rr{d}_0^N-\rr{d}_0)\rr{L} \rr{p}_0\;=\;
(\rr{q}_{+}-\rr{d}_0\rr{L} \rr{p}_0)\;+\;(\rr{d}_0\rr{L})^N \rr{p}_0\;.
\end{equation}
Equation \eqref{eq:lif_proj-2} facilitates the computation of the exponential of  $\rr{q}_{+}$. Indeed, one immediately gets
$$
\begin{aligned}
\expo{-\ii 2\pi \rr{q}_+}\;&=\;\sum_{N=0}^{+\infty}\frac{(-\ii 2\pi)^N}{N!}\rr{q}_+^N\\
&=(\expo{-\ii 2\pi}-1)(\rr{q}_+-\rr{d}_0\rr{L}\rr{p}_0)\;+\;\expo{-\ii 2\pi \rr{d}_0\rr{L}}\rr{p}_0\;+\;({\bf 1}-\rr{p}_0)\\
&=\rr{p}_-\;+\;\expo{-\ii 2\pi \rr{d}_0\rr{L}}\rr{p}_0\;+\;\rr{p}_+\;.
\end{aligned}
$$ 
Finally,  by using the homotopy   $\expo{-\ii 2\pi \rr{d}_0\rr{L}}\sim\rr{s}_{b_+,2}^*=\rr{s}_{{\rm I},2}^*$ described in Lemma \ref{lemma:homot_1} one obtains
${\rm exp}([(0,\rr{p}_{\theta_+})])=-[ \rr{p}_-+\rr{s}_{{\rm I},2}\rr{p}_0+\rr{p}_+]$.
The proof for $(\rr{p}_{\theta_-},0)$ proceeds in a similar way by considering the the Power-Rieffel projection\footnote{Observe that the set of self-adjoint operators $\{\rr{d}_0,\rr{d}_1\}$ which defines $\rr{p}_{\theta_+}$ is in principle different from the set of   self-adjoint operators $\{\rr{d}_0',\rr{d}_1'\}$ which defines $\rr{p}_{\theta_-}$.}
$$\rr{p}_{\theta_-}\;=\;\rr{s}_{b_-,1}^*\;\rr{d}_1'\;+\;\rr{d}'_0\;+\;\rr{d}'_1\;\rr{s}_{b_-,1}\;\in\;\s{A}_{b_-}$$
and the lift 
$$\rr{q}_{-}\;=\;\rr{v}_{-}^*\;\rr{d}_1'\;+\;\rr{d}_0'\rr{p}_-\;+\;\rr{d}_1'\;\rr{v}_{-}\;,$$
where  $\rr{v}_-:=\rr{p}_-\rr{s}_{{\rm I},1}=\rr{p}_-\rr{s}_{b_-,1}$. This time, a direct computation provides
$$
\rr{q}_{-}^2\;=\;\rr{q}_{-}\;-\;(\rr{s}_{b_-,1}^*\rr{d}_1'\rr{s}_{b_-,1})^2\rr{p}_{-1}\;=\;\rr{q}_{-}\;+\;({\rr{d}_0'}^2-\rr{d}_0')\rr{L}' \rr{p}_{-1}\;.
$$
where now $\rr{L}':=\rr{L}'(\rr{s}_{b_-,1}^*\rr{d}_1'\rr{s}_{b_-,1})$ is the  support projection of $\rr{s}_{b_-,1}^*\rr{d}_1'\rr{s}_{b_-,1}$. 
After an induction one gets
$$
\rr{q}_{-}^N\;=\;\rr{q}_{-}\;+\;({\rr{d}'_0}^N-\rr{d}_0')\rr{L}' \rr{p}_{-1}\;=\;
(\rr{q}_{-}-\rr{d}_0'\rr{L}' \rr{p}_{-1})\;+\;(\rr{d}_0'\rr{L}')^N \rr{p}_{-1}\;.
$$
and the exponential of $\rr{q}_{-}$ is given by 
$$
\expo{-\ii 2\pi \rr{q}_-}\;=\;\expo{-\ii 2\pi \rr{d}_0'\rr{L}'}\rr{p}_{-1}\;+\;({\bf 1}-\rr{p}_{-1})\;.
$$
The homotopy argument  $\expo{-\ii 2\pi \rr{d}_0'\rr{L}'}\simeq\rr{s}_{b_-,2}=\rr{s}_{{\rm I},2}$ provided in Lemma \ref{lemma:homot_2} provides 
${\rm exp}([(\rr{p}_{\theta_-},0)])=[ \rr{s}_{{\rm I},2}\rr{p}_{-1}+({\bf 1}-\rr{p}_{-1})]$.
To finish the proof, let us consider the operator $\rr{r}:= \rr{s}_{{\rm I},1}\rr{p}_{-1}+ \rr{s}_{{\rm I},1}^*\rr{p}_0+({\bf 1}-\rr{p}_{-1}-\rr{p}_{0})$. This is an unitary involution in $\s{I}^+$, \ie $\rr{r}=\rr{r}^{-1}=\rr{r}^*$. 
This implies that $[\rr{r}]=[{\bf 1}]$ is the trivial element of $K_1(\s{I})\simeq\Z$ which is torsion-free. As a consequence
$$
\begin{aligned}
{[ \rr{s}_{{\rm I},2}\rr{p}_{-1}+({\bf 1}-\rr{p}_{-1})]}\;&=\;[\rr{r}]+[ \rr{s}_{{\rm I},2}\rr{p}_{-1}+({\bf 1}-\rr{p}_{-1})]+[\rr{r}]\\
&=\;[ \rr{r}(\rr{s}_{{\rm I},2}\rr{p}_{-1}+({\bf 1}-\rr{p}_{-1}))\rr{r}]\\
&=\;[ \expo{\ii b_0}\rr{s}_{{\rm I},2}\rr{p}_{0}+({\bf 1}-\rr{p}_{0})]\\
&=\;[ \rr{s}_{{\rm I},2}\rr{p}_{0}+({\bf 1}-\rr{p}_{0}))]\\
\end{aligned}
$$
where we used  $\rr{r}\rr{p}_{-1}\rr{r}=\rr{p}_{0}$,  $\rr{r}\rr{s}_{{\rm I},2}\rr{p}_{-1}\rr{r}=f_B\rr{s}_{{\rm I},2}\rr{p}_0=\expo{\ii b_0}\rr{s}_{{\rm I},2}\rr{p}_0$ and 
the fact that $\expo{\ii b_0}\rr{s}_{{\rm I},2}$ is connected to 
$\rr{s}_{{\rm I},2}$ by the  homotopy $[0,1]\ni t\mapsto \expo{\ii (1-t) b_0}\rr{s}_{{\rm I},2}$.
\qed

\medskip

The surjectivity of the index map (Proposition \ref{prop:indeximage}) and of the exponential map (Proposition \ref{prop:expimage}) implies that the two maps $\imath_*$ in the diagram \eqref{eq:6-ex-seq} are just the zero maps. After replacing $\imath_*=0$ in  \eqref{eq:6-ex-seq} one obtains the  short exact sequences
$$
\begin{aligned}
0&\;\stackrel{}{\longrightarrow}\;K_0(\s{A}_{\rm I})\;\stackrel{{\rm ev}_\ast}{\longrightarrow}\;K_0(\s{A}_{\rm bulk})\;\stackrel{{\rm exp}}{\longrightarrow}\;K_1(\s{I})\;\stackrel{}{\longrightarrow}\;0\;,\\
0&\;\stackrel{}{\longrightarrow}\;K_1(\s{A}_{\rm I})\;\stackrel{{\rm ev}_\ast}{\longrightarrow}\;K_1(\s{A}_{\rm bulk})\;\stackrel{{\rm ind}}{\longrightarrow}\;K_0(\s{I})\;\stackrel{}{\longrightarrow}\;0\;.
\end{aligned}
$$
As a result, one 
gets further information about the structure of the  $K$-theory of the  Iwatsuka $C^*$-algebra.
\begin{theorem}[$K$-theory of the Iwatsuka $C^*$-algebra II]\label{prop:PV_2_II}
It holds true that
$$
\begin{aligned}
K_0(\s{A}_{\rm bulk})\;&=\;{\rm ev}_*\big(K_0(\s{A}_{\rm I})\big)\;\oplus\;\psi_{\rm exp}\big(K_1(\s{I})\big)\;,\\
K_1(\s{A}_{\rm bulk})\;&=\;{\rm ev}_*\big(K_1(\s{A}_{\rm I})\big)\;\oplus\;\psi_{\rm ind}\big(K_0(\s{I})\big)\;
\end{aligned}
$$
where $\psi_{\rm exp}$ and $\psi_{\rm ind}$ are suitable lifts of the exponential map and of the index map, respectively.
\end{theorem}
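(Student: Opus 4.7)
My plan is to obtain the two decompositions by applying a splitting argument to the short exact sequences displayed immediately before the theorem statement. First I would invoke Proposition \ref{prop:k-group-Interface}, which identifies $K_0(\s{I})\simeq\Z[\rr{p}_0]$ and $K_1(\s{I})\simeq\Z[\rr{w}_{\s{I}}]$ as free abelian groups of rank one. Since every short exact sequence of abelian groups with free quotient splits, both of the short exact sequences
$$0\;\longrightarrow\;K_0(\s{A}_{\rm I})\;\stackrel{{\rm ev}_*}{\longrightarrow}\;K_0(\s{A}_{\rm bulk})\;\stackrel{\rm exp}{\longrightarrow}\;K_1(\s{I})\;\longrightarrow\;0\;,$$
$$0\;\longrightarrow\;K_1(\s{A}_{\rm I})\;\stackrel{{\rm ev}_*}{\longrightarrow}\;K_1(\s{A}_{\rm bulk})\;\stackrel{\rm ind}{\longrightarrow}\;K_0(\s{I})\;\longrightarrow\;0\;$$
admit right inverses, i.e.\ group homomorphisms $\psi_{\rm exp}\colon K_1(\s{I})\to K_0(\s{A}_{\rm bulk})$ and $\psi_{\rm ind}\colon K_0(\s{I})\to K_1(\s{A}_{\rm bulk})$ satisfying ${\rm exp}\circ\psi_{\rm exp}={\rm Id}_{K_1(\s{I})}$ and ${\rm ind}\circ\psi_{\rm ind}={\rm Id}_{K_0(\s{I})}$.

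Next, since the interface $K$-groups are cyclic, prescribing each splitting reduces to a single choice of preimage for the respective generator. Propositions \ref{prop:expimage} and \ref{prop:indeximage} supply natural candidates: I would set
$$\psi_{\rm exp}\big([\rr{w}_{\s{I}}]\big)\;:=\;-[(0,\rr{p}_{\theta_+})]\;,\qquad \psi_{\rm ind}\big([\rr{p}_0]\big)\;:=\;[(\rr{s}_{b_-,1},{\bf 1})]\;,$$
the former being consistent with ${\rm exp}([(0,\rr{p}_{\theta_+})])=-[\rr{w}_{\s{I}}]$ as computed in Proposition \ref{prop:expimage}, and the latter with ${\rm ind}([(\rr{s}_{b_-,1},{\bf 1})])=[\rr{p}_0]$ as computed in Proposition \ref{prop:indeximage}. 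On cyclic sources these prescriptions extend uniquely to group homomorphisms, and they satisfy the splitting relations by construction.

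Once such splittings are fixed, the direct sum decompositions stated in the theorem follow from the elementary fact that a split short exact sequence $0\to A\to B\to C\to 0$ realises $B$ as the internal direct sum of the image of $A$ and the image of a chosen splitting of $C$. Applied to the two sequences recalled above, this identification yields exactly
$K_0(\s{A}_{\rm bulk})={\rm ev}_*(K_0(\s{A}_{\rm I}))\oplus\psi_{\rm exp}(K_1(\s{I}))$ and $K_1(\s{A}_{\rm bulk})={\rm ev}_*(K_1(\s{A}_{\rm I}))\oplus\psi_{\rm ind}(K_0(\s{I}))$, as required.

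I do not anticipate a substantive technical obstacle: the theorem is a formal consequence of the results already proved, and the only point that requires care is that the lifts $\psi_{\rm exp}$ and $\psi_{\rm ind}$ are non-canonical. Any two splittings differ by a homomorphism taking values in ${\rm ev}_*(K_\bullet(\s{A}_{\rm I}))$, and this ambiguity is precisely why the theorem statement refers to ``suitable'' lifts rather than prescribing them uniquely.
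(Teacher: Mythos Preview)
Your proposal is correct and follows essentially the same approach as the paper: both arguments exploit the freeness of the quotient $K_j(\s{I})\simeq\Z$ to split the short exact sequences, the paper phrasing this via ${\rm Ext}_\Z(\Z,\Z^3)=0$ while you invoke the equivalent elementary fact that a free quotient guarantees a splitting. Your version is in fact slightly more detailed, since you exhibit explicit choices for $\psi_{\rm exp}$ and $\psi_{\rm ind}$ drawn from Propositions~\ref{prop:expimage} and~\ref{prop:indeximage}, whereas the paper merely asserts their existence and defers the explicit description to the subsequent remark.
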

\proof
The two  short exact sequences are of the form 
$$
0\;\stackrel{}{\longrightarrow}\;\Z^3\;\stackrel{\alpha}{\longrightarrow}\;\Z^4\;\stackrel{\beta}{\longrightarrow}\;\Z\;\stackrel{}{\longrightarrow}\;0\\
$$
meaning that $\Z^4$ is an abelian extension of $\Z$ by $\Z^3$.
The possible extensions are classified by ${\rm Ext}_\Z(\Z,\Z^3)=0$ \cite[Chapter III]{hilton-stammbach-97}, meaning that only the trivial extension is possible. This in particular ensures the existence of the lifts $\psi_{\rm exp}$ and $\psi_{\rm ind}$.
\qed

\begin{remark}\label{rk:betterK_theo}
We can provide a more precise presentation of $K_1(\s{A}_{\rm bulk})$ by combining Theorem \ref{prop:PV_2_II} with the computation of the map ${\rm ev}_*$ and Proposition \ref{prop:indeximage}. One gets that
$$
\begin{aligned}
{\rm ev}_*\big(K_1(\s{A}_{\rm I})\big)\;&=\;\Z[(\rr{s}_{b_-,2},{\bf 1})]\;+\;\Z[({\bf 1},\rr{s}_{b_+,2})]\;+\;\Z[(\rr{s}_{b_-,1},\rr{s}_{b_+,1})]\;,\\
\psi_{\rm ind}\big(K_0(\s{I})\big)\;&=\;\Z[(\rr{s}_{b_-,1},{\bf 1})]\;,
\end{aligned}
$$
where $[(\rr{s}_{b_-,1},\rr{s}_{b_+,1})]=[(\rr{s}_{b_-,1},{\bf 1})]+[({\bf 1},\rr{s}_{b_+,1})]$ in the sense of the $K_1$-group and the (non-unique) lift $\psi_{\rm ind}$ has been chosen as  
$\psi_{\rm ind}([\rr{p}_0]):=[(\rr{s}_{b_-,1},{\bf 1})]$. A similar analysis for $K_0(\s{A}_{\rm bulk})$ provides
$$
\begin{aligned}
{\rm ev}_*\big(K_0(\s{A}_{\rm I})\big)\;&=\;\Z[({\bf 1},0)]\;+\;\Z[(0,{\bf 1})]\;+\;\Z[(\rr{p}_{\theta_-},\rr{p}_{\theta_+})]\;,\\
\psi_{\rm exp}\big(K_1(\s{I})\big)\;&=\;\Z[(0,\rr{p}_{\theta_+})]\;,
\end{aligned}
$$
where the (non-unique) lift $\psi_{\rm expo}$  is defined by  
$\psi_{\rm expo}([ \rr{p}_-+\rr{s}_{{\rm I},2}\rr{p}_0+\rr{p}_+]):=[(0,\rr{p}_{\theta_+})]$.
Finally, we are in position to say something more about  the 
Power-Rieffel-Iwatsuka projection $\rr{p}_{\rm I}\in \s{A}_{\rm I}\otimes{\rm Mat}_N(\C)$ introduced short after
Theorem \ref{prop:PV_2}. First consider $\rr{p}_{\rm I}\in\s{A}_{\rm I}\otimes\text{Mat}_N(\C)$ and $I_M$ the identity matrix in $\text{Mat}_M(\s{A}_{\rm bulk})$ with  $M\leq N$, such that
$${\rm ev}_*([\rr{p}_{\rm I}]-[I_M])\;=\;[(\rr{p}_{\theta_-},\rr{p}_{\theta_+})]\;.$$
This relation is satisfied in view of the surjectivity of $\rm ev _*$ and the standard picture of  $K_0$-group \cite[Proposition 6.2.7]{wegge-olsen-93}. It follows that
$$ {\rm ev}_*([\rr{p}_{\rm I}])\;=\;[(\rr{p}_{\theta_-},\rr{p}_{\theta_+})]+M[\bf 1]\;,$$
and consequently $\{[\rr{p}_-],[\rr{p}_+],[\rr{p}_{\rm I}]\}$ is a set of generators for $K_0(\s{A}_{\rm I})$. The fact that $[\rr{p}_{\rm I}]$ is the third generator of $K_0(\s{A}_I)$ can be used in the six-term exact sequence \ref{eq:6termexactseq_PV_Iwa2} which provides
$\partial_0[\rr{p}_{\rm I}]\in\{ \pm[\rr{s}_{\rm I,1}] \}$,
showing that $\rr{p}_{\rm I}$ is actually a \emph{PRI-projection}.
It is interesting to note that even though neither $[(\rr{p}_{\theta_-},0)]$ nor $[(0,\rr{p}_{\theta_+})]$ can be lifted into a projection, the existence of the \emph{PRI-projection} implies that the matrix ${\rm diag}((\rr{p}_{\theta_-},\rr{p}_{\theta_+}),I_M)\in\text{Mat}_{M+1}(\s{A}_{\rm bulk})$, can actually be lifted into a \emph{PRI-projection}.
\hfill $\blacktriangleleft$
 \end{remark}

\subsection{Bulk-interface correspondence for the Iwatsuka \texorpdfstring{$C^*$-}-algebra}\label{sect:bulk-int-Iwatsuka}
Let us start with a preliminary result  which is a direct consequence of Proposition \ref{prop:expimage}. 
\begin{lemma}\label{lemma:bId_I}
Let 
$\rr{p}=(\rr{p}_-,\rr{p}_+)\in\s{A}_{\rm bulk}$ be a projection and  $[\rr{p}]\in K_0(\s{A}_{\rm bulk})$ the related class in the $K_0$-group.
Let $N_\pm:={\rm Ch}_{b_\pm}(\rr{p}_{\pm})\in\Z$ be the Chern numbers of $\rr{p}_{\pm}$ defined by \eqref{eq:chern_03}. Then, 
$$
{\rm exp}([\rr{p}])\;=\;(N_--N_+)[\rr{w}_{\s{I}}]\;
$$
where $[\rr{w}_{\s{I}}]$ is the generator of $K_1(\s{I})$ defined in Proposition \ref{prop:k-group-Interface}.
\end{lemma}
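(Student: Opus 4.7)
The plan is to reduce the statement to the formulas already established in Proposition \ref{prop:expimage} by expanding $[\rr{p}]$ in terms of the generators of $K_0(\s{A}_{\rm bulk})$ and then using the fact that the Chern number is a $K$-theoretic homomorphism whose values on those generators are easy to read off.

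First I would recall that, by Proposition \ref{prop:K-bulk} together with the explicit description of $K_0$ for the noncommutative torus in Appendix \ref{app:Ktheo NCT}, the group $K_0(\s{A}_{\rm bulk}) \simeq \Z^4$ is freely generated by
\[
[({\bf 1},0)]\;,\quad [(\rr{p}_{\theta_-},0)]\;,\quad [(0,{\bf 1})]\;,\quad [(0,\rr{p}_{\theta_+})]\;.
\]
Hence there exist unique integers $a,b,c,d$ such that
\[
[\rr{p}]\;=\;a\,[({\bf 1},0)]\;+\;b\,[(\rr{p}_{\theta_-},0)]\;+\;c\,[(0,{\bf 1})]\;+\;d\,[(0,\rr{p}_{\theta_+})]\;.
\]
By additivity of ${\rm exp}$ together with the four formulas proved in Proposition \ref{prop:expimage} (two vanishing and two equal to $\pm[\rr{w}_{\s{I}}]$) one immediately obtains
\[
{\rm exp}([\rr{p}])\;=\;(b-d)\,[\rr{w}_{\s{I}}]\;.
\]

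Next I would identify $b$ and $d$ with the Chern numbers $N_-$ and $N_+$. Because $\xi_{b_\pm}$ defined in \eqref{eq:chern_02} is a cyclic $2$-cocycle, the Chern number \eqref{eq:chern_01}--\eqref{eq:chern_03} descends to a group homomorphism ${\rm Ch}_{b_\pm}\colon K_0(\s{A}_{b_\pm})\to\Z$. A direct computation from \eqref{eq:deriv_01} gives ${\rm Ch}_{b_\pm}([{\bf 1}])=0$, since $\nabla_j({\bf 1})=0$. It remains to verify that ${\rm Ch}_{b_\pm}([\rr{p}_{\theta_\pm}])=1$; this is the classical computation of the pairing of the canonical $2$-cocycle of the noncommutative torus with the Power-Rieffel projection (recalled in Appendix \ref{app:Ktheo NCT}), and the normalization chosen in \eqref{eq:chern_01} produces exactly the integer $1$. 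Applying ${\rm Ch}_{b_\pm}$ to the decompositions $[\rr{p}_-]=a[{\bf 1}]+b[\rr{p}_{\theta_-}]$ and $[\rr{p}_+]=c[{\bf 1}]+d[\rr{p}_{\theta_+}]$ therefore yields $N_-=b$ and $N_+=d$.

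Combining these two steps gives ${\rm exp}([\rr{p}])=(N_--N_+)[\rr{w}_{\s{I}}]$, as desired. The conceptual work is already packaged in Proposition \ref{prop:expimage} and in the classification of $K_0(\s{A}_b)$, so the only real point requiring care is the sign convention ensuring that ${\rm Ch}_{b_\pm}([\rr{p}_{\theta_\pm}])=+1$ (and not $-1$); this is the step where I expect the main bookkeeping obstacle, and it must be compatible with the signs that were fixed in the two nontrivial identities of \eqref{eq:exp_form}.
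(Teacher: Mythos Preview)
Your proposal is correct and follows essentially the same route as the paper: decompose $[\rr{p}]$ in the basis $\{[({\bf 1},0)],[(\rr{p}_{\theta_-},0)],[(0,{\bf 1})],[(0,\rr{p}_{\theta_+})]\}$, identify the coefficients of the Power--Rieffel classes with the Chern numbers via the pairing recalled in Appendix~\ref{app:Ktheo NCT}, and then apply the homomorphism property of ${\rm exp}$ together with the formulas of Proposition~\ref{prop:expimage}. The only difference is that you spell out explicitly why ${\rm Ch}_{b_\pm}([{\bf 1}])=0$ and ${\rm Ch}_{b_\pm}([\rr{p}_{\theta_\pm}])=1$, while the paper simply invokes the discussion at the end of Appendix~\ref{app:Ktheo NCT}.
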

\proof
 In terms of the generators of $K_0(\s{A}_{\rm bulk})$ one has that
 $$
[\rr{p}]\;=\; M_-\;[({\bf 1},0)]\;+\;M_+\;[(0,{\bf 1})]\;+\;N_-\;[(\rr{p}_{\theta_-},0)]\;+\;N_+\;[(0,\rr{p}_{\theta_+})]
 $$
 with $M_\pm,N_\pm\in\Z$ suitable integers. The discussion at the end of Appendix  \ref{app:Ktheo NCT}  justifies $N_\pm:={\rm Ch}_{b_\pm}(\rr{p}_{\pm})$. Finally, by using  that the map ${\rm exp}$ is a group homomorphism along with  formulas \eqref{eq:exp_form}, 
one gets the result.
\qed

\medskip

For the next result we need  the winding number $W_{\s{I}}$ defined by \eqref{eq:wind_numb}. The derivation and the trace on  $\s{I}$,
needed to build $W_{\s{I}}$, are described in Section \ref{sect:interf_iwatsuka_MT}
\begin{lemma}\label{lemma:bId_II}
Let $\rr{w}_{\s{I}}\in\s{I}^+$ be the unitary operator defined in Proposition \ref{prop:k-group-Interface}. Then, it holds true that
$$
W_{\s{I}}(\rr{w}_{\s{I}})\;=\;1
$$
\end{lemma}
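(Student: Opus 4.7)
The plan is to compute $W_{\s{I}}(\rr{w}_{\s{I}})$ directly, exploiting the very simple commutation relations enjoyed by the building blocks $\rr{p}_0$ and $\rr{s}_{{\rm I},2}$ of $\rr{w}_{\s{I}}=\mathbf{1}+\rr{p}_0(\rr{s}_{{\rm I},2}-\mathbf{1})$. The key observations to be used are that $\rr{s}_{{\rm I},2}$ is just the ordinary shift $\rr{s}_2$ (cf.\ Example \ref{Ex2:MT_Iwatsuka}), hence it commutes with $\rr{p}_0$ (which is a multiplication operator depending only on $n_1$), and that $[\rr{s}_{{\rm I},2},\rr{n}_2]=-\rr{s}_{{\rm I},2}$, as a direct application on basis vectors $|n_1,n_2\rangle$ shows.

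First, I would compute the derivation of $\rr{w}_{\s{I}}$ using the interface derivation $\nabla_{\s{I}}=\ii\,[\,\cdot\,,\rr{n}_2]$ introduced in Section \ref{sect:interf_iwatsuka_MT}. Since $\rr{p}_0$ commutes with $\rr{n}_2$ and with $\rr{s}_{{\rm I},2}$, one finds
$$
\nabla_{\s{I}}(\rr{w}_{\s{I}})\;=\;\ii\,\rr{p}_0\,[\rr{s}_{{\rm I},2},\rr{n}_2]\;=\;-\ii\,\rr{p}_0\,\rr{s}_{{\rm I},2}\;.
$$
Next, noticing that $\rr{w}_{\s{I}}^*=\mathbf{1}+\rr{p}_0(\rr{s}_{{\rm I},2}^*-\mathbf{1})$ and using $\rr{s}_{{\rm I},2}^*\rr{p}_0\rr{s}_{{\rm I},2}=\rr{p}_0$ (which follows from the commutation of $\rr{p}_0$ with $\rr{s}_{{\rm I},2}$ and the unitarity of $\rr{s}_{{\rm I},2}$), I would expand
$$
\rr{w}_{\s{I}}^*\,\nabla_{\s{I}}(\rr{w}_{\s{I}})\;=\;-\ii\,\rr{p}_0\rr{s}_{{\rm I},2}\;-\;\ii\,\rr{p}_0(\rr{s}_{{\rm I},2}^*-\mathbf{1})\rr{p}_0\rr{s}_{{\rm I},2}\;=\;-\ii\,\rr{p}_0\rr{s}_{{\rm I},2}-\ii(\rr{p}_0-\rr{p}_0\rr{s}_{{\rm I},2})\;=\;-\ii\,\rr{p}_0\;.
$$

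Finally, by definition of the interface winding number \eqref{eq:wind_numb} and of the interface trace \eqref{eq:tra_int},
$$
W_{\s{I}}(\rr{w}_{\s{I}})\;=\;\ii\,\bb{T}_{\s{I}}\big(\rr{w}_{\s{I}}^*\nabla_{\s{I}}(\rr{w}_{\s{I}})\big)\;=\;\bb{T}_{\s{I}}(\rr{p}_0)\;=\;\sum_{n\in\Z}\langle n,0|\rr{p}_0|n,0\rangle\;=\;1\;,
$$
since $\rr{p}_0|n,0\rangle=\delta_{n,0}|0,0\rangle$. No real obstacle is expected: the statement is a short verification, and the only point requiring minor care is the bookkeeping of the algebraic manipulation ensuring that $\rr{w}_{\s{I}}^*\nabla_{\s{I}}(\rr{w}_{\s{I}})$ collapses to $-\ii\rr{p}_0$ so that a finite answer is produced in the trace.
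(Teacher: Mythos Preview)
Your proof is correct and follows essentially the same direct-computation approach as the paper. The only cosmetic difference is that you use the second form $\rr{u}^*\nabla_{\s{I}}(\rr{u})$ of \eqref{eq:wind_numb} and simplify all the way to $-\ii\,\rr{p}_0$ before tracing, whereas the paper uses the first form $(\rr{u}^*-\mathbf{1})\nabla_{\s{I}}(\rr{u}-\mathbf{1})$, arrives at $\ii\,\rr{p}_0(\rr{s}_{{\rm I},2}-\mathbf{1})$, and then observes that $\rr{q}_0\rr{s}_{{\rm I},2}\rr{q}_0=0$ kills the shift term under $\bb{T}_{\s{I}}$.
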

\proof
An explicit computation provides
$$
\begin{aligned}
(\rr{w}_{\s{I}}^*-{\bf 1})\;\nabla_{\s{I}}(\rr{w}_{\s{I}}-{\bf 1})\;&=\;\ii\rr{p}_0(\rr{s}_{{\rm I},2}^*-{\bf 1})\big[\rr{p}_0(\rr{s}_{{\rm I},2}-{\bf 1}),\rr{n}_2\big]\\
&=\;\ii\rr{p}_0(\rr{s}_{{\rm I},2}^*-{\bf 1})\big[\rr{s}_{{\rm I},2},\rr{n}_2\big]\\
&=\;-\ii\rr{p}_0(\rr{s}_{{\rm I},2}^*-{\bf 1})\rr{s}_{{\rm I},2}\\
&=\;\ii\rr{p}_0(\rr{s}_{{\rm I},2}-{\bf 1})\;.
\end{aligned}
$$
By applying formula \eqref{eq:tra_int} one gets
$$
\bb{T}_{\s{I}}\big((\rr{w}_{\s{I}}^*-{\bf 1})\nabla_{\s{I}}(\rr{w}_{\s{I}}-{\bf 1})\big)\;=\;-\ii\bb{T}_{\s{I}}(\rr{p}_0\rr{q}_0)\;=\;-\ii
$$
since $\rr{q}_0\rr{s}_{{\rm I},2}\rr{q}_0=0$. this completes the proof.
\qed

\medskip

We are now in position to provide our main result, namely the proof of equation \ref{eq:int_01}.
\begin{theorem}[Bulk-interface duality for the Iwatsuka magnetic field]\label{theo_main_Iw}
Let $\hat{\rr{h}}\in\s{A}_{\rm I}$ be a magnetic Hamiltonian  with \emph{non-trivial bulk gap} detected by $\Delta$ (\cf Definition \ref{def_bulk_gap}). Let $g:\R\to[0,1]$ be a non-decreasing (smooth)  function such that $g=0$ below $\Delta$ and $g=1$ above $\Delta$ and consider the unitary operator $\rr{u}_\Delta:=\expo{\ii2\pi g(\hat{\rr{h}})}$ and the associated {interface conductance} (\cf Definition \ref{def:cond_inter})
\begin{equation*}
\sigma_{\s{I}}(\Delta)\;=\; \frac{ e^2}{h}\;W_{\s{I}}(\rr{u}_\Delta)\;.
\end{equation*}
Let $\rr{h}:={\rm ev }(\hat{\rr{h}})=(\rr{h}_-,\rr{h}_+)\in \s{A}_{\rm bulk}$ be the bulk Hamiltonian and for a given Fermi energy inside the bulk gap $\mu\in \Delta$ let $\rr{p}_\mu:=(\rr{p}_{\mu,-},\rr{p}_{\mu,+})$ with $\rr{p}_{\mu,\pm}:=\chi_{(-\infty,\mu]}(\rr{h}_\pm)$ be the associated Fermi projections. Denote with $N_\pm:={\rm Ch}(\rr{p}_{\mu,\pm})\in\Z$  the Chern numbers of such projectors. Then it holds true that 
\begin{equation*}
\sigma_{\s{I}}(\Delta)\;=\;\frac{e^2}{h}(N_+-N_-)\;.
\end{equation*}
\end{theorem}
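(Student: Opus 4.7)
The plan is to assemble the three ingredients already prepared in Lemmas \ref{lemma:bId_I} and \ref{lemma:bId_II}, together with Proposition \ref{prop:exp:map_interf} and the master formula \eqref{ewq:int_condIIII}, into a short chain of equalities in $K_1(\s{I})\simeq \Z[\rr{w}_{\s{I}}]$. The whole argument proceeds entirely at the level of $K$-theory classes and the dual pairing with the cyclic $1$-cocycle $\eta_{\s{I}}$ built from $\bb{T}_{\s{I}}$ and $\nabla_{\s{I}}$; no spectral estimates are required because the analytic content is already packaged in the definition \eqref{eq:int_coduct} of $\sigma_{\s{I}}$ and in the boundary map ${\rm exp}$.

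First I would observe that, since $\hat{\rr{h}}$ has a non-trivial bulk gap detected by $\Delta$ and $\mu\in\Delta$, the bulk Fermi projection $\rr{p}_\mu=(\rr{p}_{\mu,-},\rr{p}_{\mu,+})\in\s{A}_{\rm bulk}$ is well defined and lies in $\s{C}^1(\s{A}_{\rm bulk})$, so that the Chern numbers $N_\pm={\rm Ch}_{b_\pm}(\rr{p}_{\mu,\pm})$ make sense and are integers via the pairing \eqref{eq:chern_03}. Applying Lemma \ref{lemma:bId_I} to $\rr{p}=\rr{p}_\mu$ then gives the identity
\begin{equation*}
{\rm exp}([\rr{p}_\mu])\;=\;(N_--N_+)\,[\rr{w}_{\s{I}}]\;\in\;K_1(\s{I})\;.
\end{equation*}
On the other hand, Proposition \ref{prop:exp:map_interf} expresses the same class as $-[\rr{u}_\Delta]$, so combining the two yields $[\rr{u}_\Delta]=(N_+-N_-)[\rr{w}_{\s{I}}]$ in $K_1(\s{I})$.

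Next I would use the fact that the winding number $W_{\s{I}}$ factors through $K_1(\s{I})$, as recorded in \eqref{eq:chern_03_BIS}: the map $[\rr{u}]\mapsto W_{\s{I}}(\rr{u})=\prec[\rr{u}],[\eta_{\s{I}}]\succ$ is a group homomorphism from $K_1(\s{I})$ to $\Z$. Consequently
\begin{equation*}
W_{\s{I}}(\rr{u}_\Delta)\;=\;(N_+-N_-)\,W_{\s{I}}(\rr{w}_{\s{I}})\;=\;N_+-N_-\;,
\end{equation*}
where the second equality is Lemma \ref{lemma:bId_II}. Substituting into the definition \eqref{eq:int_coduct} of the interface conductance, or equivalently into formula \eqref{ewq:int_condIIII}, gives the claimed identity $\sigma_{\s{I}}(\Delta)=\tfrac{e^2}{h}(N_+-N_-)$.

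The only genuinely delicate point is the integrality of the pairing and the well-definedness of all maps involved on the actual representatives at hand: one must check that $\rr{u}_\Delta-{\bf 1}$ lies in $\s{C}^1_{\s{I}}\cap \bb{D}_{\s{I}}$ so that $W_{\s{I}}(\rr{u}_\Delta)$ is literally computed by \eqref{eq:wind_numb}, and that $\rr{p}_{\mu,\pm}$ lies in a smooth subalgebra where the Chern pairing \eqref{eq:chern_03} is integer-valued. Both facts are standard in the gapped setting because smooth functional calculus at a spectral gap preserves the smooth pre-$C^*$-algebra structure (Proposition \ref{prop:pre-C-ast}) and because the cut-off $g'(\hat{\rr{h}})$ is supported in the gap of the bulk Hamiltonian, forcing $\rr{u}_\Delta-{\bf 1}\in\s{I}$; the trace-class condition then follows from the straight-line structure $\s{I}\simeq\s{C}(\n{S}^1)\otimes\s{K}(\ell^2(\Z))$ established in Proposition \ref{prop:iwats_in_stri}. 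Once these regularity checks are in place, everything else is purely formal $K$-theoretic manipulation.
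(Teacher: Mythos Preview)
Your proof is correct and follows essentially the same route as the paper: both combine Lemma~\ref{lemma:bId_I}, Proposition~\ref{prop:exp:map_interf} (packaged in the paper via \eqref{ewq:int_condIIII}), and Lemma~\ref{lemma:bId_II}, using the bilinearity of the pairing to turn the $K_1$-class identity into the numerical equality $W_{\s{I}}(\rr{u}_\Delta)=N_+-N_-$. The only difference is cosmetic: you first extract $[\rr{u}_\Delta]=(N_+-N_-)[\rr{w}_{\s{I}}]$ and then pair, whereas the paper pairs ${\rm exp}([\rr{p}_\mu])$ directly and then flips the sign via \eqref{ewq:int_condIIII}.
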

\proof
We can compute $\sigma_{\s{I}}(\Delta)$ with the topological formula \eqref{ewq:int_condIIII}. From Lemma \ref{lemma:bId_I} and the bilinearity of the canonical pairing between $K_1(\s{I})$ and 
$HC^1(\s{I})$ one obtains 
$$
\prec {\rm exp}([\rr{p}_\mu]),[\eta_{\s{I}}]\succ\;=\;(N_--N_+)\prec [\rr{w}_{\s{I}}],[\eta_{\s{I}}]\succ\;.
$$
Then, equation \eqref{eq:chern_03_BIS} and Lemma \ref{lemma:bId_II}
provide
$$
\prec {\rm exp}([\rr{p}_\mu]),[\eta_{\s{I}}]\succ\;=\;(N_--N_+)W_{\s{I}} (\rr{w}_{\s{I}})\;=\;N_--N_+\;.
$$
This concludes the proof. \qed

\appendix

\section{Discrete Schwartz space}\label{app:dis_Swar}
Just because of the lack of references, the discrete Schwartz space over a Banach $\ast$-algebra will be defined and proved to be a Fréchet algebra. Let $\s{B}$ be a Banach $\ast$-algebra. 
We will use the notation
$$
\{a_m\}\;:=\;\{a_m\;|\; m\in\Z^n\}\;\subset\;\s{B}
$$
for $\s{B}$-valued sequences labelled by  $\Z^n$.
\begin{definition}[Discrete Schwartz Space]  The  discrete Schwartz space of dimension $n$ over  over $\s{B}$ is
	$$
	\s{S}\big(\Z^n,\s{B}\big)\;:=\;\big\{\{a_m\}\subset \s{B}\; \big|\; r_k\big(\{a_m\}\big)<\infty\;, \forall\;k\in\N_0\big\}\;,
	$$
	where
	$$r_k\big(\{a_m\}\big)\;:=\;\sup_{m\in\Z^n}\left(1+|m|^2\right)^{\frac{k}{2}}\;||a_m||_{\s{B}}\;,\qquad k\in\N_0\;.
	$$
\end{definition}

\begin{proposition}
One has that
$$
\s{S}\big(\Z^n,\s{B}\big)\;\simeq\; \s{S}\big(\Z^n\big)\;\otimes\;\s{B}
$$
is  a Fréchet space with respect to the system of norms $r_k$.
\end{proposition}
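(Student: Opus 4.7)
The plan is to establish the Fréchet space structure first and then identify the space with the (completed) tensor product. First I would check that each $r_k$ is a genuine norm on $\s{S}(\Z^n,\s{B})$ rather than only a seminorm: indeed $r_0(\{a_m\})=\sup_m\|a_m\|_{\s{B}}=0$ forces $a_m=0$ for every $m$, and the triangle inequality and homogeneity are immediate from the corresponding properties of $\|\cdot\|_{\s{B}}$. Closure under addition and scalar multiplication of $\s{S}(\Z^n,\s{B})$ follows from $r_k(\{a_m+b_m\})\leqslant r_k(\{a_m\})+r_k(\{b_m\})$ and $r_k(\{\lambda a_m\})=|\lambda|\,r_k(\{a_m\})$. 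Since $\{r_k\}_{k\in\N_0}$ is a countable separating family of seminorms, the standard construction
$$
d\big(\{a_m\},\{b_m\}\big)\;:=\;\sum_{k\in\N_0}2^{-k}\,\frac{r_k(\{a_m-b_m\})}{1+r_k(\{a_m-b_m\})}
$$
provides a translation-invariant metric generating the locally convex topology associated with $\{r_k\}$.

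The main technical step is completeness. Let $\{a^{(N)}\}_{N\in\N}\subset\s{S}(\Z^n,\s{B})$ be Cauchy. Since $\|a^{(N)}_m-a^{(M)}_m\|_{\s{B}}\leqslant r_0(\{a^{(N)}-a^{(M)}\})$ for every $m\in\Z^n$, each coordinate sequence $\{a^{(N)}_m\}_N$ is Cauchy in the Banach algebra $\s{B}$, hence converges to some $a_m\in\s{B}$. To see that $\{a_m\}\in\s{S}(\Z^n,\s{B})$, fix $k\in\N_0$ and $\varepsilon>0$: there exists $N_0$ such that $r_k(\{a^{(N)}-a^{(M)}\})<\varepsilon$ for $N,M\geqslant N_0$, and passing to the limit $M\to\infty$ inside the supremum (which is legitimate since it is taken pointwise in $m$) yields
$$
\big(1+|m|^2\big)^{\frac{k}{2}}\,\big\|a^{(N)}_m-a_m\big\|_{\s{B}}\;\leqslant\;\varepsilon\qquad\forall\;m\in\Z^n,\;N\geqslant N_0\;.
$$
Combined with $r_k(\{a^{(N_0)}\})<\infty$ this gives both $r_k(\{a_m\})<\infty$ and $r_k(\{a^{(N)}-a\})\to 0$. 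Hence $\{a_m\}$ is the limit of $\{a^{(N)}\}$ in every seminorm $r_k$, proving completeness.

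For the tensor product identification, I would first define the bilinear map $\s{S}(\Z^n)\times\s{B}\to\s{S}(\Z^n,\s{B})$ by $(\{c_m\},b)\mapsto\{c_m b\}$, which is continuous with respect to the natural seminorms since $r_k(\{c_m b\})\leqslant r_k(\{c_m\})\,\|b\|_{\s{B}}$. By the universal property this extends to a continuous linear map $\s{S}(\Z^n)\,\hat\otimes_\pi\,\s{B}\to\s{S}(\Z^n,\s{B})$ from the completed projective tensor product. Injectivity and surjectivity follow by exhibiting an explicit decomposition: each $\{a_m\}\in\s{S}(\Z^n,\s{B})$ can be written as the absolutely convergent series $\sum_{m\in\Z^n}\delta_m\otimes a_m$, where $\delta_m\in\s{S}(\Z^n)$ is the delta at $m$, and the rapid decay of $\{a_m\}$ together with $r_k(\delta_m)=(1+|m|^2)^{k/2}$ ensures convergence in the projective tensor norm for every $k$. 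Since $\s{S}(\Z^n)$ is nuclear, the projective and injective tensor products coincide, so the notation $\s{S}(\Z^n)\otimes\s{B}$ is unambiguous up to completion. The hard part will be justifying the exchange of limit and supremum in the completeness argument cleanly, and making the tensor product statement precise (completed versus algebraic); both are routine once the bookkeeping is set up, which is why the authors leave them implicit.
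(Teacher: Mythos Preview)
Your proof is correct and essentially self-contained, but it takes a different route from the paper. The paper's argument is a three-line appeal to the theory of nuclear spaces: it recalls that $\s{S}(\Z^n)$ is Fr\'echet and nuclear, invokes the Fourier isomorphism $\s{S}(\Z^n)\simeq\s{C}^\infty(\n{T}^n)$, and then cites Tr\`eves' theorem identifying $\s{C}^\infty(X)\,\hat\otimes\,E$ with the space of $E$-valued smooth functions, together with the fact that projective tensor products of Fr\'echet spaces are Fr\'echet. No completeness argument or explicit inverse is written out.

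Your approach instead proves completeness of $\s{S}(\Z^n,\s{B})$ by hand via the standard Cauchy-sequence argument, and then builds the isomorphism with $\s{S}(\Z^n)\,\hat\otimes_\pi\,\s{B}$ explicitly through the decomposition $\{a_m\}=\sum_m\delta_m\otimes a_m$. This is more elementary and transparent: a reader does not need to unpack Tr\`eves' kernel-type theorems or the vector-valued Fourier transform to follow it. The cost is length, and you still invoke nuclearity at the end to justify that the completed tensor product is unambiguous---so you do not entirely avoid the abstract machinery. One small point worth tightening: the claim that both compositions (your bilinear extension and the decomposition map) are mutual inverses deserves one more sentence, since as written ``injectivity and surjectivity follow by exhibiting an explicit decomposition'' only directly gives surjectivity; checking that $f\otimes b\mapsto\{f(m)b\}\mapsto\sum_m f(m)\delta_m\otimes b=f\otimes b$ on elementary tensors (and extending by density) closes the loop.
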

\proof
Let us recall that  $\s{S}(\Z^n):=\s{S}(\Z^n,\C)$ is a Fréchet nuclear space \cite[Theorem 51.5]{treves-67} and 
$\s{S}(\Z^n)\simeq\s{C}^\infty(\n{T}^n)$ via the Fourier transform \cite[Theorem 51.3]{treves-67}. Then,  the isomorphism stated above follows from \cite[Theorem 44.1]{treves-67}
along with the nuclearity of $\s{S}(\Z^n)$. Finally, let us  recall that the (projective) tensor product of 
Fréchet spaces is again Fréchet. 
\qed

\section{Magnetic Bloch-Floquet transform}\label{app:bloch-floquet}
Let $\s{A}_{A_B}$ be the magnetic $C^*$-algebra of the magnetic field $B$ as in Definition \ref{def:magnetic-algebra}. We are interested in the case where the magnetic field is constant along every vertical line on $\Z^2$, \ie $B(n_1,n_2)=B(n_1)$
for every $n:=(n_1,n_2)\in\Z^2$. The Iwatsuka magnetic field is of course contained in such a class of examples. Observe that these magnetic fields admit Landau-type potentials given by  
$$A_B(n,n-e_j)\;=\;\delta_{j,1}\;n_2\;B(n_1),\quad n\in Z^2,$$
and consequently the pair of magnetic translations which define   $\s{A}_{A_B}$ is
\begin{equation*}
\begin{array}{rl}
\rr{s}_{A_B,1}\;=\;\expo{\ii\rr{n}_2B(\rr{n}_1)}\rr{s}_1\;,\qquad
\rr{s}_{A_B,2}\;=\;\rr{s}_2\;,
\end{array}
\end{equation*}
where $\rr{n}_1,\rr{n}_2$ are the position operators introduced in Section \ref{sect:four_theo}.
Let  $V_f:=\expo{\ii f(\rr{n}_1)}\rr{s}_2\in \s{B}(\ell^2(\Z^2))$ for a given function $f:\Z\rightarrow\R$. The operator $V_f$ is unitary and commute with $\rr{s}_{A_B,2}$ by construction. On the other hand one can compute that
\begin{equation*}
\begin{array}{rl}
V_f\;\rr{s}_{A_B,1}\;V_f^*&\;=\;
\expo{\ii f(\rr{n}_1)}\;\expo{-\ii B(\rr{n}_1)}\;\expo{-\ii f(\rr{n}_1-1)}\;\rr{s}_{A_B,1}\;.
\end{array}
\end{equation*}
It follows that the commutation condition $V_f\rr{s}_{A_B,j}V_f^*=\rr{s}_{A_B,j}$ is guaranteed by
\begin{equation}\label{eq:bloch-floquet-condition}
f(m)\;-\;f(m-1)\;=\;B(m)\;,\qquad \forall\; a\in\Z\;.
\end{equation}
Note that equation \eqref{eq:bloch-floquet-condition} determines the function $f$ up to a constant, that is, by fixing $f(0)=a$ one gets 
$$
f(m)\;:=\;a\;+\;\delta_{m>0}\sum_{j=1}^{m}B(j)\;-\;\delta_{m<0}\sum_{j=0}^{|m|-1}B(-j)\;.
$$

\medskip

The map $\gamma\mapsto V_f^\gamma$ provides a unitary representation of $\Z$ on $\ell^2(\Z^2)$ which commutes with the  magnetic translations $\rr{s}_{A_B,j}$, and consequently with the magnetic algebra $\s{A}_{A_B}$. This fact can be used to define the \emph{magnetic Bloch-Floquet transform}\footnote{The  theory of the Bloch-Floquet transform is described in full generality in the classic monograph \cite{kuchment-93}. The results presented in this section are just an adaptation of the general theory to our specific case which includes a  magnetic field which is constant along one direction.} 
$\s{U}_B$
as follows:
$$
(\s{U}_B\psi)_k\;:=\;\sum_{\gamma\in\Z}\expo{-\ii\gamma k}\;V_f^\gamma\;\psi\;.
$$
The map $\s{U}_B$ is initially defined on the dense domain 
$\psi\in \s{C}_{\rm c}(\Z^2)\subset\ell^2(\Z^2)$ of the compactly supported sequence. From its very definition one gets
\begin{equation}
\begin{aligned}
(\mathcal{U}_B \psi)_{k+2\pi}(n_1,n_2)     &\;=\;(\mathcal{U}_B \psi)_{k}(n_1,n_2)   \\
(\mathcal{U}_B V_f^r \psi)_k     &\;=\;\expo{\ii k r}\;(\mathcal{U}_B\psi)_k\;.
\end{aligned}
\end{equation}
Observe  that the first equation tells us that the natural domain of the parameter $k$ is the circle $\n{S}^1:=\R/2\pi\Z=[0,2\pi)$. The second equation expresses the fact that 
the transformed vectors $(\mathcal{U}_B\psi)_k$ are generalized eigenvectors of $V_f$. The latter condition can be rewritten in the form
$$
(\mathcal{U}_B\psi)_k(n_1,n_2-r)\;=\;\expo{\ii r(k-f(n_1))}(\mathcal{U}_B\psi)_k(n_1,n_2).
$$
and shows that $(\mathcal{U}_B\psi)_k$ is  entirely determined by a single value of $n_2$, \eg by its value on the horizontal line $n_2=0$.
The latter observation allows to  think of $(\mathcal{U}_B\psi)_k$  (for every  fixed $k$) as an element of the \emph{fiber space} $\ell^2(\Z)$ obtained  by fixing $n_2=0$, \ie by setting
$(\mathcal{U}_B\psi)_k(m):=(\mathcal{U}_B\psi)_k(m,0)$ for every $m\in\Z$.
More precisely, one can show that $\mathcal{U}_B$ defined in this way provides a unitary equivalence\footnote{For the general theory of direct integrals of Hilbert spaces we refer to the standard monograph \cite[Part II, Chapter 1]{dixmier-81}. In particular, the isomorphism used in the right-hand side of  equation \eqref{eq:dir_int} is proved in the Corollary on \cite[p. 175]{dixmier-81}.}
\begin{equation}\label{eq:dir_int}
\mathcal{U}_B\;:\;\ell^2(\Z^2)\;\longrightarrow \;\int_{\n{S}^1}^\oplus \dd k\;\ell^2(\Z)\;\simeq\;L^2(\n{S}^1)\otimes\ell^2(\Z)\;,
\end{equation}
where $\dd k$ is the normalized Haar measure of $\n{S}^1$. In fact 
a standard computation shows that $\mathcal{U}_B$ is isometric on the dense domain $\s{C}_{\rm c}(\Z^2)$, hence extends to an isometry on $\ell^2(\Z^2)$. 
Moreover, the inverse map
$\mathcal{U}_B^{-1}$, defined by 
$$
(\mathcal{U}_B^{-1}\phi)(m,r)\;:=\;\int_{\n{S}^1}\dd k\;\expo{-\ii r(k-f(m))}\;\phi_k(m)\;,\qquad \{\phi_k\}\in\rr{H}
$$
satisfies $\mathcal{U}_B^{-1}\mathcal{U}_B={\bf 1}$ on $\s{C}_{\rm c}(\Z^2)$ and is isometric as well. Hence $\mathcal{U}_B^{-1}$ must be injective and as a consequence $\mathcal{U}_B$ must be surjective and thus unitary.

\medskip

Since the magnetic translations commute with the unitary $V_f$ they can be decomposed along the direct integral. A direct computation shows that
\begin{equation*}
\begin{aligned}
V_f         &\;\longmapsto\;v\;:=\:\mathcal{U}_B\; V_f\; \mathcal{U}_B^{-1}\; =\;\int_{\n{S}^1}^{\oplus}\dd k\;\expo{\ii k}{\bf 1}\;\simeq\;\expo{\ii k}\otimes{\bf 1} \\
\rr{s}_{A_B,1}&\;\longmapsto\;s\;:=\;\mathcal{U}_B\; \rr{s}_{A_B,1}\; \mathcal{U}_B^{-1}\; =\;\int_{\n{S}^1}^{\oplus}\dd k\;\rr{s}\;\simeq\;{\bf 1}\otimes \rr{s}\\
\rr{s}_{A_B,2}&\;\longmapsto\;t\;:=\;\mathcal{U}_B\; \rr{s}_{A_B,2}\; \mathcal{U}_B^{-1}\;=\;\int_{\n{S}^1}^{\oplus}\dd k\;\expo{\ii k}\expo{-\ii f(\rr{n})}\;\simeq\;\expo{\ii k}\otimes \expo{-\ii f(\rr{n})}\\
\end{aligned}
\end{equation*}
where $\rr{s}$ and $\rr{n}$ are the usual shift and position operator on $\ell^2(\Z)$. Moreover, one has that
$$
\mathcal{U}_B\; \rr{f}_B\; \mathcal{U}_B^{-1}\;=\;sts^*t^*\;\simeq\;{\bf 1}\otimes \expo{\ii B(\rr{n})}
$$
where $B(\rr{n})$ denotes the multiplication operator on 
$\ell^2(\Z)$ given by the restriction of the magnetic field $B(m):=B(m,0)$. As a consequence of the formulas above one gets that the magnetic Bloch-Floquet transform maps $\s{A}_{A_B}$ as a subalgebra of $\s{C}(\n{S}^1)\otimes\s{B}(\ell^2(\Z))$.

\section{Crossed product structure}\label{sect:crossedproduct}
In Section \ref{sect:magnetic_C_Al} we provided an explicit construction of the magnetic $C^*$-algebra $\s{A}_{A_B}$ associated to a vector potential $A_B$ for the magnetic field $B:\Z^2\to\R$. In this section we will show that the magnetic $C^*$-algebra $\s{A}_{A_B}$ is the \virg{concrete} realization of 
an abstract  \emph{twisted crossed product} $C^*$-algebra over $\Z^2$. For the general theory of the crossed product $C^*$-algebras we will refer to classic monographs \cite{pedersen-79,williams-07}. The special case of discrete crossed product $C^*$-algebras, which is the most related to our construction,  is discussed in detail in \cite[Chapter VIII]{davidson-96}.

\medskip

Let $(\s{C}(\Omega_B),\tau,\Z^2)$ be the $C^*$-dynamical system associated with the dynamical system
$(\Omega_B, \tau^*,\Z^2)$ described in Section \ref{sect:integr}. 
Consider a  pair of (abstract) unitary elements  $u_1,u_2$ 
and for every $\gamma:=(\gamma_1,\gamma_2)\in\Z^2$ set
$u_\gamma:=u_1^{\gamma_1}u_2^{\gamma_2}$. Let $\s{C}(\Omega_B)[\Z^2]$ be the set of  finite sums 
$$
G\;:=\;\sum_{\gamma\in\Lambda}g_\gamma\; u_\gamma
$$
with $g_\gamma\in \s{C}(\Omega_B)$ for all $\gamma\in \Lambda$
and $\Lambda\in \s{P}_0(\Z^2)$ a finite subset of $\Z^2$. The 
product in $\s{C}(\Omega_B)[\Z^2]$ is defined by the rules
$$
u_2\;u_1\;=\;\overline{f_B}\;u_1\;u_2\;,\qquad u_\gamma\;g\;u_{-\gamma}\;=\; \tau_\gamma(g)
$$
for every $\gamma\in\Z^2$ and $g\in\s{C}(\Omega_B)$,
where
$f_B:=\expo{\ii B}\in \s{C}(\Omega_B)$ is the magnetic phase associated to $B$.
The involution is provided by
$$
(g\; u_\gamma)^*\;:=\;u_{-\gamma}\;\overline{g}\;=\;\tau_{-\gamma}\big(\overline{g}\big)\;u_{-\gamma}\;.
$$
Endowed with these operations $\s{C}(\Omega_B)[\Z^2]$ acquires the structure of a unital $\ast$-algebra. Moreover it can be completed to a Banach $\ast$-algebra with respect to the norm
$$
\|G\|_1\;:=\;\sum_{\gamma\in\Lambda}\|g_\gamma\|_\infty\;.
$$
The (universal) enveloping $C^*$-algebra \cite[Section 2.7]{dixmier-77} obtained 
from this Banach $\ast$-algebra is called the \emph{$B$-twisted crossed product} of  $\s{C}(\Omega_B)$ and is denoted with $\s{C}(\Omega_B)\rtimes_{\tau,B}\Z^2$.

\medskip

In order to better understand  the twisted structure of the  crossed product   $\s{C}(\Omega_B)\rtimes_{\tau,B}\Z^2$
one can observe that the mapping $\gamma\mapsto u_\gamma$ provides a projective (abstract) unitary representation of $\Z^2$
defined by
$$
u_\gamma\;u_\xi\;=\;\Theta_B(\gamma,\xi)\;u_{\gamma+\xi}\;,\qquad \forall\; \gamma,\xi\in\Z^2\;,
$$
with phase  given by
$$
\Theta_B(\gamma,\xi)\;:=\;\prod_{\nu\in\Lambda(\gamma,\xi)}\overline{\tau_{\nu}({f}_B)}
$$
and the product is extended on the finite cell
$$
\Lambda(\gamma,\xi)\;:=\;\big([\gamma_1,\gamma_1+\xi_1-1]\times[0,\gamma_2-1]\big)\;\cap\;\Z^2\;.
$$
The map $\Theta_B: \Z^2\times \Z^2\to \s{U}(\s{C}(\Omega_B))$ takes value on the unitary elements of $\s{C}(\Omega_B)$ and a direct check shows that it satisfies the cocycle condition
$$
\Theta_B(\gamma+\xi,\zeta)\; \Theta_B(\gamma,\xi)\;=\;\tau_{\gamma}\big(\Theta_B(\xi,\zeta)\big)\;\Theta_B(\gamma, \xi+\zeta)
$$
for all $\gamma,\xi,\zeta\in\Z^2$.

\medskip

Given a vector potential $A_B$ for the magnetic field $B$
one can consider the representation $\pi_{A_B}:\s{C}(\Omega_B)\rtimes_{\tau,B}\Z^2\to\s{B}(\ell^2(\Z^2))$ defined by
$$
\begin{aligned}
\pi_{A_B}(u_\gamma)\;&:=\;(\rr{s}_{A_B,1})^{\gamma_1}\;(\rr{s}_{A_B,2})^{\gamma_2}\\
\pi_{A_B}(g)\;&:=\;\iota^{-1}(g)
\end{aligned}\;,\qquad \forall \gamma\in\Z^2\;,\;\;\forall g\in \s{C}(\Omega_B)
$$
where $\rr{s}_{A_B,1},\rr{s}_{A_B,2}$ are the magnetic translations defined in Section \ref{sect:gen_MT} and $\iota^{-1}(g)\in \s{F}_B$ is given by the isomorphism
defined in Lemma \ref{lemma:inv_prop}.
The map $\pi_{A_B}$ coincides with the tensor product of the isomorphism $\iota^{-1}$ with the $B$-twisted
 \emph{(left) regular} representation of $\Z^2$. This representation turns out to be faithful (\cf \cite[p. 218]{davidson-96}) and   as a consequence one gets
 $$
 \s{A}_{A_B}\;=\;\pi_{A_B}\left(\s{C}(\Omega_B)\rtimes_{\tau,B}\Z^2\right)\;.
 $$
 
 \medskip

It is also interesting to observe that $\s{A}_{A_B}$ can be represented as an iterated crossed product algebra as discussed in \cite[Section 3.1.1]{prodan-schulz-baldes-book}. Indeed one can check that
$$
\s{C}(\Omega_B)\rtimes_{\tau,B}\Z^2\;\simeq\;\s{Y}_{B,j}\rtimes_{\alpha_k}\Z\;,\qquad \s{Y}_{B,j}\;:=\;\s{F}_B\rtimes_{\alpha_j}\Z,
$$
where $\{j,k\}=\{1,2\}$,
the  crossed product algebra $\s{Y}_{B,j}$ is generated by $\s{F}_B$ and $u_j$ along with the relation 

\begin{equation*}
\alpha_j(g):=u_j gu_j^*\;:=\;\left\{
\begin{aligned}
&\tau_{(1,0)}(g)&\text{if }j=1\\
&\tau_{(0,1)}(g)&\text{if }j=2\\
\end{aligned}
\right\}\;=:\tau_j(g)
\end{equation*}
for all $g\in \s{C}(\Omega_B)$ and the crossed product algebra $\s{Y}_{B,j}\rtimes_{\alpha_k}\Z$ is generated by $\s{Y}_{B,j}$ and $u_k$ along with the relation
$\alpha_k(gu_j):=u_k(gu_j)u_k^*=\tau_k(g)f_Bu_j$
for all $g\in \s{F}_B$.

\medskip

In the special case the magnetic field $B$ is constant in the $n_2$-direction (as in the case of the Iwatsuka magnetic field) it follows that the $\alpha_2$-action is trivial, meaning that it reduces to the identity  $\alpha_2(g)=g$ for all $g\in \s{C}^*(\Omega_B)$. In this situation the crossed product algebra $\s{Y}_{B,2}$ acquires the following very simple structure
$$
\s{Y}_{B,2}\;\simeq\;\s{C}^*(\Omega_B)\otimes\;C^*_r(\Z)\;\simeq\;\s{C}^*(\Omega_B)\otimes \s{C}(\n{S}^1)\;\simeq\;\s{C}(\Omega_B\times\n{S}^1)
$$
The first isomorphism involves the  (reduced) group $C^*$-algebra $C^*_r(\Z)$ and is proved in \cite[Lemma 2.73]{williams-07} (along with the nuclearity of the  various $C^*$-algebras).
The isomorphism $C^*_r(\Z)\simeq \s{C}(\n{S}^1)$ is a consequence of the Pontryagin duality
\cite[Proposition VII.1.1]{davidson-96}.

\section{The \texorpdfstring{$K$-}-theory of the Iwatsuka magnetic hull}\label{app:K-theoIMH}
Let $\Omega_I=\Z\cup\{-\infty\}\cup\{+\infty\}$ be the Iwatsuka magnetic hull described in Example \ref{ex:hull_Iwatsuka} and consider the short exact sequence
\begin{equation}\label{eq:ex_seq_iwa_hull}
0\;\stackrel{}{\longrightarrow}\;\s{C}_0(\Z)\;\stackrel{\imath}{\longrightarrow}\;\s{C}(\Omega_I)\;\stackrel{\rm ev}{\longrightarrow}\C\oplus\C\;\stackrel{}{\longrightarrow}0\;
\end{equation}
where $\s{C}_0(\Z)$ is the $C^*$-algebra of sequences vanishing at infinity, $\imath$ is the inclusion homomorphism
and the evaluation homomorphism ${\rm ev}$ compute the left and right limits of elements in  $\s{C}(\Omega_I)$. The sequence \eqref{eq:ex_seq_iwa_hull} is split exact in view of the homomorphism
$$
\C\oplus\C\;\ni\;(\ell_-,\ell_+)\;\stackrel{\jmath}{\longrightarrow}\;c_{(\ell_-,\ell_+)}\;\in\;\s{C}(\Omega_I)
$$
where the element $c_{(\ell_-,\ell_+)}$ is specified by
$$
c_{(\ell_-,\ell_+)}(n)\;:=\;
\left\{
\begin{aligned}
&\ell_-&\quad&\text{if}\;\; n<0\\
&\ell_+&\quad&\text{if}\;\; n\geqslant0\;.
\end{aligned}
\right.
$$
Then, it follows that \cite[Corollary 8.2.2]{wegge-olsen-93}
$$
K_j\big(\s{C}(\Omega_I)\big)\;\simeq\;K_j\big(\s{C}_0(\Z)\big)\;\oplus\;K_j\big(\C\oplus\C\big)\;,\qquad j=1,2\;.
$$
The $K$-theory of $\C\oplus\C$ is easily calculated as $K_0(\C\oplus\C)=\Z\oplus\Z$ and $K_1(\C\oplus\C)=0$. The $K$-theory of $\s{C}_0(\Z)$ is given by $K_0(\s{C}_0(\Z))=\Z^{\oplus\Z}$ and $K_0(\s{C}_0(\Z))=0$. The latter fact follows from the isomorphism $K_j(\s{C}_0(\Z))\simeq K^j_{\rm top}(\Z)\simeq K^j_{\rm top}(\ast)^{\oplus\Z}$ between the algebraic and the topological $K$-theory \cite[Theorem 5]{baum-sanchez-garcia-11}. Another way of achieving the same result is to consider the Pontryagin duality $\n{S}^1=\widehat{\Z}$ and the isomorphism $\s{C}_0(\Z)\simeq C^*_r(\n{S}^1)$
where $C^*_r(\n{S}^1)$ is the (reduced) group algebra of the circle \cite[Proposition VII.1.1]{davidson-96}. Therefore, one has that $K_0(C^*_r(\n{S}^1))\simeq {\rm Rep}(\n{S}^1)\simeq\Z^{\oplus\Z}$ and $K_0(C^*_r(\n{S}^1))\simeq0$ where ${\rm Rep}(\n{S}^1)$ denotes the complex representation ring of $\n{S}^1$ \cite[Section 7]{baum-sanchez-garcia-11}. The generators of $K_j(\s{C}_0(\Z))$ are the classes $[\pi_i]$, $i\in\Z$, of the projections $\pi_i(n):=\delta_{i,n}$. After putting all the information together, we can describe the $K$-theory of the Iwatsuka magnetic hull as 
$$
K_0\big(\s{C}(\Omega_I)\big)\;\simeq\;\bigoplus_{i\in\Z}\Z[\pi_i]\;\oplus\;\Z[\pi_-]\;\oplus\;\Z[\pi_+]\;,\qquad K_1\big(\s{C}(\Omega_I)\big)\;=\;0\;.
$$
where  $\pi_-:=\jmath((1,0))$ and $\pi_+:=\jmath((0,1))-\pi_0$
are the projections \emph{at infinity}.

\section{The Pimsner-Voiculescu exact sequence}\label{app:PV}

In this section we will provide a  brief overview on the 
\emph{Pimsner-Voiculescu six-term exact sequence} which is the main tool to compute the $K$-theory for crossed product $C^*$-algebras by $\Z$. For the interested reader we refer to the original work \cite{pimsner-voiculescu-exact-seq} and the monograph \cite[Chapter V]{blackadar-98}.

\medskip

Let $\s{Y}$ be a $C^*$-algebra, $\alpha\in {\rm Aut}(\s{Y})$ and automorphism and $\s{Y}\rtimes_{\alpha}\Z$  the crossed product generated by $\s{Y}$ and the unitary $u$ with the relation 
$$
\alpha(a)\;=\:uau^*\;,\qquad\forall\; a\in\s{Y}\;.
$$
The first step of the construction is to define an appropriate short exact sequence of $C^*$-algebras. This is done by considering the tensor product $\s{Y}\otimes\s{K}$, where $\s{K}$ denotes the $C^*$-algebra of compact operators, and the $C^*$-algebra $\mathcal{T}_\alpha$ 
generated in $\s{Y}\otimes{C^*}(v)$ by $\s{Y}\otimes {\bf 1}$
and $V=u\otimes v$, with $v$ a non-unitary (abstract) isometry. It is useful to think at elements of $\s{K}$ as infinite matrices acting on $\ell^2(\N_0)$ with respect its canonical basis.
Let us consider the map $\varphi:\s{Y}\otimes\s{K}\to \s{T}_\alpha$ defined by
$$
\varphi(a\otimes e_{j,k})\;:=\;V^j(a\otimes({\bf 1}-vv^*))(V^*)^k\;=\;(\alpha^j(a)\otimes{\bf 1})V^jP(V^*)^k
$$
where $P$ is the (non-trivial) self-adjoint projection given by  $P:={\bf 1}-VV^*={\bf 1}\otimes({\bf 1}-vv^*)$ and $e_{j,k}$ are the rank one operators which generates $\s{K}$.
Then, there exists a short exact sequence of $C^*$-algebras
\begin{equation}\label{eq:actual_toeplitz_ext_app}
	0\;\longrightarrow\; \s{Y}\otimes \s{K} \;\stackrel{\varphi}{\longrightarrow}\; \mathcal{T}_{\alpha} \;\stackrel{\psi}{\longrightarrow}\; \s{Y}\rtimes_{\alpha}\Z \;\longrightarrow\; 0,
	\end{equation}
where the map $\psi:\s{T}_\alpha\to \s{Y}\rtimes_{\alpha}\Z$ defined by
$$
\psi(a\otimes {\bf 1})\;:=\;a\;,\qquad\quad \psi(V)\;:=\;u\;.
$$
For this reason $\s{T}_\alpha$ is called the Toeplitz extension of the stabilized algebra  $\s{Y}\otimes \s{K}$ by the crossed product  $\s{Y}\rtimes_{\alpha}\Z$.

\medskip

The Pimsner-Voiculescu (six-term) exact sequence is a cyclic sequence which connects the $K$-theory of $\s{Y}$ and $\s{Y}\rtimes_{\alpha}\Z$, and is given by
\begin{equation}\label{eq:6termexactseq_PV_gen_app}
\begin{array}{ccccc}
K_0(\s{Y})&\stackrel{\beta_{\ast}}{\longrightarrow}          
 &K_0(\s{Y}) &      \stackrel{{\imath}_*}{\longrightarrow}    
  &K_0(\s{Y}\rtimes_{\alpha}\Z)\\
  &&&&\\
{\partial_1}\Big\uparrow   &&&&\Big\downarrow{\partial_0}\\
   &&&&\\
  K_1(\s{Y}\rtimes_{\alpha}\Z)& \underset{{\imath}_*}{\longleftarrow}       
 &K_1(\s{Y}) &   \underset{\beta_{\ast}}{\longleftarrow}        
  &K_1(\s{Y})\\
\end{array}
\end{equation}
and it is worth pointing out that this is not exactly the standard six-term exact sequence associated with the  short exact sequence \eqref{eq:actual_toeplitz_ext_app}, although it is closely related.
The maps ${\imath}_*$ are induced by the canonical inclusion $\imath:\s{Y}\hookrightarrow\s{Y}\rtimes_{\alpha}\Z$ and the maps $\beta_{\ast}$ are induced by the map
$\beta :\s{Y}\to\s{Y}$ defined as $\beta:={\rm Id}-\alpha^{-1}$.
The vertical maps are related with the index and the exponential maps for the standard six-term exact sequence in $K$-theory emerging from the short exact sequence \eqref{eq:actual_toeplitz_ext_app} (\cf \cite[Theorem 9.3.2]{wegge-olsen-93}).
 More precisely one has that $\partial_0:=\kappa_0^{-1}\circ {\rm ind}$ and 
$\partial_1:=\kappa_0^{-1}\circ {\rm exp}$ where ${\rm ind}:K_1(\s{Y}\rtimes_{\alpha}\Z)\to K_0(\s{Y}\otimes\s{K})$ and ${\rm exp}:K_0(\s{Y}\rtimes_{\alpha}\Z)\to K_1(\s{Y}\otimes\s{K})$ are the usual index and the exponential maps related to the short exact sequence \eqref{eq:actual_toeplitz_ext_app} and $\kappa_0:  K_j(\s{Y})\to K_j(\s{Y}\otimes\s{K})$, with $j=0,1$, is the  stabilization isomorphism induced by $a\mapsto a\otimes e_{0,0}$ for every $a\in\s{Y}$.


\section{\texorpdfstring{$K$-}-theory for a constant magnetic field}\label{app:Ktheo NCT}

In this section the $K$-theory of the magnetic $C^*$-algebra $\s{A}_b$ associated with a constant
field of strength $b$ will be described. The key observation is that $\s{A}_b$ is a faithful representation of the noncommutative torus $\n{A}_{\theta_b}$ provided that $\theta_b:=b(2\pi)^{-1}$. Let us observe that $b$ enters in the definition of $\n{A}_{\theta_b}$ only modulo $2\pi$. For this reason, without loss of generality, we can assume $0< \theta_b <1$
as the {condition} for a \emph{non trivial} magnetic field.
The $K$-theory of the noncommutative torus $\n{A}_{\theta_b}$ has been investigated in 
\cite{embedding-rotation-alg,pimsner-voiculescu-exact-seq,power-rieffel-proj} and is  described in several textbooks like \cite[Section 12.3]{wegge-olsen-93} or \cite[Chapter 12]{gracia-varilly-figueroa-01}. As a consequence of the isomorphism $\s{A}_b\simeq \n{A}_{\theta_b}$
we get
\begin{equation}\label{eq:K-teo_NCT_ind}
\begin{aligned}
K_0(\s{A}_{b})\;&=\;\Z[{\bf 1}]\;\oplus\;\Z[\rr{p}_{\theta_b}]\;\simeq\;\Z^2
\;,\\
K_1(\s{A}_{b})\;&=\;\Z[\rr{s}_{b,1}]\;\oplus\;\Z[\rr{s}_{b,2}]\;\simeq\;\Z^2\;.
\\
\end{aligned}
\end{equation}

\medskip

The generators of the $K$-theory of $\s{A}_{b}$ are quite explicit except for the projection 
$\rr{p}_{\theta_b}\in \s{A}_{b}$ which is known as \emph{Power-Rieffel projection}. 
Our next task is to provide a presentation of $\rr{p}_{\theta_b}$ optimized for the aims of this work. We will set
$$
\rr{p}_{\theta_b}\;:=\rr{s}_{b,1}^*\;\rr{d}_1\;+\;\rr{d}_0\;+\;\rr{d}_1\;\rr{s}_{b,1}
$$
where $\rr{d}_1:=g(\rr{s}_{2})$ and $\rr{d}_0:=f(\rr{s}_{2})$
are suitable self-adjoint elements of $C^*(\rr{s}_{2})\subset\s{A}_{b}$. Here we are using the coincidence $\rr{s}_{2}=\rr{s}_{b,2}$  between the ordinary shift and magnetic translation in view of the election of the Landau gauge for the constant magnetic field. The requirement for $\rr{p}_{\theta_b}$ of being a projection is automatically satisfied it the following relations hold true:
\begin{equation}\label{eq:conPRproj}
\begin{aligned}
(\rr{s}_{b,1}^*\rr{d}_1\rr{s}_{b,1})\rr{d}_1\;&=\;0\;,\\
\rr{d}_1(\rr{d}_0+\rr{s}_{b,1}\rr{d}_0\rr{s}_{b,1}^*)\;&=\;\rr{d}_1\;,\\
\rr{d}_0^2+\rr{d}_1^2+(\rr{s}_{b,1}^*\rr{d}_1\rr{s}_{b,1})^2\;&=\;\rr{d}_0\;.
\end{aligned}
\end{equation}
The way of implementing these relation is by the isomorphism (induced by the Fourier transform)
$C^*(\rr{s}_{2})\simeq C_{\rm per}([0,1])$ where on the right-hand side one has the $C^*$-algebra of continuous function on $[0,1]$ with periodic boundary conditions, \ie $f(0)=f(1)$.
Under this isomorphism $\rr{s}_{2}\mapsto e$ where $e(k):=\expo{\ii 2\pi k}$
and $\rr{s}_{b,1}\rr{s}_{2}\rr{s}_{b,1}^*=\expo{\ii b} \rr{s}_{2}\mapsto e(\cdot+\theta_b)$
Consider a $0<\delta<\theta_b$ such that $\theta_b+\delta<1$ and the function $f$ such that
$$
f(e(k))\;:=\;\frac{k}{\delta}\chi_{[0,\delta]}(k)+\chi_{(\delta,\theta_b)}(k)+\left(1+\frac{\theta_b-k}{\delta}\right)\chi_{[\theta_b,\theta_b+\delta]}(k)\;.
$$
Define
$$
g(e(k))\;:=\;\sqrt{f(e(k))(1-f(e(k)))}\;\chi_{[0,\delta]}(k)\;=\;\sqrt{\frac{k}{\delta}\left(1-\frac{k}{\delta}\right)}\;\chi_{[0,\delta]}(k)\;.
$$
One can check that by using  $f$ and $g$ above to define  $\rr{d}_0$ and $\rr{d}_1$ respectively, then the conditions \eqref{eq:conPRproj} are automatically verified. The crucial identities are 
$\rr{s}_{b,1}^*\rr{d}_1\rr{s}_{b,1}\mapsto g(e(\cdot -\theta_b))$ (which is supported in $[\theta_b,\theta_b+\delta]$)  and $\rr{s}_{b,1}\rr{d}_0\rr{s}_{b,1}^*\mapsto f(e(\cdot +\theta_b))$ (which must be defined periodically on $[0,1]$). Let us point out that with a standard \virg{smoothing argument} it is possible to replace the continuous function $f$ and $g$ with smooth functions. This implies that it is possible to define the Power-Rieffel projection inside the smooth algebra $\s{A}_{b}^\infty$.

\medskip

The relations \eqref{eq:conPRproj} provide other useful identities. Let $\rr{L}:=\rr{L}(\rr{d}_1)$ be the support projection of $\rr{d}_1$
(in the von Neumann algebra generated by $\s{A}_{b}$). This is by definition the  smallest projection such that $\rr{L}\rr{d}_1=\rr{d}_1=\rr{d}_1\rr{L}$. It is immediate to conclude that $\rr{L}$ is mapped into the characteristic function on the support of $g\circ e$ under the isomorphism used above, \ie $\rr{L}\mapsto\chi_{[0,\delta]}$. Combining $\rr{L}$ with the first relation in \eqref{eq:conPRproj} one gets $(\rr{s}_{b,1}^*\rr{d}_1\rr{s}_{b,1})\rr{L}=0$. This relation combined with the third equation in \eqref{eq:conPRproj} provides
\begin{equation}\label{eq:ulteram}
\rr{d}_1^2\;=\;\rr{L}(\rr{d}_0-\rr{d}_0^2)\;=\;(\rr{d}_0-\rr{d}_0^2)\rr{L}\;.
\end{equation}
For the next result we need to recall that two unitary operators $\rr{u}_0,\rr{u}_1\in \s{A}_{b}$ are said to be \emph{homotopic equivalent}, denoted $\rr{u}_0\sim\rr{u}_1$,   if there is a continuous map $[0,1]\ni t\mapsto \rr{u}(t)\in  \s{A}_{b}$ such that $\rr{u}(0)=\rr{u}_0$,  $\rr{u}(1)=\rr{u}_1$ and $\rr{u}(t)$ is unitary for every $t\in [0,1]$. 
\begin{lemma}\label{lemma:homot_1}
The unitary operators $\expo{-\ii 2\pi \rr{d}_0\rr{L}}$ and $\rr{s}_{b,2}^*$ are  homotopic equivalent in $\s{A}_{b}$, \ie
$\expo{-\ii 2\pi \rr{d}_0\rr{L}}\sim\rr{s}_{b,2}^*$.
\end{lemma}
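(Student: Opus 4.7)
The plan is to reduce the claim to an elementary winding-number computation inside the commutative $C^*$-subalgebra $C^*(\rr{s}_2)\subset\s{A}_b$. Recall that in the Landau gauge $\rr{s}_{b,2}=\rr{s}_2$, and both operators appearing in the statement lie in $C^*(\rr{s}_2)$: the element $\rr{s}_{b,2}^*$ trivially, while $\rr{d}_0\rr{L}=f(\rr{s}_2)\,\rr{L}(g(\rr{s}_2))$ is a function of $\rr{s}_2$ alone by construction. Using the Gelfand isomorphism $C^*(\rr{s}_2)\simeq\s{C}(\n{S}^1)$ that sends $\rr{s}_2\mapsto z$ with $z(k):=\expo{\ii 2\pi k}$, it suffices to produce a norm-continuous homotopy of unitaries in $\s{C}(\n{S}^1)$.

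Under this isomorphism, the support projection $\rr{L}$ corresponds to $\chi_{[0,\delta]}$, so from the piecewise formula for $f$ we get $\rr{d}_0\rr{L}\mapsto h(k):=(k/\delta)\,\chi_{[0,\delta]}(k)$, and $\expo{-\ii 2\pi \rr{d}_0\rr{L}}\mapsto u(k):=\expo{-\ii 2\pi h(k)}$. Although $h$ jumps from $1$ to $0$ at $k=\delta$, this jump is an integer, so $u$ is a genuinely continuous function on $\n{S}^1$. On the other hand $\rr{s}_{b,2}^*\mapsto\bar z(k)=\expo{-\ii 2\pi k}$. A direct inspection shows that both $u$ and $\bar z$ have winding number $-1$ around $\n{S}^1$: $\bar z$ by definition, and $u$ because its argument decreases by $2\pi$ on $[0,\delta]$ and is constant on $[\delta,1]$. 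Since $K_1(\s{C}(\n{S}^1))\simeq\Z$ is detected by the winding number and its connected components of unitaries are precisely the classes of $K_1$, the equality of winding numbers delivers a norm-continuous path of unitaries in $\s{C}(\n{S}^1)$, which pulls back to the desired homotopy in $\s{A}_b$.

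For an explicit homotopy, one can interpolate the slope of $h$ by setting $\delta_t:=(1-t)\delta+t$ and
$$
h_t(k)\;:=\;\frac{k}{\delta_t}\,\chi_{[0,\delta_t]}(k)\;,\qquad v_t\;:=\;\expo{-\ii 2\pi h_t(\rr{s}_2)}\;,
$$
so that $v_0=\expo{-\ii 2\pi\rr{d}_0\rr{L}}$ and $v_1=\rr{s}_{b,2}^*$. The main (minor) obstacle is that $h_t$ has a discontinuity at $k=\delta_t$ which moves with $t$, so one must check joint continuity of $(t,k)\mapsto\expo{-\ii 2\pi h_t(k)}$; this follows because the jump of $h_t$ at its discontinuity is always exactly $1$, an integer, so the map into $U(1)$ is continuous across the jump, and functional calculus on $\rr{s}_2$ translates the uniform continuity in $k$ (uniformly in $t$) into operator-norm continuity of $t\mapsto v_t$. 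This concludes the proof.
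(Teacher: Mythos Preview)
Your proof is correct and follows essentially the same approach as the paper: both reduce to the commutative subalgebra $C^*(\rr{s}_2)\simeq\s{C}(\n{S}^1)$ and your explicit homotopy $h_t(k)=(k/\delta_t)\chi_{[0,\delta_t]}(k)$ with $\delta_t=(1-t)\delta+t=\delta+t(1-\delta)$ is literally the same path the paper writes down. Your additional winding-number argument and the remark about continuity of $\expo{-\ii 2\pi h_t}$ across the moving integer jump are extra care that the paper leaves implicit.
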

\proof
In view of the isomorphism $C^*(\rr{s}_{2})\simeq C_{\rm per}([0,1])$ it is  enough to find an homotopy between the functions
$t(k):=\expo{-\ii 2\pi\frac{k}{\delta}\chi_{[0,\delta]}(k)}$
and $e(k)^{-1}:=\expo{-\ii 2\pi k}$. Such an homotopy is explicitly given by 
$$
[0,1]\ni t\;\longmapsto\;{u}_t(k)\;:=\;\expo{-\ii 2\pi  \frac{k}{\delta+t(1-\delta)}\chi_{[0,\delta+t(1-\delta)]}(k)}\;
$$
and this completes the proof.
\qed

\medskip

Let $\rr{L}'$ be the support projection of the shifted operator
$\rr{s}_{b,1}^*\rr{d}_1\rr{s}_{b,1}$. It turns out that $\rr{L}'$
is isomorphically mapped into the characteristic function $\chi_{[\theta_b,\theta_b+\delta]}$.
\begin{lemma}\label{lemma:homot_2}
The unitary operators $\expo{-\ii 2\pi \rr{d}_0\rr{L}'}$ and $\rr{s}_{b,2}$ are  homotopic equivalent in $\s{A}_{b}$, \ie
$\expo{-\ii 2\pi \rr{d}_0\rr{L}'}\sim\rr{s}_{b,2}$.
\end{lemma}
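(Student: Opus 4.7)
The plan is to mimic the proof of Lemma \ref{lemma:homot_1} essentially verbatim, but with the support ``at the right edge'' of the fundamental interval rather than the left edge, and to verify that the resulting unitary has winding number $+1$ (so it matches $\rr{s}_{b,2}$) rather than $-1$ (which was the case for $\rr{s}_{b,2}^{\ast}$).

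First I would transport everything to $C_{\rm per}([0,1])$ via the Fourier isomorphism $C^{\ast}(\rr{s}_2)\simeq C_{\rm per}([0,1])$ under which $\rr{s}_2\mapsto e(k):=\expo{\ii 2\pi k}$ and $\expo{\ii b}\rr{s}_2 \mapsto e(k+\theta_b)$. Using $\rr{s}_{b,1}^{\ast}\rr{s}_2\rr{s}_{b,1}=\expo{-\ii b}\rr{s}_2$, I get $\rr{s}_{b,1}^{\ast}\rr{d}_1\rr{s}_{b,1}=g(\expo{-\ii b}\rr{s}_2) \mapsto g(e(k-\theta_b))$, which is supported exactly on $[\theta_b,\theta_b+\delta]$, so that $\rr{L}'\mapsto \chi_{[\theta_b,\theta_b+\delta]}$. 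The explicit formula for $f$ in Appendix \ref{app:Ktheo NCT} then gives
$$
\rr{d}_0\rr{L}' \;\longmapsto\; \frac{\theta_b+\delta-k}{\delta}\,\chi_{[\theta_b,\theta_b+\delta]}(k),
$$
and consequently $\expo{-\ii 2\pi \rr{d}_0\rr{L}'}$ corresponds to the continuous periodic function equal to $1$ off $[\theta_b,\theta_b+\delta]$ and to $\expo{-\ii 2\pi(\theta_b+\delta-k)/\delta}=\expo{\ii 2\pi(k-\theta_b)/\delta}$ on it. As $k$ traverses $[\theta_b,\theta_b+\delta]$ this phase runs counter-clockwise from $1$ through the whole circle back to $1$, giving winding number $+1$, which is precisely the winding of $e(k)$ representing $\rr{s}_{b,2}$.

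Next I would write down the homotopy by ``widening the bump''. The natural analogue of the path used in Lemma \ref{lemma:homot_1} is
$$
u_t(k)\;:=\;\expo{-\ii 2\pi\,\dfrac{(1-t)(\theta_b+\delta)+t-k}{\delta+t(1-\delta)}\,\chi_{[(1-t)\theta_b,\,(1-t)(\theta_b+\delta)+t]}(k)},\qquad t\in[0,1].
$$
A direct computation shows that the exponent equals $1$ at the left endpoint and $0$ at the right endpoint of its support for every $t$, so $u_t\in C_{\rm per}([0,1])$ is a unitary that is continuous in $(t,k)$. At $t=0$ the support is $[\theta_b,\theta_b+\delta]$ and $u_0$ agrees with $\expo{-\ii 2\pi \rr{d}_0\rr{L}'}$. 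At $t=1$ the support fills $[0,1]$ and the exponent collapses to $1-k$, giving $u_1(k)=\expo{-\ii 2\pi(1-k)}=\expo{\ii 2\pi k}=e(k)$, i.e.\ $\rr{s}_{b,2}$. Since $u_t\in C^{\ast}(\rr{s}_2)\subset\s{A}_b$ for every $t$, the path lies in $\s{A}_b$ and the proof is complete.

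The main obstacle is really bookkeeping: I must verify the right identification of $\rr{L}'$ with $\chi_{[\theta_b,\theta_b+\delta]}$ (as opposed to $\chi_{[-\delta,0]+\theta_b}$ or some other translate modulo the period), because the sign of the winding number is entirely controlled by this choice and a mistake here would produce a homotopy to $\rr{s}_{b,2}^{\ast}$ instead of $\rr{s}_{b,2}$, collapsing Lemma \ref{lemma:homot_2} into Lemma \ref{lemma:homot_1}. Once the $\delta$-translate direction is pinned down by the explicit conjugation $\rr{s}_{b,1}^{\ast}\rr{s}_2\rr{s}_{b,1}=\expo{-\ii b}\rr{s}_2$, everything else is routine: the formula for $f$ on $[\theta_b,\theta_b+\delta]$ is a decreasing linear ramp from $1$ to $0$, the bump is the mirror of the one in Lemma \ref{lemma:homot_1}, and the widening homotopy above simply extends this mirror-bump until it becomes the standard generator of $\pi_1(S^1)$.
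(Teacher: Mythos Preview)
Your proof is correct and follows essentially the same approach as the paper: transport to $C_{\rm per}([0,1])$, identify $\rr{L}'$ with $\chi_{[\theta_b,\theta_b+\delta]}$, and write an explicit ``widening-the-bump'' homotopy to $e(k)=\expo{\ii 2\pi k}$. Your homotopy formula is in fact a slightly cleaner reparametrization of the paper's (the supports coincide after rewriting $(1-t)(\theta_b+\delta-1)+1=(1-t)(\theta_b+\delta)+t$), and your explicit endpoint check confirming the exponent runs from $1$ to $0$ across the support is exactly the continuity verification needed.
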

\proof
As above it is  enough to find an homotopy between the functions
$t'(k):=\expo{-\ii 2\pi\left(1+\frac{\theta_b-k}{\delta}\right)\chi_{[\theta_b,\theta_b+\delta]}(k)}$
and $e(k):=\expo{\ii 2\pi k}$. Such an homotopy is explicitly given by 
$$
[0,1]\ni t\;\longmapsto\;{u}'_t(k)\;:=\;\expo{-\ii 2\pi  \left(
\frac{(1-t)(\theta_b-k+\delta-1)-t\delta k}{\delta}	
\right)\chi_{[(1-t)\theta_b,(1-t)(\theta_b+\delta-1)+1]}(k)}\;
$$
and this completes the proof.
\qed

\medskip

 Let us end this appendix with a more precise description of  the $K_0$-group of $\s{A}_{b}$. Let $[\rr{p}]\in K_0(\s{A}_{b})$. Then, from the first equation of \eqref{eq:K-teo_NCT_ind} one infers the existence of $M,N\in\Z$ such that
 $$
 [\rr{p}]\;\simeq\;M\;[{\bf 1}]\;+\;N\;[\rr{p}_{\theta_b}]\;.
 $$
The number $N$ can be deduced by using the pairing \eqref{eq:chern_03} along with $\prec[{\bf 1}],[\xi_b]\succ=0$ and 
$\prec [\rr{p}_{\theta_b}],[\xi_b]\succ=1$. This implies that $N={\rm Ch}_{b}(\rr{p})$.
The number $M$ can be deduced from the pairing $\tau:K_0(\s{A}_{b})\to \Z+\theta_b \Z$ induced by the trace, \ie $\tau([\rr{p}]):=\bb{T}(\rr{p})$. Since  $\tau([{\bf 1}])=1$ and 
$\tau([\rr{p}_{\theta_b}])=\theta_b$ one gets that $
M=\bb{T}(\rr{p})-N\theta_b$.

\medskip

\end{document}